\theoremstyle{theorem}
\newtheorem*{thm}{Theorem}
\newtheorem{fact}{Fact}
\newtheorem{ans}{Answer}
\newtheorem*{proper}{Property}
\newtheorem{corl}{Corollary}
\newtheorem{pro}{Proposition}
\newtheorem{lem}{Lemma}
\newtheorem*{cla}{Claim}
\newtheorem{que}{Question}
\newtheorem*{queos}{Question}
\newtheorem{cri}{Criterion}
\newtheorem*{rprin}{Rough physical principle}
\newtheorem*{defthm}{Fact/Definition}
\theoremstyle{definition}
\newtheorem*{defn}{Definition}
\newtheorem*{rem}{Remark}
\newtheorem*{cproof}{Comments on the proof}
\newtheorem*{cave}{\emph{Caveat}}
\newtheorem*{exe}{Example}
\def\Dsl{\,\raise.15ex\hbox{/}\mkern-13.5mu D}
\def\dsl{\,\raise.25ex\hbox{/}\mkern-10.5mu \partial}
\title{Special Geometry and the Swampland 
}
\authors{Sergio Cecotti\footnote{e-mail: {\tt cecotti@sissa.it}}\vskip 9pt

\centerline{SISSA, via Bonomea 265, I-34100 Trieste, ITALY}
%\centerline{${}^{\pmb\star}$ Department of Mathematics, Harvard University, Cambridge MA 02138, USA}
%\centerline{${}^\ddagger$ Department of Physics Jefferson Physical Laboratory,}
%\centerline{Harvard University, Cambridge, MA 02138, USA}
}
\abstract{
In the context of 4d effective gravity theories with 8 supersymmetries, we propose to unify, strenghten, and refine the  several swampland conjectures into a single statement: the \emph{structural criterion,} modelled on the structure theorem in Hodge theory.
In its most abstract form the new swampland criterion applies to all 4d $\cn=2$ effective theories (having a quantum-consistent UV completion) whether supersymmetry is \emph{local} or \emph{rigid}: indeed it may be regarded as the more general version of Seiberg-Witten geometry which holds both in the rigid and local cases. 

As a first  application of the new swampland criterion we show that a quantum-consistent $\cn=2$ supergravity with a cubic pre-potential is necessarily a truncation of a higher-$\cn$ \textsc{sugra}. More precisely: its moduli space is a Shimura variety of `magic' type. In all other cases a quantum-consistent special K\"ahler geometry is either an arithmetic quotient of the complex hyperbolic space $SU(1,m)/U(m)$ or  has no \emph{local} Killing vector. 

Applied to Calabi-Yau 3-folds
 this result implies (assuming mirror symmetry) the validity of the Oguiso-Sakurai conjecture in Algebraic Geometry: all Calabi-Yau 3-folds $X$ without rational curves have Picard number $\rho=2,3$; in facts they are finite quotients of Abelian varieties. More generally: the K\"ahler moduli of $X$ do not receive quantum corrections if and only if $X$ has infinite fundamental group.
In all other cases the K\"ahler moduli have instanton corrections in (essentially) all possible degrees.}
\begin{document}
\maketitle

\tableofcontents

\newpage

\section{Introduction and overview:\\ swampland conjectures {\it vs.\!} VHS structure theorem}

Passing from Quantum Field Theory (QFT) to
Quantum Gravity (QG) requires a radical change of paradigm:
in the words of Cumrun Vafa \cite{personal} it is  like going from Differential Geometry (DG)
to the much deeper Number Theory.
The fundamental principles of quantum physics
-- such as the concept of ``symmetry'' --
 take new (and much subtler) meanings in QG, and the mathematical formulation of the theory requires totally new foundations in which \emph{arithmetics} is expected to play a central role.

The swampland program \cite{Vaf,OoV} (see \cite{Rev1,Rev2} for nice reviews) aims to characterize  the effective field theories which describe the low-energy limit of consistent quantum gravities inside the much larger class of effective theories which \emph{look} consistent from the viewpoint of the traditional field-theoretic paradigm. The consistent  QG effective theories form a very sparse zero-measure subset of the naively consistent models. An effective field theory which \emph{looks}
consistent but cannot be completed to a consistent QG is said to belong to the \emph{swampland} \cite{Vaf,OoV,Rev1,Rev2}.

Up to now the swampland program has taken the form of a dozen or so (conjectural) \textit{necessary conditions} which all consistent effective theories of gravity should obey; we refer to them as the \textit{swampland conjectures} \cite{Vaf,OoV,Rev1,Rev2}.
Although these conjectures are arguably the deepest principles in physics, indeed in all science, their current  formulation looks a bit unsatisfactory: they consist of several disjoint statements, which are neither logically independent nor related by a clear web of mutual implications.
Ideally, one would like to unify the several conjectures into a \emph{single} basic physical principle.

The reason why the swampland conjectures do not look ``elegant'' is that they are  formulated entirely in the language of the old QFT/DG paradigm: they are expressed using words like ``distance'', ``volume'', ``geodesic'', \emph{etc.}, that is, in the strict  DG jargon, while one expects that a language appropriate for the gravitational paradigm should emphasize more the arithmetic aspects. It is natural to think that, if we wish 
to unify and strengthen the several conjectures 
in one more fundamental swampland principle, we must state it in a suitable `arithmetic' language. 

To build the appropriate formalism in full generality requires a lot of foundational work in which the usual field-theoretic concepts get  uplifted to subtler ones endowed with loads of new additional structures most of which arithmetic in nature \cite{Notes}. 
The story is long, technically sophisticate, and an enormous quantity of more work is required.

Luckily enough, however, there is one important special case where we can dispense with new  foundations: namely the 4d effective theories invariant under 8 supercharges, more specifically, the vector-multiplet sector of an ungauged 4d $\cn=2$ supergravity. 
At the level of the traditional QFT/DG paradigm
such a sector is described by a special K\"ahler geometry (see e.g.\! \cite{specgeo}).
In this special case
 the swampland program reduces to the simple question:

\begin{que}\label{qqyaetion}
Which special K\"ahler geometries belong to the swampland?
\end{que}   
      
\textbf{Question 1} may be re-formulated in two
well-known alternative  languages
which already take care of the relevant arithmetic aspects. This will give us two (equivalent)
\textbf{Answers} to \textbf{Question 1}.
\medskip

The second \textbf{Answer} gives a necessary criterion for quantum consistency which 
makes sense also for low-energy effective theories $\mathscr{L}_\text{eff}$ having \emph{rigid} $\mathcal{N}=2$ supersymmetry.
In $\cn=2$ QFT one can prove rigorously the validity of this criterion (see \S\S.\,\ref{relSW} and \ref{direct}): it turns out to be equivalent to requiring that the low-energy physics is described by a globally well-defined
Seiberg-Witten geometry \cite{SW1,SW2}.
The swampland criterion we propose may be seen as the more general version of Seiberg-Witten geometry
which applies both to rigid and local $\cn=2$ \textsc{susy}: in both cases a quantum-consistent
effective theory is described by such a higher Seiberg-Witten geometry. In spite of this,
the local and rigid cases differ
dramatically: in the gravity case the proposed criterion 
implies the validity of all the standard swampland conjectures \cite{Vaf,OoV}
(see \S.\,\ref{recovering}), in particular
the distance conjecture which requires \emph{infinite} towers of light states at infinity, while in the rigid 
supersymmetry the \emph{same} argument shows that at most finitely many states may
become massless in any limit (cfr.\! \S.1.3).

\subsection{First alternative language: BPS branes in $tt^*$}
One shows\footnote{\ \label{frobfoot}In the math literature this statement is sometimes  known as the (generalized) Simpson theorem \cite{simpson}.} \cite{families,tt*} that the  special K\"ahler geometries are in one-to-one correspondence with the solutions to the $tt^*$ equations \cite{tt*} for the universal deformation of an abstract 2d (2,2) chiral ring $\mathcal{R}_s$ ($s\in S$) which -- in addition to the usual properties\footnote{\ By the usual properties we mean that $\{\mathcal{R}_s\}_{s\in S}$ must be a family of finite-dimensional, commutative, associative Frobenius $\C$-algebras with unity which satisfies the axioms of a Frobenius manifold \cite{Dubrovin:1994hc}.} --
is required to be a family of \emph{local-graded} $\C$-algebras\footnote{\ Here $\mathrm{tr}_s$ denotes the Frobenius trace map, i.e.\! the TFT one-point function on the sphere $S^2$.}, that is,
\begin{gather}\label{R1}
\mathcal{R}_s=\mathcal{R}_s^0\oplus \mathcal{R}_s^1\oplus\mathcal{R}_s^2\oplus\mathcal{R}_s^3,\qquad s\in S\\
\mathcal{R}_s^q\cdot\mathcal{R}_s^{q^\prime}\subseteq\mathcal{R}_s^{q+q^\prime},\\
\mathcal{R}_s^0=\C\cdot1,\qquad \mathrm{tr}_s\colon \mathcal{R}_s^3\overset{\sim}{\to} \C.\label{R3}
\end{gather} 
The family of rings $\{\mathcal{R}_s\}_{s\in S}$ is parametrized by the universal deformation space $S$ (the ``moduli'' space, a.k.a.\! the 2d conformal manifold). For the convenience of the reader, the identification of local-graded $tt^*$ geometry with special K\"ahler geometry is reviewed in section 2.1 below (some technical detail being deferred to appendix A).
\smallskip

In the $tt^*$ language the swampland program asks for the characterization of the solutions to the $tt^*$ PDEs for local-graded chiral rings which  ``arise from physics'' out of the much bigger space of \emph{all} such solutions which is equivalent to  the space of all special K\"ahler geometries. The physical subset contains the special geometries of (2,2) SCFTs with $\hat c=3$, and in particular the special geometry of 2d supersymmetric $\sigma$-models with target space a Calabi-Yau 3-fold $X$,
which coincide with the special geometry of  the $\cn=2$ \textsc{sugra} obtained by compactification of Type IIB on $X$ \cite{families}.  

So rephrased, the $\cn=2$ swampland program becomes a special instance of the more general problem of characterizing the \emph{physical} $tt^*$ geometries. Ref.\!\!\cite{Cecotti:1992rm} studies this problem in the opposite situation in which $\mathcal{R}_s$ is semi-simple instead of local-graded (for a 2d (2,2) QFT this means gapped versus conformal).
The physical semi-simple $tt^*$ geometries are characterized by diophantine equations, that is, by arithmetic conditions \cite{Cecotti:1992rm}. 
In principle, one should be able to get the desired arithmetic formulation of the swampland conditions by a ``straightforward'' generalization of the analysis of ref.\!\!\cite{Cecotti:1992rm} to arbitrary, that is not necessarily semi-simple,
universal families of chiral rings $\{\mathcal{R}_s\}_{s\in S}$,
and then specializing the general answer  to the particular case of local-graded
rings of the form \eqref{R1}-\eqref{R3}. This leads to a first answer to \textbf{Question 1}:
\begin{ans}
A special K\"ahler geometry which does \emph{not} belong to swampland satisfies the ``straightforward''  \emph{local-graded} generalization of the $tt^*$ diophantine conditions of ref.\!\!{\rm\cite{Cecotti:1992rm}}.\footnote{\ We stress that this is still a necessary condition, not a sufficient one, albeit it is expected to ``come close'' to be sufficient.}
\end{ans}

To understand the nature of the required  ``straightforward'' generalization, let us recall the physical origin of the arithmetic conditions on the physical $tt^*$ geometries for  semi-simple chiral rings. The $tt^*$ PDEs are the consistency conditions of a system of linear differential equations
(depending on an additional twistor parameter $\zeta\in\mathbb{P}^1$) \cite{tt*,Cecotti:1992rm,dubrovin,iogaiotto}\footnote{\ Eqn.\eqref{laxtt*} is written more explicitly in eqn.\eqref{pqw12hh}.}
\be\label{laxtt*}
\Big(\nabla^{(\zeta)}+\overline{\nabla}^{(\zeta)}\Big)\Psi(\zeta)=0.
\ee
When the $tt^*$ geometry is physical, the 
solutions $\Psi(\zeta)$ to the linear problem \eqref{laxtt*} have the physical interpretation of half-BPS brane amplitudes\footnote{\ The twistor parameter $\zeta$ labels the \textsc{susy} sub-algebra which leaves the brane invariant.} \cite{Hori:2000ck}.
The BPS branes belong to a linear triangle category \cite{Hori:2000ck,categories} whose numerical homology and $K$-theoretical invariants induce a canonical integral structure on the space of solutions to \eqref{laxtt*} which implies the diophantine conditions of \cite{Cecotti:1992rm}. Nowhere in this argument one uses the fact that $\mathcal{R}_s$ is semi-simple: the $tt^*$ equations can be written in the linear form \eqref{laxtt*} for arbitrary chiral rings, and -- whenever the $tt^*$ geometry arises from a physical situation -- its brane amplitudes
$\Psi(\zeta)$ describe objects of a BPS brane category with a finitely generated Grothendieck group from which the $tt^*$ geometry inherits an arithmetic structure. The precursor papers 
\cite{Cecotti:1992rm,Hori:2000ck} focused on semi-simple chiral rings for just one reason: in that case, one may identify half-BPS branes with Lefschetz thimbles \cite{Hori:2000ck}, and then use Picard-Lefschetz theory \cite{Cecotti:1992rm} to relate them in a simple way to the BPS spectrum and wall-crossing phenomena.
The relation with the BPS spectrum is lost when $\mathcal{R}_s$ is local-graded, since there are no non-trivial BPS states in a (2,2) SCFT; nevertheless the half-BPS branes are still there and, when the corresponding special K\"ahler geometry arises from physics, they are objects in some nice linear triangle category, so that their special K\"ahler geometry inherits interesting arithmetic properties.
In facts, the arithmetics of local-graded $tt^*$ amplitudes turns out to be much richer, nicer  and deeper than the Picard-Lefshetz one for semi-simple chiral rings,
see appendix B. 

In conclusion, in the special $\cn=2$ case the swampland conditions may be easily guessed from the $tt^*$ viewpoint. However there is a second alternative language which makes the story even more straightforward. In the rest of this note we shall mostly adopt this second viewpoint, except in \S.\ref{intuitive} where we use the equivalent $tt^*$ language to give an intuitive interpretation of the swampland criterion we are proposing.

\begin{rem} Requiring that the $tt^*$ brane amplitudes $\Psi(\zeta)$ satisfy the arithmetic conditions is \emph{a priori} weaker than requiring that they are physical brane amplitudes. The arithmetic conditions essentially see only the $K$-theoretic aspects of the relevant linear triangle category of branes.
The arithmetic conditions are then expected to be \textit{mere} necessary conditions. It is conceivable that one gets \emph{sufficient} conditions for the existence of a quantum gravity UV completion by requiring that the $tt^*$ branes carry the full-fledged categorical structure  and not merely its numerical avatar. Morally speaking, the procedure of UV completion of a low-energy effective theory   looks akin to reconstructing an algebraic variety $Y$ out of its  derived sheaf category $D(Y)$.   
\end{rem}

\subsection{Second alternative language: ``motivic'' VHS}
In \textsc{Table} \ref{comparison} (page \pageref{comparison}) we compare the problem of describing the set of consistent 4d $\cn=2$ quantum gravities with the mathematical problem of describing all ``algebro-geometric objects'' of ``Calabi-Yau type''. Notice that we do not speak of Calabi-Yau manifolds (or varieties) but more generally of ``algebro-geometric objects''.
Indeed experience with moduli of Abelian varieties has shown that, in order to obtain a deep and nice theory, one has to enlarge the class of geometric objects of interest beyond the actual manifolds \cite{milne1,milne3}. The appropriate class of ``objects'' to consider is some sort of ``Calabi-Yau motives''.\footnote{\ For instance, in ref.\!\cite{draco} Van Straten proposes to consider \emph{abstract} Calabi-Yau operators \guillemotleft
as describing something like \textit{a rank four Calabi-Yau motive over $\mathbb{P}^1$}\guillemotright\ 
 even when they are not the Picard-Fuchs operators of an actual one-parameter family of 3-CY.}
 Likewise, if one wishes a classification
of consistent ``Quantum Gravities'' with nice functorial properties, one should accept some more general animals than the strict QG theories such as, for instance, consistent truncations of consistent quantum gravities. Both classification problems are somehow \emph{open,} in the sense that the precise boundaries of the class of ``natural'' objects to be called ``physical'' (resp.\! ``geometric'') is not fully understood. In particular, we do not have an explicit set of axioms which define in a precise way  what a ``Quantum Gravity'' is supposed to be.
\medskip

At the naive level of the DG paradigm, on the physical side we know that the effective theory of a 4d $\cn=2$ quantum gravity  must be a 4d $\cn=2$ (ungauged) supergravity, whose vector-multiplet sector is described in DG language by a special K\"ahler manifold $S$. Saying that the effective Lagrangian $\mathscr{L}$ is $\cn=2$ supersymmetric amounts to saying that its couplings, seen as functions of the scalar fields (``moduli''), satisfy a set of differential relations
which are the defining property of special geometry \cite{cec,stro}.

\begin{table}
\centering
\begin{tabular}{l|c|c|l}\cline{2-3}
& \textsc{Quantum Gravity} & \textsc{Algebraic Geometry} & \\\cline{2-3}
\multirow{2}{7em}{classification problem} & \multirow{2}{10em}{consistent 4d $\cn=2$ ``quantum gravities''} & \multirow{2}{14em}{\centering CY algebro-geometric ``objects'' up to deformation type}  &\\
&&&
\\\cline{2-3}
\multirow{2}{7em}{naive paradigm:
DG description} & \multirow{2}{13em}{4d $\cn=2$ ungauged \textsc{sugra} $\equiv$ special K\"ahler geometries}
& \multirow{2}{16em}{\centering real variations of Hodge structure (VHS) of weight 3 with $h^{3,0}=1$} & $\equiv$\\
&&&\\\cline{2-3}
\multirow{2}{7em}{new paradigm:
deep question} & \multirow{2}{14em}{which special geometry can be completed to a consistent QG?}
&\multirow{2}{10em}{\centering which VHS arise from CY-\textit{ish} ``motives''?} &$\overset{?}{\equiv}$\\
&&&
\\\cline{1-3}
\begin{small}\textbf{what is known}\end{small} & & &\\\cline{1-1}
examples & Type II compactifications & families of actual CY 3-folds & $\checkmark$\\\cline{2-3}
\multirow{2}{6em}{necessary conditions}& \multirow{2}{10em}{\centering the swampland \textit{conjectures}} & \multirow{2}{10em}{\centering the structure \textit{theorem}} &
$\Leftarrow$\\
&&&\\\cline{2-3}
\end{tabular}
\caption{\label{comparison}\footnotesize The comparison of two classification problems. Symbols in the right margin describe the logical relation between the two boxes in the same row: $\equiv$ means that the two are known to be strictly equivalent, $\overset{?}{\equiv}$ means that the two are conjectured to be equivalent, while $\checkmark$ means that the two are trivially the same; finally, $\Leftarrow$ means that the statement in the second column implies the ones in the first column.}
\end{table} 

Passing to the second column of  \textsc{Table} \ref{comparison}, we know that the DG description of a deformation family of CY ``objects'' is given by a polarized real variation of Hodge structure ($\R\text{-VHS}$) \cite{GII,Gbook,deligne,reva,revb,MT4,periods,book} of pure weight 3 and Hodge numbers \be\label{hhidge}
\text{$h^{3,0}=1$ and $h^{2,1}=m\equiv \dim_\C S$.}
\ee
The $\R\text{-VHS}$ is encoded in the Griffiths period map \cite{GII,Gbook,deligne,reva,revb,MT4,periods,book}
\begin{equation}\label{griper}
p\colon S\to \Gamma\big\backslash Sp(2m+2,\R)\big/[U(1)\times U(m)],
\end{equation}
where $S$ is the ``moduli'' space and $\Gamma\subset Sp(2m+2,\Z)$ the monodromy group. The map $p$ is required to satisfy a set of differential relations, known as the Griffiths infinitesimal period relations (IPR)  \cite{GII,Gbook,deligne,reva,revb,periods}. One shows that these differential relations are exactly equivalent to the ones defining $\cn=2$ supersymmetry \cite{cec,stro}, so that
 the problems in the two columns of \textsc{Table} \ref{comparison} are strictly equivalent at the naive level of the DG paradigm. We write $\mathfrak{V}$ for the set of abstract weight-3 polarized $\R\text{-VHS}$ with the Hodge numbers \eqref{hhidge}, equivalently $\mathfrak{V}$ is the set of consistent-looking $\cn=2$ special K\"ahler geometries.

It turns out that almost all 4d $\cn=2$ supergravities in $\mathfrak{V}$ cannot arise as low-energy limits of consistent theories of quantum gravity. We write $\mathfrak{M}_\text{QG}\subset \mathfrak{V}$ for the extremely  minuscule subset which \emph{can} be physically realized (the complementary set $\mathfrak{V}\setminus \mathfrak{M}_\text{QG}\subset \mathfrak{V}$ being the vast $\cn=2$ swampland).
Likewise, almost all $\R\text{-VHS}$ in $\mathfrak{V}$
do not describe actual families of algebro-geometric objects. 
Again we have a minuscule subset
$\mathfrak{M}_\text{Mot}\subset \mathfrak{V}$
of ``motivic'' $\R\text{-VHS}$ which do arise from algebraic geometry. 
In particular, a variation of Hodge structure in
$\mathfrak{M}_\text{Mot}$  should be defined over $\mathbb{Q}$ not $\R$, i.e.\!\! it should be a (pure, polarized) variation of Hodge structure in Deligne's sense (a VHS for short) \cite{reva,revb,MT4,periods,deligne}: this is a first hint of non-trivial arithmetic structures entering in the game.

Both $\mathfrak{M}_\text{QG}$ and $\mathfrak{M}_\text{Mot}$ are God-given, extremely sparse, subsets of the same naive space $\mathfrak{V}$, and both are expected to be 
characterized by subtle arithmetic-like properties.

It is natural to ask whether there is any simple  relation between these two special       
subsets. The meta-conjecture is that they indeed coincide (for a suitable choice of what we are willing to call a ``consistent quantum gravity''). 
This will follow, for instance, if we assume the string lamppost principle (SLP) \cite{slp}. 
\medskip

Characterizing $\mathfrak{M}_\text{QG}$ and $\mathfrak{M}_\text{Mot}$ inside $\mathfrak{V}$ is a fundamental problem in (respectively) Theoretical Physics and Algebraic Geometry. What is known about these two fundamental problems?

First of all we have a large supply of explicit examples in the form of moduli spaces of known Calabi-Yau 3-folds. A deformation family of 3-CYs, $\{X_s\}_{s\in S}$, yields an element of $\mathfrak{M}_\text{Mot}$ as well as an element of $\mathfrak{M}_\text{QG}$ (by compactifying Type IIB on $X_s$). The information we get from such explicit examples tautologically corroborates the idea that 
$\mathfrak{M}_\text{Mot}=\mathfrak{M}_\text{QG}$. 
\medskip

The two columns of \textsc{Table} \ref{comparison} 
look very much the same when restricted to their first five rows.
The last row describes the second piece of ``known'' information about the deep problems:
 a set of \emph{necessary conditions} an element $\mathfrak{v}\in\mathfrak{V}$ should satisfy in order to belong to the sub-set $\mathfrak{M}_\text{QG}$ or, respectively, $\mathfrak{M}_\text{Mot}$. In this last row the two columns look quite \emph{different}. In the column of Quantum Gravity the necessary conditions take the form of a dozen or so conjectural statements: they are the  usual swampland \textit{conjectures} (specialized to the case of ungauged $\cn=2$ \textsc{sugra}) \cite{Vaf,OoV,Rev1,Rev2}. In the Algebraic Geometry column we have a \emph{single} statement which, moreover, 
has the logical status of a mathematical \emph{theorem}: this fundamental result is known under the name of  the \textit{structure theorem} of (global) variations of Hodge structure, see e.g. \S.\,IV of \cite{reva}, \S.\,II.B of \cite{revb}, \S.\,III.A of \cite{MT4}, or \textbf{Theorem 15.3.14} in \cite{periods}. We review the theorem in \S.\ref{jqwaslll} below
following the two nice surveys \cite{reva,revb}. 

It is an established fact that all VHS arising from Algebraic Geometry enjoy the properties stated in the structure theorem.
In other words, the statement of the theorem yields a \emph{proven} necessary condition for a VHS
to belong to the God-given subset $\mathfrak{M}_\text{Mot}$.

In \S.(III.B.7) of ref.\!\!\cite{revb} the authors ask whether this necessary condition \emph{suffices} to completely characterize the set $\mathfrak{M}_\text{Mot}$ of ``motivic'' VHS: they answer in the negative, but their general feeling is that the structure theorem comes ``close'' to the holy Grail of actually determining $\mathfrak{M}_\text{Mot}$, thus ``almost'' completing the algebro-geometric analogue of the 4d $\cn=2$ swampland program.\footnote{\ Translated in the physical language, the basic source of non-sufficiency is that  the structure theorem does not say which points $\tau$ in the upper half-plane may be realized as periods of harmonic (3,0)-forms on \emph{rigid} 3-CY. Roughly speaking, the problem is that in this case the moduli space $S$ is a single point, and the global aspects of $S$ (central to the swampland program \cite{OoV}) do not restrict the period map $p$.  However there are good reasons to believe that in the physical context we may strengthen the statement and get rid of this inadequacy: one compactifies the 4d $\cn=2$ theory with constant graviphoton coupling $\tau=\theta/2\pi+ 4\pi i/g^2$ on a circle $S^1$. In the resulting 3d $\cn=4$ theory $S$ is promoted to a quaternionic-K\"ahler manifold  $Q_\tau$ of real dimension 4 \cite{CFG} whose geometry depends on $\tau$. Applying the 3d $\cn=4$ version of the swampland conditions to the 3d moduli space  $Q_\tau$ should yield some useful  constraint on the allowed $\tau$'s.}

What is the logical relation between the two boxes in the last row of \textsc{table} \ref{comparison}?

It is easy to see that a (global) special geometry which satisfies the VHS structure theorem automatically satisfies all the applicable swampland conjectures.\footnote{\ For some conjecture this actually holds after some minor technical refinement of the original statement.} 
This is hardly a surprise since Ooguri and Vafa \cite{OoV} used the general properties of Calabi-Yau moduli spaces as motivating examples for their conjectures.
However the inverse implication is \emph{false,} i.e.\! the structure theorem is actually a stronger constraint than the several swampland conjectures combined. 

The comparison of the two columns of  \textsc{Table} \ref{comparison} suggests
 to unify and refine the several swampland conjectures into the single statement:

\begin{ans}[The structural swampland conjecture] A (global) special K\"ahler geometry belongs to the swampland unless its underlying Griffiths period map $p$ (cfr.\!\! eqn.\eqref{griper}) satisfies the VHS structure theorem. 
\end{ans}

\textbf{Answer 2} and \textbf{Answer 1} are  mutually consistent, see \S.\ref{intuitive}.

\subsection{Relation with Seiberg-Witten theory in $\cn=2$ QFT}\label{relSW}

The structural swampland criterion holds in rigid $\cn=2$ supersymmetry as well. Here we sketch the story without entering in technical details (see \S.\,\ref{direct} for more).
The Lagrangian of a \emph{formal} $\cn=2$ low-energy effective QFT has the form
\be
\mathscr{L}= \int d^2\theta\, F(X^a) +\text{hypermultiplets}+\text{h.c.}
\ee
where $X^a$ ($a=1,2,\dots, h$) are restricted $\cn=2$ chiral superfields containing the field strengths of the $h$ Abelian gauge fields and $F$ is the holomorphic pre-potential. Let $M$ be the Coulomb branch where the scalars\footnote{\ We use the same symbol to denote the superfield and its scalar first component.} $X^a$ take value and $\widetilde{M}$ its smooth simply-connected cover, so that $M=\cg\backslash \widetilde{M}$ for some discrete group of isometries $\cg$. We have a commutative diagram\footnote{\ \textbf{Notation:} throughout this paper double-headed arrows $\xymatrix@1{\ar@{->>}[r]&}$ stand for canonical projections while hook-tail arrows $\xymatrix@1{\ar@{^{(}->}[r]&}$ represent canonical inclusions.}
\be
\begin{gathered}
\xymatrix{\widetilde{M}\ar[rr] \ar@{->>}[d] \ar@/^1.6pc/[rrrr]^(0.45){\widetilde{p}}&& \widetilde{S}\;\ar@{^{(}->}[rr]\ar@{->>}[d]&& Sp(2h,\R)/U(h)\ar@{->>}[d]
\\
M\ar[rr]\ar@/_1.6pc/[rrrr]_(0.45){p} && S \;\ar@{^{(}->}[rr] &&\Gamma\backslash Sp(2h,\R)/U(h)}
\end{gathered}\qquad \text{where}\quad\left[\begin{aligned} & \widetilde{S}\equiv \widetilde{p}(\widetilde{M})\\
& S\equiv p(M)=\Gamma\backslash \widetilde{S}\\
&\Gamma\subset Sp(2h,\Z)
\end{aligned}\right.
\ee
whose covering map $\widetilde{p}$
\be
\widetilde{p}\colon (X^1,\cdots, X^h)\mapsto \partial_a\partial_b F(X^c)\in Sp(2h,\R)/U(h)
\ee
satisfies all the axioms of a weight-1 local VHS with Hodge numbers $h^{1,0}=h^{0,1}=h$.
Again we may pose a swampland question:  describe the sub-set of formal VHS $\widetilde{p}$ which define low-energy effective Lagrangians having a UV completion which is a fully consistent QFT.

A necessary condition follows directly from QFT  first principles. The operator algebra of a 4d $\cn=2$ QFT contains a  \textsc{susy} protected subfactor, the chiral ring $\mathscr{R}_{4d}$, which is a finitely-generated, commutative, unital $\C$-algebra. Then a 4d $\cn=2$ QFT comes equipped with an affine variety, namely the spectrum $\mathrm{Spec}\,\mathscr{R}_{4d}$ of its chiral ring.
The (global) smooth Coulomb branch $M$ is the complement in $\mathrm{Spec}\,\mathscr{R}_{4d}$ of the discriminant divisor $D$  (the locus where additional degrees of freedom become massless).
Write $\mathrm{Spec}\,\mathscr{R}_{4d}=\overline{M}\setminus Y_\infty$ for $\overline{M}$ projective and $Y_\infty$ an effective  divisor\footnote{\ For explicit examples of these  algebraic-geometric gymnastics in 4d $\cn=2$ QFT see \cite{specialarithmetics}.}. Hence
\be\label{oftheform}
M=\overline{M}\setminus Y,\qquad Y=\overline{D}+Y_\infty.
\ee 
Then the group $\Gamma\subset Sp(2h,\Z)$ satisfies the properties of the monodromy group for a VHS over a quasi-projective base of the form \eqref{oftheform}, and then the period map $p$ satisfies the VHS structure theorem. (For succinct  surveys of the underlying mathematical facts, see e.g.\! \S\S.5-7 of \cite{peter} or \S.\,IV of \cite{reva}). Then we have the following

\begin{fact}\label{rigidswamp} A 4d $\cn=2$ low-energy effective field theory which has a quantum-consistent UV completion is described by a period map $p\colon M\to \Gamma\backslash Sp(2h,\R)/U(h)$ which satisfies the VHS structure theorem.
\end{fact}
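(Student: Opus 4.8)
The plan is to verify that the data of the low-energy theory assemble into a genuine polarized integral VHS over a quasi-projective base, and then to quote the VHS structure theorem verbatim — since the entire content of that theorem is a statement about precisely such VHS. Almost all the work therefore lies in checking hypotheses; once they hold, the conclusion is automatic. I would organize the argument as (a) algebraicity of the Coulomb branch, (b) verification that $\widetilde{p}$ is an integral polarized weight-1 VHS, and (c) invocation of the theorem.

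First I would establish that the Coulomb branch $M$ is quasi-projective. The physical input is that the $\cn=2$ chiral ring $\mathscr{R}_{4d}$ is a finitely-generated commutative unital $\C$-algebra, so that $\mathrm{Spec}\,\mathscr{R}_{4d}$ is an affine variety; the smooth Coulomb branch is its complement away from the discriminant divisor $D$ where charged states become massless. Choosing a projective compactification and writing $\mathrm{Spec}\,\mathscr{R}_{4d}=\overline{M}\setminus Y_\infty$ with $Y_\infty$ effective, one obtains $M=\overline{M}\setminus Y$ with $Y=\overline{D}+Y_\infty$; after resolving singularities I may assume $\overline{M}$ smooth projective and $Y$ a normal-crossing divisor. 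This is exactly the quasi-projective setting of the form \eqref{oftheform} in which the structure theorem is formulated.

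Next I would check that $\widetilde{p}\colon(X^a)\mapsto \partial_a\partial_b F$ lifts a polarized $\Z$-VHS of weight $1$ on $M$. Three points must be verified: (i) the target $Sp(2h,\R)/U(h)$ is the period domain for weight-1 Hodge structures with $h^{1,0}=h^{0,1}=h$, and the image lands in it because $\mathrm{Im}\,\partial_a\partial_b F$ is positive definite by unitarity of the gauge kinetic terms; (ii) Griffiths transversality holds — in weight $1$ this is essentially automatic once the map is realized through the second derivative of a single holomorphic prepotential $F$, which is precisely the rigid special-geometry constraint; (iii) the monodromy group is arithmetic, $\Gamma\subset Sp(2h,\Z)$, and the structure is defined over $\Z$. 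Point (iii) is the genuinely physical ingredient: it follows from Dirac--Schwinger--Zwanziger quantization, which endows the lattice of electric-magnetic charges with an integral symplectic pairing preserved by the monodromy around the components of $Y$. This integral symplectic lattice supplies simultaneously the $\Z$-structure and the polarization.

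With these hypotheses in hand, $p\colon M\to\Gamma\backslash Sp(2h,\R)/U(h)$ is a polarized $\Z$-VHS over the quasi-projective base $M=\overline{M}\setminus Y$, so the VHS structure theorem applies and delivers the claim. The main obstacle is not mathematical but conceptual, and it is located entirely in steps (a) and (iii): one must argue that $\mathscr{R}_{4d}$ is genuinely \emph{finitely generated} (so that $M$ is algebraic) and that the monodromy is genuinely \emph{integral}. Both are properties one expects of any quantum-consistent $\cn=2$ QFT, but strictly speaking they are features of the UV completion rather than statements derivable inside the formal effective theory alone; this is exactly why the conclusion is phrased as a necessary condition for theories admitting a consistent UV completion. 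Granting these two finiteness/integrality inputs, nothing further is needed — the structure theorem does the rest.
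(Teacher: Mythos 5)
Your proposal follows essentially the same route as the paper: finite generation of the chiral ring makes $\mathrm{Spec}\,\mathscr{R}_{4d}$ an affine variety and hence the Coulomb branch quasi-projective, Dirac quantization supplies the integral polarized weight-1 VHS structure, and the structure theorem for VHS over quasi-projective bases then delivers the claim. The one point where the paper goes further is that it actually \emph{derives} finite generation from the hypothesis of a quantum-consistent UV completion (via the UV fixed point with finite central charges, the unitarity bound $\#(\text{generators of }\mathscr{R}_{4d}^{\text{uv}})\leq 4(2a_{\text{uv}}-c_{\text{uv}})$, and inheritance under RG flow), whereas you flag it as a granted physical input.
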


When the above necessary condition is satisfied, we may realize $p$ as the period map of an algebraic family of Abelian varieties parametrized by $M$. Thus we may rephrase \textbf{Fact \ref{rigidswamp}}  by saying that
a quantum-consistent 4d $\cn=2$ effective theory necessarily arises from a Seiberg-Witten geometry
\cite{SW1,SW2}. Experience in low dimension \cite{cl1,cl2,cl3,cl4,cl5,cl6} suggests that this condition is almost sufficient. Roughly, the idea is that the UV completion of a low-energy effective theory, satisfying the structural condition, is obtained by using its Seiberg-Witten geometry to engineer the full $\cn=2$ QFT in string theory \cite{swklemm}. 

\subparagraph{Quantum-consistent $\cn=2$ gravity versus $\cn=2$ QFT.} Although the structural swampland criterions for $\cn=2$ QFT and $\cn=2$ \textsc{sugra} are \emph{in principle} identical, they lead to very different physical consequences.
In the \textsc{sugra} case the structural swampland condition implies the validity of the Ooguri-Vafa swampland conjectures \cite{OoV}  
while nothing of that sort applies in QFT.
The most striking differences in the implications of the structural swampland condition in the two physical contexts are:
\begin{itemize}
\item[A)] the \emph{volume conjecture:} in a quantum-consistent $\cn=2$ gravity theory the volume of the scalars' space is finite, while this is not true in quantum-consistent $\cn=2$ QFT, as the example of free-field theory shows;
\item[B)] the \emph{distance conjecture:} in quantum-consistent $\cn=2$ gravity an infinite tower of states becomes light at infinity in scalars' space,  whereas nothing of that sort may happen in a UV-complete QFT which has finitely-many effective degrees of freedom in \emph{all} regimes.
\end{itemize}  These dramatic differences arise from three simple elements:
\begin{itemize}
\item[1.] In $\cn=2$ QFT the VHS has weight 1 (non-zero Hodge numbers $h^{1,0}=h^{0,1}$), whereas in $\cn=2$ \textsc{sugra} the VHS has weight 3
(non-zero Hodge numbers $h^{3,0}=h^{0,3}=1$ and $h^{2,1}=h^{1,2}$). Let $\gamma_i$ be the monodromy around a prime component $Y_i$ of the \textsc{snc} divisor $Y=\sum_j Y_j$.
By the hard monodromy theorem \cite{periods}  $\gamma_i$  satisfies\footnote{\ Here we are glossing over some technicality discussed in the main text below. In particular, we have assumed   (without loss) that the monodromy group $\Gamma$ is neat.} \be\label{eqrt}
\big(\log \gamma_i\big)^{k_i}\neq0\quad \big(\log \gamma_i\big)^{k_i+1}=0\quad \text{with }1\leq k_i\leq \begin{cases} 1 & \text{QFT}\\
3 & \text{SUGRA}
\end{cases}
\ee
As we approach the support of a divisor $Y_i$ with  $k_i=3$ an infinite tower of states becomes massless. This cannot happen in QFT in view of \eqref{eqrt}. When we approach a divisor with $k_i=1$, in the gravity case either an infinite tower or just finitely many states may become massless, depending on the details of the degenerating mixed Hodge structure along $Y_i$ (see \S.\,\ref{recovering} and references therein).  The same analysis shows that in QFT at most finitely many degrees of freedom may become massless
along any $Y_i$; in showing this eqn.\eqref{hodgeversusG} below plays a crucial role.
\item[2.] In \textsc{sugra} the covering period map $\widetilde{p}$ is a local immersion (the local Torelli theorem), while in QFT  $\widetilde{p}$ just factorizes through the immersion $\widetilde{S}\to Sp(2h,\Z)/U(h)$ of the ``Torelli'' space $\widetilde{S}$ into the Siegel upper half-space. (In many examples $\widetilde{M}\equiv  \widetilde{S}$ but this does not hold, e.g., in the free theory where $\widetilde{S}$ is a point). 
 \item[3.] In consequence of 2. we have a very different relation between the  VHS Hodge metric $K_{i\bar j}$ (to be defined in \S.2.1\textbf{(VII)} below)
 and the special K\"ahler metric $G_{i\bar j}$ which enters in the scalars' kinetic terms. Indeed, the Hodge metric $K_{i\bar j}$ induces a nice positive-definite metric
 on the Torelli space $\widetilde{S}$ but not necessarily on the (covering) Coulomb branch $\widetilde{M}$. In the \textsc{sugra} case we have $\widetilde{M}\equiv \widetilde{S}$ by the Torelli property,
 and $K_{i\bar j}$ is a genuine K\"ahler metric on the covering scalars' space $\widetilde{M}$.
 The (possibly singular in the rigid case) pulled back metric on $\widetilde{M}$ is\footnote{\ By abuse of notation we use the same symbol for the metric and its pull-back, partly because in the rigid case we are mainly interested to special geometries with $\det R_{i\bar j}\not\equiv 0$ so that $\widetilde{S}$ and $\widetilde{M}$ may be locally identified.}
 (for details see \S.2.1\textbf{(VII)})
\be\label{hodgeversusG}
K_{i\bar j}= \begin{cases} R_{i\bar j} & \mathrm{QFT}\\
(m+3)G_{i\bar j}+R_{i\bar j} & \mathrm{SUGRA}
\end{cases}
\ee 
where $R_{i\bar j}$ is the Ricci tensor of $G_{i\bar j}$. If we insert back the Newton constant, the
first term $(m+3)G_{i\bar j}$ will carry a relative  factor
$M_p^{-2}$ ($M_p$ being the Planck mass) so that the two formulae agree in the limit $M_p\to\infty$.
\end{itemize}

From eqn.\eqref{hodgeversusG} we see that being at infinite distance in the physical metric $G_{i\bar j}$ has quite different meanings from the viewpoint of the intrinsic Hodge metric geometry in QFT versus  SUGRA.
In particular, the volume of the space $S\equiv\Gamma\backslash \widetilde{S}$, as computed with the Hodge metric $K_{i\bar j}$, is \emph{finite} in both cases, but this implies that the volume of the scalars' manifold $M$, as computed with the physical metric $G_{i\bar j}$, is also finite only in SUGRA
(see \S.\,\ref{recovering} for details).
\smallskip

Notice that in QFT we may locally identify $\widetilde{M}$ and $\widetilde{S}$ around points where the Ricci tensor is non-singular i.e.\! $\det R_{i\bar j}\neq0$. 
\medskip

Despite the different physical implications of the structural swampland condition in QFT and SUGRA, its general consequences hold equally in both cases. For instance, the dichotomy introduced in the next subsection holds for a quantum-consistent Torelli manifold $\widetilde{S}$ in both cases. 
Assume there is \emph{one} point in the smooth, simply-connected, irreducible cover $\widetilde{M}$ of the Coulomb branch where $\det R_{i\bar j}\neq0$; then 
either the rigid special K\"ahler manifold $\widetilde{M}$ is symmetric or has no Killing vector.
The first possibility is ruled out for a non-free $\cn=2$ QFT, since a symmetric rigid special K\"ahler manifold is necessarily flat corresponding to a free QFT  (see appendix C).

\subsection{A survey of the first applications}  From the structural conjecture one can easily extract some novel property that all consistent $\cn=2$ effective theories should satisfy (in addition to the usual swampland conditions).
\medskip 

To state the first one, we need some generality about ``symmetry'' in the QG paradigm. For more precise definitions and statements, see \S.\ref{preliminary} below.

\paragraph{``Symmetry'' in QG.}
In QG there are \emph{no} symmetries in the strict sense of the term \cite{bsei}. However we can still consider the
\emph{local} Killing vectors of the scalars' space $S$
(with respect to the metric in their kinetic terms). These local Killing vectors are \emph{never} globally defined on $S$ (so they do \emph{not} generate symmetries) but they \emph{do  constrain} the couplings in the effective Lagrangian $\mathscr{L}$ and in particular the scalars' metric. For instance, if we compactify Type II on a six-torus we get a 4d $\cn=8$ effective theory whose scalars' manifold $S_{\cn=8}$ has 133 local Killing vectors forming the Lie algebra $\mathfrak{e}_{7(7)}$, and this fact determines $\mathscr{L}$ almost uniquely \cite{cremmerjulia1}.
\smallskip

In the $\cn=2$ context, the scalars' space has the form $S=\Gamma\backslash \widetilde{S}$ for a simply-connected special K\"ahler manifold $\widetilde{S}$ and a discrete group of isometries $\Gamma\subset \mathsf{Iso}(\widetilde{S})$. The (naive) symmetry group $\mathsf{Sym}(\widetilde{S})$ of the covering special geometry $\widetilde{S}$ is a closed sub-group 
\be
\mathsf{Sym}(\widetilde{S})\subset Sp(2m+2,\R)\quad \text{where}\quad  m=\dim_\C S.
\ee
The elements of its Lie algebra $\mathfrak{sym}(\widetilde{S})\subseteq \mathfrak{iso}(\widetilde{S})$  are our \emph{local} Killing vectors on $S$.
As we shall see, when the special geometry $S$ does  not belong to the swampland the group $\mathsf{Sym}(\widetilde{S})$
and its algebra $\mathfrak{sym}(\widetilde{S})$ -- if non-trivial -- carry interesting arithmetic structures.

\paragraph{The dichotomy.} A first consequence of the structural conjecture is the following

\begin{fact}[Dichotomy]\label{cido} Assume {\bf Answer 2}. Let $S=\Gamma\backslash \widetilde{S}$ be a special K\"ahler geometry which does \textbf{not} belong to the swampland,
where $\widetilde{S}$ is its simply-connected cover and $\Gamma$ its monodromy group (a.k.a.\! duality group). Then: $S$ has a non-zero \emph{local} Killing vector $\Rightarrow$ $S$ is locally symmetric. More precisely: 
\begin{itemize}
\item \emph{either} $\widetilde{S}$ is Hermitian symmetric, its isometry group $\mathsf{Iso}(\widetilde{S})$ is one of the Lie groups in
\textsc{Table} \ref{tableiso}, $\mathsf{Sym}(\widetilde{S})=\mathsf{Iso}(\widetilde{S})$,
and $\Gamma\subset \mathsf{Iso}(\widetilde{S})$ is an \emph{arithmetic} subgroup; \item \emph{or} $\mathfrak{sym}(\widetilde{S})=0$, that is, the symmetry group
$\mathsf{Sym}(\widetilde{S})$ of the covering special geometry $\widetilde{S}$ is \emph{discrete.}
\end{itemize}
In other words: \emph{$\mathfrak{sym}(\widetilde{S})\neq0$
if and only if $S$ is a Shimura variety of rank-1 or `magic' type}.\footnote{\ By a \textit{Shimura variety} \cite{milne1,milne3,milne2,shimurakerr} we mean an \emph{arithmetic} quotient of a non-compact Hermitian symmetric manifold, i.e.\! $\Gamma\backslash G/K$ where $G$ is a non-compact semi-simple real Lie group, $K\subset G$ a maximal compact subgroup of the form $U(1)\times H$, and $\Gamma\subset G$ an \emph{arithmetic} subgroup (in the strict algebraic-group sense). We say that a Shimura variety
$\Gamma\backslash G/K$ is of \textit{`magical' type} if  $G$ is a `magical' Lie group, i.e.\! one of the groups in the right hand side of \textsc{Table \ref{tableiso}}. }
\end{fact}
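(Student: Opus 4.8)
The plan is to run an \emph{all or nothing} rigidity argument powered by \textbf{Answer 2}. Since $S=\Gamma\backslash\widetilde S$ is assumed \emph{not} to lie in the swampland, \textbf{Answer 2} grants that the Griffiths period map $p$ obeys the VHS structure theorem; in particular the weight-$3$, $h^{3,0}=1$ variation is defined over $\mathbb{Q}$, the monodromy $\Gamma$ sits inside $Sp(2m+2,\mathbb{Z})$, and the algebraic monodromy group $\mathbf H$ (the identity component of the Zariski closure of $\Gamma$) is a semisimple $\mathbb{Q}$-group which is normal in the derived Mumford--Tate group $\mathbf M^{\mathrm{der}}$, the image of $\widetilde p$ lying in the associated Mumford--Tate domain. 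First I would show that a nonzero \emph{local} Killing vector of the special K\"ahler metric is automatically an infinitesimal automorphism of the whole special geometry: because for special K\"ahler manifolds the metric determines the holomorphic prepotential up to symplectic transformations, any metric isometry lifts (locally) to an element of $\mathfrak{sp}(2m+2,\mathbb{R})$ preserving the Hodge filtration and the polarization. Hence $\mathfrak{sym}(\widetilde S)\hookrightarrow \mathfrak{sp}(2m+2,\mathbb{R})$ as genuine infinitesimal symmetries of the VHS, and $\mathfrak{sym}(\widetilde S)\neq0$ by hypothesis.

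The core step is to promote a \emph{single} such Killing vector to full homogeneity. Here the structure theorem does the heavy lifting: reductivity of the relevant Mumford--Tate and algebraic monodromy groups forces $\mathfrak{sym}(\widetilde S)$ to be a reductive Lie algebra, and the orbit $G/K$ of the symmetry group $G=\mathsf{Sym}(\widetilde S)^0$ through a generic point of $\widetilde S$ is a \emph{horizontal}, complex, totally geodesic submanifold on which the Hodge metric restricts to a $G$-invariant K\"ahler metric. A homogeneous horizontal complex submanifold of a period domain is necessarily Hermitian symmetric, so the orbit is a Hermitian symmetric Mumford--Tate subdomain that is simultaneously a sub-special-geometry. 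I would then use the irreducibility of the smooth simply-connected cover $\widetilde M$ together with the special K\"ahler curvature identity to show that this orbit cannot be a \emph{proper} subvariety: the invariance of the curvature tensor under the Killing flow propagates $\nabla R=0$ across all of $\widetilde S$, so either the orbit exhausts $\widetilde S$ (making $\widetilde S=G/K$ Hermitian symmetric with $\mathsf{Iso}(\widetilde S)=G$ and $\mathsf{Sym}(\widetilde S)=\mathsf{Iso}(\widetilde S)$) or $\mathfrak{sym}(\widetilde S)=0$, contradicting the hypothesis. This is exactly the asserted dichotomy.

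It remains to identify the surviving symmetric spaces and to establish arithmeticity. Once $\widetilde S$ is Hermitian symmetric and special K\"ahler, I would invoke the classification of symmetric special K\"ahler manifolds (the minimal-coupling rank-$1$ balls $SU(1,m)/U(m)$, the very-special series $SU(1,1)/U(1)\times SO(2,n)/[SO(2)\times SO(n)]$, and the four magic Jordan-algebra spaces) and then cut it down using the Hodge-theoretic constraints built into the structure theorem. The variation must be horizontal of Calabi--Yau type (weight $3$, $h^{3,0}=1$): by the classification of Hermitian symmetric domains admitting a horizontal $\mathbb{Q}$-polarized Hodge structure of Calabi--Yau type, only the rank-$1$ ball and the four magic tube domains qualify, whereas the type-IV domains $SO(2,n)/[SO(2)\times SO(n)]$ carry a canonical weight-$2$ (K3-type), not weight-$3$, structure and are therefore excluded. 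This leaves precisely the groups of \textsc{Table}~\ref{tableiso}. Finally, arithmeticity of $\Gamma$ follows not from Margulis superrigidity (which would fail in the rank-$1$ case) but directly from the $\mathbb{Q}$-structure of the structure theorem: $\Gamma$ preserves the integral polarized $\mathbb{Q}$-VHS and hence lies in the integral points of the $\mathbb{Q}$-form of $G$, so it is an arithmetic subgroup. Thus $\mathfrak{sym}(\widetilde S)\neq0$ if and only if $S$ is a Shimura variety of rank-$1$ or magic type.

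I expect the main obstacle to be the rigidity step of the second paragraph: rigorously upgrading one local Killing vector to a transitive symmetry group. The delicate issue is that \emph{a priori} the symmetry orbit could be a lower-dimensional special-geometry sub-locus, and ruling this out requires combining the reductivity and normality output of the structure theorem with the special K\"ahler curvature constraints and the global irreducibility of $\widetilde M$ in a way that forces $\nabla R=0$ everywhere. The secondary difficulty is the classification that eliminates the $SO(2,n)$ series: one must verify that the weight-$3$ realization forced by the special geometry is incompatible with a $\mathbb{Q}$-Hodge structure on the type-IV domain, i.e.\ that the cubic period map there does not satisfy the full structure theorem even though it defines a bona-fide $\mathbb{R}$-VHS.
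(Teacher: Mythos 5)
Your overall architecture (structure theorem $\Rightarrow$ the symmetries live inside the algebraic monodromy group $\Rightarrow$ all-or-nothing) is the right one, but the step you yourself flag as the ``main obstacle'' is exactly where the proof lives, and the mechanism you propose for it does not work. A Killing vector gives $\mathcal{L}_X R=0$, not $\nabla R=0$; no local curvature identity can propagate a single Killing vector to local symmetry, because there exist homogeneous (hence maximally Killing-rich) non-symmetric special K\"ahler manifolds --- the infinite series of \cite{Talg} --- and nothing in the local special geometry distinguishes them from the symmetric ones. What rules them out is arithmetic, not differential-geometric. The paper's actual argument is a short Schur-type lemma: since $\Gamma\subset\mathsf{Sym}(\widetilde S)$, the subalgebra $\mathfrak{sym}(\widetilde S)\subseteq\mathfrak{p}^{\R}$ is stable under $\mathrm{ad}\,\gamma$ for every $\gamma\in\Gamma$; the inclusion $\mathfrak{sym}(\widetilde S)\hookrightarrow\mathfrak{p}^{\R}$ is therefore a $\Gamma$-invariant element of $(\mathfrak{p}\otimes\mathfrak{p}^{\vee})\otimes_{\mathbb{Q}}\R$, and the $\Gamma$-invariant tensors of a rational representation coincide with the $P$-invariant ones because $P=\overline{\Gamma}^{\,\mathbb{Q}}$ is by definition the Zariski closure. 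Hence $\mathfrak{sym}(\widetilde S)$ is a $P(\R)$-submodule of the adjoint module, which is irreducible when $P$ is simple (the reducible case being dispatched separately by the Ferrara--van Proeyen classification \cite{ferrara}), so it is $0$ or all of $\mathfrak{p}^{\R}$. In the latter case $\varpi$ is onto, the infinitesimal period relations become tautological, and $\widetilde S=P(\R)/H_P$ is Hermitian symmetric. No orbit analysis, no totally geodesic subdomains, and no curvature propagation are needed --- and none would suffice.

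Two secondary points. First, your classification step is garbled: you ``exclude'' the type-IV domains on weight grounds, yet $SL(2,\R)\times SO(2,k)$ sits in \textsc{Table} \ref{tableiso} as one of the cubic cases (its weight-3 Hodge structure is the tensor product of the weight-1 structure on the $SL(2)$ factor with the weight-2 structure on the $SO(2,k)$ factor); the classification of symmetric special K\"ahler manifolds already returns exactly the table, with nothing further to cut down. Second, arithmeticity of $\Gamma$ does not follow from ``$\Gamma$ preserves the integral structure, hence lies in the integral points'': thin monodromy groups also lie in $Sp(2m+2,\Z)$ and are Zariski dense while having infinite index in $P(\Z)$. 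In the symmetric case one needs the extra input that $\varpi$ identifies $S$ with a finite-volume locally symmetric quotient, so that $\Gamma$ is a lattice in $P(\R)$ contained in $P(\Z)$ and hence of finite index there; this is the sense in which the paper asserts that locally symmetric monodromy is automatically arithmetic.
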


\begin{table}
\setlength{\tabcolsep}{18pt}
\centering
\begin{tabular}{c|ccc}\hline\hline
rank-1 (quadratic) &  \multicolumn{3}{c}{`magic' (cubic)}\\\hline
$SU(1,m)$ & $SL(2,\R)\times SO(2,k)$ & $SL(2,\R)$ & $Sp(6,\R)$\\
& $U(3,3)$ & $SO^*\mspace{-1mu}(12)$ & $E_{7(-15)}$ \\\hline\hline
\end{tabular}
\caption{\label{tableiso}Isometry groups $G$ of symmetric special K\"ahler manifolds \cite{cremmer}: $\widetilde{S}\equiv G/K$ with $K\subset G$ a maximal compact subgroup. One has $k=m-1\geq1$. The pre-potential $\cf$ is quadratic and respectively cubic.
`Magic' $SL(2,\R)$ has rank-1 as a symmetric space, but it has ``rank-3'' as  a special geometry \cite{book} as the rest of `magic' special geometries.}
\end{table}

\begin{rem} Shimura varieties are the geometries with the richer and most interesting arithmetics \cite{milne1,milne3,milne2,shimurakerr}: indeed they are the very {paradise} of arithmetics. Their simplest (and most classical) examples are the (non-compact)  \emph{modular curves} \cite{diamond}. The fact that Shimura varieties arise so naturally provides further evidence for the relevance of arithmetics in QG. Note that last two groups in \textsc{Table \ref{tableiso}} do not correspond to `classical' Shimura varieties, that is, they are not moduli spaces of Abelian motives. 
\end{rem}

There are plenty of non-symmetric simply-connected special K\"ahler geometries $\widetilde{S}$ with a non-trivial  Killing vector, $\mathfrak{sym}(\widetilde{S})\neq0$. They and \emph{all} their quotients belong  to the swampland. In particular, 
\textbf{Fact \ref{cido}} is bad news for the author of ref.\!\!\cite{Talg}: all quotients of the several infinite series of homogeneous
special K\"ahler geometries constructed there,
however beautiful and elegant, belong to the swampland! (This negative result is already obvious at the naive DG level: see \S.\,\ref{tubedom} below).
\medskip

\textbf{Fact \ref{cido}} has important \textbf{Corollaries}.
The first one is in facts a confirmation in the
special $\cn=2$  case of a more general and profound prediction by the authors of  ref.\!\cite{vvvaf}:

\begin{corl}[Completeness of instanton corrections]\label{cominst} Suppose we have a pre-potential with an asymptotic expansion at $\infty$ of the form
\be\label{cubex}
\begin{aligned}
\cf(X^0,X^i)=&\overbrace{-\frac{d_{ijk}X^iX^jX^k}{6X^0}}^\text{classical}+\overbrace{\frac{1}{2}\,A_{IJ}X^IX^J+\frac{i\,\zeta(3)}{2(2\pi)^3}\,\chi\,(X^0)^2}^\text{loop}+\\
&+\overbrace{\sum_{\vec\lambda\in \Z_+^\rho} c_{\vec\lambda}\, \mathrm{Li}_3(e^{2\pi i\vec \lambda \cdot\vec X/X^0})}^\text{instanton corrections},\qquad\text{where }d_{ijk},\;A_{IJ}\in\mathbb{Q},
\end{aligned}
\ee
which does \textbf{not} belong to the swampland. 
Then 
\begin{itemize}
\item[\it (i)] either $c_{\vec\lambda}\neq0$ for essentially all possible $\vec\lambda$; 
\item[\it(ii)] or $c_{\vec\lambda}=0$ for all $\vec\lambda$, the ``Euler characteristic'' $\chi=0$,
$A_{IJ}\in\Z$, and $d_{ijk}z^iz^jz^k$ is  the determinant form of a rank-3 real Jordan algebra (in particular, $d_{ijk}\in\Z$).
\end{itemize}
\end{corl}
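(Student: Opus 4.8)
The plan is to reduce the corollary to the Dichotomy (\textbf{Fact \ref{cido}}) by reading off the \emph{local} Killing vectors of \eqref{cubex} directly from the instanton data. The one input I need from special geometry is that the K\"ahler potential $e^{-K}=i\bigl(\bar X^I\partial_I\cf-X^I\partial_I\bar\cf\bigr)$, written in special coordinates $z^i=X^i/X^0$, depends on the cubic term $d_{ijk}X^iX^jX^k/X^0$, on the real quadratic term, and on the $\chi$-term only through the imaginary parts $y^i=\mathrm{Im}\,z^i$, whereas each polylogarithm $\mathrm{Li}_3(e^{2\pi i\vec\lambda\cdot\vec z})$ depends genuinely on $\mathrm{Re}\,z^i$ as well. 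First I would check that a real translation $z^i\mapsto z^i+t\,u^i$ ($u\in\R^m$, $t\in\R$) is an isometry of the special K\"ahler metric \emph{iff} $u$ is orthogonal to the charge support $\Lambda:=\{\vec\lambda:c_{\vec\lambda}\neq0\}$: such a shift fixes every $y^i$, hence the cubic, real-quadratic and $\chi$ contributions to $e^{-K}$, while it rescales the $\vec\lambda$-th polylog argument by $e^{2\pi i t\,\vec\lambda\cdot u}$, so the metric survives for all $t$ exactly when $\vec\lambda\cdot u=0$ for every $\vec\lambda\in\Lambda$. Consequently $\mathfrak{sym}(\widetilde S)\neq0$ whenever $\Lambda$ fails to span $\R^m$, and in particular the purely cubic geometry ($\Lambda=\emptyset$) always carries continuous shift symmetries.

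Next I would feed this into \textbf{Fact \ref{cido}}. Because $S$ is assumed outside the swampland, a nonzero local Killing vector forces $\widetilde S$ to be Hermitian symmetric; since the leading term of \eqref{cubex} is cubic rather than quadratic, the rank-$1$ factor $SU(1,m)$ is excluded and we land in the `magic' column of \textsc{Table \ref{tableiso}}. Symmetric special geometries receive no instanton corrections, so necessarily $\Lambda=\emptyset$, i.e.\ $c_{\vec\lambda}=0$ for all $\vec\lambda$: this is alternative \emph{(ii)}. Contrapositively, if some $c_{\vec\lambda}\neq0$ then $\widetilde S$ is not symmetric, so the Dichotomy gives $\mathfrak{sym}(\widetilde S)=0$; no continuous translation can then survive, which forces $\Lambda$ to span $\R^m$ and, more strongly, to break \emph{every} continuous isometry of the underlying cubic geometry — precisely the informal assertion that $c_{\vec\lambda}\neq0$ for essentially all $\vec\lambda$, i.e.\ alternative \emph{(i)}. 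The two alternatives are mutually exclusive ($\mathfrak{sym}\neq0$ versus $\mathfrak{sym}=0$) and, by the Dichotomy, exhaustive.

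It remains to fix the data in case \emph{(ii)}. There $\widetilde S$ is one of the four `magic' cubic symmetric spaces, and the classification of symmetric special K\"ahler geometries \cite{cremmer,book} identifies $d_{ijk}z^iz^jz^k$ with the norm (determinant) form of a rank-$3$ real Jordan algebra, whose integral structure yields $d_{ijk}\in\Z$. For the last two claims I would argue as follows. The real quadratic piece $\tfrac12 A_{IJ}X^IX^J$ contributes nothing to $e^{-K}$ (its contribution cancels by the symmetry of $A_{IJ}$), so it is merely a real symplectic change of frame $\bigl(\begin{smallmatrix}1&0\\ A&1\end{smallmatrix}\bigr)$; compatibility with the \emph{arithmetic} monodromy group $\Gamma$ of the magic Shimura variety forces this transformation to be integral, hence $A_{IJ}\in\Z$. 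The $\chi$-term instead adds a genuine \emph{constant} to $e^{-K}=\tfrac43 d_{ijk}y^iy^jy^k+(\mathrm{const})\,\chi$; a nonzero constant destroys the degree-three homogeneity and so the dilatation sitting inside the transitive isometry algebra, giving a non-symmetric metric (already visible at $m=1$, where $e^{-K}=y^3+c$ has non-constant curvature for $c\neq0$). Since translations still act when $\Lambda=\emptyset$, the Dichotomy would place the geometry in the swampland unless $\chi=0$.

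The genuinely delicate point is case \emph{(i)}: upgrading ``$\Lambda$ spans $\R^m$'' to the full statement $\mathfrak{sym}(\widetilde S)=0$ requires excluding \emph{all} continuous Killing vectors of the cubic base, not just the translational ones, and giving ``essentially all $\vec\lambda$'' a formulation that is both sharp and manifestly equivalent to the vanishing of $\mathfrak{sym}(\widetilde S)$. The second subtle step is the arithmetic normalization $A_{IJ}\in\Z$, which depends on fixing the integral symplectic frame in which $\Gamma$ is arithmetic rather than merely rational.
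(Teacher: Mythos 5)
Your strategy coincides with the paper's own: the key observation that vanishing instanton coefficients leave the real translations $z^i\mapsto z^i+t\,u^i$ (for $u$ orthogonal to the charge support) as local Killing vectors, followed by an appeal to \textbf{Fact \ref{cido}}, is exactly how \S.\,\ref{jasqw12} argues, and your treatment of case \emph{(ii)} — the Jordan-algebra determinant form, $\chi=0$ because a constant in $e^{-K}$ is incompatible with the symmetric metric, integrality of $A_{IJ}$ tied to the integral symplectic frame — reproduces the paper's conclusions, which it obtains somewhat more systematically from the functional equations \eqref{pqw123}--\eqref{whichformFFY} forced by requiring the MUM monodromies to act by \emph{integral} symplectic rotations.

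The genuine gap is the one you flag yourself, and it is not cosmetic: the dichotomy only tells you that $\mathfrak{sym}(\widetilde S)=0$, hence that the support $\Lambda=\{\vec\lambda\,:\,c_{\vec\lambda}\neq0\}$ must be large enough to kill every continuous isometry of the cubic model — in particular to span $\R^m$ — but a set of $m$ independent charges together with their multiples already does that and is very far from ``essentially all $\vec\lambda$''. The paper closes this in \S.\,7.2 with a separate arithmetic input: the subgroup $\Xi=\Gamma\cap GL(m,\Z)$, which normalizes the unipotent MUM subgroup, acts linearly on the instanton charges; since $\overline{\Gamma}^{\,\mathbb{Q}}$ is semi-simple, $\Xi$ is (modulo finite groups) all of $GL(m,\Z)$ in the arithmetic case, its orbit through a single primitive charge is the set of \emph{all} primitive vectors of $\Z^m$, and the multi-cover contributions $d\,N\mspace{-4mu}Z_U$ then fill in the multiples; the thin-monodromy case is handled by strong approximation. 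Without some such use of the monodromy action on charges your argument proves only the weaker statement that $\Lambda$ generates a finite-index sublattice, so you should either import that step or weaken claim \emph{(i)} accordingly.
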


\begin{rem} The (possibly reducible) rank-3 real Jordan algebras are in 1-to-1 correspondence with the cubic  `magic' groups in the right-hand side of \textsc{Table} \ref{tableiso}.
\end{rem}

In eqn.\eqref{cubex} we labelled the various terms according to their origin in the special case where the pre-potential $\cf$ describes Type IIA compactified on a Calabi-Yau 3-fold $X$. In this particular application,  the vector-multiplet scalars $z^i\equiv X^i/X^0$ parametrize the quantum K\"ahler moduli of $X$:  
the first term is the world-sheet classical ($\equiv$ large volume) answer, the second one the perturbative loop corrections, and the last term the world-sheet instantons \cite{candelas,KKV}. In this specific geometric set-up the coefficients $d_{ijk}$ are integers (being intersection indices in cohomology \cite{CFG2}) and $\chi$ is the Euler characteristic of $X$.
\medskip

Roughly speaking \textbf{Corollary 1} says that we must have all possible world-sheet instanton corrections unless our $\cn=2$ \textsc{sugra} is a consistent truncation of a $\cn>2$ supergravity. This fact was predicted in \cite{vvvaf} as an extension of the swampland conjectures.
We make the statement more precise in the next three \textbf{Corollaries}.

\begin{corl} Suppose the pre-potential \eqref{cubex} describes the quantum K\"ahler moduli of a 2d (2,2) superconformal $\sigma$-model (with $\hat{c}=3$).
Then
the ``instanton corrections'' in eqn.\eqref{cubex} vanish \emph{if and only if} the absence of quantum corrections in the world-sheet theory is  implied by the non-renormalization theorem of a higher $(p,q)>(2,2)$ 2d supersymmetry. In this case, also the loop corrections must vanish.\footnote{\ The statement requires specifications since $A_{IJ}$ depends on the electro-magnetic duality frame. What we mean is that  $A_{IJ}$ vanishes in \emph{some} duality frame.
}
\end{corl}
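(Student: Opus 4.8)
The plan is to read off this physical reformulation from \textbf{Corollary \ref{cominst}} and the \textbf{Dichotomy} \textbf{Fact \ref{cido}}, supplemented by the standard Type II dictionary relating world-sheet and spacetime supersymmetry. Because the $\sigma$-model is a genuine SCFT, the special K\"ahler geometry of its K\"ahler moduli does not lie in the swampland, so the hypotheses of \textbf{Corollary \ref{cominst}} are met and we sit in exactly one of its two alternatives.

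First I would establish the nontrivial implication, that vanishing of all instanton corrections forces an enhanced world-sheet supersymmetry. If $c_{\vec\lambda}=0$ for every $\vec\lambda$, \textbf{Corollary \ref{cominst}(ii)} identifies $d_{ijk}z^iz^jz^k$ with the determinant form of a rank-3 real Jordan algebra, so by the Remark following it the geometry is of cubic `magic' type. \textbf{Fact \ref{cido}} then makes the underlying special K\"ahler manifold Hermitian symmetric with $\mathfrak{sym}(\widetilde{S})\neq0$, i.e.\! one of the symmetric cubic geometries of \textsc{Table \ref{tableiso}}, each of which is a consistent truncation of a supergravity with $\cn>2$. The essential geometric input is that such magic targets have holonomy strictly smaller than the generic $SU(3)$ --- schematically the flat $T^6/G$ and the $K3\times T^2$ families together with their arithmetic relatives --- and reduced target holonomy is exactly what enlarges the world-sheet $(2,2)$ superconformal algebra to some $(p,q)>(2,2)$. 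This matches the general principle that enhanced spacetime $\cn>2$ reflects enhanced world-sheet supersymmetry; the non-renormalization theorem of the larger algebra then annihilates the K\"ahler-moduli instanton sum, reproducing $c_{\vec\lambda}=0$.

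The converse is essentially formal: if the world-sheet carries a $(p,q)>(2,2)$ superconformal symmetry whose non-renormalization theorem protects the K\"ahler metric, the instanton series is obstructed term by term, so $c_{\vec\lambda}=0$ and the classification pushes us back into case (ii). The statement about loop corrections is then immediate from \textbf{Corollary \ref{cominst}(ii)}: there $\chi=0$ kills the $\zeta(3)$ contribution, while $A_{IJ}\in\Z$ means the surviving quadratic term is an integral symplectic shift, removable by an electromagnetic duality rotation, so the perturbative piece vanishes in an appropriate frame.

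The hard part will be making the second step --- the equivalence between the algebraic `magic'/Jordan-algebra degeneration of the cubic form and a genuine world-sheet $(p,q)>(2,2)$ enhancement --- precise rather than case-by-case. One must rule out any accidental vanishing of all $c_{\vec\lambda}$ unaccompanied by a symmetry enhancement, and conversely show that every Jordan determinant form is realized by a target whose holonomy reduction lifts to an honest enlargement of the $(2,2)$ algebra. For the classical (Abelian-motive) entries of \textsc{Table \ref{tableiso}} this is governed by the Oguiso-Sakurai picture used elsewhere in the paper --- finite quotients of Abelian varieties, infinite $\pi_1$ --- but the exceptional entries $SO^*(12)$ and $E_{7(-15)}$, which are \emph{not} moduli of Abelian motives, force one to identify the world-sheet enhancement intrinsically at the level of the abstract $\hat c=3$ SCFT, independently of any geometric target.
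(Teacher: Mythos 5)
Your setup is right: the hypotheses of \textbf{Corollary \ref{cominst}} apply, alternative \textit{(ii)} puts you in the `magic' symmetric case by \textbf{Fact \ref{cido}}, the converse direction is indeed formal, and the disposal of the loop corrections ($\chi=0$ kills the $\zeta(3)$ term, $A_{IJ}\in\Z$ is an integral symplectic shift removable by a change of duality frame) matches the paper. The gap is in the step you yourself flag as hard, and it is not a technicality: you assert that a `magic' symmetric \emph{moduli space} forces the \emph{target} $X$ to have holonomy strictly smaller than $SU(3)$. Nothing in the dichotomy gives this. A priori a strict Calabi--Yau threefold with $\pi_1$ finite and holonomy exactly $SU(3)$ could have a symmetric quantum K\"ahler moduli space with all $c_{\vec\lambda}=0$; ruling this out is precisely the content of the Oguiso--Sakurai question, which the paper \emph{derives from} this circle of results rather than assumes. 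Your argument is therefore circular at its core, and the exceptional cases you worry about ($SO^*(12)$, $E_{7(-25)}$) are a symptom: the paper does not ``identify the enhancement intrinsically'' for them, it shows they cannot occur geometrically at all.

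What the paper actually does (section \ref{sec:os}) is arithmetic, not holonomy-theoretic. Absence of instanton corrections forces, via the structure theorem, the \emph{integral} cubic intersection form $D^3$ to be the determinant form of a rank-3 Jordan algebra over a division $\mathbb{Q}$-algebra $\mathbb{D}_\mathbb{F}$, with the $\mathbb{Q}$-structure of the automorphism group pinned down by the $\mathbb{Q}$-rank conditions \eqref{rrrianks}. This is enough to exhibit rational points (e.g.\ $\mathrm{diag}(1,\pm1,0)\in\mathsf{Her}_3(\mathbb{D}_\mathbb{F})$) whose $G_\alpha(\mathbb{Q})$-orbits are dense in the real locus of the cubic hypersurface $W$. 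Wilson's \textbf{Proposition \ref{proWW}} then produces a rational curve on any such $X$ with \emph{finite} $\pi_1$, contradicting the absence of instanton corrections; hence $|\pi_1(X)|=\infty$, Beauville's decomposition makes $X$ a free quotient of an Abelian variety or of $E\times K3$, and lifting the $\sigma$-model to the cover supplies the $(p,q)>(2,2)$ non-renormalization theorem. Note also the subtlety your draft glosses over: K-type threefolds \emph{do} contain rational curves, but these move in families over $E$ and contribute nothing, so ``no rational curves'' and ``no instanton corrections'' are not interchangeable; the paper's Claim is carefully phrased to cover the weaker hypothesis. Without the rational-points/Wilson mechanism, your proof does not close.
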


\begin{corl}\label{asqw} Let $S=\Gamma\backslash \widetilde{S}$ be a special K\"ahler manifold, consistent with our swampland structural criterion, which is described locally by a \emph{strictly cubic} pre-potential $\cf_{\mspace{-2mu}\text{\rm cub}}$.
Then \emph{at least one} of the following possibility applies:
\begin{itemize}
\item $S$ is a geodesic submanifold of the scalar's space of some consistent $\cn=4$ supergravity. That is, the $\cn=2$ \textsc{sugra} is a consistent truncation of $\cn=4$ to configurations invariant under a certain discrete symmetry group;
\item $S$ is a geodesic submanifold of the scalar's space of some consistent $\cn=8$ supergravity. That is, the $\cn=2$ \textsc{sugra} is a consistent truncation of $\cn=8$ to configurations invariant under a certain discrete symmetry group; 
\item $S$ is a Shimura variety which is an arithmetic quotient of Cartan's  domain $E_{7(-25)}/[U(1)\times E_6]$. 
\end{itemize}
\end{corl}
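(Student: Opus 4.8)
The plan is to read the statement off the Dichotomy (\textbf{Fact \ref{cido}}) by first showing that a strictly cubic geometry can \emph{never} land in the symmetry-free branch of that dichotomy, and then sorting the resulting short list of symmetric spaces according to their supergravity origin.

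First I would produce the local Killing vectors explicitly. Writing the pre-potential in special coordinates $z^i=X^i/X^0$ as $\cf_{\mathrm{cub}}=-\tfrac16 d_{ijk}z^iz^jz^k$, the symplectic-invariant Kähler potential satisfies $e^{-K}\propto d_{ijk}\,y^iy^jy^k$ with $y^i=\mathrm{Im}\,z^i$, so $K$ depends on the $z^i$ only through their imaginary parts. Consequently the real axionic shifts $z^i\mapsto z^i+a^i$ ($a^i\in\R$) are holomorphic maps preserving $K$ \emph{exactly}, hence honest Kähler isometries; they span an $m$-dimensional abelian subalgebra of $\mathfrak{sym}(\widetilde S)$, so $\mathfrak{sym}(\widetilde S)\neq0$. (The dilatation isometry enlarges the algebra further, but the shifts already do the job.) Since the geometry is assumed consistent with the structural criterion, i.e.\ not in the swampland, \textbf{Fact \ref{cido}} applies with $\mathfrak{sym}(\widetilde S)\neq0$ and forces $\widetilde S$ to be one of the Hermitian symmetric spaces of \textsc{Table \ref{tableiso}} with $\Gamma$ arithmetic: $S$ is a Shimura variety of rank-1 or `magic' type.

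Next I would discard the rank-1 entry $SU(1,m)/U(m)$, whose pre-potential is \emph{quadratic}; strict cubicity leaves exactly the `magic' column. By the Jordan-algebra description of symmetric special geometry these correspond to the rank-3 Euclidean Jordan algebras: the reducible family $\R\oplus\Gamma(Q_k)$, giving $SL(2,\R)\times SO(2,k)$, and the simple ones $J_3(\mathbb A)$ for $\mathbb A\in\{\R,\C,\mathbb H,\mathbb O\}$, giving $Sp(6,\R),\,U(3,3),\,SO^*(12),\,E_{7(-25)}$ (the one-modulus $T^3$ model appearing as the principal-$SL(2,\R)$ sub-truncation inside $Sp(6,\R)$). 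The final step sorts this list. The series $SL(2,\R)\times SO(2,k)$ is the $\cn=2$ axio-dilaton-plus-vector sector of $\cn=4$ supergravity with $k$ vector multiplets, obtained by restricting to the fixed locus of a discrete R-symmetry, a totally geodesic submanifold: this is the first bullet. The division-algebra cases (and $T^3$) arise as consistent $\cn=2$ truncations of $\cn=8$ supergravity through the subgroup chains $G_{2(2)}\times Sp(6,\R)$, $SL(3,\R)\times SU(3,3)$ and $SL(2,\R)\times SO^*(12)$ inside $E_{7(7)}$, each intersecting $SU(8)$ in the correct maximal compact $U(3),\,S(U(3)\times U(3)),\,U(6)$: this is the second bullet. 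The octonionic $J_3(\mathbb O)$ is the only remaining case, giving the third bullet.

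The main obstacle is this last step: one must certify that $Sp(6,\R),SU(3,3),SO^*(12)$ genuinely embed in $E_{7(7)}$ as $\theta$-compatible (hence totally geodesic) subgroups defining \emph{consistent} truncations, while $E_{7(-25)}$ does not embed at all. The decisive point for $E_{7(-25)}$ is structural rather than computational: it is a non-compact real form of $E_7^{\C}$ \emph{distinct} from $E_{7(7)}$ and of equal real dimension $133$, so it can be a subgroup neither of $E_{7(7)}$ nor of any $\cn=4$ duality group $SL(2,\R)\times SO(6,k)$; hence its moduli space admits no higher-$\cn$ embedding and survives only as Cartan's exceptional tube domain $E_{7(-25)}/[U(1)\times E_6]$. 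For the explicit truncation maps of the first two bullets I would invoke the classification of magic supergravities rather than rebuild them, since once the $\theta$-stable subgroup embeddings are in hand the equivalence between a consistent discrete-group truncation and a totally geodesic submanifold is standard.
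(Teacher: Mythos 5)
Your argument is correct and is essentially the paper's own route: the axionic shift isometries of the cubic tube domain force $\mathfrak{sym}(\widetilde S)\neq0$ (this is exactly the observation made in \S.\,3.2 and \S.\,7.1), the Dichotomy then reduces to the `magic' column of \textsc{Table \ref{tableiso}} via the rank-3 Jordan-algebra classification, and the final sorting into $\cn=4$ truncations, $\cn=8$ truncations and the exceptional $E_{7(-25)}$ domain is done, as in the paper, by appealing to the known embeddings of the magic supergravities (the paper cites the tables of \cite{CFG} where you invoke the $\theta$-compatible subgroup chains in $E_{7(7)}$). No gap; your explicit non-embedding argument for $E_{7(-25)}$ is a useful addition the paper leaves implicit.
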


Using results from \cite{milne1,milne3} we get
a characterization of the QG cubic scalar manifolds $S$ with well-known moduli spaces of nice algebro-geometric ``objects'':

\begin{corl} $S$ as in {\rm\textbf{Corollary \ref{asqw}}} and $\dim_\C S\neq 15,27$.
Then $S$ is the moduli space of some
family of Abelian motives. {\rm Typically, they may be realized as the special geometries of the untwisted sector moduli
of Type II/heterotic compactifications on tori \cite{CFG}.}
\end{corl}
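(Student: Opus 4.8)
The plan is to feed \textbf{Corollary \ref{asqw}} into the Deligne--Milne classification of Shimura varieties of abelian type \cite{milne1,milne3,deligne}, the only genuinely new ingredient being a bookkeeping of complex dimensions. First I would record that, by \textbf{Fact \ref{cido}}, the geometry $S=\Gamma\backslash\widetilde{S}$ is an \emph{arithmetic} quotient of a Hermitian symmetric $\widetilde{S}=G/K$ whose group $G$, being cubic, is one of the ``magic'' entries of \textsc{Table \ref{tableiso}}: either the reducible series $SL(2,\R)$ and $SL(2,\R)\times SO(2,k)$ (complex dimensions $1$ and $k+1$), or one of the four irreducible geometries $Sp(6,\R)$, $U(3,3)$, $SO^*(12)$, $E_{7(-25)}$ (complex dimensions $6,9,15,27$). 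Since $\Gamma$ is arithmetic each such $S$ is a Shimura variety, and the only question is whether its Shimura datum is of abelian type, i.e.\! whether $S$ is a moduli space of Abelian motives \cite{milne1,milne3}.

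I would then go through the cases. The weight-$3$, $h^{3,0}=1$ Hodge structure carried by $S$ lives in the symplectic representation of dimension $2\dim_\C S+2$, and for the abelian-motivic cases this representation is an \emph{exterior power} of a weight-$1$ (hence Abelian) building block. Concretely: for $Sp(6,\R)$ the relevant $\mathbf{14}$ is the primitive $\wedge^3$ of the fundamental $\mathbf{6}$, so the Hodge structure is the primitive $\wedge^3H^1$ of a principally polarized Abelian $3$-fold (a type $C_3$ datum, $S$ an arithmetic quotient of Siegel space); for $U(3,3)$ the $\mathbf{20}$ is $\wedge^3$ of the standard $\mathbf{6}$, giving a type $A$ (unitary) datum and the $\wedge^3H^1$ of an Abelian $6$-fold with the appropriate complex multiplication; and for the reducible series $S$ is a product of a modular curve (type $A_1$) with an orthogonal Shimura variety of type $B$ or $D^{\R}$, which is Abelian-motivic through the Kuga--Satake construction. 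In each of these the adjoint datum has only factors of type $A,B,C,D^{\R}$, so by Deligne--Milne \cite{milne1,milne3,deligne} it is of abelian type and $S$ is a moduli space of Abelian motives. The hypothesis $\dim_\C S\neq 15,27$ is precisely what deletes the two remaining irreducible geometries $SO^*(12)/U(6)$ and $E_{7(-25)}/[U(1)\times E_6]$, which by the \textbf{Remark} following \textbf{Fact \ref{cido}} are \emph{not} of this type; this proves the Corollary, and the ``typically'' clause follows by exhibiting the reducible (and the $Sp(6,\R)$, $U(3,3)$) geometries as the untwisted vector-multiplet moduli of toroidal Type II/heterotic compactifications \cite{CFG}, whose Abelian motives are assembled from the $H^1$ of the compactification tori.

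The hard part will be justifying that dimensions $15$ and $27$ are \emph{genuine} exceptions rather than artifacts of the dimension count. For $E_{7(-25)}$ (type $E_7$, $\dim_\C S=27$) this is immediate, since exceptional factors never occur in an abelian-type datum. The delicate case is $SO^*(12)$ ($\dim_\C S=15$): although $SO^*(12)/U(6)$ is a perfectly good Hermitian domain of type $D_6^{\mathbb{H}}$, the representation carrying its weight-$3$ Hodge structure is the \emph{half-spinor} $\mathbf{32}$, which is \emph{not} an exterior power of the vector representation, so the corresponding Hodge structure cannot be assembled from $\wedge^\bullet H^1$ of Abelian varieties and falls outside the Tannakian category they generate. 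Pinning this down --- keeping the CY-type half-spin datum distinct from the abstract $D^{\mathbb{H}}$ datum and invoking the arithmetic statement of \cite{milne1,milne3} that the former is non-abelian --- is the crux of the argument. I would finally remark that the exclusion $\dim_\C S\neq 15,27$ is mildly wasteful, since it incidentally discards the \emph{abelian}-type members $SL(2,\R)\times SO(2,14)$ and $SL(2,\R)\times SO(2,26)$ of the reducible series; this costs nothing, as the statement is only a sufficient criterion.
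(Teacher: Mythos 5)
Your proposal is correct and follows the same route the paper (implicitly) takes: \textbf{Corollary \ref{asqw}} plus the dichotomy reduce $S$ to an arithmetic quotient of a magic domain, and the Deligne--Milne classification of abelian-type Shimura data \cite{milne1,milne3} — with the exterior-power data of $Sp(6,\R)$, $U(3,3)$ and the Kuga--Satake treatment of the reducible series on one side, and the half-spin $\mathbf{32}$ of $SO^*(12)$ and the $\mathbf{56}$ of $E_{7(-25)}$ on the other — is exactly the content of the paper's \textbf{Remark} that the last two groups of \textsc{Table \ref{tableiso}} are not moduli of Abelian motives. Your closing observation that the dimension cut also discards the abelian-type members $SL(2,\R)\times SO(2,14)$ and $SL(2,\R)\times SO(2,26)$ is accurate and harmless, since the Corollary only asserts a sufficient condition.
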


Note that, up to finite covers, there are just five finite-volume, cubic, special K\"ahler manifolds $S$ which are locally symmetric and locally irreducible. They have complex dimensions
\be\label{listdim}
1,\ 6,\ 9,\ 15,\ \text{and}\ 27.
\ee 
So if $S$ does not belong to the swampland and is locally irreducible
with $\dim_\C S$ not in the list \eqref{listdim}, we conclude that either instanton corrections are present
or there is no asymptotic limit in the space $S$ where  $\cf$ gets cubic, i.e.\! the special geometry is ``orphan''.

\subsubsection*{Applications to Algebraic Geometry}

\textbf{Fact \ref{cido}} has important implications
in Algebraic Geometry. For instance: 

\begin{corl} Let $X$ be a Calabi-Yau 3-fold which admits a mirror $X^\vee$. Suppose that $X$ does not contain rational curves. Then $X$ is a finite quotient of an Abelian variety; in particular $X$ has Picard number $\rho=2$ or 3.
\end{corl}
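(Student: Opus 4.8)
The plan is to feed the geometric hypothesis into the structural criterion through mirror symmetry and then read off the topological consequences for $X$. First I would use the mirror $X^\vee$ to identify the quantum K\"ahler moduli space of $X$ (the A-model on $X$) with the complex-structure moduli of $X^\vee$ (the B-model on $X^\vee$); since the latter arises from an honest Type IIB compactification it lies in $\mathfrak{M}_\text{QG}$, so the associated special K\"ahler geometry $S$ is \emph{not} in the swampland and its pre-potential has exactly the large-volume form \eqref{cubex}, with $z^i=X^i/X^0$ the complexified K\"ahler classes, $d_{ijk}$ the triple intersection numbers of $X$, $\chi=\chi(X)$, and the $c_{\vec\lambda}$ the genus-zero Gromov--Witten (world-sheet instanton) invariants. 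In particular $\rho(X)=h^{1,1}(X)=\dim_\C S=m$, using that on a Calabi--Yau $3$-fold every $(1,1)$-class is algebraic.

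Next I would exploit the hypothesis that $X$ contains no rational curves. A nonconstant world-sheet instanton is a nonconstant holomorphic map $\mathbb{P}^1\to X$, whose image is a rational curve; hence the absence of rational curves forces every genus-zero invariant to vanish, $c_{\vec\lambda}=0$ for \emph{all} $\vec\lambda$. The pre-potential is therefore \emph{strictly cubic}, and we land in alternative \emph{(ii)} of \textbf{Corollary \ref{cominst}}: $\chi(X)=0$, $A_{IJ}\in\Z$, and $d_{ijk}z^iz^jz^k$ is the determinant form of a rank-$3$ real Jordan algebra. By the Dichotomy (\textbf{Fact \ref{cido}}) the geometry $S$ is then a cubic `magic' Shimura variety, so the arithmetic part of the statement is already delivered; what remains is to turn this into a structural statement about $X$ itself.

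The remaining task is purely algebro-geometric. Since $\chi(X)=0$ gives $h^{1,1}(X)=h^{2,1}(X)$, and since the vanishing of all quantum corrections is equivalent (by the result recorded in the abstract) to $\pi_1(X)$ being infinite, the Beauville--Bogomolov decomposition applies: a finite \'etale cover $\widetilde X$ of $X$ splits as $T\times Y$ with $T$ a complex torus and $Y$ simply connected Calabi--Yau or hyperk\"ahler of complementary dimension. A rational curve on $\widetilde X$ descends, along the \'etale cover, to a rational curve on $X$; as $X$ has none, neither does $\widetilde X$, and hence neither does $Y$. Because K3 surfaces and simply-connected Calabi--Yau $3$-folds always carry rational curves, $Y$ must be a point, so $\widetilde X=T$ is an Abelian $3$-fold and $X=T/G$ is a free finite quotient with $G$ acting trivially on $H^{3,0}(T)$.

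Finally I would pin the Picard number. For $X=T/G$ one has $\rho(X)=\dim\big(H^{1,1}(T)\big)^G$, and the Calabi--Yau condition $G\subset SL\big(H^{1,0}(T)\big)$, together with $\wedge^2V\cong V^\ast$ for $V=H^{1,0}(T)$, matches the $G$-invariants of $H^{1,1}$ and of $H^{2,1}$, consistent with $\chi=0$; running through the finite groups that can act freely with trivial determinant on an Abelian $3$-fold leaves only $\rho=2$ and $\rho=3$. These are precisely the two cubic symmetric geometries small enough to respect the torus bound $m\le 9$, namely $SL(2,\R)\times SO(2,1)$ ($m=2$) and $SL(2,\R)\times SO(2,2)$ or $[SL(2,\R)]^3$ ($m=3$), which cross-checks the output of the structural step. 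I expect the genuine obstacle to be exactly this last geometric input: rigorously excluding the K3 and simply-connected Calabi--Yau factors in the Beauville decomposition (i.e.\! the classical fact that those always contain rational curves) and controlling the free $G$-action tightly enough to bound $\rho$, since all the arithmetic has already been supplied by the swampland criterion.
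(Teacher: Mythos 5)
The proposal tracks the paper faithfully up to the point where the structural criterion forces the pre-potential to be strictly cubic with $\chi=0$ and $d_{ijk}$ the determinant form of a rank-3 Jordan algebra, so that $S$ is a `magic' Shimura variety. The pivotal step, however --- passing from that arithmetic statement about the moduli space to the infinitude of $\pi_1(X)$ --- is where your argument becomes circular. You write that ``the vanishing of all quantum corrections is equivalent (by the result recorded in the abstract) to $\pi_1(X)$ being infinite'': that equivalence is precisely what is being proven here, not an available input ($\chi(X)=0$ alone certainly does not force $|\pi_1(X)|=\infty$). The paper's actual bridge is Wilson's theorem (\textbf{Proposition \ref{proWW}}): if $\pi_1(X)$ is \emph{finite} and the rational points of the cubic hypersurface $W=\{D^3=0\}\subset PH^2(X)$ are dense in its real locus, then $X$ contains a rational curve. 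The bulk of section \ref{sec:os} is then devoted to proving that for exactly the integral cubic forms permitted by the structural criterion --- Jordan-algebra determinant forms, with $L_{\text{cub.}}(\mathbb{Q})=SL(3,\mathbb{D}_\mathbb{F})$ for a division $\mathbb{Q}$-algebra $\mathbb{D}_\mathbb{F}$, and the Lorentzian quadrics --- the rational points $W(\mathbb{Q})$ \emph{are} dense in $W(\R)$, by exhibiting rational points such as $\mathrm{diag}(1,\pm1,0)\in\mathsf{Her}_3(\mathbb{D}_\mathbb{F})$ and using density of $G_\alpha(\mathbb{Q})$ in $G_\alpha(\R)$. The contrapositive then yields $|\pi_1(X)|=\infty$. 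This entire mechanism --- the only place where the arithmetic content of the swampland criterion actually does geometric work --- is absent from your proposal.

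A secondary problem: in the Beauville--Bogomolov step you invoke the ``classical fact'' that simply-connected Calabi--Yau 3-folds always carry rational curves. That is not a classical fact but an open conjecture (it is the reason Wilson's partial results and the Oguiso--Sakurai question exist in the first place). It is also unnecessary: once $|\pi_1(X)|=\infty$ is established, Beauville's decomposition in dimension 3 leaves only the A-type case $A/\Sigma$ and the K-type case $(E\times K)/\Sigma$, and the latter is excluded because a projective K3 always contains rational curves whose images under the \'etale quotient are rational curves on $X$. The bound $\rho=2,3$ then follows from the Oguiso--Sakurai classification of A-type quotients, as in the paper; your cross-check against the small symmetric cubic geometries is consistent but does not by itself replace that classification.
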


The assumption of the existence of a mirror may be replaced by some much weaker technical requirement.
Modulo this aspect, \textbf{Corollary 2} answers  in the positive the \textbf{Question} posed by Oguiso and Sakurai in ref.\!\!\cite{OSaku}. We stress that under the stated assumption, the result is  mathematically fully rigorous, since it is a direct consequence of the VHS structure theorem applied to the universal deformation space of  $X^\vee$.

 \subsection{Organization of the paper}
 
 The rest of this paper is organized as follows. In section 2 we review special K\"ahler geometry from a viewpoint convenient for studying its global and arithmetic aspects, emphasizing its relations with $tt^*$ geometry and Griffiths' theory of variations of Hodge structures. For later convenience we collect here some facts about ``symmetries'' in special geometry.
 In section 3 we present (for comparative purposes) two purely differential-geometric results which go in the same directions as our main conclusions, but are significantly weaker. In subsection 3.3 we collect some facts about the geometry of \emph{symmetric} special K\"ahler geometries for later use. In section 4 we introduce our \textit{structural swampland criterion} first at an intuitive level in the language of $tt^*$ and then systematically using the mathematical framework of VHS.
Then we discuss how this statement implies, in suitable senses, the validity of the original Ooguri-Vafa swampland conjectures \cite{OoV}. In section 5
we show how the criterion may -- in principle -- be used to compute the pre-potential $\cf$ of a $\cn=2$ supergravity which does not belong to the swampland. In section 6 we prove the main result of the present paper: the dichotomy for quantum-consistent special geometries. Section 7 describes the behaviour of a quantum-consistent special geometry when we approach at infinity a MUM (maximal unipotent monodromy) point, and presents applications to Type IIA compactifications on Calabi-Yau 3-folds. In section 8 we discuss the Oguiso-Sakurai question and argue that the answer is positive. In section 9 we briefly comment on the case of Picard number 1. In appendix A we review the equivalence between the $tt^*$ PDEs and the condition that the ``Weil map'' $w$ is pluri-harmonic. In appendix B we show that the $tt^*$ brane amplitudes of \emph{superconformal} 2d (2,2) models are characterized by special arithmetic properties not shared by their massive counterparts.  

%%%%%
%%%%%%

\section{Review of special K\"ahler geometry}

In this section we review the basic facts of special K\"ahler geometry mainly to fix language and notation.
By convention, ``special K\"ahler manifold'' stands for 
``\emph{integral} special K\"ahler manifold'', that is,
one whose global structure is consistent with Dirac quantization of electro-magnetic fluxes and charges. In particular the underlying variations of Hodge structures are \emph{genuine} VHS and not mere $\R$\text{-VHS}.

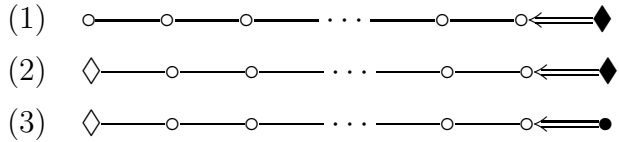
\begin{figure}
\begin{equation*}
\begin{aligned}
&(1) &&\xymatrix{\circ\mspace{-5mu}\ar@{-}[r]&\mspace{-5mu}\circ\mspace{-5mu}\ar@{-}[r]&\mspace{-5mu}\circ\mspace{-5mu}\ar@{-}[r]&\cdots \ar@{-}[r]&\mspace{-5mu}\circ\mspace{-5mu}\ar@{-}[r]&\mspace{-5mu}\circ\mspace{-5mu}&\mspace{-5mu}\blacklozenge\ar@{=>}[l]}\\
&(2) &&
\xymatrix{\lozenge\mspace{-5mu}\ar@{-}[r]&\mspace{-5mu}\circ\mspace{-5mu}\ar@{-}[r]&\mspace{-5mu}\circ\mspace{-5mu}\ar@{-}[r]&\cdots \ar@{-}[r]&\mspace{-5mu}\circ\mspace{-5mu}\ar@{-}[r]&\mspace{-5mu}\circ\mspace{-5mu}&\mspace{-5mu}\blacklozenge\ar@{=>}[l]}\\
&(3) &&
\xymatrix{\lozenge\mspace{-5mu}\ar@{-}[r]&\mspace{-5mu}\circ\mspace{-5mu}\ar@{-}[r]&\mspace{-5mu}\circ\mspace{-5mu}\ar@{-}[r]&\cdots \ar@{-}[r]&\mspace{-5mu}\circ\mspace{-5mu}\ar@{-}[r]&\mspace{-5mu}\circ\mspace{-5mu}&\mspace{-5mu}\bullet\ar@{=>}[l]}
\end{aligned}
\end{equation*}
\caption{\label{thisfigure}The diagrams of the three Klein geometries $\mathscr{D}^{(a)}_m\equiv G_m(\R)/H^{(a)}_m$, $a=1,2,3$. The nodes are decorated by a  color (white vs.\! black) and a shape (circle vs.\! lozenge). Forgetting all decorations we recover the Dynkin graph of $C_{m+1}$ which specifies the underlying  complex Lie group $G_m(\C)$. Forgetting  the shape of the nodes, one gets the Vogan diagram \cite{knapp} of the real Lie algebra $\mathfrak{g}_m$ of the group $G_m(\R)$. Forgetting the color, one gets the diagrams \cite{penrose} of the complex parabolic subgroups $P^{(a)}_m\subset G_m(\C)$, while forgetting nothing one gets those of the reductive subgroup $H_m^{(a)}\cong U(1)^\#\times L_m^{(a)}\subset G_m(\R)$ where $\#$ is the total number of lozenges (black and white) and $L_m^{(a)}$ is the real Lie group whose Vogan diagram is obtained by deleting all lozenges. For (1) and (2) the subgraph over the circle nodes is totally white, hence $H_m^{(a)}$ is compact so for $a=1,2$ the domain $\mathscr{D}^{(a)}_m$ has a homogeneous metric structure.} 
\end{figure}

\subparagraph{Notations and conventions.}
Through this paper, $S$ will denote a (integral) special K\"ahler manifold,  $\widetilde{S}$ its smooth, simply-connected cover, and $m$ its complex dimension.
We write $G_m(\mathbb{K})$ for the group  $Sp(2m+2,\mathbb{K})$ where $\mathbb{K}$ is one of the rings $\C,\R,\mathbb{Q}, \Z$. We see $G_m(\mathbb{K})$ as the group of $\mathbb{K}$-valued points in the universal   Chevalley group-scheme\footnote{\ The standard sources for Chevalley groups are  \cite{chev1,chev2}; a readable introduction is chapter VII of \cite{hump}.} of type $C_{m+1}$.
 $G_m(\C)$ and $G_m(\R)$ have (in particular) the structure of (respectively) a complex and a real Lie group, and $G_m(\Z)\subset G_m(\R)$ is a maximal arithmetic subgroup with respect to the natural structure of algebraic group over $\mathbb{Q}$ provided by\footnote{\ With some abuse, in this paper we do not distinguish between an algebraic group defined over $\mathbb{Q}$ and the group of its $\mathbb{Q}$-valued points.} $G_m(\mathbb{Q})$. We stress that all these fancy algebraic-geometric   structures on $Sp(2m+2,\R)$ have an intrinsic physical meaning, being unambiguously implied by Dirac quantization of charge/fluxes.\footnote{\ See, for instance, the wonderful computations in the appendix of \cite{soule} for the particular case of Type II compactified to 4d on a six-torus.}

\subsection{Basic geometric structures}
\label{basic}

We are concerned with three geometries which eventually will turn out to be equivalent:
\begin{itemize}
\item[(1)] $tt^*$ geometry for local-graded chiral rings as in eqns.\eqref{R1}-\eqref{R3}; 
\item[(2)] variations of (effective) polarized weight-3 Hodge structures with $h^{3,0}=1$;
\item[(3)] special K\"ahler geometry in the sense of $\cn=2$ supergravity. 
\end{itemize}
We begin by introducing their respective Klein models.

\subparagraph{(I) Three Klein geometries and their Penrose correspondence.}
Besides $G_m(\R)$ itself, we introduce the following three reductive, $G_m(\R)$-homogeneous, complex Klein geometries (called henceforth the \emph{domains}):
\begin{itemize}
\item[(1)] $\mathscr{D}_m^{(1)}\overset{\rm def}{=} G_m(\R)/U(m+1)$, i.e.\! the \textit{Siegel upper half-space}, namely the period domain for polarized weight-1 Hodge structures with Hodge number $h^{1,0}=m+1$;  
\item[(2)] $\mathscr{D}_m^{(2)}\overset{\rm def}{=} G_m(\R)/[U(1)\times U(m)]$, i.e.\!\! the \textit{Griffiths period domain} for polarized weight-3 Hodge structures with Hodge numbers $h^{3,0}=1$, $h^{2,1}=m$;
\item[(3)] $\mathscr{D}_m^{(3)}\overset{\rm def}{=} G_m(\R)/[U(1)\times G_{m-1}(\R)]$ which we call the \textit{holomorphic contact domain.}
\end{itemize}

The diagrams of these three geometries are represented in Figure \ref{thisfigure}.
Diagram (1) yields the Siegel upper half-space, i.e.\! the space of gauge couplings $\tau_{ab}$ (symmetric matrices with positive-definite imaginary part); 
(2) is obtained by making a lozenge the node associated to the Dirac representation $\boldsymbol{(2m+2)}$ of the electro-magnetic charges, while one gets (3) from (2) by eliminating all lozenges shared by (1) and (2).   
From the figure one sees that the $G_m(\C)$-homogeneous spaces
\be
\check{\mathscr{D}}^{(a)}_m\overset{\rm def}{=}G_m(\C)/P^{(a)}_m,\qquad a=1,2,3
\ee
are rational, projective (hence compact), complex manifolds\footnote{\ For these basic facts, see e.g.\! chapter 2 of \cite{russian}.}
whose homogeneous (holomorphic) bundles are
in one-to-one correspondence with the possible assignments of an integer $n_i\in\Z$ to each node of the corresponding graph in Figure \ref{thisfigure}  with the restriction that $n_i\geq0$ for the \emph{circle} nodes.
 $\check{\mathscr{D}}^{(a)}_m$ is called the \emph{compact dual} of  $\mathscr{D}^{(a)}_m$.
By construction $\mathscr{D}^{(a)}_m$ is an open domain in $\check{\mathscr{D}}^{(a)}_m$; the domain $\mathscr{D}^{(a)}_m$ gets its canonical complex structure
by restricting the one of $\check{\mathscr{D}}^{(a)}_m$, and the $G_m(\R)$-homogeneous bundles over the domain $\mathscr{D}^{(a)}$ 
likewise get their holomorphic structure from the corresponding bundles over the projective variety $\check{\mathscr{D}}^{(a)}_m$ \cite{homore}. The homogeneous vector bundles $\mathscr{V}^{(a)}\to \mathscr{D}^{(a)}_m$ are in one-to-one correspondence with the
$H^{(a)}_m$-modules $\boldsymbol{m}^{(a)}$ \cite{homore}; we write $\co(\boldsymbol{m}^{(a)})$ for the bundle over $ \mathscr{D}^{(a)}_m$ associated to $\boldsymbol{m}^{(a)}$.

Note that we have canonical fibrations between the domains (which are the restrictions of the corresponding fibrations between their compact duals) 
\be\label{canporo}
\begin{gathered}
\xymatrix{&& \mathscr{D}^{(2)}_m\equiv \frac{Sp(2m+2,\R)}{U(1)\times U(m)}\ar@/_1pc/@{->>}[lld]_{\varpi_1}\ar@/^1pc/@{->>}[drr]^{\varpi_3}\\
\mathscr{D}^{(1)}_m\equiv \frac{Sp(2m+2,\R)}{U(m+1)}\ar@{<..>}[rrrr]^{\text{Penrose correspondence}} &&&&\mathscr{D}^{(3)}_m\equiv \frac{Sp(2m+2,\R)}{U(1)\times Sp(2m,\R)}}
\end{gathered}
\ee
%\be\label{canporo}
%\begin{gathered}
%\xymatrix{&& \mathscr{D}^{(1)}_m\equiv \frac{Sp(2m+2,\R)}{U(m+1)}\\
%\mathscr{D}^{(2)}_m\equiv \frac{Sp(2m+2,\R)}{U(1)\times U(m)}\ar@{->>}@/^1.7pc/[urr]^{\varpi_1}\ar@{->>}[rr]_{\varpi_3}
%&&  \mathscr{D}^{(3)}_m\equiv \frac{Sp(2m+2,\R)}{U(1)\times Sp(2m,\R)}}
%\end{gathered}
%\ee
where $\varpi_3$ is holomorphic while $\varpi_1$ is just smooth.
 From the diagrams in Figure \ref{thisfigure}, 
we see that the correspondence
$\xymatrix@1{\mathscr{D}^{(1)}_m\ar@{<..>}[r]&\mathscr{D}^{(3)}_m}$  implied by the diagram  \eqref{canporo} (``supersymmetry'') is nothing else than the Penrose correspondence \cite{penrose}. In particular the Griffiths domain $\mathscr{D}^{(2)}_m$ embeds in $\mathscr{D}^{(1)}_m\times \mathscr{D}^{(3)}_m$.  The fiber of $\varpi_1$ is
$P\C^m=SU(m+1)/U(m)$ hence compact,\footnote{\ By general theory the fiber of $\varpi_1$ is always a complex submanifold of the Griffiths period domain.} while
the one of $\varpi_3$ is the Siegel upper half-space $\mathscr{D}^{(1)}_{m-1}=Sp(2m,\R)/U(m)$ and is non compact.

\subparagraph{(II) The holomorphic contact structure.} As already anticipated, the domain $\mathscr{D}^{(3)}_m$ carries a canonical structure of $G_m(\R)$-homogeneous, holomorphic, contact manifold.
This is obvious from the form of the isotropy group $H^{(3)}_m=U(1)\times L_m^{(3)}$. We give a more explicit description. First,
\be\label{presswzqw}
\check{\mathscr{D}}^{(3)}_m\equiv G_m(\C)/P^{(3)}_m=P\C^{2m+1},
\ee
since $P^{(3)}_m\subset G_m(\C)$ is the subgroup fixing a line $\ell$ in the $\boldsymbol{2m+2}$ fundamental representation of $G_m(\C)\equiv Sp(2m+2,\C)$.
The presentation \eqref{presswzqw} induces on $P\C^{2m+1}$ a canonical structure of holomorphic contact manifold; 
in homogeneous coordinates
$\boldsymbol{z}^a$ ($a=0,1,\dots,2m+1$) this contact structure is generated by the holomorphic one-form with coefficients in $\co(2)$
\be
\lambda\overset{\rm def}{=}Q(\boldsymbol{z},d\boldsymbol{z})\equiv \omega_{ab}\, \boldsymbol{z}^a\,d\boldsymbol{z}^b,
\ee 
where $\omega=S\otimes \boldsymbol{1}_{m+1}$ is
the standard symplectic matrix. 
Written in the homogeneous coordinates $\boldsymbol{z}^a$, the open domain $\mathscr{D}^{(3)}_m\subset P\C^{2m+1}$
is the locus where 
\be\label{quadrppos}
-i\,Q(\boldsymbol{z},\bar{\boldsymbol{z}})\equiv -i\,\omega_{ab}\,\boldsymbol{z}^a\bar{\boldsymbol{z}}^b>0.
\ee 
The one-dimensional representation of the isotropy group on the line $\ell$
defines a tautological holomorphic line bundle
\be\label{yyyyaqc2}
\cl\to \mathscr{D}_{m}^{(3)},
\ee
whose fibers are the lines $t\mspace{-2mu}\cdot\mspace{-2mu} \boldsymbol{z}\in \C^{2m+2}$. $\cl$ is clearly $Sp(2m+2,\R)$-homogeneous; it  is also equipped with the canonical, invariant, positive-definite Hermitian form
$-i\,Q(\boldsymbol{z},\bar{\boldsymbol{z}})$.
\smallskip

We recall that a holomorphic Legendre submanifold $L$ in the complex contact manifold $(\mathscr{D}^{(3)}_m,\lambda)$ of dimension $2m+1$ is a holomorphic immersion  $z\colon L\to \mathscr{D}^{(3)}_m$ such that
$z^\ast \lambda=0$ and $\dim_\C L=m$.

\subparagraph{(III) Griffiths infinitesimal period relations (IPR).} The weight lattice of $C_{m+1}$ is 
\begin{equation}
\Lambda_W=\bigoplus_{a=1}^{m+1}\Z\mspace{1mu} e_a\quad\text{
with inner product $(e_a,e_b)=\delta_{ab}$.}
\end{equation}
The simple roots (ordered as in Figure \ref{thisfigure}) are 
\be
\alpha_a=e_a-e_{a+1}\ \ \text{for }1\leq a\leq m\ \ \text{and }\ \alpha_{m+1}=2\,e_{m+1}.
\ee 
Let $\Phi\colon \Lambda_W\to\Z$ be the integral  linear form\footnote{\ $\Phi$ is the linear form which defines the relevant Hodge structure on the Lie algebra $C_{m+1}\equiv\mathfrak{g}_m$, for more details see \cite{reva,revb} (where it is written $\Psi$). The basic property of the linear form $\Phi$ defining a Hodge representation is  that $\Phi(\alpha)$ is $0\bmod4$ (resp.\! $2\bmod4$) when $\alpha$ is a compact (resp.\! non-compact) root.} 
\be\label{defffphi}
\Phi\colon \sum_{a=1}^{m+1}n_a\,e_a\mapsto -3\,n_1+\sum_{a=2}^{m+1} n_a.
\ee
$\Phi$ evaluated on the simple roots of graph (2) returns zero if the root is a white circle, it returns 2 if the node is black, and $-4$ if it is a white lozenge. $\Phi$ induces a $\Z$-grading on the complexified Lie algebra $\mathfrak{g}^\C_m$ of $G_m(\C)$. There is a unique  $\boldsymbol{Q}\in\mathfrak{g}^\C_m$ such that\footnote{\ For these statements see \S.\,X.3 in the second edition of \cite{knapp}. In eqn.\eqref{pqawe11} $\mathfrak{g}_\alpha$ stands for the root space of the root $\alpha$.} 
\be\label{pqawe11}
[\boldsymbol{Q},X_\alpha]=\frac{1}{2}\mspace{1mu}\Phi(\alpha)X_\alpha\quad \text{for}\quad  X_\alpha\in\mathfrak{g}_\alpha.
\ee 
Borrowing the $tt^*$ terminology, we shall refer to the Lie algebra element $\boldsymbol{Q}$ as the \textit{(superconformal) $U(1)_R$ charge}.
Then
\begin{align}\label{qqqqwhat}
&\mathfrak{g}_m^\C=\bigoplus_{k=-3}^3 \mathfrak{g}^{-k,k}_m, &&g^{k,-k}_m\overset{\rm def}{=}\Big\{X\in\mathfrak{g}_m^\C\;:\; [\boldsymbol{Q},X]=k X\Big\}\\
&\big[\mathfrak{g}_m^{k,-k},\mathfrak{g}^{l,-l}_m\big]\subseteq \mathfrak{g}^{k+l,-k-l}_m, &&\mathfrak{g}_m^{0,0}=\mathfrak{h}_m^{(2)}\otimes\C\equiv\mathfrak{Lie}(H^{(2)}_m)\otimes\C.
\end{align}
The adjoint representation induces on each direct summand of $\mathfrak{g}_m^\C$
the structure of a $H^{(2)}_m$-module,
hence each direct summand $\mathfrak{g}^{-k,k}_m$ defines a $G_m(\R)$-homogeneous holomorphic bundle over the Griffiths domain
\begin{equation}
\co(\mathfrak{g}^{-k,k}_m)\to \mathscr{D}^{(2)}_m.
\end{equation} 
E.g.\! the holomorphic (1,0) tangent bundle is \be
T\mathscr{D}^{(2)}_m=\co\mspace{-4mu}\left(\oplus_{k=1}^3\mspace{2mu} \mathfrak{g}^{-k,k}_m\right).
\ee Griffiths' horizontal holomorphic bundle \cite{Gbook,periods} is the homogenous bundle  $\co(\mathfrak{g}_m^{-1,1})$.
Let $S$ be a complex manifold. We say that a map
$p\colon S\to \Gamma\backslash \mathscr{D}^{(2)}_m$ satisfies Griffiths' IPR iff 
\be\label{uuuuqw12}
p_\ast(TS)\subseteq \co(\mathfrak{g}_m^{-1,1}),
\ee
where $TS$ is the holomorphic tangent bundle to $S$. In particular $p$ must be holomorphic.

\subparagraph{(IV) $tt^*$ equations and branes.} $S$ a complex manifold, $G$ a semi-simple real Lie group without compact factors $K\subset G$ a
maximal compact subgroup, and $\Gamma\subset G$ a discrete subgroup.  A map
\be
w\colon S\to \Gamma\backslash G/K
\ee
is said to be \textit{a solution to the $tt^*$ PDEs with monodromy group $\Gamma\subset G$} if it is pluri-harmonic (with respect to the symmetric metric), i.e.
\be\label{pqw12x}
\overline{D} \partial w=0.
\ee
see appendix \ref{revtau}. The $tt^*$ equations describe an isomonodromic problem \cite{ising}
and their solutions are essentially determined by
the monodromy group $\Gamma$. 
We specialize to the case relevant for special geometry, i.e.\! $G\equiv G_m(\R)$ 
\be
w\colon S\to \Gamma\backslash \mathscr{D}^{(1)}_m\equiv \Gamma\backslash G_m(\R)/U(m+1).
\ee
Let $\boldsymbol{\Psi}\colon \widetilde{S}\to G_m(\R)$ be any lift of $w$, and let $\omega\in \Lambda^1(G_m(\R))\mspace{-3mu}\otimes\mspace{-1.5mu} \mathfrak{g}_m$ be the Maurier-Cartan form of $G_m(\R)$. The usual $tt^*$ (Berry) connection $A$ and the $tt^*$ (anti)chiral ring valued 1-forms $C$, $\bar C$ \cite{tt*} are
\be\label{qwertUU}
A=\boldsymbol{\Psi}^*\omega|_{\mathfrak{k}_m},\qquad C=\boldsymbol{\Psi}^*\omega|_{\mathfrak{s}_m}\Big|_{(1,0)},\qquad \overline{C}=\boldsymbol{\Psi}^*\omega|_{\mathfrak{s}_m}\Big|_{(0,1)}
\ee
where
\be
\mathfrak{k}_m=\bigoplus_{k\ \text{even}}\mathfrak{g}^{-k,k}_m,\qquad \mathfrak{s}_m=\bigoplus_{k\ \text{odd}}\mathfrak{g}^{-k,k}_m.
\ee
It is well known that $A$, $C$, $\overline{C}$ solve the $tt^*$ equations iff $w$ is pluri-harmonic \cite{jap,dubrovin};
for the benefit of the reader we review the argument in appendix A. Note that two lifts of $w$ yield gauge-equivalent $A$, $C$, $\bar C$ describing the same $tt^*$ geometry.
We say that \textit{$w$ is a $U(1)_R$-preserving solution to $tt^*$} if, in addition,
\be\label{pqawe12}
[\boldsymbol{Q},C]=-C,\qquad [\boldsymbol{Q},\overline{C}]=\overline{C}.
\ee
Clearly this condition may be satisfied if and only if the chiral rings $\{\mathcal{R}_s\}_{s\in S}$ are local-graded ($\boldsymbol{Q}$ being their grading operator).
If we have a $U(1)_R$-preserving solution $w$, we may construct a $S^1$-family of lifts by acting with $e^{i\theta\boldsymbol{Q}}$, $\theta\in [0,2\pi)$. The $S^1$-family is then extended to a full $\mathbb{P}^1$ twistorial  family of \emph{complexified} solutions by analytic continuation\footnote{\ The inverse in eqn.\eqref{rrrrczq} is needed to convert right vs.\! left actions in order to match the conflicting conventions in $tt^*$ and Hodge theory.}
\be\label{rrrrczq}
\Psi(\zeta)\overset{\rm def}{=} \big(\boldsymbol{\Psi}\zeta^{\boldsymbol{Q}}\big)^{-1},\qquad \zeta\in\mathbb{P}^1.
\ee
Extending from the equator $|\zeta|=1$ to the full twistor sphere, $\Psi(\zeta)$ becomes valued in the complex group $G_m(\C)$ instead of the real group $G_m(\R)$ as dictated by unitarity. $\Psi(\zeta)$ then satisfies a twistorial reality condition
\be
\Psi(1/\zeta^*)=\Psi(\zeta)^*.
\ee

The maps $\Psi(\zeta)\colon \widetilde{S}\to Sp(2m+2,\C)$ are called (covering) \textit{$tt^*$ brane amplitudes.}
They satisfy the linear PDEs (the $tt^*$ Lax equations) \cite{tt*,iogaiotto} 
\be\label{pqw12hh}
\big(\nabla^{(\zeta)}+\overline{\nabla}^{\mspace{2mu}(\zeta)}\big)\Psi(\zeta)\overset{\rm def}{=}\big(d+A+\zeta^{-1}\,C+\zeta\,\overline{C}\,\big)\Psi(\zeta)=0.
\ee
\begin{cave} The brane amplitudes here are written in the \emph{real} gauge not in the more traditional holomorphic gauge \cite{tt*}. In particular, on the twistor equator $|\zeta|=1$ they are real. 
\end{cave}

Let $S=\Gamma\backslash \widetilde{S}$ for a discrete group $\Gamma$ of isometries acting freely (so $\pi_1(S)\cong\Gamma$). There is a monodromy representation $\varrho\colon \Gamma\to G_m(\Z)$ such that
the covering brane amplitudes $\Psi(\zeta)\colon \widetilde{S}\to G(\C)$ satisfy the $\Gamma$-equivariance property
\be
\xi^*\mspace{2mu}\Psi(\zeta)^{-1}=\varrho(\xi)\,\Psi(\zeta)^{-1},\qquad\forall\;\xi\in\Gamma,
\ee
which just reflects the fact that the brane amplitudes $\Psi(\zeta)\colon S\to \varrho(\Gamma)\backslash G(\R)$
are well defined on the physical space $S$.

\begin{rem} We see that the specifying a $U(1)_R$ grading in the sense of $tt^*$, i.e.\! $\boldsymbol{Q}$, is identically to the math procedure of constructing a Hodge representation in Hodge theory (by the integral form $\Phi$) \cite{reva,revb,MT4}. In other words, the equator of the $tt^*$ twistor sphere gets identified with the Deligne circle of Hodge theory at the reference point in the period domain.
\end{rem}

\subparagraph{(V) Definition of special geometry.} $S$ a complex manifold of dimension $m$, and
$\Gamma\subset G_m(\Z)$ a subgroup to be called the \textit{monodromy}  (or \textit{$U$-duality}) group. Modulo commensurability (i.e. up to finite covers) we may (and do) assume $\Gamma$ torsion-free (and  even neat \cite{morris,borel}).
\medskip

Consider the commutative diagram
\be\label{gasqwe4534}
\begin{gathered}
\xymatrix{S \ar@/^2.5pc/[rrrr]^w\ar[rr]_{p}\ar@/_1.5pc/[rrd]_{z} && \Gamma\backslash \mathscr{D}^{(2)}_m\ar@{->>}[rr]_{\varpi_1} \ar@{->>}[d]^{\varpi_3} && \Gamma\backslash \mathscr{D}^{(1)}_m\\
&& \Gamma\backslash \mathscr{D}^{(3)}_m}
\end{gathered}
\ee
where, as before, the double-headed arrows $\xymatrix@1{\ar@{->>}[r]^{\varpi_a}&}$ stand for the Penrose canonical projections \eqref{canporo}. 
The arrow $w$ describes $tt^*$ geometry,
the arrow $p$ is Griffiths' period map describing the variation of Hodge structure (VHS), and the arrow $z$ encodes the supergravity geometry.

Work by many people \cite{simpson,jap,dubrovin,families,cec,stro,tt*,ising,bgrif,bgrif2} may be summarized into:

\begin{defthm} The following three conditions are essentially equivalent for the  commutative diagram \eqref{gasqwe4534}:
\begin{itemize}
\item[(1)] $p$ satisfies Griffiths' IPR;
\item[(2)] $w$ is a $U(1)_R$-preserving solution to $tt^*$;
\item[(3)] $z$ is a Legendre submanifold.
\end{itemize}
If one (hence all) of the conditions is satisfied, we say that the diagram \eqref{gasqwe4534} is a \emph{special} (K\"ahler) \emph{geometry.}   
\end{defthm}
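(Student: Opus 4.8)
The plan is to funnel all three conditions through the single $\boldsymbol{Q}$-grading $\mathfrak{g}_m^\C=\bigoplus_{k=-3}^3\mathfrak{g}^{-k,k}_m$ of \eqref{qqqqwhat}, exploiting that each of the three domains reads off a different slice of this grading. Concretely: the holomorphic tangent bundle of $\mathscr{D}^{(2)}_m$ is $\co(\mathfrak{g}^{-1,1}_m\oplus\mathfrak{g}^{-2,2}_m\oplus\mathfrak{g}^{-3,3}_m)$ with Griffiths-horizontal summand $\co(\mathfrak{g}^{-1,1}_m)$, so IPR \eqref{uuuuqw12} is the condition that $p_\ast(TS)$ meets only the $\boldsymbol{Q}$-eigenvalue $-1$ piece; the Cartan decomposition attached to $\mathscr{D}^{(1)}_m$ is $\mathfrak{g}_m=\mathfrak{k}_m\oplus\mathfrak{s}_m$ into even and odd $\boldsymbol{Q}$-degrees; and the contact line of $\mathscr{D}^{(3)}_m$ is the $\varpi_3$-image of the Hodge line $F^3=\cl$, its form $\lambda=Q(\boldsymbol{z},d\boldsymbol{z})$ being assembled from the polarization $Q$. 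The spine of the argument is the Lax family $\nabla^{(\zeta)}=d+A+\zeta^{-1}C+\zeta\overline C$ of \eqref{pqw12hh}: I will show that its flatness for all $\zeta\in\C^\ast$ is simultaneously the $tt^*$/pluri-harmonicity statement, the integrability of the Hodge filtration underlying $p$, and the closedness producing a holomorphic prepotential for $z$.

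First I would treat $(1)\Leftrightarrow(2)$. Decomposing the chiral-ring form as $C=C_{-1}+C_{-3}$ with $C_{-k}\in\mathfrak{g}^{-k,k}_m$ (both odd, hence in $\mathfrak{s}_m$), the $U(1)_R$-condition $[\boldsymbol{Q},C]=-C$ is precisely $C_{-3}=0$. Given $(2)$, the grading $\boldsymbol{Q}$ promotes $w$ to a map $p$ into $\mathscr{D}^{(2)}_m$ by declaring the $\boldsymbol{Q}$-filtration to be the Hodge filtration $F^q$ — this is the identification of the $tt^*$ equator with the Deligne circle noted in the Remark following the $U(1)_R$ axioms. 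Horizontality $p_\ast(TS)\subseteq\co(\mathfrak{g}^{-1,1}_m)$ is then literally $C_{-3}=0$, while holomorphicity of $p$ follows because $\overline C\in\mathfrak{g}^{1,-1}_m$ and $A\in\mathfrak{k}_m$ preserve each $F^q$, whence $\overline\partial F^q\subseteq F^q$. Conversely, a horizontal holomorphic $p$ projects under $\varpi_1$ to a pluri-harmonic $w=\varpi_1\circ p$ for the symmetric metric — the standard fact that a horizontal holomorphic map into a period domain descends to a pluri-harmonic map into the associated symmetric space, i.e.\ the generalized Simpson/Griffiths–Schmid correspondence \cite{simpson,jap,dubrovin} — and IPR forces the induced Higgs field into the eigenvalue $-1$ piece, giving $[\boldsymbol{Q},C]=-C$. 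In both directions the individual $tt^*$ equations are recovered by sorting the flatness of $\nabla^{(\zeta)}$ by powers of $\zeta$: the grading sends $C\wedge C\in\mathfrak{g}^{-2,2}_m$ (the $\zeta^{-2}$ term, whose vanishing is the commutativity of the chiral ring) and arranges $[C_{-1},\overline C]$ into exactly the degrees reproducing the curvature and holomorphicity relations.

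Next I would establish $(1)\Leftrightarrow(3)$, the contact/prepotential dictionary, which is the most elementary of the three. Because $h^{3,0}=1$, the Hodge line $F^3$ is precisely the pullback of the tautological bundle $\cl\to\mathscr{D}^{(3)}_m$, and $z=\varpi_3\circ p$ is the projectivized period vector $\boldsymbol{z}=(X^I,F_I)$ of the $(3,0)$-form. Griffiths transversality $\nabla F^3\subseteq F^2\otimes\Omega^1$ paired against $Q$ reads off as $z^\ast\lambda=z^\ast Q(\boldsymbol{z},d\boldsymbol{z})=0$, i.e.\ $z$ is isotropic; since $\dim_\C S=m$ is the Legendre dimension inside the $(2m+1)$-dimensional contact manifold $\mathscr{D}^{(3)}_m$, $z$ is a Legendre immersion, and $z^\ast\lambda=0$ is equivalent to the local existence of a holomorphic prepotential $\cf$ with $F_I=\partial_I\cf$. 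This is the Bryant–Griffiths characterization \cite{bgrif,bgrif2} of weight-3, $h^{3,0}=1$ horizontal period maps as Legendre submanifolds, an exterior-algebra computation once $F^3=\cl$ is identified.

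I expect the genuine obstacle to lie in the analytic half of $(2)\Rightarrow(1)$: upgrading the harmonic-map PDE $\overline D\partial w=0$ to the algebro-geometric statement that the $\boldsymbol{Q}$-graded lift $p$ is a \emph{bona fide} holomorphic horizontal period map carrying the correct integrality and monodromy $\Gamma\subset G_m(\Z)$. The bookkeeping above yields horizontality and holomorphicity only at the level of the associated graded; promoting the harmonic datum to an integrable variation — ensuring the flat Lax family $\nabla^{(\zeta)}$ descends to $S=\Gamma\backslash\widetilde S$ and that the $\boldsymbol{Q}$-filtration is genuinely stable under the flat connection — is exactly where the generalized Simpson theorem and the regularity theory for pluri-harmonic maps do the heavy lifting. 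This is also what the hedge \emph{``essentially equivalent''} absorbs, together with the routine reductions (passage to a neat $\Gamma$ and to finite covers, and immersivity of $p$ on the locus $\det R_{i\bar j}\neq0$).
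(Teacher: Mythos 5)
Your proposal is correct and follows essentially the same route the paper takes: the paper states this as a summary of known results, reviewing only the pluri-harmonicity $\Leftrightarrow$ $tt^*$ PDE equivalence in detail (appendix A, via the Maurer--Cartan identity and the $\zeta$-graded flatness of $\nabla^{(\zeta)}$, exactly as in your second paragraph) and citing Bryant--Griffiths for the Legendre/prepotential dictionary and Simpson--Ohnita--Dubrovin for the rest. Your identification of the $U(1)_R$ condition with $C_{-3}=0$ and of ``essentially equivalent'' with the prolongation/completion of the diagram matches the paper's own gloss that $p$ is canonically the first prolongation of $z$.
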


\noindent In this statement ``essentially equivalent'' means that if we are given any one of the three arrows $p$, $w$ or $z$, which satisfies the relevant condition in the above list, one may complete, in an essentially unique way, the  commutative diagram by arrows satisfying the stated condition. For instance, $p$ is canonically the $1^\text{st}$ prolongation of $z$ \cite{bgrif,bgrif2}.
Then $z_\ast$ is everywhere of maximal rank, and
 the pull-back
of the curvature of the Chern connection of the tautological bundle $\cl$ \eqref{yyyyaqc2}, equipped with the Hermitian form \eqref{quadrppos}, is positive and hence it defines a K\"ahler metric on $S$ (the so-called special K\"ahler metric).

If we are given $p$, we construct the other two arrows, $w$ and $z$, by compositing it with the two canonical projections;  so the most convenient description of a special geometry  is through its period map $p$. Then \textbf{Question \ref{qqyaetion}} is more conveniently stated in the form

\begin{que} Which period maps $p$, satisfying the IPR, do arise from quantum-consistent $\cn=2$ theories of gravity?
\end{que}

\subparagraph{(VI) The local pre-potential $\cf$.}
Let $\widetilde{S}$ be the simply-connected cover of $S$. 
$p$ and $z$ may be lifted to their respective covering maps
\be\label{jazsqwrr}
\begin{gathered}
\xymatrix{\widetilde{S} \ar[rr]^{\widetilde{p}}\ar@/_1.5pc/[rrd]^{\widetilde{z}} && \mathscr{D}^{(2)}_m\ar@{->>}[d]\\
&& \mathscr{D}^{(3)}_m}
\end{gathered}
\ee 
 with $\widetilde{L}\equiv \widetilde{z}(\widetilde{S})\subset \mathscr{D}_{m}^{(3)}$
 a holomorphic Legendre submanifold.
We write 
\be
\pi\colon \mathscr{D}_{m}^{(3)}\to \mathbb{P}^m,
\qquad \pi\colon (\boldsymbol{z}^0:\boldsymbol{z}^1:\dots: \boldsymbol{z}^{2m+1})\mapsto (\boldsymbol{z}^{m+1}:\boldsymbol{z}^{m+2}:\cdots: \boldsymbol{z}^{2m+1})
\ee for the projection on the last $(m+1)$ homogeneous coordinates $\boldsymbol{z}^a$ of $\mathscr{D}_{m}^{(3)}$. Following tradition
\cite{dewitproyen}, we shall denote the homogeneous coordinates of $\mathbb{P}^m$ as $X^I$ ($I=0,1,\dots,m$).
A generic Legendre submanifold $\widetilde{z}(\widetilde{S})$ intersects transversally the generic fiber of $\pi$; hence the differential of the map $\pi \widetilde{z}\colon \widetilde{S}\to \mathbb{P}^m$
is generically of maximal rank. 
The usual coordinates on $\mathbb{P}^m$ then may serve as local holomorphic coordinates in (small enough) local charts $U\subset \widetilde{S}$. We may write the local Legendre manifold $\widetilde{z}(U)\subset \widetilde{L}$  in terms of the homogeneous coordinates on $\mathscr{D}_{m}^{(3)}$ in the form  \cite{bgrif}
\be
\begin{gathered}
\widetilde{z}(U)=\Big\{(\cf_I, X^J)\in \mathscr{D}^{(3)}_{m}\colon (X^J)\in \pi\widetilde{z}(U)\Big\}\subset \widetilde{L},\\
 \cf_I\overset{\rm def}{=}\partial_{X^I}\mspace{-1mu}\cf\mspace{-0.2mu}(X^J),\quad\cf(X^J)\equiv\frac{1}{2} X^I\cf_I(X^J),\quad I,J=0,1,\cdots,m, 
\end{gathered}
\ee
for some holomorphic Hamilton-Jacobi function  $\cf(X^I)$, homogeneous of degree 2, which is called the \emph{pre-potential} in the $\cn=2$ supergravity context.
\emph{A priori}
the map $\pi\widetilde{z}\colon \widetilde{S}\to \mathbb{P}^m$ may be multi-to-one, and then each local branch of this  multi-cover has its own local pre-potential. 
In other words: the prepotential $\cf$ may be \emph{multivalued} (even after replacing $S$ by its universal cover $\widetilde{S}$).
Since the swampland conditions refer to the
\emph{global} geometry of the scalars' manifold $S$, we cannot ignore the issue of the several branches of $\cf$.
This item will be discussed in paragraph \textbf{(IX)}. As a preparation, we recall that the Hamilton-Jacobi function $\cf$ is  not invariant \emph{in value} under a $G_m(\mathbb{K})$
($\mathbb{K}=\R,\mathbb{Q},\Z$) rotation but only under
its block-diagonal subgroup $GL(m+1,\mathbb{K})\subset G_m(\mathbb{K})$:
\be\label{changeF}
\begin{split}
&\text{the action (in value) of}\ \ \gamma\equiv \begin{pmatrix}
{A_I}^K & B_{IL}\\
C^{JK} & {D^J}_L
\end{pmatrix}\in Sp(2m+2,\mathbb{K})\ \ \text{on $\cf$ is}\\
&\cf\xrightarrow{\ \gamma\ } \cf+ \frac{1}{2}\cf_I({A_K}^IC^{K\mspace{-2mu}J})\cf_J+\cf_I(C^{KI}B_{K\mspace{-2mu}J})X^J+\frac{1}{2}X^I(B_{KI}{D^K}_{\mspace{-2mu}J})X^J.
\end{split}
\ee

\subparagraph{(VII) WP and Hodge metrics.} 
The diagram \eqref{gasqwe4534} induces on $S$ a countable family of distinct K\"ahler metrics; the first two are relevant for our present purposes. The first one, $G_{i\bar j}$, is called the Weil-Petersson (WP) metric (a.k.a.\! the normalized $tt^*$ metric \cite{tt*}); its K\"ahler form is the pull-back \emph{via} $z$ of the curvature of the positive line bundle \eqref{yyyyaqc2}. This is the special K\"ahler metric which appears in the $\cn=2$ \textsc{sugra} kinetic terms \cite{cec,stro}.
The second one, $K_{i\bar j}$,
called in the math literature the Hodge metric \cite{GII,Gbook,periods,3cymmet}, was introduced and studied from the $tt^*$ viewpoint in \cite{ising,Bershadsky:1993ta}. Its K\"ahler form $\omega_H$ is the pull-back \emph{via} $p$ of the positive curvature of the Griffiths' canonical line bundle $\cl_\text{can}$ over the period domain $\mathscr{D}^{(2)}_m$ \cite{GII}. In terms of the matrix one-forms defined in \eqref{qwertUU} one has \cite{ising}
\be
\omega_H=\frac{i}{2}\,\mathrm{tr}\big(C\wedge \overline{C}\big).
\ee 
For a special K\"ahler geometry of complex dimension $m$, the relation between the two K\"ahler metrics is \cite{ising,3cymmet}
\be\label{cjqer}
K_{i\bar j}=(m+3)\mspace{2mu} G_{i\bar j}+R_{i\bar j}.
\ee
The two metrics $G_{i\bar j}$ and $K_{i\bar j}$ have the same isometry group.
The metric $K_{i\bar j}$ is better behaved than $G_{i\bar j}$ in several senses. E.g.\!\! the  curvatures of $K_{i\bar j}$ have much better non-positivity properties: for a \emph{general} special geometry the Ricci curvature of $K_{i\bar j}$ is negative and  away from zero by a known constant \cite{3cymmet}, and its  curvature tensor is non-positive in the Griffiths sense \cite{GII,Gbook,periods} (as well as in the Nakano sense \cite{Notes}). 

\subparagraph{(VIII) Torelli properties.} In the particular case where our special geometry describes an actual family of Calabi-Yau 3-folds, it is convenient to identify the simply-connected cover $\widetilde{S}$ of $S$
with the completion, with respect to the Hodge metric $K_{i\bar j}$, of the \emph{Torelli space} $\ct^\prime$ which is  the moduli of polarized and marked CY's, 
see refs.\!\!\cite{glo-torelli,glo-torelli1,glo-torelli2}.
$\widetilde{S}$
 should not be confused with the Teichm\"uller space
$\ct$ i.e.\! the universal cover of the uncompleted space $\ct^\prime$.
 $\widetilde{S}$ is biholomorphic to a  domain of holomorphy in $\C^m$ \cite{glo-torelli1}. We stress that $\widetilde{S}$ is metrically complete with respect to the Hodge metric but not necessarily with respect to the Weil-Petersson one. The implications of this fact for physics will be discussed in \S.\,\ref{recovering}\textbf{(2)}. 
  
The analysis of  \cite{glo-torelli,glo-torelli1,glo-torelli2} extends from this geometric situation to general $\cn=2$ supergravities with quantized electro-magnetic fluxes.
Then the ``global Torelli'' theorem of refs.\!\!\cite{glo-torelli,glo-torelli1,glo-torelli2}
asserts that the lifted period map $\widetilde{p}$ in eqn.\eqref{jazsqwrr} is injective.\footnote{\ We stress that the strong version of global Torelli is \emph{false} for Calabi-Yau 3-folds, as shown by the Aspinwall-Morrison quintic counterexample \cite{AM,AM2}.} 
Again, this holds in any naively-consistent $\cn=2$ \textsc{sugra} including the ones in the swampland. More precisely, we take injectivity of $\widetilde{p}$ as part of our definition of an \emph{integral} special geometry (by convention, all our geometries are integral).
Most of our arguments are independent of this property: they rest on  the CY strong  
\emph{local} Torelli theorem (see e.g.\! \textbf{Theorem 16.9} in ref.\!\!\cite{torelli}) which clearly holds for all \emph{formal} special K\"ahler geometry.\footnote{\   For these facts expressed in the ``old'' supergravity language see \cite{dewitproyen}.} 

Henceforth we shall  
identify $\widetilde{S}$ with its isomorphic image $\widetilde{p}(\widetilde{S})$. Then $S=\Gamma\widetilde{S}$, where the monodromy group $\Gamma$ acts on $\widetilde{S}\equiv \widetilde{p}(\widetilde{S})$ in the obvious way.

\subparagraph{(IX) Branching of $\cf$.}
Let us consider the pull-back of the local holomorphic function $\cf$ to $\widetilde{S}$ (again written $\cf$).
When defined as in \textbf{(VIII)},  $\widetilde{S}$ is biholomorphic to a  domain of holomorphy  in $\C^m$; if the holomorphic function $\cf$ has no singularity in $\widetilde{S}$, it has a uni-valued global analytic extension to all $\widetilde{S}$. $\cf$ may become singular
only at a locus $B\subset \widetilde{S}$ where  the Legendre sub-manifold $\widetilde{L}$ ceases to be transverse to the fibers of $\pi$. On the locus $B$  the holomorphic function
$\det(\partial_I\partial_J \cf)$ has a pole; hence $B$ has complex codimension at least 1 in $\widetilde{S}$. 
Indeed, $B\subset \widetilde{S}$ is just the branch locus of the holomorphic covering map 
\be
\pi\mspace{1mu}\varpi_3\mspace{2mu}\widetilde{p}\colon \widetilde{S}\to \mathbb{P}^m
\ee 
where distinct branches of the cover coalesce together. Around such a locus $\cf$ should become multivalued to represent the several local branches of the Legendre cover $\widetilde{L}\to \mathbb{P}^m$.
Since $B$ has codimension $\geq1$, 
the several branches of $\cf$ in $\widetilde{S}\setminus B$  all arise from the (multi-valued) analytic continuation of a \emph{single} holomorphic function.

\subsection{Special K\"ahler symmetries}\label{preliminary}

The special K\"ahler space of a general $\cn=2$ supergravity has the form 
\be\label{kasqw1}
S=\Gamma\backslash \widetilde{S}
\ee 
with $\widetilde{S}$ a smooth, Hodge-metric complete, simply-connected, special K\"ahler manifold. 
It is convenient to work with the covering $\cn=2$ \textsc{sugra}, whose scalar fields take value in  the covering special K\"ahler geometry $\widetilde{S}$, and  think of $\Gamma$ as a discrete symmetry of the covering \textsc{sugra} which we must gauge in order to get the actual low-energy quantum gravity with moduli space $S$. In other words: two field configurations in the covering theory which differ by the action of $\Gamma$ are regarded as the \emph{same} physical configuration in QG. We write $\mathsf{Iso}(\widetilde{S})$ for the Lie group of holomorphic\footnote{\ One can show that the connected component of the isometry group of a special K\"ahler manifold is made of holomorphic isometries, i.e.\! all Killing vectors are holomorphic.} isometries of the special K\"ahler metric on $\widetilde{S}$,
and 
$\mathsf{Sym}(\widetilde{S})\subseteq \mathsf{Iso}(\widetilde{S})$ for the \emph{naive}\footnote{\ By the \emph{naive} symmetry group we mean the would-be symmetry group of the supergravity  theory with the same Lagrangian but  whose vector fields $A^I$ are regarded as one-form fields instead of Abelian connections (the difference being that in the second case the electro-magnetic fluxes are quantized \emph{\'a la} Dirac).} symmetry group of the covering $\cn=2$ \textsc{sugra}. 
For all naive symmetry $\xi\in\mathsf{Sym}(\widetilde{S})$ we have
\be
\widetilde{p}(\xi\mspace{-2mu}\cdot\mspace{-2mu} s)= S_\xi \cdot \widetilde{p}(s)\qquad s\in \widetilde{S}, \quad \xi\in \mathsf{Sym}(\widetilde{S})\subseteq
\mathsf{Iso}(\widetilde{S}),\quad S_\xi\in G_m(\R),
\ee
where the naive symmetry $\xi$ acts on the electro-magnetic field strengths by the \emph{real}  duality rotation $S_\xi$.
The map $\xi\mapsto S_\xi$ yields a Lie group homomorphism 
\be
\kappa\colon\mathsf{Sym}(\widetilde{S})\to G_m(\R)\equiv Sp(2m+2,\R)
\ee 
whose kernel is trivial by ``global Torelli'', so that the naive symmetry group 
$\mathsf{Sym}(\widetilde{S})$ gets identified with a closed subgroup of $G_m(\R)$, that is: \textit{all naive symmetries of $\widetilde{S}\subset\mathscr{D}^{(2)}_m$ arise from automorphisms of the ambient space} 
\be
\mathscr{D}^{(2)}_m\equiv Sp(2m+2,\R)/[U(1)\times U(m)].
\ee
The Lie algebra $\mathfrak{sym}(\widetilde{S})$
of $\mathsf{Sym}(\widetilde{S})$
is then seen as a sub-algebra of $\mathfrak{g}^\R_m\equiv\mathfrak{sp}(2m+2,\R)$. In facts, the injectivity of
\be
\mathfrak{sym}(\widetilde{S})\to\mathfrak{sp}(2m+2,\R)
\ee 
already follows from strong local Torelli (\!\!\cite{torelli} \textbf{Theorem 16.9}). 

 The actual symmetry group of the covering special K\"ahler geometry $\widetilde{S}$ 
is the subgroup of the naive one which is consistent with the quantization of the electro-magnetic fluxes, i.e.
\be\label{hhhasqwp}
\mathsf{Sym}(\widetilde{S})_\Z\overset{\rm def}{=} \mathsf{Sym}(\widetilde{S})\cap G_m(\Z).
\ee 
The discrete gauge group satisfies $\Gamma\subset \mathsf{Sym}(\widetilde{S})_\Z$. The normalizer $\cn(\Gamma)$ of $\Gamma$ in $\mathsf{Sym}(\widetilde{S})_\Z$ is an ``emergent'' symmetry of the QG, and the quotient
\be
\cn(\Gamma)/\Gamma
\ee
is the ``emergent'' global symmetry of the effective theory. Here ``emergent'' means that $\cn(\Gamma)/\Gamma$ is a symmetry of the low-energy effective theory truncated at the 2-derivative level, which may (and should, according to the swampland conjectures) be explicitly broken by higher derivative couplings.\footnote{\ Compactification of Type IIB on the Aspinwall-Morrison 3-CY \cite{AM} is such an example with an ``emergent'' $\Z_5$ symmetry at the 2-derivative level  which is explicitly broken by higher derivative operators.}

It is natural and convenient to consider the larger group of \emph{rational} symmetries
\be
\mathsf{Sym}(\widetilde{S})_\mathbb{Q}\overset{\rm def}{=} \mathsf{Sym}(\widetilde{S})\cap G_m(\mathbb{Q}).
\ee
Rational symmetries are not symmetries in the strict sense of the world, since they do not preserve the Dirac symplectic lattice $\Lambda$ of electro-magnetic charges. However, given a rational symmetry $\xi\in \mathsf{Sym}(\widetilde{S})_\mathbb{Q}\setminus \mathsf{Sym}(\widetilde{S})_\Z$ there is a finite-index sublattice $\Lambda_\xi\subset \Lambda$ such that
\be
\xi \,\Lambda_\xi\subset \Lambda,
\ee
and hence $\xi$ sets a correspondence between the  subsector of the theory whose electro-magnetic charges are in the sublattice $\Lambda_\xi$
and the subsector with electro-magnetic charges in the sublattice $\xi\,\Lambda_\xi$ (having the same index in $\Lambda$ as $\Lambda_\xi$). The correspondence
$\Lambda_\xi\leftrightarrow \xi\,\Lambda_\xi$ leaves invariant the classical physics, so it is a kind of sector-wise symmetry.  More importantly, its existence has observable consequences. 
For instance (assuming $\xi$ is a rational symmetry of the full theory and not just of its 2-derivative truncation\,!!) it implies that the entropy of an extremal black holes with charge $v\in\Lambda_\xi$ is equal to the entropy of the corresponding extremal black hole with charge $\xi v\in \xi\mspace{2mu}\Lambda_\xi$.
\medskip

We see $G_m(\mathbb{Q})\equiv Sp(2m+2,\mathbb{Q})$ as  a connected (linear) algebraic group defined over the field $\mathbb{Q}$ \cite{milnebook}. The subgroup 
\be
\mathsf{Sym}(\widetilde{S})_\mathbb{Q}\subset G_m(\mathbb{Q})
\ee 
then contains a maximal connected $\mathbb{Q}$-algebraic subgroup 
\be
\mathsf{Q}(\widetilde{S})\subset \mathsf{Sym}(\widetilde{S})_\mathbb{Q}.
\ee 
This allows to introduce the notion of the $\mathbb{Q}$-Lie algebra of ``infinitesimal symmetries'' of the special K\"ahler geometry $\widetilde{S}$,  namely the $\mathbb{Q}$-Lie algebra  $\mathfrak{q}(\widetilde{S})$ of the algebraic group $\mathsf{Q}(\widetilde{S})$ \cite{milnebook}  
\be
\mathfrak{q}(\widetilde{S})\subset \mathfrak{sp}(2m+2,\mathbb{Q})\quad \text{and}\quad \mathfrak{q}(\widetilde{S})\otimes_\mathbb{Q}\mspace{-2mu}\R\subseteq \mathfrak{sym}(\widetilde{S}).
\ee
We shall see that in a quantum-consistent
special geometry the last inclusion is an equality.

\section{Warm-up: Two weak results in the DG paradigm}\label{dummies}

Before addressing the issues of main interest, we state two `elementary' results which are weaker versions of 
\textbf{Fact \ref{cido}} and \textbf{Corollary \ref{cominst}}, respectively.
The merit of these statements is that they may be proven remaining inside the naive DG paradigm
by assuming the swampland conjectures in the original Ooguri-Vafa form \cite{OoV}. In facts, it suffices to impose that the moduli space $S$ is non-compact of finite volume (the other conditions then hold automatically in this special context).
\medskip

The material in this section will be only marginally used in the rest of the paper, and can be omitted in a first reading. However, comparison of the swampland structural criterion with some well-known phenomena in Differential Geometry may help to understand the nature and effects of the QG  `arithmetic' paradigm (see \textbf{Comments on the proof} below).
\medskip

 The first result is weaker than \textbf{Fact \ref{cido}} because it makes additional geometric assumptions on the special K\"ahler manifold $S$, so that it applies only to a small  \emph{subclass} of special geometries. The second result holds in full generality, but it is  less precise than \textbf{Corollary \ref{cominst}} in a significant way.
 
 Since both results will be subsumed by the stronger \textbf{Fact \ref{cido}}, in this section we shall be rather sketchy with the proofs, and be cavalier with several fine points.   

\subsection{A restricted class of special K\"ahler geometries}

As before, $\mathsf{Iso}(\widetilde{S})$ stands for the Lie group of {holomorphic}
isometries of the special K\"ahler manifold $\widetilde{S}$,
and $\mathfrak{iso}(\widetilde{S})$ for its Lie algebra generated by the holomorphic Killing vectors.
All Killing vectors of $\widetilde{S}$ belong to $\mathfrak{iso}(\widetilde{S})$.

\begin{fact}\label{weakre} Let $S=\Gamma\backslash \widetilde{S}$ be a special K\"ahler manifold which is non-compact with $\mathsf{Vol}(S)<+\infty$. Here $\widetilde{S}$ is its smooth simply-connected cover.\footnote{\ Recall that -- replacing $S$ by a finite cover, if necessary -- we are assuming $\Gamma$ to be neat \cite{morris,borel}, hence torsion-free. Then $\widetilde{S}$ is the universal cover and $\pi(S)=\Gamma$.} \emph{Assume, in addition, that the Riemannian sectional curvatures $S$ are non-positive.} Then we have the dicothomy:
\begin{itemize}
\item if $\widetilde{S}$ has a non-trivial Killing vector (i.e.\! $\mathfrak{iso}(\widetilde{S})\neq0$) then
$\widetilde{S}$ is Hermitian symmetric: $\widetilde{S}=G/K$ (with $G$ a group in \textsc{Table} \ref{tableiso}) and $\Gamma\subset G$ is an \emph{arithmetic} subgroup;
\item otherwise $\mathsf{Iso}(\widetilde{S})$ is a discrete group of rank 1, and $\Gamma\subset \mathsf{Iso}(\widetilde{S})$ is a finite-index subgroup.
\end{itemize}
\end{fact}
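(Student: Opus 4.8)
The plan is to remain entirely within the non-positively curved category and to reduce the dichotomy to classical rigidity theorems for finite-volume manifolds of non-positive curvature, supplemented by the classification of Hermitian symmetric special K\"ahler geometries of \textsc{Table} \ref{tableiso}.

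First I would record the geometric setup. Since the sectional curvatures of $S$ are non-positive and $\widetilde S\to S$ is a local isometry, the cover $\widetilde S$ is a complete, simply-connected manifold of non-positive curvature, i.e.\ a Hadamard manifold; by Cartan--Hadamard it is diffeomorphic to $\R^{2m}$, its isometry group $\mathsf{Iso}(\widetilde S)$ is a Lie group, and $\Gamma$ sits inside it as a discrete subgroup with $\mathsf{Vol}\big(\Gamma\backslash\widetilde S\big)<\infty$. From \S\,\ref{preliminary} the period map also realizes $\mathsf{Iso}(\widetilde S)$ as a closed subgroup of $G_m(\R)=Sp(2m+2,\R)$, with $\Gamma\subset G_m(\Z)$.

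The two alternatives are distinguished by whether $G:=\mathsf{Iso}(\widetilde S)^0$ is trivial. Assume first $\mathfrak{iso}(\widetilde S)\neq0$, so $G$ is positive-dimensional and $\mathsf{Iso}(\widetilde S)$ is \emph{non-discrete}. The essential input is the rigidity of finite-volume non-positively curved spaces with non-discrete isometry group: such a cover splits isometrically as a de Rham product in which the continuous isometries are carried by symmetric factors of non-compact type (a Farb--Weinberger/Eberlein-type splitting; the semisimplicity of $G$ uses that a finite-covolume lattice leaves no room for a non-trivial solvable radical). One must then eliminate the remaining de Rham factors. A Euclidean factor is excluded by the projective special K\"ahler structure -- a flat factor cannot be split off compatibly with the cone/Legendre structure of \S\,\ref{basic}, and in any case a purely flat finite-volume quotient would be compact, contradicting non-compactness -- while for the non-flat factors I would apply rank rigidity (Ballmann; Burns--Spatzier) factor by factor, noting that an irreducible finite-volume factor of rank $\geq2$ is symmetric, and that a rank-one factor carrying a continuous isometry is again a (rank-one) symmetric space. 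The delicate point -- the place where the special K\"ahler hypothesis is indispensable -- is to exclude a genuine product of a symmetric factor with a \emph{non}-symmetric rank-one factor, which would otherwise produce a Killing vector without global symmetry; here one invokes the special K\"ahler curvature identity together with the relation \eqref{cjqer} and the classification to conclude that, once any continuous isometry is present, every factor is symmetric, so that $\widetilde S=G/K$ is globally Hermitian symmetric.

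With $\widetilde S$ symmetric, the classification of symmetric special K\"ahler manifolds \cite{cremmer} forces $G$ to be one of the groups of \textsc{Table} \ref{tableiso} and yields $\mathsf{Sym}(\widetilde S)=\mathsf{Iso}(\widetilde S)$. Arithmeticity of $\Gamma$ then comes from the built-in integral structure rather than from Margulis (which anyway fails in the rank-one case $SU(1,m)$): by Borel density $\Gamma$ is Zariski-dense in $G$, and since $\Gamma\subset G_m(\Z)\subset G_m(\mathbb{Q})$ the subgroup $G$ is defined over $\mathbb{Q}$, so $G(\Z)=G\cap G_m(\Z)$ is an arithmetic lattice by Borel--Harish-Chandra \cite{morris,borel}; the finite-covolume subgroup $\Gamma\subset G(\Z)$ then has finite index and is arithmetic. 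Finally, if $\mathfrak{iso}(\widetilde S)=0$ the group $\mathsf{Iso}(\widetilde S)$ is discrete, the orbifold $\mathsf{Iso}(\widetilde S)\backslash\widetilde S$ is finitely covered by $\Gamma\backslash\widetilde S$, so $[\mathsf{Iso}(\widetilde S):\Gamma]$ equals the finite ratio of volumes and $\Gamma$ is finite-index; moreover rank rigidity, applied to the absence of continuous isometries, forces $\widetilde S$ to have rank one, matching the second clause of the statement. The main obstacle is precisely the rigidity step of the first alternative -- upgrading a single Killing vector to global symmetry -- which is exactly where non-positive curvature, finite volume, and the special K\"ahler structure must be used in concert.
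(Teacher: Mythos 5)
Your overall strategy (Hadamard manifold, Eberlein-type dichotomy, finite index from the volume ratio, rank one from rank rigidity in the discrete case) is the same as the paper's, but there is a genuine gap exactly at the point you yourself flag as ``the delicate point.'' The paper does not run a de Rham decomposition and a factor-by-factor rank-rigidity analysis at all: it disposes of the reducible case in one stroke by the Ferrara--van Proeyen theorem \cite{ferrara}, which says that \emph{every} simply-connected \emph{reducible} special K\"ahler manifold is already a symmetric space (of the $SL(2,\R)/U(1)\times SO(2,k)/[SO(2)\times SO(k)]$ form in \textsc{Table} \ref{tableiso}). Hence one may assume without loss that $\widetilde S$ is an irreducible Hadamard manifold without Euclidean factors, and Eberlein's theorem (\textsc{Proposition} 4.4 of \cite{hada1}) then gives the dichotomy directly. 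Your substitute for this step --- ``one invokes the special K\"ahler curvature identity together with the relation \eqref{cjqer} and the classification'' --- does not actually prove that a product of a symmetric factor with a non-symmetric rank-one factor is impossible: the relation \eqref{cjqer} between the Hodge and WP metrics says nothing about de Rham splittings, and the classification you cite at the outset is that of \emph{symmetric} special K\"ahler manifolds, which is not the statement you need. The statement you need is precisely ``reducible $\Rightarrow$ symmetric,'' i.e.\ Ferrara--van Proeyen; without it your mixed-product case remains open. (A smaller omission: completeness of $\widetilde S$ in the WP metric is not automatic; the paper derives it from Hodge-metric completeness plus \eqref{cjqer} using non-positivity of the curvature.)

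On the other hand, your arithmeticity argument is a genuine and arguably sharper alternative to the paper's. The paper attributes arithmeticity of $\Gamma$ to Margulis super-rigidity, which does not apply to the rank-one case $SU(1,m)$, where non-arithmetic lattices exist. Your route --- Borel density to get Zariski-density of $\Gamma$ in $G$, the inclusion $\Gamma\subset G_m(\Z)$ forced by flux quantization to get $G$ defined over $\mathbb{Q}$, Borel--Harish-Chandra to make $G\cap G_m(\Z)$ a lattice, and then finite index of $\Gamma$ in it from finite covolume --- uses the integral structure that is genuinely available here and covers the rank-one case uniformly. This is worth keeping; the rest of the argument should be restructured around the Ferrara--van Proeyen reduction.
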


\noindent The definition of the \emph{rank} of an abstract group $\Gamma$ is   recalled in the box on page \pageref{gggggamma}.

\begin{rem} \textbf{Fact \ref{weakre}} is \textbf{Fact \ref{cido}} except that we restrict to the very small sub-class of \emph{non-positively curved} special K\"ahler manifolds. In this restricted class of manifolds the conclusion is stronger since we have the extra information that in the non-symmetric case the monodromy group $\Gamma$ has rank 1. \end{rem}

Similar results hold under other special assumptions on $\widetilde{S}$: see  e.g.\!  \textbf{Theorem 1.7}
in \cite{lu*}. 

\begin{figure}
\centering
\doublebox{\begin{minipage}{330pt}
\begin{scriptsize}
\textbf{\underline{The rank of an abstract group $\Gamma$}} \cite{hadam} \vskip 6pt

Let $\Gamma$ be an abstract group. For $\sigma\in\Gamma$, we write $Z_\Gamma(\sigma)$ for its centralizer in $\Gamma$. We write $A_i(\Gamma)$ for the set of the elements $\sigma\in\Gamma$ such that $Z_\Gamma(\sigma)$ contains a free Abelian subgroup of rank $\leq i$ as a subgroup of finite index. Then
$$
r(\Gamma):=\min\!\bigg\{i\colon \text{there exist finitely many }\gamma_j\in\Gamma\ \text{such that }\Gamma=\bigcup_j \gamma_j \cdot A_i(\Gamma)\bigg\}
$$
and set
$$
\mathrm{rank}\,\Gamma:=\sup\big\{ r(\Gamma^*)\colon \Gamma^*\subseteq \Gamma\ \text{is a finite index subgroup}\big\}.
$$
Properties:
\begin{itemize}
\item Commensurable groups have the same rank;
\item if $\Gamma=\Gamma_1\times\cdots\times \Gamma_s$ is a product, then $\mathrm{rank}\,\Gamma=\sum_j \mathrm{rank}\,\Gamma_j$;
\item free groups have rank 1;
\item if $G_\mathbb{Q}$ is a $\mathbb{Q}$-algebraic group of $\mathbb{Q}$-rank $\boldsymbol{r}$ and $\Gamma\subset G_\mathbb{Q}$ is an arithmetic subgroup \cite{morris}, then $\mathrm{rank}\,\Gamma=\boldsymbol{r}$.
\end{itemize}

\end{scriptsize}
\end{minipage}
}
\label{gggggamma} 
\end{figure}

\medskip

By definition \cite{jost}, under the present special assumption, the simply-connected Riemannian manifold $\widetilde{S}$ is a Hadamard manifold without Euclidean factors.\footnote{\ The non obvious part of this statement is that $\widetilde{S}$ is metrically complete. By construction, $\widetilde{S}$ is complete for the Hodge metric; from eqn.\eqref{cjqer} we see that, when the sectional curvatures of $G_{i\bar j}$ are non-positive, completeness with respect to the Hodge metric implies completeness with respect to the WP metric.} Then \textbf{Fact \ref{weakre}} is a direct consequence of the \emph{rigidity theorems}  in Differential Geometry for finite-volume quotients of Hadamard manifolds, see refs.\!\!\cite{hada0,hada1,hada2,hadam}. When $S$ is locally symmetric these theorems reduce to the Mostow rigidity theorem \cite{morris,mostow} while the properties of $\Gamma$ follow from the Margulis  super-rigidity theorem \cite{morris,margulis}. 

We recall some definitions and the simplest rigidity statement \cite{hada1}.
A \emph{lattice} $\Gamma$ in a Hadamard manifold $H$ is a subgroup $\Gamma\subset \mathrm{Iso}(H)$ such that $\mathsf{Vol}(H/\Gamma)<\infty$. A lattice $\Gamma\subset\mathrm{Iso}(H)$ is said to be \emph{reducible} iff the manifold $M=H/\Gamma$ has a \emph{finite} cover which is reducible as a Riemannian manifold. 

\begin{thm}[P. Eberlein \cite{hada1}]\label{Hadatigidity} Let $H$ be a Hadamard manifold without Euclidean factors\,\footnote{\ \textsc{Proposition 4.4} of \cite{hada1} has no assumption on the Hadamard manifold $H$ but requires $\Gamma$ not to contain Clifford translations. Our assumption of no Euclidean factor implies the last condition, see \textsc{Theorem 2.1} of \cite{hada0}.} and let $\Gamma$ be an irreducible lattice in $H$. Then either:
\begin{itemize}
\item[\rm (1)] $\mathrm{Iso}(H)$ is discrete, $\Gamma$ has finite index in $\mathrm{Iso}(H)$ and $H$ is irreducible;
\item[\rm (2)] $H$ is isometric to a symmetric space of non-compact type.
\end{itemize}
\end{thm}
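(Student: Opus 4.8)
The plan is to run the standard rigidity argument for finite-volume quotients of Hadamard manifolds, organised around the de Rham decomposition and a single case split on the identity component $G_0:=\mathrm{Iso}_0(H)$ of the full isometry group $G:=\mathrm{Iso}(H)$. First I would fix the de Rham splitting $H=H_1\times\cdots\times H_k$ into irreducible factors; by hypothesis there is no Euclidean factor, which by \textsc{Theorem 2.1} of \cite{hada0} means $H$ admits no Clifford translations — precisely the technical input required to apply the splitting and rigidity results of \cite{hada0,hada1}. Recall that a finite-index subgroup of $G$ preserves this decomposition (permuting only isometric factors) and that $G_0=\prod_i\mathrm{Iso}_0(H_i)$, so the global problem reduces to the individual factors. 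The geometric engine is then a classification of each irreducible factor by its rank: if $H_i$ has rank $\ge 2$, the rank rigidity theorem (Ballmann, Burns--Spatzier, in the finite-volume/duality form) forces $H_i$ to be symmetric of non-compact type; if $H_i$ has rank $1$, then either it is a rank-one symmetric space or $\mathrm{Iso}(H_i)$ is discrete. Grouping the symmetric factors into $H_{\mathrm{sym}}$ and the remaining (discrete-isometry) factors into $H_{\mathrm{disc}}$ yields an isometric splitting $H=H_{\mathrm{sym}}\times H_{\mathrm{disc}}$ with $H_{\mathrm{sym}}$ symmetric of non-compact type and $\mathrm{Iso}(H_{\mathrm{disc}})$ discrete; in particular $G_0=\mathrm{Iso}_0(H_{\mathrm{sym}})$, and the stated dichotomy is read off from whether $G_0$ is trivial.

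Suppose first $G_0=\{e\}$, so $H_{\mathrm{sym}}$ is a point, $H=H_{\mathrm{disc}}$, and $G$ is discrete. Then $G$ acts properly discontinuously and $H/G$ is a finite-volume quotient of $H/\Gamma$, whence $\mathsf{Vol}(H/G)\le \mathsf{Vol}(H/\Gamma)<\infty$ and $[G:\Gamma]=\mathsf{Vol}(H/\Gamma)/\mathsf{Vol}(H/G)<\infty$, giving finite index. For irreducibility of $H$ itself I argue by contradiction: if $k\ge 2$, pass to the finite-index subgroup $\Gamma^\ast\subset\Gamma$ preserving each factor (hence $\Gamma^\ast\subset\prod_i\mathrm{Iso}(H_i)$) and set $\Gamma_i^\ast=\Gamma^\ast\cap\mathrm{Iso}(H_i)$. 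Since $G$ is discrete, $\Gamma^\ast$ has finite index in $\prod_i\mathrm{Iso}(H_i)$, and one checks that $\prod_i\Gamma_i^\ast$ has finite index in $\Gamma^\ast$; hence $H/\prod_i\Gamma_i^\ast=\prod_i(H_i/\Gamma_i^\ast)$ is a finite cover of $H/\Gamma$ splitting as a Riemannian product, contradicting irreducibility of $\Gamma$. Thus $k=1$, $H$ is irreducible, and we are in alternative (1).

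Suppose instead $G_0\ne\{e\}$, so $H_{\mathrm{sym}}$ is nontrivial; I must show $H_{\mathrm{disc}}$ is a point, giving $H=H_{\mathrm{sym}}$ symmetric (alternative (2)). If $H_{\mathrm{disc}}$ were nontrivial, finite covolume of $\Gamma$ in the product forces the kernel $N:=\Gamma\cap\mathrm{Iso}(H_{\mathrm{sym}})$ to be a lattice in $H_{\mathrm{sym}}$ and the projection $p_{\mathrm{disc}}(\Gamma)$ to be a lattice in the \emph{discrete} group $\mathrm{Iso}(H_{\mathrm{disc}})$; since $N$ is normal in $\Gamma$, the splitting theorems of Gromoll--Wolf and Lawson--Yau (in the finite-volume form of \cite{hada1}) then produce a finite cover of $H/\Gamma$ that splits metrically, again contradicting irreducibility. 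Hence $H_{\mathrm{disc}}$ is trivial and $H$ is symmetric of non-compact type. Note this allows $H$ to be a \emph{reducible} symmetric space carrying an irreducible lattice, which is consistent with alternative (2) not asserting irreducibility of $H$.

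I expect the genuine obstacle to be the factor classification, i.e.\ the rank rigidity input: the passage from ``$H_i$ irreducible of higher rank with a finite-volume quotient'' to ``$H_i$ symmetric'', together with the companion rank-one statement, encodes the full geometric depth of the theorem (equivalently, in Eberlein's original treatment \cite{hadam}, the analysis of the geometry of the ideal boundary $H(\infty)$ under the duality condition guaranteed by finite volume). Everything else — the volume/index comparison, and the two virtual-splitting arguments that dispose of reducible configurations — is elementary bookkeeping once the symmetric-versus-discrete trichotomy of the factors is in hand; in the locally symmetric regime it reduces to the Mostow \cite{mostow} and Margulis \cite{margulis} rigidity theorems invoked after the statement.
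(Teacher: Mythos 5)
First, a point of order: the paper does not prove this statement at all --- it is quoted verbatim (with the footnote translating the no-Euclidean-factor hypothesis into the no-Clifford-translations hypothesis via \textsc{Theorem 2.1} of \cite{hada0}) from Eberlein's \textsc{Proposition 4.4} in \cite{hada1}. So there is no in-paper argument to compare against; your sketch can only be measured against the literature it is reconstructing.

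On its merits, your outline has the right global architecture (de Rham splitting, per-factor classification, grouping into $H_{\mathrm{sym}}\times H_{\mathrm{disc}}$, and the two virtual-splitting arguments against reducible configurations --- the index and normal-subgroup bookkeeping in those two cases is essentially fine). But the load-bearing step, the trichotomy ``each irreducible factor is either symmetric of non-compact type or has discrete full isometry group,'' is not established. Rank rigidity (Ballmann, Burns--Spatzier \cite{hada2}) only covers irreducible factors of rank $\geq 2$; for a rank-one irreducible factor the assertion ``$\mathrm{Iso}(H_i)$ non-discrete and $H_i$ admits a finite-volume quotient $\Rightarrow$ $H_i$ symmetric'' is precisely the content of Eberlein's theorem in the irreducible case, and your ``companion rank-one statement'' is left unproved --- you defer it back to ``Eberlein's original treatment.'' That is the whole theorem in the case that matters, so as written the argument is circular at its core. (Eberlein's actual route does not pass through rank rigidity, which postdates his paper: he shows that a nontrivial identity component $G_0\triangleleft\mathrm{Iso}(H)$, normalized by a lattice satisfying the duality condition and containing no Clifford translations, forces a splitting $H=H_0\times H_1$ with $G_0$ transitive on $H_0$, and then that a homogeneous Hadamard factor carrying a lattice and no Euclidean part is symmetric --- the unimodularity/semisimplicity step that the paper echoes in its \textbf{Lemma 1} and \textbf{Lemma 2} of section 3.2.) To close the gap you would need to supply that rank-one (or, better, rank-free) argument rather than cite the theorem you are proving.
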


\begin{rem} In order to study the interplay between the isometry group and the global properties of a special K\"ahler geometry, it is convenient to replace its usual WP K\"ahler metric $G_{i\bar j}$ with the Hodge one $K_{i\bar j}$, eqn.\eqref{cjqer}. Since the two metrics $G_{i\bar j}$ and $K_{i\bar j}$ have the same isometry group,
 the statement and proof of \textbf{Fact \ref{weakre}} holds equally well for both metrics.  However as mentioned in \S.\,\ref{basic}\textbf{(VII)}
 $K_{i\bar j}$ has better chances of having non-positive curvatures. 
Unfortunately having a non-positive curvature in Griffiths sense is much weaker than the condition required in the above \textbf{Theorem}, i.e.\! non-positive Riemannian sectional curvatures, so that, while consideration of $K_{i\bar j}$ may seem to enlarge the class of special geometries to which \textbf{Fact \ref{weakre}} applies, it still consists of a very small portion of all special K\"ahler geometries.\footnote{\ See \textbf{Remark} at the end of this subsection.} 
\end{rem}

\begin{proof}[Proof of {\rm\textbf{Fact \ref{weakre}}}] Let $\widetilde{S}$ be the smooth simply-connected cover of our special K\"ahler manifold $S$ having (by hypothesis) non-positive Riemannian sectional curvatures. We need to show the following claim: the existence of a discrete subgroup $\Gamma\subset\mathrm{Iso}(\widetilde{S})$
such that $\mathsf{Vol}(\Gamma\backslash \widetilde{S})<\infty$ implies that either $\mathfrak{iso}(\widetilde{S})=0$ or $\widetilde{S}$ is symmetric.  
The simply-connected \emph{reducible} special K\"ahler manifolds were  classified in \cite{ferrara} (see also \cite{book}):
they are all symmetric spaces, corresponding to the non-simple Lie groups in \textsc{Table \ref{tableiso}}. Therefore we may assume without loss that $\widetilde{S}$ is an \emph{irreducible} Hadamard manifold without Euclidean factors.
The claim then follows from the \textbf{Theorem} quoted above.  
\end{proof}

\begin{cproof} The basic ingredient of the argument  is a \emph{rigidity theorem} in Differential Geometry. Already in ref.\!\!\cite{OoV} it was observed that the swampland conjectures are easy to prove for negatively-curved moduli spaces. Unfortunately, in general the special K\"ahler manifolds are \textbf{not} quotients of Hadamard spaces, and no useful general rigidity theorem for them is available \textit{while remaining in the DG paradigm.} Going to the ``new'' paradigm, we may replace in the above argument  the DG rigidity theorem by the subtler rigidity theorems of Hodge theory \cite{GII,Gbook,periods}
whose most convenient and powerful formulation is the VHS \emph{structure theorem.}
This fancier rigidity theorem has an arithmetic flavor and holds (conjecturally) for \emph{all} special K\"ahler geometries\footnote{\ In facts, more generally for all ``motivic'' VHS for arbitrary values of the Hodge numbers $h^{p,q}$. This fact allows to extended the swampland structural criterion to effective theories more general than $\cn=2$ supergravities \cite{Notes}.} which do \emph{not} belong to the swampland, independently of the sign of their curvature,
while having essentially the same implications as  the usual  rigidity theorems of Differential Geometry and Lie group theory. 
\end{cproof}

We close this subsection mentioning some other results for the non-positively curved case. We recall that the rank, $\mathsf{rank}(M)$ of a complete Riemannian manifold is defined as \cite{hadam}
\be
\mathsf{rank}(M)= \min_{|v|^2=1} \dim P\mspace{-4mu}J_v
\ee
where $P\mspace{-4mu}J_v$ is the vector space of parallel Jacobi fields along the geodesic with initial velocity vector $v$. If $M$ is locally symmetric this
reproduces the standard rank (i.e.\! the $\R$-rank of its  isometry group).

\begin{thm}[Burns--Spatzier \cite{hada2}]
Let $S$ be a complete connected Riemannian manifold of \emph{finite volume} and  non-positive sectional curvature without Euclidean factors. Then $S$ has a \emph{finite} cover which splits as a Riemannian product of rank 1 spaces and a locally symmetric space. 
\end{thm}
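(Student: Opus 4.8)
The plan is to reduce the global statement to the irreducible factors of the universal cover and then invoke higher-rank rigidity factor by factor. First I would pass to the universal cover $\widetilde{S}$, which by completeness and simple-connectivity of $S$ is a Hadamard manifold and which by hypothesis carries no Euclidean de Rham factor. Applying the de Rham decomposition theorem I write $\widetilde{S}=\widetilde{S}_1\times\cdots\times\widetilde{S}_r$ as a Riemannian product of irreducible, non-flat factors. The deck group $\Gamma=\pi_1(S)$ acts as a lattice in $\mathrm{Iso}(\widetilde{S})$; since for a product of non-flat irreducible Hadamard manifolds the isometry group is, up to finite index, the direct product $\prod_i\mathrm{Iso}(\widetilde{S}_i)$ extended by the finite group permuting mutually isometric factors, I can pass to a finite-index subgroup of $\Gamma$ --- that is, to a finite cover of $S$ --- for which no factors are permuted and each projection $\mathrm{pr}_i\colon\mathrm{Iso}(\widetilde{S})\to\mathrm{Iso}(\widetilde{S}_i)$ restricts to $\Gamma$.

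Next I would run the rank dichotomy on each factor. The rank function $v\mapsto\dim PJ_v$ on the unit tangent bundle is lower semicontinuous and invariant under the geodesic flow; finiteness of $\mathsf{Vol}(S)$ makes the Liouville measure finite, so by Poincar\'e recurrence and the ergodic decomposition the rank equals a constant $k_i=\mathsf{rank}(\widetilde{S}_i)$ on a dense, full-measure set of regular vectors of $\widetilde{S}_i$. If $k_i=1$ the factor is a rank-one space. If $k_i\ge2$ I invoke the higher-rank rigidity theorem of Ballmann and Burns--Spatzier \cite{hada2}: every geodesic then carries a nontrivial parallel perpendicular Jacobi field, so each regular geodesic lies in a $k_i$-flat; analysing the induced incidence geometry on the boundary at infinity $\partial_\infty\widetilde{S}_i$ with its Tits metric shows this boundary is a thick irreducible spherical building (the alternative being a splitting that contradicts irreducibility), and by the rigidity of such buildings $\widetilde{S}_i$ is a symmetric space of non-compact type. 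Thus every irreducible factor is either of rank one or symmetric of non-compact type; in particular every \emph{non-symmetric} factor has rank one.

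Finally I would assemble the splitting. Collect the symmetric factors (of any rank, including rank-one hyperbolic ones) into a single symmetric space $\widetilde{S}_{\mathrm{sym}}=\prod_{i\in I_{\mathrm{sym}}}\widetilde{S}_i$, and keep the remaining non-symmetric factors $\widetilde{S}_j$, $j\in I_1$, which are all rank one. For each such non-symmetric factor the companion rigidity result of Eberlein, quoted above as \textbf{Theorem \ref{Hadatigidity}}, forces $\mathrm{Iso}(\widetilde{S}_j)$ to be discrete, so the identity component of $\mathrm{Iso}(\widetilde{S})$ is confined to the symmetric part $\widetilde{S}_{\mathrm{sym}}$. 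A de Rham--Wu type splitting argument for lattices in a product with a confined semisimple part then shows that a further finite-index subgroup of $\Gamma$ splits as $\Gamma_{\mathrm{sym}}\times\prod_{j\in I_1}\Gamma_j$, with $\Gamma_{\mathrm{sym}}$ a lattice on $\widetilde{S}_{\mathrm{sym}}$ and each $\Gamma_j$ a lattice on $\widetilde{S}_j$. The corresponding finite cover of $S$ is then the Riemannian product $\big(\Gamma_{\mathrm{sym}}\backslash\widetilde{S}_{\mathrm{sym}}\big)\times\prod_{j\in I_1}\big(\Gamma_j\backslash\widetilde{S}_j\big)$ of a locally symmetric space with rank-one spaces, as claimed.

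I expect the genuine obstacle to be the higher-rank step: the theorem invoked there is precisely the deep rigidity result of Ballmann and Burns--Spatzier, and its proof --- passing from the existence of parallel Jacobi fields to the symmetric-space conclusion through the combinatorics of flats and the classification of boundary buildings --- is the entire technical heart and would not be reproduced here. A secondary, more bookkeeping difficulty is the finite-volume (rather than cocompact) setting, where the ergodicity and density statements for regular vectors require control of the geometry near the cusps that is the main novelty of \cite{hada2}, together with the lattice-splitting step, which must be handled with care since an irreducible lattice on the symmetric part need not split even though the non-symmetric rank-one factors do.
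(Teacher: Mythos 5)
The paper does not prove this statement at all: it is imported verbatim as an external theorem from \cite{hada2} (Burns--Spatzier, together with the companion work of Ballmann), and is used only as a quotable black box in the warm-up section \S.3. So there is no ``paper proof'' to compare against; what you have written is an outline of how the proof goes \emph{in the cited literature}, and you correctly identify that the entire technical content --- the higher-rank rigidity theorem, i.e.\ the passage from ``every regular geodesic of a rank $k\geq 2$ factor lies in a $k$-flat'' to ``the factor is a symmetric space of non-compact type'' via the Tits geometry of $\partial_\infty\widetilde{S}_i$ --- is exactly the theorem you are being asked to prove. Invoking ``the higher-rank rigidity theorem of Ballmann and Burns--Spatzier'' inside a proof of the Burns--Spatzier theorem is circular as written; as a genuine proof the proposal therefore has a gap coextensive with the whole result. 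As a roadmap of the known argument, however, the three-step structure (de Rham decomposition of the Hadamard cover, passage to a finite-index subgroup of $\Gamma$ that preserves the factors, rank dichotomy factor by factor, then reassembly) is the correct one.

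Two smaller points deserve correction. First, the rank function $v\mapsto\dim P\mspace{-4mu}J_v$ is \emph{upper} semicontinuous, not lower: limits of parallel Jacobi fields are parallel, so $\{v:\mathrm{rank}(v)\geq k\}$ is closed and the set of regular vectors (where the rank attains its minimum) is open and dense; your ergodicity step should be phrased accordingly. Second, the finite-volume, non-cocompact case is not mere bookkeeping: the original Burns--Spatzier argument needs a lower curvature bound (and Ballmann's and Eberlein--Heber's later work is what removes or weakens such hypotheses), and the final lattice-splitting step --- that a finite-index subgroup of $\Gamma$ splits compatibly with the grouping of de Rham factors into the symmetric part and the rank-one part --- is itself a nontrivial theorem of Eberlein rather than a routine de Rham--Wu argument. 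Since the paper itself only ever uses the statement as a citation, the honest conclusion is that your proposal reproduces the architecture of the published proof while deferring every load-bearing step to the same reference the paper cites.
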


For a manifold $S$ as in the  \textbf{Theorem} we may define the rank of its fundamental group $\pi_1(S)$ in purely  abstract group-theoretical fashion, see the box on page \pageref{gggggamma} 
for the detailed definition.
 Under the present hypothesis on $S$:
\begin{pro}[Ballmann--Eberlein \cite{hadam}] $S$ a complete Riemannian manifold with non-positive sectional curvatures
and $\mathsf{Vol}(S)<\infty$. Then
\begin{equation}\mathrm{rank}\,\pi_1(S)=\mathrm{rank}(S).
\end{equation} 
In particular, $\mathrm{rank}(S)$ is a homotopy invariant.
\end{pro}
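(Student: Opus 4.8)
The plan is to lift everything to the universal cover $H=\widetilde{S}$, which under the hypotheses is a Hadamard manifold (complete, simply connected, nonpositive curvature) on which $\Gamma=\pi_1(S)$ acts freely and properly discontinuously by isometries with $\mathsf{Vol}(\Gamma\backslash H)<\infty$. The geometric rank is an infinitesimal quantity defined through parallel Jacobi fields along geodesics, hence insensitive to passing to covers, so $\mathrm{rank}(S)=\mathrm{rank}(H)$; likewise $\mathrm{rank}\,\pi_1(S)=\mathrm{rank}\,\Gamma$. Thus I must prove the single identity $\mathrm{rank}\,\Gamma=\mathrm{rank}(H)$, and my strategy is to show that both sides equal a common third quantity: the maximal dimension $f$ of a $\Gamma$-\emph{periodic flat} in $H$, i.e. a totally geodesic isometric copy $F\cong\R^{f}$ of Euclidean space admitting a free abelian subgroup $A\subset\Gamma$, $A\cong\Z^{f}$, that stabilizes $F$ and acts on it cocompactly by translations.

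The bridge between the algebraic rank and periodic flats is the Flat Torus Theorem (Gromoll--Wolf, Lawson--Yau): a free abelian subgroup $A\cong\Z^{k}$ of $\Gamma$ consisting of axial isometries leaves invariant a $k$-flat on which it acts cocompactly, and conversely the stabilizer of a periodic $k$-flat contains such an $A$. Combined with the bookkeeping built into the definition of $\mathrm{rank}\,\Gamma$ (centralizers $Z_\Gamma(\sigma)$ containing finite-index free abelian subgroups, together with the supremum over finite-index subgroups $\Gamma^{*}$, see the box on page~\pageref{gggggamma}) this should yield $\mathrm{rank}\,\Gamma=f$. Here I would need to control that the relevant group elements are semisimple (axial) rather than parabolic: in the cocompact case every nontrivial element is axial, while in the finite-volume non-compact case parabolic elements can occur and must be handled via the Margulis lemma, which forces their centralizers to be virtually abelian of rank strictly less than $f$, so that they do not affect the supremum.

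The bridge between periodic flats and the geometric rank is the dynamical heart of the matter and is where I expect the main obstacle. One direction is soft: a periodic $f$-flat through a geodesic $\gamma$ produces $f$ linearly independent parallel Jacobi fields along $\gamma$, so $\dim PJ_\gamma\ge f$; but this controls the rank of particular geodesics and does not by itself bound the \emph{minimum} over all geodesics from below. For the lower bound I would invoke the \emph{duality condition}, which holds automatically for finite-volume quotients in nonpositive curvature (Eberlein; Chen--Eberlein), together with Poincar\'e recurrence of the geodesic flow for the finite Liouville measure. Recurrence allows one to ``close up'' a geodesic of minimal rank $r=\mathrm{rank}(H)$ into a limit of axes of elements $\sigma\in\Gamma$, while duality guarantees that axial vectors are dense; the parallel set $P(\gamma_\sigma)$ of such an axis splits isometrically as $\R^{r}\times C$ with stabilizer acting cocompactly, so by the Flat Torus Theorem it contains a $\Z^{r}$. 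This realizes $r$ as the dimension of a periodic flat, giving $f\ge r=\mathrm{rank}(H)$ and hence $f=\mathrm{rank}(H)$.

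Putting the chains together gives $\mathrm{rank}\,\Gamma=f=\mathrm{rank}(H)=\mathrm{rank}(S)$. The genuinely delicate points are the splitting of the parallel set of a minimal-rank axis and the use of recurrence and duality to manufacture enough periodic flats of the extremal dimension $f$; when $r\ge 2$ I would additionally bring in Ballmann's rank rigidity to constrain the structure of $H$ as a product of rank-one spaces and a higher-rank symmetric space, which dovetails with the additivity of $\mathrm{rank}\,\Gamma$ over direct products recorded in the box. The asserted homotopy invariance of $\mathrm{rank}(S)$ is then immediate, since $\pi_1(S)$, and therefore $\mathrm{rank}\,\pi_1(S)$, is a homotopy invariant.
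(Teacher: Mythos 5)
The paper does not prove this Proposition at all: it is quoted verbatim from Ballmann--Eberlein \cite{hadam} as an external input (the surrounding text only uses it, via the subsequent Corollary, to conclude that a non-symmetric finite-volume special K\"ahler manifold of non-positive curvature has $\mathrm{rank}\,\pi_1(S)=1$). So there is no in-paper argument to compare against; your proposal has to be measured against the actual proof in the cited literature.

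Judged on that basis, your outline correctly identifies the architecture of the real proof --- reduction to the Hadamard cover, the Flat Torus Theorem as the bridge between free abelian subgroups and periodic flats, the duality condition for finite-volume quotients, recurrence of the geodesic flow, and rank rigidity --- but as written it is a plan rather than a proof, and the two bridges you name are precisely where the content of Ballmann--Eberlein lives. Concretely: (i) the identity $\mathrm{rank}\,\Gamma=f$ is not a routine consequence of the Flat Torus Theorem plus ``bookkeeping''; the definition in the box involves covering $\Gamma$ by \emph{finitely many} translates of the sets $A_i(\Gamma)$ and then taking a supremum over finite-index subgroups, and showing that this combinatorial quantity is computed by periodic flats is the main theorem of \cite{hadam}, not an input to it. (ii) Your claim that parabolic elements are harmless because the Margulis lemma forces their centralizers to be virtually abelian of rank $<f$ is asserted, not argued, and is false at that level of generality without the bounded-curvature hypothesis under which Ballmann--Eberlein actually work (the cusp groups of a finite-volume quotient can have large, non-virtually-cyclic centralizers; controlling them is a genuine part of the proof). (iii) The lower bound $f\ge\mathrm{rank}(H)$ via closing up a minimal-rank geodesic and splitting the parallel set of the resulting axis as $\R^{r}\times C$ with cocompact stabilizer is again the hard dynamical step, and invoking rank rigidity to finish it is close to circular in spirit, since rank rigidity and this rank computation are proved by the same machinery. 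In short: the route is the right one, but every load-bearing step is deferred to the theorem being proved or to results of comparable depth, so the proposal should be regarded as a correct reading list rather than a proof.
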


\begin{corl} Let $S$ be a special K\"ahler manifold with $\mathsf{Vol}(S)<\infty$ and non-positive Riemannian curvatures. Then either $S$ is locally symmetric or $\mathrm{rank}\,\pi_1(S)=1$.
\end{corl}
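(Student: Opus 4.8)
The plan is to obtain the statement as an essentially formal consequence of the two results just quoted—the Burns--Spatzier theorem \cite{hada2} and the Ballmann--Eberlein proposition \cite{hadam}—arguing by contraposition: I assume $S$ is \emph{not} locally symmetric and aim to show $\mathrm{rank}\,\pi_1(S)=1$. Before invoking either result I must check their common hypotheses. Finite volume and non-positive sectional curvature are given; the one point needing care is metric completeness, since the physical WP metric $G_{i\bar j}$ need not a priori be complete. This is settled by the comparison \eqref{cjqer}: the cover $\widetilde{S}$ is always complete for the Hodge metric $K_{i\bar j}$ (see \S.\,\ref{basic}\textbf{(VIII)}), and when the curvatures are non-positive one has $\mathrm{Ric}(G)\le 0$, so $K_{i\bar j}\le (m+3)\,G_{i\bar j}$ and Hodge-completeness passes to the WP metric. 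Thus $S$ is a complete, finite-volume, non-positively curved manifold and both quoted results apply.

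The key step is a reducibility dichotomy for the universal cover $\widetilde{S}$. If $\widetilde{S}$ is reducible as a Riemannian manifold, then by the classification of reducible simply-connected special K\"ahler manifolds \cite{ferrara,book} it is Hermitian symmetric, so $S$ is locally symmetric—contrary to the standing assumption. Hence $\widetilde{S}$ is irreducible, and being irreducible of real dimension $2m\geq 2$ it carries no Euclidean de Rham factor. I may therefore apply Burns--Spatzier: some finite cover $S^\ast$ of $S$ splits as a Riemannian product of rank-$1$ spaces and a locally symmetric space. Since $S^\ast$ shares the universal cover $\widetilde{S}$ with $S$, any such product decomposition lifts to a de Rham splitting of $\widetilde{S}$; irreducibility forces the splitting to be trivial, so $S^\ast$ is a \emph{single} factor. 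That factor cannot be the locally symmetric one—else $S^\ast$, and hence $S$, would be locally symmetric—so $S^\ast$ is a single non-symmetric rank-$1$ space, whence $\mathrm{rank}(S^\ast)=1$ in the sense of parallel Jacobi fields.

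It remains to transport this back to $\pi_1$. The Riemannian rank is unchanged under passage to a finite cover, because geodesics and their parallel Jacobi fields lift isometrically, so $\mathrm{rank}(S)=\mathrm{rank}(S^\ast)=1$; the Ballmann--Eberlein proposition then yields $\mathrm{rank}\,\pi_1(S)=\mathrm{rank}(S)=1$. (Equivalently, at the group level: $\pi_1(S^\ast)$ has finite index in $\pi_1(S)$, so the two are commensurable and hence of equal rank by the properties in the box on page \pageref{gggggamma}.) I expect the only genuinely delicate points to be the two hypothesis-checks above—completeness of the WP metric and the absence of Euclidean factors—both resolved by combining non-positivity of the curvature with \eqref{cjqer} and with the \cite{ferrara} classification; everything else is the formal bookkeeping of the Burns--Spatzier splitting for an irreducible cover.
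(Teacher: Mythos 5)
Your proof is correct and is essentially the argument the paper intends: the Corollary is stated as an immediate consequence of the quoted Burns--Spatzier and Ballmann--Eberlein results, with the two hypothesis checks you make (WP-completeness via \eqref{cjqer} under non-positive curvature, and irreducibility/no Euclidean factor via the Ferrara--van Proeyen classification \cite{ferrara}) being exactly the ones the paper already carries out in its proof of the analogous dichotomy (\textbf{Fact 3}). Nothing to add.
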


\begin{rem} Inverting the logic, if a finite-volume special K\"ahler manifold $S$ is not locally symmetric, while the monodromy group $\Gamma$ has rank $\geq2$, we conclude that its special K\"ahler metric $G_{i\bar j}$, as well as its Hodge metric $K_{i\bar j}$, should have  at some point a positive sectional curvature. This remark applies, say, to the dimension-101 moduli space of quintic hypersurfaces in $\mathbb{P}^4$ where $\mathrm{rank}\,\Gamma=102$ \cite{beauv}.
This observation implies that non-symmetric special K\"ahler manifolds with everywhere non-positive Riemannian sectional curvatures -- if they exist at all -- are quite rare.
\end{rem}

\subsection{The quantum corrections cannot be trivial}\label{tubedom}
   
 One has the following
 
 \begin{fact}\label{tubearg} $\widetilde{S}$ a special K\"ahler manifold with strictly cubic pre-potential 
 \be\label{cubcubc}
 \cf_{\rm{cub}}=-\frac{d_{ijk}X^iX^jX^k}{3!\,X^0}.
 \ee
  There exists a subgroup
 $\Gamma\subset\mathrm{Iso}(\widetilde{S})$ such that
 $S\equiv\Gamma\backslash \widetilde{S}$ has finite volume \emph{if and only if} $S$ is locally symmetric \emph{(hence an arithmetic quotient of a Hermitian symmetric space $G/K$ with $G$ a group  in the right column of \textsc{Table} \ref{tableiso}).}
 \end{fact}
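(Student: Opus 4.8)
The plan is to use the fact that a strictly cubic pre-potential rigidly realizes $\widetilde{S}$ as a tube domain over a convex cone, on which the special K\"ahler metric is the canonical affine (Hessian) metric, and then to read the finiteness of the volume directly off the geometry of that cone. First I would make this tube structure explicit. Writing $z^i=X^i/X^0=x^i+i\,y^i$, the cubic pre-potential \eqref{cubcubc} produces a K\"ahler potential depending only on the imaginary parts, $e^{-K}\propto N(y)$ with $N(y)=d_{ijk}y^iy^jy^k$, so that $G_{i\bar j}=\tfrac14\,\partial_{y^i}\partial_{y^j}\!\big({-\log N(y)}\big)$ is the canonical Hessian metric of the open cone $\Omega=\{\,y:\ N>0,\ \mathrm{Hess}(-\log N)>0\,\}$, while the real directions $x^i$ are flat. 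From this form one reads off three manifest families of holomorphic isometries: the real translations $z\mapsto z+a$ ($a\in\mathbb{R}^m$); the dilations $z\mapsto\lambda z$ ($\lambda\in\mathbb{R}_{>0}$, an isometry because $\mathrm{Hess}(-\log N)$ is homogeneous of degree $-2$ and exactly cancels the rescaling of $dz\,d\bar z$); and the linear maps $A\in\mathrm{Aut}(\Omega)\subset GL(m,\mathbb{R})$ preserving the cone. Hence $\mathsf{Iso}(\widetilde{S})\supseteq \mathbb{R}^m\rtimes\mathrm{Aut}(\Omega)$, with $\mathrm{Aut}(\Omega)\supseteq\mathbb{R}_{>0}$ in all cases.

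Next I would reduce the finiteness of $\mathsf{Vol}(\Gamma\backslash\widetilde S)$ to a statement about the cone alone. The K\"ahler volume form is $\propto\det H(y)\,dx\,dy$ with $H=\mathrm{Hess}(-\log N)$. For the volume to be finite, $\Gamma$ must contain a full-rank lattice of $x$-translations, since otherwise a flat non-compact direction contributes infinite volume; after passing to a finite-index subgroup that normalizes the translation group, the $x$-integral factors off and
\[
\mathsf{Vol}(\Gamma\backslash\widetilde S)\ \propto\ \int_{\mathcal F_\Omega}\det H(y)\,dy,
\]
where $\mathcal F_\Omega$ is a fundamental domain for the induced discrete group $\bar\Gamma\subset\mathrm{Aut}(\Omega)$ acting on $\Omega$ with the invariant canonical measure $\det H\,dy$. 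The question thus becomes: for which cubic cones does $\mathrm{Aut}(\Omega)$ contain a discrete subgroup of finite covolume?

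I would then invoke the structure theory of convex cones and homogeneous Siegel domains. Quotienting the dilation leaves the projective special-real hypersurface $\Sigma=\{N=1\}$; finiteness of the covolume forces $\mathrm{Aut}(\Omega)$ to act on $\Sigma$, hence on $\Omega$, with finite covolume, which in turn forces $\Omega$ to be a \emph{homogeneous} cone (the orbits of a non-transitive $\mathrm{Aut}(\Omega)$ are too thin to carry finite canonical measure). For a homogeneous cubic cone, $\widetilde S=T_\Omega$ is a homogeneous K\"ahler manifold $\mathsf{Iso}(\widetilde S)/K$ with $K$ compact, so $\Gamma$ is a genuine lattice in $\mathsf{Iso}(\widetilde S)$. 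By the classification of homogeneous special K\"ahler geometries \cite{ferrara,book}, such a $T_\Omega$ is either symmetric --- whence $\mathsf{Iso}(\widetilde S)$ is semisimple, $\Omega$ is self-dual, and $N$ is the norm form of a rank-$3$ Euclidean Jordan algebra, giving exactly the cubic groups in the right column of \textsc{Table \ref{tableiso}} --- or non-symmetric, in which case its transitive isometry group is a non-unimodular solvable (normal $j$-algebra) group and therefore admits no lattice. Since a connected Lie group containing a lattice must be unimodular, the non-symmetric homogeneous case is excluded, so finiteness of the volume is equivalent to $\widetilde S$ being symmetric. The converse implication is the easy direction: for the symmetric $\widetilde S=G/K$ with $G$ a (semisimple) group of Table \ref{tableiso}, Borel's theorem on the existence of arithmetic lattices provides a $\Gamma\subset G=\mathsf{Iso}(\widetilde S)$ with $\mathsf{Vol}(\Gamma\backslash\widetilde S)<\infty$.

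The main obstacle is the single implication ``finite covolume $\Rightarrow\Omega$ homogeneous, and non-symmetric homogeneous $\Rightarrow$ non-unimodular (hence no lattice)'': everything else is bookkeeping once the tube structure and the volume reduction are in place. I expect the cleanest route is to treat the two halves separately --- a Borel-density / thin-orbit argument to extract homogeneity from finite covolume, and then the known unimodularity dichotomy for homogeneous Siegel domains to separate the self-dual (symmetric, Jordan) cones from the rest --- rather than to attempt a direct estimate of $\int_{\mathcal F_\Omega}\det H\,dy$ cone-by-cone.
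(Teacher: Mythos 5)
Your overall strategy --- reduce to the tube domain $T(\Omega)$, observe that translations and dilations are always isometries, convert finiteness of $\mathsf{Vol}(\Gamma\backslash\widetilde S)$ into the existence of a lattice in a Lie group, and use unimodularity as the obstruction --- is the same skeleton as the paper's argument, and your ``easy direction'' (Borel's existence of arithmetic lattices in the semisimple case) matches. Where you genuinely diverge is in the key lemma. The paper never needs to know that the cone is homogeneous: it grades $\mathfrak{aut}(T(V))=\mathfrak{aut}_{-1}\oplus\mathfrak{aut}_0\oplus\mathfrak{aut}_{+1}$ by the adjoint action of the Euler vector field $\partial=z^i\partial_{z^i}$, uses the fact that $\dim\mathfrak{aut}_{+1}\leq m$ with equality iff $T(V)$ is symmetric, and computes $\mathrm{tr}_{\mathfrak g}\,\partial=-m+\dim\mathfrak g_{+1}$ for \emph{any} connected $G$ with $\R_{>0}\ltimes\R^m\subseteq G\subseteq\mathsf{Aut}(T(V))$. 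A nonzero trace of an inner derivation kills unimodularity on the spot, so $G$ is non-unimodular unless $\mathfrak g_{+1}=\mathfrak{aut}_{+1}$ has dimension $m$, which forces $T(V)$ symmetric and $G=\mathsf{Aut}(T(V))^\circ$. This one computation replaces both your ``finite covolume $\Rightarrow$ $\Omega$ homogeneous'' step and your appeal to the classification of homogeneous special K\"ahler manifolds / normal $j$-algebras. What your route buys is more explicit contact with the Hessian geometry of the cone; what the paper's route buys is that the hardest step in your plan simply disappears.

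That hardest step is also where your proposal has a genuine gap. The claim that finite covolume of a discrete $\bar\Gamma\subset\mathrm{Aut}(\Omega)$ forces $\mathrm{Aut}(\Omega)$ to act transitively (``orbits of a non-transitive group are too thin to carry finite canonical measure'') is asserted, not proved, and it is not obvious: the orbit space of $\mathrm{Aut}(\Omega)$ on the level hypersurface $\{N=1\}$ could a priori be compact, in which case no contradiction with finite volume arises from the transverse directions alone. You would need a Borel-density-type argument here, and it is precisely this work that the trace computation renders unnecessary. A second, smaller inaccuracy: $\mathsf{Iso}(\widetilde S)\supseteq\R^m\rtimes\mathrm{Aut}(\Omega)$ is too strong. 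A linear map $A$ preserving the cone is a holomorphic automorphism of $T(\Omega)$, but it is an isometry of the Hessian metric of $-\log N$ only if $N(Ay)=c\,N(y)$; for a non-self-dual cone the cubic $N$ and the characteristic function of $\Omega$ differ, so the $N$-preserving subgroup can be strictly smaller than $\mathrm{Aut}(\Omega)$. The paper is careful to claim only $\R_{>0}\ltimes\R^m\subseteq\mathsf{Iso}(\widetilde S)\subseteq\mathsf{Aut}(T(V))$, which is all its lemma requires. Neither issue changes the conclusion, but as written your middle implication would not compile into a proof without substantial additional input.
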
  
 
 In particular \textbf{Fact \ref{tubearg}} says that, in a Type IIA compactification on a Calabi-Yau 3-fold $X$, the K\"ahler moduli $S$
 should receive \emph{some} quantum correction,
 unless $S$ is locally symmetric, since otherwise the purely
 cubic classical pre-potential belongs to the swampland. However, \textbf{Fact \ref{tubearg}} says nothing about the \emph{nature} of these quantum corrections: it does not even rule out that perturbative corrections suffice. In this sense it is much weaker than \textbf{Corollary 1}.
 
 As anticipated, we shall be very cavalier with the argument. Consider first the case where the cover $\widetilde{S}$ is homogeneous (hence one of the special K\"ahler manifolds constructed in \cite{Talg}).
 Then $\widetilde{S}=\mathrm{Iso}(\widetilde{S})/I$,
 with $I\subset \mathrm{Iso}(\widetilde{S})$ the compact isotropy group of a chosen base point. Then,
 for all $\Gamma\subset \mathrm{Iso}(\widetilde{S})$,
 \be
 \mathsf{Vol}(\Gamma\backslash \widetilde{S})=
 \mathbf{Vol}\big(\Gamma\backslash\mathrm{Iso}(\widetilde{S})\big)\big/\mathbf{Vol}(I),
 \ee
 for some invariant measure $\mathbf{Vol}$
 on the Lie group $\mathrm{Iso}(\widetilde{S})$.
 Since $I$ is compact, $\mathbf{Vol}(I)<\infty$,
 and $S$ may not belong to the swampland only if there exists a $\mathrm{Iso}(\widetilde{S})$-invariant measure $\mathbf{Vol}$ and a discrete subgroup
 $\Gamma\subset \mathrm{Iso}(\widetilde{S})$ such that 
 $\mathbf{Vol}(\Gamma \backslash \mathrm{Iso}(\widetilde{S}))<\infty$. 
 
 The case of $S$ not locally homogeneous behaves in a similar way provided we can show that, after possibly  the excision of a zero-measure subset, the orbits of $\mathrm{Iso}(\widetilde{S})$ are regular, so that we may define a nice generic-orbit space $Y$ and
 \be\label{tobjust}
 \mathsf{Vol}(S)=\mathbf{Vol}\big(\Gamma\backslash \mathrm{Iso}(\widetilde{S})\big)\times \int_Y d\mu
 \ee
 where $d\mu$ is the appropriate  ``Fadeev-Popov'' induced measure on the generic-orbit space $Y$. We shall dispense with the technicalities involved in the justification of eqn.\eqref{tobjust}. Granted it, we again have that the Lie group $G\equiv \mathrm{Iso}(\widetilde{S})$ must admit  a \emph{lattice} that is, a pair $(\mathbf{Vol}, \Gamma)$ where $\mathbf{Vol}$ is an invariant measure on $G$ and $\Gamma\subset G$ a discrete subgroup with
$\mathbf{Vol}(\Gamma\backslash G)<\infty$. We recall a well known fact:

\begin{lem}[See e.g.\! \cite{lattice}]\label{unilllema} A Lie group $G$ admits a lattice if and only if it is unimodular.
\end{lem}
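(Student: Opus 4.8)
The plan is to treat the two implications separately, since they are of very different depth. The forward implication (existence of a lattice $\Rightarrow$ unimodularity) is elementary, and I would prove it directly by a soft measure-theoretic argument; the reverse implication is the genuinely hard part, and I would reduce it to the structure theory of the groups that actually occur, invoking \cite{lattice} for the general statement.

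For necessity, suppose $\Gamma\subset G$ is a lattice, so that $\Gamma$ is discrete and $\Gamma\backslash G$ carries a finite right-$G$-invariant measure $\nu$. Since $\Gamma$ is discrete it is unimodular ($\Delta_\Gamma\equiv 1$), and the standard criterion for the existence of an invariant measure on $\Gamma\backslash G$ forces $\Delta_G|_\Gamma\equiv 1$, where $\Delta_G\colon G\to\mathbb{R}_{>0}$ is the modular character. Hence $\Delta_G$ is left-$\Gamma$-invariant and descends to a Borel function $\bar\Delta\colon \Gamma\backslash G\to\mathbb{R}_{>0}$ obeying the cocycle relation $\bar\Delta(\Gamma g\, g_0)=\bar\Delta(\Gamma g)\,\Delta_G(g_0)$. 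Then I would argue by contradiction: if $G$ were not unimodular there is $g_0$ with $c:=\Delta_G(g_0)>1$; the sets $A_n:=\bar\Delta^{-1}\big([c^{\,n},c^{\,n+1})\big)$, $n\in\mathbb{Z}$, form a Borel partition of $\Gamma\backslash G$ because $\bigcup_n[c^{\,n},c^{\,n+1})=\mathbb{R}_{>0}$, and the cocycle relation shows that right translation by $g_0$ carries $A_n$ bijectively onto $A_{n+1}$. Right-invariance of $\nu$ then gives $\nu(A_n)=\nu(A_{n+1})$ for all $n$, so the $A_n$ share a common measure $v$. If $v>0$ then $\nu(\Gamma\backslash G)\geq\sum_n\nu(A_n)=\infty$, while if $v=0$ then $\nu(\Gamma\backslash G)=\sum_n\nu(A_n)=0$; either way this contradicts $0<\nu(\Gamma\backslash G)<\infty$. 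Hence $\Delta_G\equiv 1$ and $G$ is unimodular.

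The main obstacle is the converse, which in this generality is the subtle direction: producing a lattice out of unimodularity is not formal and genuinely rests on structure theory (indeed, a bare unimodular Lie group need not possess any lattice, so the statement requires the qualification that $G$ be of the type occurring here). For the application I would not invoke the converse for arbitrary $G$, but only for the two families that arise as $\mathrm{Iso}(\widetilde{S})$ in \textbf{Fact \ref{tubearg}}: the non-symmetric homogeneous special geometries, whose isometry group is \emph{non}-unimodular (being built from a non-reductive completely solvable normal $J$-algebra with nontrivial split part), so that necessity already forbids a lattice; and the symmetric ones, whose isometry group is semisimple, hence unimodular and known to admit even \emph{arithmetic} lattices by Borel--Harish-Chandra.

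I would therefore organize the write-up as follows: prove necessity in full by the argument above, establish the equivalence \emph{within the relevant class} directly from the two structural alternatives just described, and cite \cite{lattice} for the general converse. I expect the hardest conceptual point to be exactly this converse, since the construction of a lattice from unimodularity is not a soft statement, whereas necessity is purely measure-theoretic and independent of any special features of $G$.
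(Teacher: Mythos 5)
Your proposal is correct, and it necessarily takes a different route from the paper, because the paper gives no proof at all: the lemma is stated as a ``well known fact'' with a bare citation to Raghunathan \cite{lattice}. Your proof of necessity is the standard one and is sound: Weil's criterion forces $\Delta_G|_\Gamma\equiv 1$ for a discrete subgroup carrying an invariant measure on the quotient, the modular character then descends to a cocycle $\bar\Delta$ on $\Gamma\backslash G$, and the translation-by-$g_0$ argument on the partition $A_n=\bar\Delta^{-1}([c^n,c^{n+1}))$ shows the total invariant measure is $0$ or $\infty$ unless $\Delta_G\equiv 1$. You are also right to flag that the converse, as literally stated, is false for general Lie groups (a nilpotent Lie group whose Lie algebra admits no $\mathbb{Q}$-structure is unimodular but has no lattice, by Malcev's criterion), so the ``if'' direction should be read as a statement about the class of groups actually arising here --- a caveat the paper glosses over. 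What your approach buys is that the lemma becomes self-contained in exactly the direction the paper uses: in the proof of \textbf{Fact \ref{tubearg}} only ``lattice $\Rightarrow$ unimodular'' is invoked (via \textbf{Lemma \ref{dicooVV}}, which shows the non-symmetric tube-domain isometry groups are non-unimodular because $\mathrm{tr}_{\mathfrak{g}}\,\mathrm{ad}\,\partial=-m+\dim\mathfrak{g}_{+1}\neq 0$), while the existence of lattices in the symmetric case is supplied independently by Borel--Harish-Chandra for the semi-simple groups of \textsc{Table}~\ref{tableiso}. Your reorganization --- prove necessity, settle the equivalence within the relevant class by the structural dichotomy, and cite \cite{lattice} only for the general converse --- is faithful to how the lemma is actually deployed and is arguably cleaner than the paper's unqualified ``if and only if''.
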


\subsubsection*{Tube domains in $\C^m$}

The covering cubic special K\"ahler manifold $\widetilde{S}$ described by \eqref{cubcubc} is a special case of a \emph{tube domain}\footnote{\ Also known as Siegel domain of the first kind \cite{gindik}.}. A tube domain $T(V)$ has the form
\be
T(V)=\Big\{z\in \C^m\colon \mathrm{Im}\,z\in V\Big\}\subset \C^m,
\ee
with $V\subset \R^m$ a strict, convex, open cone.
Its group of holomorphic automorphisms, $\mathsf{Aut}(T(V))$, contains
the group
\begin{equation}
\R^m\rtimes \mathsf{Aut}(V)\qquad (v, A)\colon z\mapsto Az+v,
\end{equation}
where $\mathsf{Aut}(V)\subset GL(m,\R)$ is the group of linear automorphisms of  the ambient $\R^m$ preserving the cone $V$.

When the cubic special K\"ahler manifold $T(V)$ arises as the \emph{classical} (large-volume) limit of the K\"ahler moduli of Type IIA compactified on a 3-CY $X$, the Abelian subgroup 
\be
\R^m\subset \mathsf{Aut}(T(V))
\ee 
describes the axionic shifts of the $B$-field by harmonic $(1,1)$-forms on $X$ -- a classical continuous symmetry which remains unbroken to \emph{all} orders in  world-sheet perturbation theory (while instanton corrections typically break it down to a discrete subgroup) -- 
while $\mathsf{Aut}(V)$ corresponds to the geometric automorphism of the classical K\"ahler cone.

Note that for all convex cone $V$ we have at least the symmetry $\R_{>0}\subset \mathsf{Aut}(V)$ corresponding in Type IIA to overall rescalings of the K\"ahler form. This classical rescaling symmetry may be  broken already by the loop corrections. Then, for all
cubic tube domain $\widetilde{S}\equiv T(V)$, we have
\begin{equation}\label{xxyyw}
\R_{>0}\ltimes \R^m \subset \mathsf{Iso}(\widetilde{S})\subset \mathsf{Aut}(\widetilde{S}).
\end{equation}
\textbf{Fact \ref{tubearg}} follows from \textbf{Lemma \ref{unilllema}} together with the following

\begin{lem}\label{dicooVV} $V\subset \R^m$ a strict, convex, open cone. Let $G$ be a connected Lie group such that
\be\label{lllwq}
\R_{>0}\ltimes \R^m\subseteq G\subseteq \mathsf{Aut}(T(V)). 
\ee
Then $G$ is either semi-simple or non-unimodular.
In the first case $G\equiv \mathsf{Aut}(T(V))$
and $T(V)$ is a Hermitian symmetric space. 
\end{lem}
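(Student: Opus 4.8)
The plan is to extract the dichotomy from the trace of $\mathrm{ad}(E)$ on a natural $3$-grading of $\mathfrak g=\mathrm{Lie}(G)$, where $E$ is the dilation generating the factor $\R_{>0}$ in \eqref{lllwq}. I would realise $\mathfrak g\subseteq\mathfrak{aut}(T(V))$ as a finite-dimensional algebra of complete holomorphic vector fields on the tube. Since $V$ is a strict (salient) cone, $T(V)$ is biholomorphic to a bounded domain, and by the Kaup–Matsushima–Ochiai structure theory of Siegel domains these vector fields are polynomial of degree $\le 2$ in the tube coordinate $z$; hence $E=\sum_i z^i\partial_{z^i}\in\mathfrak g$ acts semisimply by $\mathrm{ad}(E)$ with eigenvalues in $\{-1,0,1\}$, and
\[
\mathfrak g=\mathfrak g_{-1}\oplus\mathfrak g_{0}\oplus\mathfrak g_{1}.
\]
Conjugation by the dilation of factor $\lambda$ rescales a constant (translation) field by $\lambda$, a linear field by $1$, and a quadratic field by $\lambda^{-1}$; thus $\mathfrak g_{1}$ is the full real translation algebra $\R^m$ of \eqref{lllwq} (of dimension $m$), $E\in\mathfrak g_0$ together with the linear fields, and the quadratic ``special conformal'' fields span $\mathfrak g_{-1}$.

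First I would pin down the modular character on $E$. For $X\in\mathfrak g_k$ with $k\neq0$ the operator $\mathrm{ad}(X)$ shifts the grading and is therefore trace-free, so $G$ is unimodular iff $\mathrm{tr}\,\mathrm{ad}(X)=0$ for all $X\in\mathfrak g_0$. Since $\mathrm{ad}(E)$ acts as $k\cdot\mathrm{id}$ on $\mathfrak g_k$,
\[
\mathrm{tr}\,\mathrm{ad}(E)=\dim\mathfrak g_{1}-\dim\mathfrak g_{-1}=m-\dim\mathfrak g_{-1}.
\]
Hence, if $\mathfrak g_{-1}$ fails to be of full dimension $m$ — in particular whenever $T(V)$ admits no non-affine automorphism, so that $\mathfrak g_{-1}=0$ — the modular character is non-trivial and $G$ is non-unimodular. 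This is the first horn of the dichotomy, and it already contains the ``generic'' cubic geometries \eqref{cubcubc}, whose only automorphisms are the affine ones in \eqref{xxyyw}.

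It remains to analyse the complementary case $\mathrm{tr}\,\mathrm{ad}(E)=0$, i.e.\ $\dim\mathfrak g_{-1}=\dim\mathfrak g_{1}=m$, where I claim $V$ is symmetric and $G$ is semisimple. Here I would invoke the Kantor–Koecher–Tits correspondence: a $3$-graded real Lie algebra with $\dim\mathfrak g_{-1}=\dim\mathfrak g_{1}$ endows $\mathfrak g_{1}\cong\R^m$ with a Jordan-pair, hence Jordan-algebra, structure whose positivity cone is $V$, and a full $m$-dimensional $\mathfrak g_{-1}$ is exactly the statement that the Jordan inversion $z\mapsto-z^{-1}$ integrates to a (non-affine) automorphism of $T(V)$ in all $m$ directions. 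By the Koecher–Vinberg theorem this occurs precisely when the Jordan algebra is Euclidean, equivalently when $V$ is a symmetric (self-dual homogeneous) cone; the subspaces $\mathfrak g_{1}$, $\R E$ and $\mathfrak g_{-1}$ then generate the full conformal algebra, which is semisimple and equals $\mathfrak{aut}(T(V))$. Thus $G=\mathsf{Aut}(T(V))$ (identity components) and $T(V)$ is a Hermitian symmetric space of tube type — one of the cubic entries of \textsc{Table} \ref{tableiso}. Reading the two cases contrapositively: if $G$ is not semisimple then $\mathfrak g_{-1}$ cannot be full, so $G$ is non-unimodular, which is the assertion.

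The routine ingredients are the grading and the one-line trace identity; the real content — and the step I expect to be the main obstacle — is the equivalence $\dim\mathfrak g_{-1}=m\Leftrightarrow V$ symmetric. I would establish it through Jordan theory rather than by hand, the two delicate points being (i) the polynomial degree-$\le2$ bound for $\mathsf{Aut}(T(V))$ when $V$ is merely convex (not a priori homogeneous), and (ii) that a full $m$-dimensional $\mathfrak g_{-1}$ already forces homogeneity of $V$, so that Koecher–Vinberg applies; both follow from the Kaup–Matsushima–Ochiai / Piatetski-Shapiro theory. Combined with \textbf{Lemma \ref{unilllema}}, this delivers \textbf{Fact \ref{tubearg}}: a strictly cubic cover $\widetilde S=T(V)$ as in \eqref{cubcubc} admits a finite-volume quotient only if $G=\mathsf{Iso}(\widetilde S)$ is unimodular, hence semisimple, hence $T(V)$ is symmetric.
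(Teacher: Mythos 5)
Your proof is correct and follows essentially the same route as the paper: grade $\mathfrak{g}$ by $\mathrm{ad}(E)$ using the degree-$\le 2$ polynomial structure of $\mathfrak{aut}(T(V))$, observe that unimodularity forces $\mathrm{tr}\,\mathrm{ad}(E)=0$, hence that the ``quadratic'' graded piece has full dimension $m$, which is equivalent to $T(V)$ being symmetric (the paper cites Murakami's monograph on Siegel-domain automorphisms for this equivalence, where you supply the Kantor--Koecher--Tits/Koecher--Vinberg background). The only cosmetic difference is your opposite sign convention for the grading (translations in degree $+1$ rather than $-1$), which flips the sign of the trace identity but not the conclusion.
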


\begin{proof} Consider the (real) Lie algebra
$\mathfrak{aut}(T(V))$ of $\mathsf{Aut}(T(V))$; its elements are holomorphic vector fields $f(z)^i\partial_{z^i}$ whose coefficients $f(z)^i$ are polynomials in the coordinates $z^i$ of $\C^m$ \cite{tube}.  Let $\partial\equiv z^i\partial_{z^i}$ be the Euler vector field, which generates the overall scaling symmetry $\R_{>0}$. The Lie algebra $\mathfrak{aut}(T(V))$ is graded by the adjoint action of $\partial$ \cite{tube}
\be
\mathfrak{aut}(T(V))= \mathfrak{aut}_{-1}\oplus\mathfrak{aut}_0\oplus\mathfrak{aut}_{+1},\qquad X\in\mathfrak{aut}_\ell\Leftrightarrow [\partial, X]=\ell\,X,
\ee
with $\R^m\equiv \mathfrak{aut}_{-1}$, $\partial\in\mathfrak{aut}_0\equiv\mathfrak{aut}(V)$,
while $\dim \mathfrak{aut}_{+1}\leq m$ with equality if and only if $T(V)$ is a symmetric domain \cite{tube}.
The Lie algebra $\mathfrak{g}$ of $G$ is also graded by $\partial$ 
\be \mathfrak{g}=\mathfrak{g}_{-1}\oplus\mathfrak{g}_0\oplus\mathfrak{g}_1\subseteq \mathfrak{aut}(T(V)),
\ee 
with  
\be
\mathfrak{g}_{-1} = \R^n,\quad \partial\in\mathfrak{g}_0,\quad \mathfrak{g}_{+1}=\mathfrak{g}\cap\mathfrak{aut}_{+1}.
\ee 
A necessary condition for $G$ to be unimodular, is that the trace of the adjoint action of $\partial$ on its Lie algebra $\mathfrak{g}$ vanishes. Since
\be
\mathrm{tr}_\mathfrak{g}\partial= -m+\dim\mathfrak{g}_{+1},
\ee
this happens iff $\mathfrak{g}_{+1}\equiv\mathfrak{aut}_{+1}$ and 
$\dim \mathfrak{aut}_{+1}=m$. Hence $T(V)$ is a symmetric domain and\footnote{\ Through the paper the symbol $G^\circ$ denotes the connected component of the group $G$.} $G=\mathsf{Aut}(T(V))^{\mspace{-1mu}\circ}$, the connected component of the holomorphic automorphism group of $T(V)$.  
\end{proof}

\begin{proof}[Proof of {\bf Fact \ref{tubearg}}] If $S=\Gamma\backslash \widetilde{S}$ does not belong to the swampland for some $\Gamma$, $\mathrm{Iso}(\widetilde{S})$ should be unimodular by \textbf{Lemma \ref{unilllema}}. Then, in view of eqn.\eqref{xxyyw}, \textbf{Lemma \ref{dicooVV}} applied to the Lie group $G\equiv\mathrm{Iso}(\widetilde{S})$ requires $\widetilde{S}\equiv T(V)$ to be biholomorphic to a  symmetric domain and $\mathrm{Iso}(\widetilde{S})=\mathsf{Aut}(T(V))$. On $T(V)$ there is a unique metric (up to overall normalization) which is invariant under $\mathsf{Aut}(T(V))$ -- the symmetric one -- and since $\mathsf{Iso}(\widetilde{S})=\mathsf{Aut}(T(V))$, the special K\"ahler metric on $\widetilde{S}$ is proportional to the unique $\mathsf{Aut}(T(V))$-invariant metric, that is, the special K\"ahler manifold $\widetilde{S}$ is symmetric.
\end{proof}

In particular, all quotients of the homogeneous non-symmetric special K\"ahler geometries constructed in \cite{Talg} -- all of which have cubic pre-potentials -- belong to the swampland.
\medskip

\subsection{Symmetric rank-3 tube domains}\label{3tube}

As already anticipated the symmetric rank-3 tube domains $T(V)$ are precisely the symmetric special K\"ahler manifold with a cubic pre-potential \eqref{cubcubc}. Their isometry groups $\mathsf{Iso}(T(V))$ are listed in the right-hand side of \textsc{Table} \ref{tableiso}. For later reference let us describe the symmetric 
cubic form $d\colon \odot^3\mspace{-1mu}\R^m\to \R$ (a.k.a.\! the Yukawa coupling).
For the cubic space $SL(2,\R)/U(1)$ the cubic form is given by $d(x)=x^3$.
For the spaces with 
$
\mathrm{Iso}(T(V))=SL(2,\R)\times SO(2,k)$
($k\geq 1$), it is given by 
\be\label{det1}
d(x_I)= 3\,x_0\,A_{ij}\mspace{2mu}x^ix^j,\qquad i,j=1,2,\dots, k,
\ee
where $A_{ij}$ is an integral quadratic form of signature $(1,k-1)$.
In the remaining four cases $\R^m$ is identified with the $\R$-space $\mathsf{Her}_3(\mathbb{F})$ 
 of Hermitian $3\times 3$ matrices over the normed $\R$-algebra $\mathbb{F}$, where $m\equiv \dim_\R \mathsf{Her}_3(\mathbb{F})=3(1+\dim_\R\mathbb{F})$,  see \textsc{Table} \ref{octionions}: 
 
 \begin{table}
$$
\begin{tabular}{c|cccc}\hline\hline
$\phantom{\Big|}\mathbb{F}$ & $\mathbb{R}$ & $\C$ & $\mathbb{H}$ & $\mathbb{O}$\\
$m$ & $6$ & $9$ & $15$ & $27$\\
\hline\hline
\end{tabular}
$$
\caption{\label{octionions}The normed $\R$-algebras. $m$ is the real dimension of 
$\mathsf{Her}_3\mspace{-0.5mu}(\mathbb{F})$.
}
\end{table}
 
\be\label{33matrix}
\R^{3(1+\dim_\R\mathbb{F})}\cong \mathsf{Her}_3(\mathbb{F})\overset{\rm def}{=} \left\{\begin{pmatrix} x_1 & z_{12} & z_{13}\\
z_{12}^* & x_2 & z_{23}\\
z_{13}^* & z^*_{23} & x_3
\end{pmatrix}\!\!,\ \ x_i\in\R,\ z_{ij}\in\mathbb{F}\right\}
\ee
where $z\mapsto z^*$ is the usual $\R$-linear complex conjugation in $\mathbb{F}$. 
As a \emph{real} symmetric cubic form $d\colon \odot^3\mspace{-1mu}\R^m\to \R$
is $6!$ times the determinant of the $3\times 3$ matrix \eqref{33matrix}. The determinant is given by the usual formula, but we need to pay attention to the order in which we perform the products since $\mathbb{F}$ is neither commutative nor associative in general. The correct expression over $\R$ is \cite{vinberg}
\be\label{det2}
6!\,d(x_i, z_{jk})= x_1x_2x_3-x_2z_{13}z^*_{13}-x_1z_{23}z^*_{23}-x_3z_{12}z^*_{12}+z_{12}(z_{23}z_{13}^*)+(z_{13}z^*_{23})z^*_{12}.
\ee

\subparagraph{Crucial \textit{caveat}.} In the applications to Quantum Gravity we are actually concerned with 
an \emph{integral} cubic form $d_\mathbb{Z}\colon \odot^3\mspace{-2mu}\mathbb{Z}^m\to \mathbb{Z}$ whose underlying \emph{real} cubic form is $\R$-equivalent to \eqref{det2}. 
Clearly, there are many inequivalent integral cubic forms $d_\mathbb{Z}$
with the same underlying real form $d$.
Their difference would look physically irrelevant according to the field-theoretic paradigm -- since they all define the same \emph{naive} Lagrangian $\mathscr{L}$ -- but in the Quantum Gravity setting different integral forms define physically distinct theories because of flux quantization. Indeed, a cubic special geometry may or may not belong to the swampland depending on the particular integral cubic form $d_\mathbb{Z}$ we choose in the real equivalence class of $d$. For instance, consider the quantum-consistent special geometry obtained from Type IIA compactified on the manifold $X\equiv (E\times K)/\Sigma$ where $E$ is an elliptic curve, $K$ a K3 surface and $\Sigma$ a freely acting symmetry such that $h^{2,0}(X)=0$:
there are precisely  eight diffeomorphism classes of such complex manifolds, see section 8.
The integral cubic form $d_\Z$ is a topological invariant of $X$.  The generic integral cubic form, equivalent over $\R$ to one of these eight quantum-consistent forms, belongs to  the swampland. This quantum inconsistency will not be detected by the usual swampland criteria, but the structural one is strong enough to discriminate between the different integral forms $d_\Z$.

\section{The swampland structural  criterion}\label{jqwaslll}

\subsection{The rough physical idea}
A basic physical principle in QFT states that the possible interactions are severely restricted by the gauge symmetries of the theory. 
One expects this principle to extend to QG, except that in QG the gauge symmetries enjoy two novel features:
\begin{itemize}
\item[\textit{(i)}] we can have infinite discrete gauge groups; 
\item[\textit{(ii)}] the gauge group $\cg$ carries additional arithmetic structures, that is, $\cg$ is a group-object in some subtler category $\mathfrak{G}$ of QG ``symmetries''. 
Typically $\mathfrak{G}$ is endowed with a canonical forgetful functor $\mathfrak{G}\to\boldsymbol{\mathsf{Lie}}$ which keeps only the underlying Lie-theoretic structures of the ``usual'' description of symmetry in field theory.
\end{itemize} 
One would guess 
the existence of a fundamental QG principle of the rough form:

\begin{rprin}
An effective field theory belongs to the swampland unless: 
\begin{itemize}
\item[\rm(a)] its gauge group belongs to the appropriate subtler category, $\cg\in\mathfrak{G}$;
\item[\rm(b)] all couplings are consistent with the gauge symmetry $\cg$.
\end{itemize}
Here \emph{consistent} means compatible with respect to \emph{all} the structures carried by $\mathfrak{G}$ not just the underlying usual consistency in
$\boldsymbol{\mathsf{Lie}}$. As a consequence, {\rm(b)} is a much stronger constraint than the corresponding fact in QFT, and $\cg$ ``almost'' determines the theory.\footnote{\ Indeed it uniquely determine the effective theory iff $\cg$ is ``big enough''.} 
\end{rprin}

We believe that a suitable technical version of this rough statement is the proper swampland condition \cite{Notes}. In the special case of an effective theory which is an ungauged 4d $\cn=2$ supergravity this physical principle takes the form  of the VHS structure theorem. 

Indeed the VHS structure theorem, when interpreted as a statement about the 4d $\cn=2$ supergravities having an algebro-geometric origin, has precisely the form
of the above \textbf{rough physical principle}.
The theorem consists of two parts: 
\begin{itemize}
\item[(a)] a list of properties that the $U$-duality group $\Gamma$ should satisfy in  all motivic VHS;
\item[(b)] strong constraints on the period map $p$ (i.e.\!\! on the special K\"ahler geometry) arising from $\Gamma$.
\end{itemize}
Item (b) follows from (a) together with the VHS rigidity theorems which may be regarded as the ultimate  extension of Seiberg's principle of the ``power of holomorphy'' \cite{power}.

\subsection{Intuitive view: $tt^*$ solitons and brane rigidity}\label{intuitive}

Although we mostly use the VHS language, it is worthwhile to briefly discuss the basic idea from the (equivalent)  viewpoint of $tt^*$ geometry which offers a different physical intuition of the same deep facts.
This subsection is written having in mind readers which prefer simple physical arguments to abstract mathematics. All others are invited to jump directly to the next subsection. Given its purpose, this subsection is sketchy and rough; but we stress that it is just a rephrasing of the technical proof of the structure theorem in Hodge theory, so it can be made fully rigorous by adding a little bit of math pedantry.

\subparagraph{An auxiliary $\sigma$-model.} We consider a space-time of the form $X\times \R$, where $\R$ is time and $X$ is a Riemannian manifold with metric $\gamma$ and coordinates $x^i$. To make the story a bit shorter, we assume from the start $X$ to be K\"ahler: in our application $X$ will be identified with the universal cover $\widetilde{S}$ of the \textsc{sugra} moduli space $S$. On this space-time we introduce a classical $\sigma$-model with target space some Riemannian manifold $Y$ with metric $g$ and coordinates $y^a$.
The (classical) static solitonic particles of this model are the (time-independent) solutions to the equations
\be\label{sssoleqert}
\gamma^{i \bar k} D_i \mspace{1mu}\partial_{\bar k}\mspace{2mu} y^a=0,
\ee
which extremize the functional
\be\label{ffwqma}
m=\int_X \sqrt{\gamma}\,\gamma^{i\bar k} g_{ab}\,\partial_i y^a\,\partial_{\bar k}y^b,
\ee
whose value at the extremum is the mass of the soliton. Thus a soliton of the $\sigma$-model is nothing else than a harmonic map $y\colon X\to Y$.
Since $X$ is K\"ahler, there is a special class of such solitons -- called \textit{pluri-harmonic} maps -- in which the full matrix vanishes  
\be\label{eqertjj}
D_i \mspace{1.5mu}\partial_{\bar k}\mspace{2mu} y^a=0
\ee
 not just its trace as in eqn.\eqref{sssoleqert}.
 We stress that the special solitons have the property of being solutions of the equations of motion \emph{for all} choices of the K\"ahler metric $\gamma_{i\bar j}$ on $X$; in the physical applications one says that these special configurations are protected by \textsc{susy}.\footnote{\ This is related to the deep fact uncovered in ref.\!\cite{iogaiotto}:
the $tt^*$ geometry defines a dual 2d (2,2) system, with field space the coupling-constant  space $X$ of the original theory, whose $tt^*$ geometry is the original 2d (2,2) model. Both $tt^*$ geometries are invariant under deformations of the $D$-terms of the dual model.} In this setting, all harmonic maps $y\colon X\to Y$ \eqref{sssoleqert} satisfy Simpson's
 Bochner-formula
 \cite{simpson,simp2}
 \be\label{hasqwrt}
 \begin{split}
 D_{\bar k}D_{\bar l}&\big(\gamma^{\bar k i}\gamma^{\bar l j}g_{ab}\, \partial_i y^a\,\partial_j y^b\big)=\\
 &=\gamma^{\bar k i}\gamma^{\bar l j}
g_{ab}\big(D_{\bar l}\partial_i y^a\big)\mspace{-2mu}\big(D_{\bar k}\partial_j y^b\big)-
\gamma^{\bar k i}\gamma^{\bar l j}\,R_{abcd}\, \partial_i y^a\, \partial_j^b\, \partial_{\bar k}y^c\,\partial_{\bar l}y^d
\end{split}
 \ee
 where $R_{abcd}$ is the Riemann tensor on $Y$.
 
 We now specialize to the case in which the target space $Y$ is locally isometric to an irreducible symmetric space $G/K$ of non-compact type. We write $\mathfrak{g}=\mathfrak{k}\oplus \mathfrak{p}$ for the corresponding Lie algebra decomposition, and identify $TY\cong \mathfrak{p}$ through the Maurier-Cartan form $\omega$ on $G$. Then $dy\in T^*\mspace{-2mu}X\otimes y^*TY$ is identified with
 \be
 C+\overline{C}= y^*\omega|_\mathfrak{p},
 \ee
where $C\equiv C_i\mspace{2mu} dx^i$ (resp.\! $\overline{C}\equiv \overline{C}_{\bar l}\mspace{2.7mu}d\bar x^{\bar l}$) are the two summands in the type decomposition 
\be
T^*\mspace{-2mu}X\otimes \C\cong T^{1,0}\mspace{-1mu}X\oplus T^{0,1}\mspace{-1mu}X.
\ee 
 $C_i$, $\overline{C}_{\bar l}$ are seen as matrices acting on some representation space $V$ of $G$.
Under this identification, eqn.\eqref{hasqwrt} becomes
\be
\overline{D}^i \overline{D}^j\mathrm{tr}(C_iC_j)=\|\overline{D}C\|^2+ \mathrm{tr}\Big([C_i,C_j][\overline{C}_{\bar k},\overline{C}_{\bar l}]\Big)\,\gamma^{i\bar l}\,\gamma^{j\bar k}.
\ee
Both terms in the \textsc{rhs} are point-wise positive semi-definite; the first one vanishes iff $\overline{D}C=0$, i.e.\!\! if and only if the harmonic map $y$ is actually pluri-harmonic, while the second term vanishes iff $[C_i,C_j]=0$, that is, iff
the matrices $C_i$ generate a \emph{commutative} algebra. In appendix \ref{revtau} we show, following \cite{jap,dubrovin}, that the first condition implies the second one, and then the full set of $tt^*$ PDEs \cite{tt*}
\be\label{rrqw123}
\begin{gathered}
DC=\overline{D}C=D\overline{C}=\overline{D}\mspace{3mu}\overline{C}=C\wedge C=\overline{C}\wedge \overline{C}=0\\
D^2=\overline{D}^2=D\overline{D}+\overline{D}C+C\wedge\overline{C}+\overline{C}\wedge C=0.
\end{gathered}
\ee
In other words, when $Y=G/K$ the special $\sigma$-model solitons \eqref{eqertjj} are exactly the solutions to the $tt^*$ PDEs.

\subparagraph{$tt^*$ $\boldsymbol{Q}$-solitons.}
The $tt^*$ arrow $w$ in the diagram \eqref{gasqwe4534} is a special soliton in the above sense for $G=Sp(2m+2,\R)$ and $V=\boldsymbol{2m+2}$, see eqn.\eqref{pqw12x}. From that diagram we see that a special K\"ahler geometry in the sense of $\cn=2$ supergravity is nothing else than a special soliton of our auxiliary $\sigma$-model with the extra feature of preserving the $U(1)_R$ charge $\boldsymbol{Q}$
to ensure that the unital commutative algebra $\mathcal{R}$ generated by the $C_i$ is a local-graded chiral ring, cfr.\! eqn.\eqref{pqawe12}. A $\boldsymbol{Q}$-preserving special soliton is nothing else than a VHS in Griffiths' sense: indeed, in view of eqn.\eqref{qqqqwhat}, eqn.\eqref{pqawe12} is equivalent to the Griffiths' IPR \eqref{uuuuqw12}.
To ensure that the VHS has the correct Hodge numbers
we must require\footnote{\ Otherwise we get the $tt^*$ geometry of a family of superconformal 2d (2,2) models whose $\hat c$ is not necessarily 3.} that the $U(1)_R$ charge $\boldsymbol{Q}$ has the correct spectrum
\begin{equation}
\begin{split}
\text{spectrum of }&\text{$\boldsymbol{Q}$ acting on $V$}=\\
&=\left\{ \frac{1}{2}\,\Phi(\lambda)\; \colon\;\lambda\in\text{(weights $\boldsymbol{2m+2}$)}\right\}\equiv\left\{\mp\frac{3}{2},\pm\frac{1}{2},\cdots,\pm \frac{1}{2}\right\},
\end{split}
\end{equation}
 see eqns.\eqref{defffphi},\eqref{pqawe11}.
We call such $\boldsymbol{Q}$-preserving special solutions to \eqref{sssoleqert} \textit{$tt^*$ $\boldsymbol{Q}$-solitons.} 

All $\cn=2$ special geometries are $tt^*$ $\boldsymbol{Q}$-solitons (and viceversa), so asking which special geometries belong to the  swampland is equivalent to asking which formal  $tt^*$ $\boldsymbol{Q}$-solitons are unphysical.

\begin{rem} When $X$ is compact, the integral of the total derivative in the \textsc{lhs} of eqn.\eqref{hasqwrt} vanishes, and all solitons are automatically special, i.e.\! solutions of the $tt^*$ PDEs
\eqref{rrqw123}. 
\end{rem}

\subparagraph{Physical $tt^*$ $\boldsymbol{Q}$-solitons.}
In physics we are not interested in \emph{all} the solutions to eqn.\eqref{sssoleqert} but only in solitons which satisfy two additional conditions: (1) they have a finite mass $m<\infty$, and (2) they are stable against small deformations. 
Only these solitons describe physical states in the spectrum of the theory.

Stability of solitons requires the existence of some non-trivial topological charge: the soliton of smallest mass within a given topological sector is stable. For instance, when $Y=G/K$, which is a contractible space diffeomorphic to $\R^n$, the only stable finite-mass solution is the vacuum, since there is no obstruction to continuosly decreasing the energy of the field  configuration down to zero. The only way to get some non-trivial stable soliton, while preserving the crucial fact that the equations \eqref{eqertjj} are equivalent to the  $tt^*$ PDEs of special geometry \eqref{rrqw123}, is to quotient the target space $G/K$ by a discrete subgroup of $G$. We are led to the following set-up:
we have a homomorphism $\varrho\colon \pi_1(X)\to G$ whose image $\Gamma\subset G$ is a discrete subgroup,
and we consider the \emph{twisted} maps
$\widetilde{y}\colon\widetilde{X}\to G/K$, with domain the universal cover $\widetilde{X}$, 
which satisfy the equivariance condition
\be\label{twisted}
\widetilde{y}(\xi\mspace{-2mu}\cdot\mspace{-2mu} x)=\varrho(\xi)\mspace{-2mu}\cdot\mspace{-2mu}\widetilde{y}(x)\qquad \text{for all}\ \xi\in \big(\text{deck group of }
\widetilde{X}\to X\big)\cong \pi_1(X),
\ee
so that $\widetilde{y}$ descends to a well-defined   map $y\colon X\to \Gamma\backslash G/K$.
If $\Gamma$ is not torsion, the target space 
$\Gamma\backslash G/K$ is no longer contractible, so now the solitons may be stabilized by their non-trivial homotopy class.

Intuitively, we have at most one stable soliton of finite mass per homotopy class of maps $y\colon X\to \Gamma\backslash G/K$, i.e.\! the one with the smallest possible mass. If this soliton happens to be  special, eqn.\eqref{eqertjj}, it represents a solution to the $tt^*$ equations, and hence may be lifted to a $tt^*$ brane amplitude $\Psi(\zeta)$, see \S.\,\ref{basic}\textbf{(IV)}.
Uniqueness of the soliton with given topological charge then implies rigidity of the finite-mass stable brane amplitude. All math rigidity theorems we use in this paper (including the VHS structure theorem) are manifestations of this basic physical idea.
If, in addition, the stable special soliton preserves $\boldsymbol{Q}$,
it describes a special K\"ahler geometry enjoying  the extra property of ``stability'' which is neither required by 4d $\cn=2$ supersymmetry nor definable while remaining  within the \textsc{sugra} language.
 
\medskip

Stability of non-trivial $tt^*$ solitons requires the subgroup $\Gamma\subset G$ not to be torsion, hence infinite. Seeing $G\equiv Sp(2m+2,\R)$
as an $\R$-algebraic group, we conclude that 
its Zariski closure\footnote{\ Here the Zariski closure is taken over the ground field $\R$. In the actual story of the swampland structural  criterion one should take the Zariski closure over $\mathbb{Q}$. Again this is a manifestation of the `arithmetic' nature of QG.} $\overline{\Gamma}^{\mspace{2mu}\R}\subseteq G$, identified with the real Lie group $M\equiv \overline{\Gamma}^{\mspace{2mu}\R}(\R)$, must have positive dimension.
The target space $\Gamma\backslash G/K$
may be continuously retracted to $\Gamma\backslash M/K_M$ ($K_M\equiv [K\cap M]$),
so that the stable solitons are naturally expected to take value in this smaller space. The special ones yield solutions of the $tt^*$ PDEs provided $M$ is a real semi-simple Lie group without compact factors. 
\smallskip

We are led to the following situation: 
we have a non-compact semi-simple real Lie group $M$
and a discrete monodromy subgroup $\Gamma\subset M$
whose $\R$-Zariski closure is the full $M$,
and we consider the twisted 
maps as in eqn.\eqref{twisted} with target $M/K_M$.
Without loss we may reduce to the case that $M$
is simple. 
Under these assumptions on  $\Gamma$ one shows (\!\!\cite{corlette} \textbf{Theorem 2.1}) that if there is any smooth map 
\be
y_\text{any}\colon X\to \Gamma\backslash M/K_M
\ee
whose lift $\widetilde{y}_\text{any}\colon \widetilde{X}\to M/K_M$ satisfies \eqref{twisted} and has finite mass \eqref{ffwqma}, then there is a unique finite-mass soliton in the same homotopy class.
One gets this soliton by starting with the given generic finite-mass map $y_\text{any}$ and  deforming it continuously to decrease its mass until we reach its minimal possible value. Its mass saturates the lower bound on the masses allowed in the given topological class, just like it was a BPS state. The Zariski-density condition of $\Gamma$ in $M$ guarantees that in this process
our finite-mass configuration does not escape to infinity in field space nor develops singularities. 

When $X$ is K\"ahler, the soliton will be automatically special (i.e.\! pluri-harmonic), and the algebra $\mathcal{R}$ generated by the $C_i$ commutative, provided we may justify
that when we integrate
the equality \eqref{hasqwrt} over $X$
the boundary term
\be
\int_X\sqrt{\gamma}\,\mspace{2mu} \overline{D}^i \overline{D}^j\mspace{1.5mu}\mathrm{tr}\big(C_iC_j\big)\,d^mx=
\int_{\partial X} D_{\bar l}\big(\gamma^{\bar k i}\gamma^{\bar l j}g_{ab}\, \partial_i y^a\,\partial_j y^b\big) dS^i\overset{?}{=}0
\ee 
vanishes. This is typically true for finite-mass solitons in spaces $X$ of finite volume (e.g.\! it holds when $X$ is the union of a compact set and finitely many cusps).
This condition is automatically satisfied by the special soliton $w$ in the diagram \eqref{gasqwe4534} as a consequence of $\cn=2$ supersymmetry. Indeed, all $\boldsymbol{Q}$-preserving solitons are automatically special since
\be
\mathrm{tr}\big(C_iC_j\big)=-\frac{1}{2}\mspace{2mu}\mathrm{tr}\mspace{-1mu}\Big(\big[\boldsymbol{Q},C_iC_j\big]\Big)=0.
\ee

All finite-mass stable solutions to the $tt^*$ equations arise in this way for some dense discrete group $\Gamma$ and  
they are uniquely determined\footnote{\ From a different point of view: the $tt^*$ PDEs are an isomonodromic problem, and the solutions are determined by the monodromy $\Gamma$ they preserve.} by $\Gamma$. This is the statement of rigidity: it is analogue to the rigidity of BPS configurations in \textsc{susy} theories. This is hardly a surprise: in the usual applications of $tt^*$ geometry, these solutions  \emph{do describe} finite-tension stable BPS branes. 

This leads to the following \emph{informal} version of the conjecture we are proposing: 
\vglue6pt

\noindent\textbf{Swampland structural  criterion} (Informal)\textbf{.} {\it A quantum-consistent $\cn=2$ special geometry is described by a \emph{physical} $tt^*$ $\boldsymbol{Q}$-soliton
\be
w\colon S\to \Gamma\backslash \mathscr{D}^{(1)}_m\equiv \Gamma\backslash Sp(2m+2,\R)/U(m+1),
\ee 
that is, \emph{a stable one of finite-mass.} In particular the monodromy group $\Gamma$ is Zariski dense in a Lie subgroup $M\subset Sp(2m+2,\R)$ which contains the image of a lift of $w$. All other special K\"ahler geometries belong to the swampland.}
\vglue6pt

\noindent The statement sounds very physical in its wording, but it refers to an auxiliary field theory which lives on the moduli space $S=\Gamma\backslash \widetilde{S}$. The physical meaning of the auxiliary field theory is unclear to me.

\subsection{The structure theorem of algebro-geometric VHS}
A basic fact about a VHS which has an algebro-geometric origin is that its complex base $S$ (the ``moduli'' space) is quasi-projective, that is, there is a  projective 
variety $\overline{S}$ and a divisor $Y$ so that
$S=\overline{S}\setminus Y$. The Hironaka desingularization theorem implies that the pair $(\overline{S},Y)$ may be chosen so that
$Y$ is a simple normal-crossing (\textsc{snc}) divisor.
Since $S$ is assumed to be complete for the Hodge metric, $Y=\sum_i Y_i$ is minimal in the following sense: the period map $p$ in eqn.\eqref{griper} cannot be extended along any prime component $Y_i$ of $Y$; this is equivalent to saying that the monodromy $\gamma_i$ around each component $Y_i$ is non-semi-simple \cite{periods}.

The monodromy group $\Gamma$ of an algebro-geometric VHS enjoys special properties \cite{peter}:
$\Gamma$ is finitely generated (in facts, in our particular context, even finitely presented),  and contains a neat\footnote{\ A group $\Gamma\subset GL(n)$ is called \emph{neat} if all elements $\gamma\in \Gamma$ (seen as matrices) have no eigenvalue $\neq1$ of norm 1. In particular, in a neat group there are no non-trivial finite order elements, and all quasi-unipotent elements are unipotent. In particular, after passing to the finite cover as in the main text, the monodromies $\gamma_i$ around the prime divisors $Y_i$ are all non-trivial unipotent elements of $Sp(2m+2,\Z)$.} (hence torsion-free) finite-index subgroup. We work modulo finite groups, so, by replacing the base $S$ of the VHS with a finite cover, we may (and do) assume $\Gamma$ neat. The monodromy $\gamma_i$ around each prime divisor at infinity, $Y_i$, is then non-trivial unipotent.

The most important property of $\Gamma$ is that it is semi-simple. 
To formulate this condition precisely, we introduce the notion of the $\mathbb{Q}$-Zariski closure of $\Gamma\subset G_m(\Z)$ in the
$\mathbb{Q}$-algebraic group $G_m(\mathbb{Q})$,
that we denote as 
$\overline{\Gamma}^{\mspace{2mu}\mathbb{Q}}$.
By definition, $\overline{\Gamma}^{\mspace{2mu}\mathbb{Q}}$ is the smallest $\mathbb{Q}$-algebraic subgroup of $G_m(\mathbb{Q})$ which contains  $\Gamma$. 

\begin{thm}[See e.g.\!  \cite{reva,revb,peter}]  When the VHS has a geometric origin, the $\mathbb{Q}$-algebraic group $\overline{\Gamma}^{\mspace{2mu}\mathbb{Q}}$ is \emph{semi-simple}.
\end{thm}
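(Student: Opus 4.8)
The plan is to show that the identity component $G:=\big(\overline{\Gamma}^{\mspace{2mu}\mathbb{Q}}\big)^{\!\circ}$ is semisimple; since semisimplicity is insensitive to passing to finite covers, and we have already reduced to $\Gamma$ neat, this is what has to be proved. I would break the statement into two parts: first that $G$ is \emph{reductive}, and then that its radical — the connected centre $T:=R(G)$, a $\mathbb{Q}$-torus — is trivial. The first part is essentially formal once a polarization is available; the second is the genuinely Hodge-theoretic core, and is exactly where the hypothesis of geometric origin enters through the quasi-projectivity of the base $S=\overline{S}\setminus Y$ and the unipotency of the local monodromies $\gamma_i=\exp N_i$ recorded above.

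\emph{Reductivity.} The local system $\mathbb{V}$ underlying the period map is a polarizable $\mathbb{Q}$-VHS, the polarization being the symplectic form $\omega$ preserved by $\Gamma\subset G_m(\mathbb{Z})$. By Deligne's semisimplicity theorem for the monodromy of a polarizable variation of Hodge structure, the representation of $\Gamma$ (hence of $G$) on $V_{\mathbb{Q}}\cong\mathbb{Q}^{2m+2}$ is completely reducible. In characteristic zero a linear algebraic group admitting a faithful completely reducible representation is reductive, so I may write $G=T\cdot G^{\mathrm{der}}$ with $G^{\mathrm{der}}$ semisimple and $T=R(G)$ the connected centre.

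\emph{Triviality of the radical.} Decompose $V_{\mathbb{C}}=\bigoplus_{\chi}V_\chi$ into eigenspaces for $T$. As $T$ is central, each $V_\chi$ is $G$-stable, hence $\Gamma$-stable, hence a sub-local-system, and by Deligne's semisimplicity these are sub-VHS on which $\omega$ induces a polarization, pairing $V_\chi$ with $V_{\chi^{-1}}$. Note this alone does \emph{not} force $T=1$: a torus with $V_\chi$ and $V_{\chi^{-1}}$ dually paired is perfectly compatible with the polarization, so the central torus is not killed by linear algebra and positivity alone. The clean way I would finish is to invoke André's normality theorem: $G$ is a \emph{normal} subgroup of $\mathrm{MT}^{\mathrm{der}}$, the derived group of the Mumford--Tate group of a very general fibre. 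Since the Mumford--Tate group of a polarizable Hodge structure is reductive, $\mathrm{MT}^{\mathrm{der}}$ is semisimple; a connected normal subgroup of a semisimple group is itself semisimple, so $G$ is semisimple and $T=R(G)$ is trivial.

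The hard part will be the triviality of the radical, and everything nontrivial in the theorem is concentrated there. Reductivity follows the moment one has a polarization, but collapsing the central torus uses global information about the base that is supplied precisely by geometric origin: the existence of boundary divisors carrying nontrivial \emph{unipotent} monodromy (each $N_i\in\mathrm{Lie}\,G^{\mathrm{der}}$, since unipotents map trivially to any torus) together with the constraints of the limiting mixed Hodge structures they define. Making a self-contained argument would route through the $\mathrm{SL}(2)$-orbit theorem of Schmid and Cattani--Kaplan--Schmid; for the purposes of this paper I would instead cite André's theorem, which packages exactly this content, and refer to \cite{reva,revb,peter} for the detailed proof.
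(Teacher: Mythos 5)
Your outline is correct and follows essentially the route the paper itself relies on: reductivity from the complete reducibility of the monodromy representation of a polarizable VHS, and triviality of the radical from the normality $\overline{\Gamma}^{\mspace{2mu}\mathbb{Q}}\triangleleft D\boldsymbol{M}$ in the derived Mumford--Tate group, which is exactly the fact the paper states immediately after the theorem (around eqn.\eqref{mufttate}) and which is Andr\'e's theorem in the cited references. Your side remark that the polarization alone cannot kill a central torus, and that the genuine content sits in the normality statement, is accurate.
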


\noindent By general theory \cite{milnebook}, a semi-simple $\mathbb{Q}$-algebraic group is an almost-product of simple $\mathbb{Q}$-algebraic groups\footnote{\ Here $\approx$ means equality up to finite groups (i.e.\! modulo isogeny). Since we are working modulo finite groups, we shall not distinguish between an almost-simple group and its simple quotient.}
\be\label{kfasqw}
\overline{\Gamma}^{\mspace{2mu}\mathbb{Q}}\approx M_1\times M_2\times \cdots\times M_\ell\equiv P
\ee
where the $M_i$'s are \emph{simple} $\mathbb{Q}$-algebraic groups.  
Hence (replacing $S$ with a finite-cover if necessary) we may assume 
\be
\Gamma=\Gamma_1\times \Gamma_2\times\cdots\times\Gamma_\ell,\quad \text{with}\quad
\Gamma_i \subset M_i\subset P\equiv \overline{\Gamma}^{\mspace{2mu}\mathbb{Q}},
\ee  
which entails that the group $M_i(\R)$ of the real-valued points of $M_i$  is a  non-compact, simple, real Lie group. One can show that, modulo finite groups,
 $P\equiv \overline{\Gamma}^{\mspace{2mu}\mathbb{Q}}$ is a normal subgroup of the derived group $D\boldsymbol{M}$ of the Mumford-Tate ($\mathbb{Q}$-algebraic) group\footnote{\ For the theory of the Mumford-Tate groups, see e.g.: 
 
{\bf nice short  surveys:} $\mspace{5mu}$\cite{reva,revb}.

{\bf detailed treatments:} \cite{MT4,MT5,periods}.
}
$\boldsymbol{M}$ of the VHS, i.e.\! 
\be\label{mufttate}
\boldsymbol{M}=P\times R,\qquad R=M_{\ell+1}\times \cdots M_k\times A
\ee
where $M_i$ (resp.\! $A$) is a semi-simple 
(resp.\! a torus) $\mathbb{Q}$-algebraic group.
By definition of Zariski closure, the two groups $\Gamma$ and $\overline{\Gamma}^{\mspace{2mu}\mathbb{Q}}$ have the same rational tensor invariants  so, roughly speaking, they are ``algebraically indistinguishable''.

 All factors $M_i$ and $A$ of $\boldsymbol{M}$ (in particular, the first $\ell$ factors entering in $\overline{\Gamma}^{\mspace{2mu}\mathbb{Q}}$) have the property that the Lie groups of their real points, $M_i(\R)$ and $A(\R)$, contain a compact maximal torus (so the Abelian group $A(\R)$ is itself compact).
Therefore, not all simple $\mathbb{Q}$-algebraic groups may appear as factors of $\overline{\Gamma}^{\mspace{2mu}\mathbb{Q}}$, but only the ones whose underlying real Lie group $M_i(\R)$ has a Vogan diagram with trivial automorphism, that is, the ones in \textsc{Figure 6.1} and \textsc{Figure 6.2} of \cite{knapp}.
In our application to $\cn=2$ supergravity
the possible groups $M_i$ are further restricted  by the condition that the weight of the VHS is \emph{odd} (in fact 3), see \cite{reva,MT4} for the list of Lie groups satisfying this more stringent requirement.

For a general (that is, not necessarily of CY-type)
pure polarized VHS, the period map $p$ has target space 
the quotient of the Griffiths period domain $D$ by the
monodromy group
\be\label{pmapppa}
p\colon S\to \Gamma\backslash D,\qquad\text{with}\quad D\equiv G(\R)/H,
\ee 
where $G(\R)$ is the group of the real points of the $\mathbb{Q}$-algebraic group $G$ preserving the polarization; in the 3-CY case with Hodge numbers $h^{3,0}=1$, $h^{2,1}=m$ this is\footnote{\ In the equation we identify a $\mathbb{Q}$-algebraic group with the group of its $\mathbb{Q}$-valued points, for  clarity.}
\be 
G=G_m\equiv Sp(2m+2,\mathbb{Q}).
\ee
In eqn.\eqref{pmapppa} $H$ is the isotropy group of a reference 
Hodge decomposition with the given Hodge numbers $h^{p,q}$;  in our case, $h^{3,0}=1$, $h^{2,1}=m$, $H$ is \be
H\equiv H^{(2)}_m=U(1)\times U(m),
\ee 
and $G(\R)/H\equiv \mathscr{D}^{(2)}_m$.
We have the inclusion of $\mathbb{Q}$-algebraic groups
\be\label{ffrawql}
P\times R\equiv\boldsymbol{M}\subseteq G.
\ee
For ``generic'' VHS one has $\boldsymbol{M}= G$.
For instance, equality holds -- modulo finite groups -- for all universal families of complete intersections \cite{beauv}. To each factor in \eqref{ffrawql} we may associate a sub-domain of the period domain $D$
 \begin{align}
 &D_P\overset{\rm def}{=}P(\R)/[P(\R)\cap H]\equiv \prod_{i=1}^\ell M_i(\R)/[M_i(\R)\cap H]\\
 & D_R\overset{\rm def}{=}R(\R)/[R(\R)\cap H]\equiv\prod_{i=\ell+1}^k M_i(\R)/[M_i(\R)\cap H].
 \end{align}  
 We have the obvious embedding
 \be
 \\
D_P\times D_R \hookrightarrow D.
 \ee

The main result in the theory is the

\begin{thm}[Structure theorem \cite{reva,revb,periods,MT4}] In a VHS of algebro-geometric origin the \emph{global} period map $p$ (cfr.\! \eqref{pmapppa}) factorizes as in the commutative diagram
\be\label{strthemre}
\xymatrix{S\ar@/^1.7pc/[rr]^p\ar[r]_{q\phantom{mmmmm}} & \big(\Gamma\backslash D_P\big)\times D_R\; \ar@{^{(}->}[r] &\;\Gamma\backslash D}
\ee
with $q$ constant in the $D_R$ factor.
\end{thm}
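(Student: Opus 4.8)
The plan is to reduce the statement to a single constancy assertion and to settle that with the theorem of the fixed part, the two group‑theoretic inputs (semisimplicity of $\overline{\Gamma}^{\mathbb{Q}}$ and the almost‑product $\boldsymbol{M}=P\times R$ of eqn.\eqref{mufttate}) being already in place. First I would record that the global period map lands in the Mumford--Tate sub‑domain: because $\boldsymbol{M}$ is the \emph{generic} Mumford--Tate group one has $\mathrm{MT}(s)\subseteq\boldsymbol{M}$ for every $s\in S$, hence $h_s(\mathbb{S})\subseteq\boldsymbol{M}(\R)$ (here $\mathbb{S}$ is the Deligne torus) and therefore $p(S)\subseteq\Gamma\backslash D_{\boldsymbol{M}}$, with $D_{\boldsymbol{M}}=\boldsymbol{M}(\R)/[\boldsymbol{M}(\R)\cap H]$. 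The decomposition $\boldsymbol{M}=P\times R$, together with the compatible splitting of the isotropy $\boldsymbol{M}(\R)\cap H$, gives $D_{\boldsymbol{M}}\cong D_P\times D_R$; since $\Gamma\subset P$ acts trivially on the $D_R$ factor, one may quotient by $\Gamma$ on the first factor only, obtaining the desired factorization $q\colon S\to(\Gamma\backslash D_P)\times D_R\hookrightarrow\Gamma\backslash D$. This already yields the commutative diagram, so the entire content of the theorem collapses to showing that the second component $q_R:=\mathrm{pr}_2\circ q\colon S\to D_R$ (well defined precisely because $\Gamma$ acts trivially there) is \emph{constant}.

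The key step is this constancy, and here I would pass to the adjoint variation. Let $\mathfrak{m}=\mathrm{Lie}(\boldsymbol{M})$ carry the weight‑$0$ $\mathbb{Q}$‑VHS induced by $\mathrm{Ad}\circ h_s$. As $P$ and $R$ are commuting normal factors, the ideal decomposition $\mathfrak{m}=\mathfrak{p}\oplus\mathfrak{r}$, with $\mathfrak{p}=\mathrm{Lie}(P)$ and $\mathfrak{r}=\mathrm{Lie}(R)$, is a splitting into $h_s$‑stable $\mathbb{Q}$‑sub‑VHS. The monodromy acts through $\mathrm{Ad}$; since $\Gamma\subset P$ and $[\mathfrak{p},\mathfrak{r}]=0$, it fixes $\mathfrak{r}$ pointwise, and using Zariski‑density of $\Gamma$ in $P$ together with semisimplicity of $\mathfrak{p}$ (so that the centraliser of $\mathfrak{p}$ in $\mathfrak{m}$ is exactly $\mathfrak{r}$) the full space of monodromy invariants in $\mathfrak{m}$ is \emph{precisely} $\mathfrak{r}$. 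By the theorem of the fixed part this invariant sub‑local‑system underlies a constant sub‑VHS, i.e.\ the Hodge decomposition of $\mathfrak{r}$ is independent of $s$. Now $q_R(s)\in D_R$ is nothing but the $\Phi$‑grading that $h_s^R$ induces on $\mathfrak{r}$ — the abelian factor $A$ contributing only a point because $A(\R)$ is compact — so constancy of the Hodge decomposition of $\mathfrak{r}$ forces $q_R$ to be constant. Feeding this back into the factorization completes the proof.

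The main obstacle is of course this constancy step, and all its weight rests on the theorem of the fixed part, which is itself the deep output of Schmid's asymptotic Hodge theory (the nilpotent‑ and $SL_2$‑orbit theorems) controlling the degeneration of the VHS along the boundary $Y=\overline{S}\setminus S$; quasi‑projectivity of the base is essential here, since the naive implication ``trivial monodromy $\Rightarrow$ constant period map'' already fails for non‑algebraic bases (the universal‑cover lift $\widetilde{p}\colon\widetilde S\to D$ is typically non‑constant). The remaining points — landing in $D_{\boldsymbol{M}}$, the splitting $D_{\boldsymbol{M}}=D_P\times D_R$, and the identification of $\mathfrak{r}$ with the invariants — are formal consequences of the structure already established for $\overline{\Gamma}^{\mathbb{Q}}$ and $\boldsymbol{M}$. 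The one piece of bookkeeping I would verify carefully is that the reference Hodge cocharacter commutes with the product decomposition, so that $\boldsymbol{M}(\R)\cap H$, the complex structure, and the horizontal (IPR) distribution all factor through $D_P\times D_R$; this guarantees that $q_R$ is a genuine horizontal holomorphic map and that the adjoint splitting $\mathfrak{m}=\mathfrak{p}\oplus\mathfrak{r}$ is indeed by sub‑Hodge‑structures.
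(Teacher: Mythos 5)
The paper does not prove this statement: it is quoted verbatim as an established theorem of Hodge theory and attributed to \cite{reva,revb,periods,MT4}, with the surrounding text (\S.4.3) only assembling the inputs --- semisimplicity of $\overline{\Gamma}^{\mspace{2mu}\mathbb{Q}}$, the almost-product $\boldsymbol{M}=P\times R$ with $P$ normal in $D\boldsymbol{M}$, and quasi-projectivity of $S$. Your argument is a faithful reconstruction of the standard proof in those references: you take the same two group-theoretic facts as given (exactly as the paper does), reduce everything to constancy of the $D_R$-component, and settle that with the theorem of the fixed part applied to the adjoint variation on $\mathfrak{m}=\mathfrak{p}\oplus\mathfrak{r}$, correctly identifying $\mathfrak{r}$ as the full space of monodromy invariants via Zariski density of $\Gamma$ in $P$ and semisimplicity of $\mathfrak{p}$. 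You also correctly locate where the real depth sits (Schmid's asymptotic theory behind the fixed-part theorem on a quasi-projective base) and flag the one genuine bookkeeping point, namely that recovering constancy of $q_R$ from constancy of the Hodge decomposition of $\mathfrak{r}$ requires the stabilizer in $R(\R)$ of that decomposition to coincide with $R(\R)\cap H$ up to the center --- which holds modulo finite groups and the compact central torus $A(\R)\subseteq H$, the regime in which the whole paper works. No gap; just be aware that the semisimplicity/normality inputs you invoke are themselves the hard content of the cited theorem of Deligne--Andr\'e, so your proof, like the paper's presentation, is conditional on them.
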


We focus on the non-trivial factor and consider the 
map
\be\label{pMT}
\varpi\colon S\to \Gamma\big\backslash D_P\equiv \prod_{i=1}^\ell \Big(\Gamma_i\big\backslash M_i(\R)\big/H_i\Big),
\ee
where 
\be
H_i\overset{\rm def}{=} H\cap M_i(\R).
\ee
Note that the map $\varpi$ is necessarily non-constant in each one of the $\ell$ factors. Of course $\varpi$ should still obey the Griffiths infinitesimal period relations (i.e.\! should describe a $tt^*$ geometry, cfr.\!\! \S.2).
Let $\co(\mathfrak{m}_i^{-1,1})$ be the homogeneous vector bundle \cite{homore}\!\!\cite{periods,book} over the reductive homogeneous space $M_i(\R)/H_i$ associated to the $H_i$-module\footnote{\ The Lie subgroup $H_i\subset M_i(\R)$ acts on the complexified Lie algebra $\mathfrak{m}_i^\C$  of $M_i(\R)$ through the adjoint representation. As explained in the text, this action preserves the Hodge decomposition of $\mathfrak{m}_i^\C$. }
\be
\mathfrak{m}_i^{-1,1}\overset{\rm def}{=} \mathfrak{m}_{i}^{\C}\cap \mathfrak{g}^{-1,1},
\ee
where $\mathfrak{m}_{i}^{\C}= \mathfrak{Lie}(M_i)\mspace{-2mu}\otimes_\mathbb{Q}\mspace{-2mu}\C$, and $\mathfrak{g}^{-1,1}$ is the component of type $(-1,1)$ in the associated adjoint Hodge structure on the Lie algebra $\mathfrak{g}$ of $G$ at a reference point $\star\in S$ 
\be
\mathfrak{g}\otimes_\mathbb{Q}\mspace{-1.5mu}\C =\bigoplus_k \mathfrak{g}^{-k,k}
\ee
(see \cite{reva,revb,MT4, periods} for details).
Then the infinitesimal period relations for $\varpi$ take the form \cite{reva,revb}
\be\label{eeqrt}
\varpi_\ast(TS)\subseteq \bigoplus_{i=1}^\ell \co(\mathfrak{m}_i^{-1,1})
\ee 
where $TS$ is the holomorphic tangent bundle to $S$ (cfr.\! eqn.\eqref{uuuuqw12}).

\subsection{The case of ``motivic'' special K\"ahler geometries}\label{motSK}
In the case that our VHS has the Hodge numbers of a special K\"ahler geometry (i.e.\! $h^{3,0}=1$, $h^{2,1}=m$), we can say a bit more on the number $\ell$ of factors in $P$, eqn.\eqref{kfasqw}:

\begin{lem}\label{lllemmaq} {\bf(1)} A special K\"ahler geometry $S$ of algebro-geometric origin has $\ell=0,1,2,3$. 
\begin{itemize}
\item $\ell=0$ if and only if $\dim_\C S=0$ (in this case the corresponding algebro-geometric object is \emph{rigid}, e.g.\! a \emph{rigid 3-CY).} 
\item $\ell=2,3$ if and only if $S$ is locally isometric to the Hermitian symmetric space
\be\label{pzq1x}
SL(2,\R)/U(1)\times SO(2,k)/[SO(2)\times SO(k)],\qquad k=1,2,\cdots,
\ee  
with $\ell=3$ iff $k=2$.
\item in all other cases $\ell=1$.
\end{itemize}
{\bf(2)} $\varpi$ is onto iff  $\Gamma\backslash D_P$ is locally symmetric,
in which case $\varpi$ is an isomorphism of complex manifolds, $S\cong\Gamma\backslash D_P$.
\end{lem}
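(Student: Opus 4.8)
The plan is to read off $\ell$ from the Hodge-theoretic decomposition of the standard representation $V=\boldsymbol{2m+2}$ under the semisimple group $P\approx M_1\times\cdots\times M_\ell$ of eqn.~\eqref{kfasqw}, and then to match each simple factor with a period (sub)domain. First I would isolate the $\boldsymbol{M}$-irreducible sub-Hodge-structure $W\subseteq V$ carrying the line $V^{3,0}$; since an irreducible representation of a product of groups is an external tensor product, $W\cong W_1\otimes\cdots\otimes W_\ell$ with each $W_i$ an irreducible $M_i$-module carrying a polarized Hodge structure of some weight $n_i$. Because Hodge weights add under tensor products and $h^{3,0}(V)=1$, one gets $\sum_i n_i=3$ and $\prod_i h^{n_i,0}(W_i)=1$, so each $W_i$ has a one-dimensional top piece $h^{n_i,0}=1$. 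As the period map is non-constant on each factor, IPR \eqref{eeqrt} forces $\mathfrak{m}_i^{-1,1}\neq0$, hence $n_i\geq1$ for all $i$. The partitions of $3$ into positive parts are $3$, $1+2$, $1+1+1$, giving $\ell=1,2,3$; the degenerate case $\ell=0$ means $P$ trivial, i.e.\ (as $\Gamma$ is neat) $\Gamma=1$ and the period map constant, which by local Torelli (\S.2.1\textbf{(VIII)}) happens exactly when $\dim_\C S=0$.

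Next I would identify the factors. A weight-$1$ factor with $h^{1,0}=1$ is an elliptic-curve-type structure, so $M_i=Sp(2,\R)=SL(2,\R)$ and $M_i/H_i$ is the upper half-plane. A weight-$2$ factor with $h^{2,0}=1$ is of K3/abelian-surface type; invoking the standard classification of such structures (Kuga--Satake, cf.\ \cite{reva,MT4}) its $\mathbb{Q}$-simple Mumford--Tate group is $SO(2,k)$ and $M_i/H_i$ is the Hermitian symmetric Type~IV domain $SO(2,k)/[SO(2)\times SO(k)]$. Thus $\ell=2$ gives $W=W_1\otimes W_2$ with $M_1\times M_2=SL(2,\R)\times SO(2,k)$, reproducing \eqref{pzq1x}. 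The case $\ell=3$ consists of three weight-$1$ factors, i.e.\ $SL(2,\R)^3$ acting on $W_1\otimes W_2\otimes W_3$; using the isogeny $SO(2,2)\approx SL(2,\R)\times SL(2,\R)$ this is precisely the $k=2$ specialization of the $\ell=2$ family, which explains why $\ell=3\Leftrightarrow k=2$. In all remaining cases the partition is $3=3$, $P$ is a single simple factor carrying the full weight-$3$, $h^{3,0}=1$ structure, and $\ell=1$.

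For part \textbf{(2)}, recall that each $M_i(\R)/H_i$ is Hermitian symmetric if and only if its holomorphic tangent space coincides with the horizontal piece, i.e.\ $\mathfrak{m}_i^{-k,k}=0$ for $k\geq2$. If $\varpi$ is onto, it is generically submersive, so by IPR \eqref{eeqrt} the horizontal subbundle $\bigoplus_i\co(\mathfrak{m}_i^{-1,1})$ must exhaust the full holomorphic tangent bundle of $D_P$; this forces $\mathfrak{m}_i^{-k,k}=0$ for $k\geq2$ in every factor, so $D_P$ is Hermitian symmetric and $\Gamma\backslash D_P$ locally symmetric. Conversely, if $\Gamma\backslash D_P$ is locally symmetric then the horizontal distribution is the entire tangent distribution and the lifted map $\widetilde\varpi\colon\widetilde S\to D_P$ is a horizontal holomorphic immersion (local Torelli) whose pulled-back Hodge metric agrees, up to normalization, with the invariant metric on $D_P$. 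Since $\widetilde S$ is complete for the Hodge metric (\S.2.1\textbf{(VIII)}) and $D_P$ is a complete simply-connected symmetric space, $\widetilde\varpi$ is a local isometry between complete simply-connected manifolds, hence a covering map and therefore a biholomorphic isometry; descending by $\Gamma$ gives $S\cong\Gamma\backslash D_P$ and $\varpi$ onto.

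The main obstacle is Step~1 of part \textbf{(1)}: establishing the tensor-product decomposition on the $(3,0)$-supporting summand and, above all, pinning down the admissible simple $\mathbb{Q}$-groups and their Hodge representations in each weight. The weight-$1$ and weight-$3$ cases are immediate, but identifying the weight-$2$, $h^{2,0}=1$ factor with $SO(2,k)$ (and excluding other simple $\mathbb{Q}$-forms with a one-dimensional top Hodge piece and non-trivial $\mathfrak{m}^{-1,1}$) requires the full Green--Griffiths--Kerr/Mumford--Tate classification of CY-type Hodge structures rather than elementary weight bookkeeping; this is where the genuinely arithmetic input enters. By contrast, the completeness-plus-covering argument in part \textbf{(2)} is standard once the Hermitian symmetric dichotomy has been set up.
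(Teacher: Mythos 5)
Your route to part \textbf{(1)} is genuinely different from the paper's. The paper's proof is two lines: by ``global Torelli'' it identifies $S$ with its image $\varpi(S)$, notes that $\ell>1$ forces $S$ to be locally isometric to a product, and then invokes the Ferrara--van Proeyen theorem \cite{ferrara}, which says that \emph{every} simply-connected reducible special K\"ahler manifold has the form \eqref{pzq1x}; the values $\ell=2,3$, and $\ell=3\Leftrightarrow k=2$ via $SO(2,2)\approx SL(2,\R)\times SL(2,\R)$, then come for free with no Hodge-representation theory. Your weight bookkeeping ($\sum_i n_i=3$, $\prod_i h^{n_i,0}(W_i)=1$, $n_i\geq1$) is an independent and instructive way to get $\ell\leq 3$, and your treatment of $\ell=0$ and of the forward direction of part \textbf{(2)} (onto $\Rightarrow$ IPR \eqref{eeqrt} tautological $\Rightarrow$ locally symmetric) matches what the paper does, which simply cites \cite{reva} for that equivalence.

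There is, however, a genuine gap where you identify the weight-two factor. An irreducible polarized weight-$2$ Hodge structure with $h^{2,0}=1$ does \emph{not} always have orthogonal Mumford--Tate group: by Zarhin's classification its endomorphism field may be CM, in which case the Mumford--Tate group is a unitary group and the corresponding sub-domain is a complex ball $SU(1,n)/U(n)$ rather than a type IV domain. So Kuga--Satake/Green--Griffiths--Kerr input alone does not yield \eqref{pzq1x}; what excludes $\mathcal{H}\times B^n$ ($n\geq2$) is precisely the \emph{special K\"ahler} constraint, i.e.\ the Ferrara--van Proeyen theorem (or an equivalent argument using the cubic form on $H^{2,1}$). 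You correctly flag this as the main obstacle but attribute its resolution to the Mumford--Tate classification, which is not the tool that closes it. Two smaller soft spots: (i) your tensor decomposition implicitly assumes every $M_i$ acts non-trivially on the summand $W$ containing $V^{3,0}$, which needs the local Torelli generation of $V$ by $H^{3,0}$; (ii) in the converse of part \textbf{(2)} your ``local isometry between complete manifolds'' argument presupposes $\dim_\C S=\dim_\C D_P$ -- the cleaner route to surjectivity is that the stabilizer in $P(\R)$ of the $\Gamma$-invariant image $\widetilde\varpi(\widetilde S)$ contains the Zariski-dense $\Gamma$, hence is all of $P(\R)$, which acts transitively on $D_P$.
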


\begin{proof} {\bf (1)} By ``global Torelli'' \cite{glo-torelli,glo-torelli1,glo-torelli2} we may identify $S$ and its image $\varpi(S)$. Hence if $\ell>1$ the special K\"ahler geometry is locally isometric to a product; we then apply the Ferrara-van Proeyen theorem \cite{ferrara} which says that \emph{all} simply-connected reducible special K\"ahler geometries have the form \eqref{pzq1x}. {\bf (2)} If $\varpi$ is onto, the Griffiths infinitesimal period relations \eqref{eeqrt} become
tautological, and this is equivalent to $\varpi(S)$ being locally symmetric \cite{reva}. 
\end{proof}

We write
\be
M_i(\Z)\overset{\rm def}{=} M_i\ \cap G_m(\Z) \equiv M_i\cap Sp(2m+2,\Z).
\ee
By construction $M_i(\Z)$ is an arithmetic subgroup \cite{morris,borel} of the rational algebraic group  $M_i$ and one has 
\be
\Gamma_i\subset M_i(\Z)\equiv \overline{\Gamma}_i^{\mspace{2mu}\mathbb{Q}}\cap G_m(\Z)
\ee 
by Dirac quantization of charge.
We have two possibilities:
\begin{itemize}
\item the monodromy (sub)group $\Gamma_i$ is of finite index in $M_i(\Z)$ and hence it an arithmetic group on its own right;
\item $\Gamma_i$ is of infinite index in the arithmetic subgroup $M_i(\Z)$ of its $\mathbb{Q}$-Zariski closure $M_i$. In this case the monodromy group is said to be \emph{thin} \cite{thin2}.  
\end{itemize}

Simple situations have arithmetic monodromies:
in particular, when $S$ is locally symmetric the monodromy is always arithmetic \cite{book}.  All families of complete intersection CY have arithmetic monodromy \cite{beauv}. However, there are geometric situations with thin monodromy: the simplest such example is the mirror of the quintic \cite{thinmirror}.

\subsection{The structural swampland  criterion}

In the (ungauged) $\cn=2$ case we propose to replace in the $\cn=2$ case the several swampland conjectures by the following single one:

\begin{cri}[Structural criterion]\label{tttqwaswam} A special K\"ahler geometry $S$ belongs to the swampland unless its period map $p$ satisfies the VHS structure theorem \eqref{strthemre}. 
\end{cri}

In particular the discrete gauge group $\Gamma$ should satisfy eqn.\eqref{kfasqw}.

\subsection{A direct physical proof?}\label{direct}

As sketched in \S.\,1.3 the structural swampland criterion may be actually proven (as a necessary condition) for low-energy effective theories with \emph{rigid} $\cn=2$ supersymmetry. Let us focus on the crucial ingredients of the proof in QFT: the first one is the existence of a well-defined quantum (complex) Hilbert space $\boldsymbol{H}$, carrying a linear representation of the $\cn=2$ \textsc{susy} algebra
\be
\{\bar Q_{A\,\dot\alpha}, Q^B_\beta\}= {\delta_A}^B\,\gamma_{\alpha\dot\beta}^\mu P_\mu,
\quad
\{Q^A_\alpha, Q^B_\beta\}= \epsilon^{AB}\,\epsilon_{\alpha\beta}\, Z, 
\quad 
\{\bar Q_{A\,\dot\alpha}, \bar Q_{B\,\dot\beta}\}= \epsilon_{AB}\,\epsilon_{\dot\alpha\dot\beta}\, \bar Z,
\ee
 and of an algebra $\ca$ of quantum operators acting on $\boldsymbol{H}$. This allows to define the chiral ring $\mathscr{R}_{4d}$ as the subfactor consisting of scalar\footnote{\ The restriction to scalar operators is for convenience.} operators commuting with
  $\bar Q_{A\,\dot\alpha}$ modulo the ones which may be written as  $\bar Q$-anticommutators. 
  From the \textsc{susy} algebra and locality we learn that $\mathscr{R}_{4d}$ is a commutative $\C$-algebra with unit. 
 The second crucial ingredient is that in a UV complete $\cn=2$ QFT the chiral ring $\mathscr{R}_{4d}$ is also finitely-generated. Indeed in this case we have a well-defined UV fixed-point, with \emph{finite} conformal central charges $a_\text{uv}$ and $c_\text{uv}$, whose chiral ring $\mathscr{R}_{4d}^\text{uv}$ is graded by the conformal dimension
  $d$. When $\mathscr{R}_{4d}^\text{uv}$
  is a \emph{free} polynomial ring we have the unitarity bound \cite{tack}
  \be
\#\text{(generators of $\mathscr{R}_{4d}^\text{uv}$)}\leq 
 4(2a_\text{uv}-c_\text{uv})\equiv\text{finite,} 
  \ee
  and one reduces to this case in all known $\cn=2$ SCFT
  (see discussion in \S.\,5 of \cite{disc}). By RG flow, $\mathscr{R}_{4d}$ is a deformation of $\mathscr{R}_{4d}^\text{uv}$ and inherits finite-generation from it.
   
 Then, by the Hilbert basis theorem, the spectrum of a quantum-consistent $\mathscr{R}_{4d}$ is an affine variety defined over $\C$, hence has the form $\overline{M}\setminus Y_\infty$ for some projective variety $\overline{M}$ and effective divisor $Y_\infty$.
  It is convenient to work with the smooth locus in the Coulomb branch,  $M\equiv \overline{M}\setminus (Y_\infty+Y_D)$, where $Y_D$ is the effective divisor with support on the locus where some extra degree of freedom becomes massless. 
  Then the low-energy description of a UV complete 4d $\cn=2$ theory is described by a weight-1 VHS over a smooth quasi-projective basis $M\equiv \overline{M}\setminus Y$, and we may assume without loss
  that $Y$ is a simple normal crossing divisor. By going to a finite cover, if necessary, we may also assume the monodromy group $\Gamma$ to be neat. Then the structure theorem holds (for a sketch of the argument, see e.g.\! \S.\,IV of \cite{reva}).
  \medskip
  
  From the above we see that we have a proof of
  the structural swampland criterion also in the $\cn=2$ quantum gravity setting subjected to two assumptions:
  \begin{itemize}
  \item[1)] the asymptotically-Minkowskian quantum states form a Hilbert space carrying a unitary representation of $\cn=2$ supersymmetry.
  In this case the chiral ring $\mathscr{R}_\text{sugra}$ is well defined;
  \item[2)] $\mathscr{R}_\text{sugra}$ is finitely generated.
  \end{itemize} 
From the  semi-classic viewpoint these two assumptions look reasonable.

\subsection{Recovering the Ooguri-Vafa swampland statements}\label{recovering}

We now show that the Ooguri-Vafa (OV) swampland conjectures \cite{OoV} are implied by the above \textbf{Criterion} i.e.\! by the VHS structure theorem (specialized to the appropriate Hodge numbers). Most arguments may already be found  in \cite{OoV}.

\paragraph{(1) $S$ is non-compact.} The first OV conjecture states that either $S$ is zero-dimensional or it is non-compact. Since a VHS with a compact base has a constant period map, if $S$ is compact $\varpi(S)$ reduces to a point, and since $S\cong\varpi(S)$ by ``global Torelli'', $S$ is also a point.
The non-compactness of $S$ is related to the fact that $\Gamma$ should contain non-trivial unipotent elements (i.e.\! the monodromies around the prime divisors in $Y$): compare to the Godement non-compactness criterion for finite-volume quotients of symmetric spaces\cite{morris}.

\paragraph{(2) The scalars' manifold is geodesically complete.} This is the most tricky statement. In \S.\ref{basic} we found convenient to define the covering special geometry $\widetilde{S}$ to be the completion of a certain ``moduli'' space with respect to the Hodge metric $K_{i\bar j}$. Then $S$ is geodesically complete for $K_{i\bar j}$ by construction.
However the OV conjecture refers to geodesic completeness with respect to the WP metric $G_{i\bar j}$ which is the one entering the scalars' kinetic terms.
Each one of these two metrics is associated to its own viewpoint about special geometry:
with reference to diagram \eqref{gasqwe4534}, the WP metric is tied to the Legendre map $z$, and  the Hodge metric to the period map $p$.
Their K\"ahler forms are restrictions to the respective images $L$ and $S$ of the curvature of the canonical homogeneous line bundles $\cl^{(a)}\to\mathscr{D}^{(a)}_m$ over the respective target domains.

In general $S$ is \textbf{not} complete for the Weil-Petersson metric.
Indeed it is known that the moduli spaces of Calabi-Yau 3-folds is typically non-complete for the Weil-Petersson metric, see \cite{noncom,klemm}. In particular, a conifold point is at finite WP distance \cite{klemm} while being at infinite Hodge distance. The period map 
\be
p\colon S\to \Gamma\backslash \mathscr{D}^{(2)}_m
\ee 
cannot be extended to such a point \cite{GII,periods}, whereas the Legendre map
\be
z\colon S\to \Gamma\backslash \mathscr{D}^{(3)}_m
\ee 
may be extended continuously to the conifold  point as we shall see momentarily. In other words, we may complete $L$ by adding to it the points at finite distance in the WP metric even if we cannot extend there $S$ (which is locally identified with its image under $p$). This peculiar situation is related to the fact that the fiber of the projection
\be
\mathscr{D}^{(2)}_m\xrightarrow{\ \varpi_3\ }\mathscr{D}^{(3)}_m
\ee 
is non-compact, so, a geodesic $\widetilde{\gamma}(t)\subset \widetilde{S}\subset \mathscr{D}^{(2)}_m$ which approaches a conifold point may stretch to infinite length  in the fiber direction while its projection
$\varpi_3(\widetilde{\gamma}(t))\subset\widetilde{L}\subset \mathscr{D}^{(3)}_m$ remains at finite distance in the base. We may be a little more precise: the Penrose map
\be
\mathscr{D}^{(2)}_m\longrightarrow \mathscr{D}^{(1)}_m\times \mathscr{D}^{(3)}_m
\ee 
is an embedding, while the fiber of $\varpi_3$ is
a copy of Siegel's upper half-space $Sp(2m,\R)/U(m)$ whose physical interpretation is easily understood from the Penrose correspondence.
A point in the fiber, seen as a symmetric  $m\times m$ symmetric matrix $\boldsymbol{\tau}_{ij}$ with positive-definite imaginary part, is nothing else that the matrix of complexified couplings of the \emph{matter} vector fields, whereas the point in $\mathscr{D}^{(1)}_m$ describes \emph{all} vector couplings $\tau_{IJ}$, including the one of the graviphoton. The point in $\mathscr{D}^{(3)}_m$ specifies which $\C$-linear combination of the electric and magnetic charges is the \textsc{susy} central charge $Z$ \cite{cec}.
Thus a point at finite distance in the WP metric but infinitely away in the Hodge one is a field configuration where the \textsc{susy} central charge $Z$ and the graviphoton couplings remain finite, but  
 \emph{some} coupling of the matter gauge vectors blows up (or vanishes, depending on the chosen duality frame). 
 
 Since the problem arises purely in the matter sector,\footnote{\ We stress that the separation \textsf{matter sector} vs.\! \textsf{gravitational sector} makes sense only at the linearized level, i.e.\! in the first order expansion around a fixed scalar field configuration.} the situation is akin the ones appearing in rigid $\cn=2$ effective Lagrangian, 
 i.e.\! in Seiberg-Witten theory \cite{SW1,SW2}. For comparison sake, let us recall what happens in the rigid case. The singularities of the scalars' metric arises at points in the Coulomb branch where some 
charged hypermultiplet becomes massless, making incomplete the low-energy effective description in terms of IR-free photon super-multiplets only. Let $a$ be the period associated to the charge of the light hypermultiplet so that its mass is proportional to $|a|$; near the singular point the dual period $a^D$ has the form
\be
a^D= \frac{q}{2\pi i}\, a \log(a/\Lambda)+\text{regular}
\ee
 for some integer $q\neq0$. Clearly the periods $(a, a^D)$ may be extended continuously in the limit $a\to0$, while the K\"ahler metric $ds^2\propto \log\!|a|\,da\,d\bar a$ develops a singularity at \emph{finite} distance.
 The Ricci tensor has the form of the standard Poincar\'e metric in the upper-plane coordinate $z$
(where $a=\exp(2\pi i z)$)
 \be\label{tqerw}
 R_{z\bar z}= \frac{1}{4\,\mathrm{Im}\,z}+O(a),\quad \text{as }a\to0\ \ \text{i.e.}\ \ \mathrm{Im}\,z\to\infty,
 \ee
 so that the point $a=0$ where the hypermultiplet becomes massless would be at \emph{infinite} distance in any K\"ahler metric of the form \begin{equation}
 C\mspace{2mu}G_{i\bar j}+R_{i\bar j}
 \end{equation} 
 ($C$ a suitable constant) for instance in the Hodge metric \eqref{cjqer}.
 We conclude that such finite-distance singular points correspond to loci in scalars' space where \emph{finitely many} states become light, spoiling our low-energy effective description
 in terms of light vector multiplets only.

The original OV conjecture thus holds with the specification that ``scalars' manifold'' should be understood to mean the WP completion of the Legendre manifold $L$. The non-trivial part of the statement is that a continuous extension of $z$ exists. The extended metric is not smooth since its Ricci curvature should blow up at the conifold points, see \eqref{tqerw}.

Let us discuss in more detail the behavior of the special geometry at infinity.
Since $S=\overline{S}\setminus Y$ with $\overline{S}$ compact and Y simple normal crossing, the asymptotic behaviour of the special geometry as we approach a point in $Y$ is described -- up to corrections which are exponentially small in the geodesic distance -- by the multi-variable version of the nilpotent orbit theorem \cite{schm,morrison}. More concretely, let $U\subset \overline{S}$ be a small open set (in the analytic topology) where $Y$ takes the form $z_1z_2\cdots z_s=0$,
so that 
\be
U\cap S\cong (\Delta^{\mspace{-2mu}*})^s \times \Delta^{m-s}\; \xleftarrow{\ \;\sigma\ } \;\mathcal{H}^s \times \Delta^{m-s}\cong \widetilde{U\cap S}
\ee 
where $\Delta$ is the open unit disk, $\Delta^{\mspace{-2mu}*}=\{z\in \C,\; 0<|z|<1\}$ the punctured unit disk, $\mathcal{H}$ the upper half-plane, and $\sigma$ the
 universal covering map
 \be
 \sigma\colon(\tau_1,\cdots,\tau_s, z_{s+1},\cdots, z_{m})\mapsto (q_1,\cdots,q_s,z_{s+1},\cdots, z_m),\quad\text{with } q_i= q(\tau_i)=e^{2\pi i\tau_i}.
 \ee 
The local lift of the period map, 
\be
\widetilde{p}_U\colon \widetilde{U\cap S}\to D,\qquad \widetilde{p}_U\equiv \widetilde{p}\big|_U
\ee
takes the form \cite{schm}
\be\label{gaqwe}
\widetilde{p}_U(\tau_i;z_a)=\exp\mspace{-4mu}\left(\sum_{i=1}^s \tau_i\, N_i\right)\mspace{-4mu}\cdot\mspace{-2mu} F(q_i,z_a)+O\mspace{-0.5mu}\big(e^{-2\pi \mathrm{Im}\,\tau_i}\big),
\ee 
where the $N_i$'s are a $s$-tuple of non-zero, commuting, nilpotent elements of $\mathfrak{sp}(2m+2,\mathbb{Q})$ such that the monodromy around the divisor $\{z_i=0\}$ is
$\gamma_i=\exp(N_i)$ and $F\colon \Delta^{m}\to \check{D}$ is a regular holomorphic map.\footnote{\ Here $\check{D}$ is the \emph{compact dual} of the Griffiths period domain $D$ \cite{GII,Gbook,reva,revb,periods}. 
$\check{D}$ is a compact projective complex manifold and a $G(\C)=Sp(2m+2,\C)$-homogeneous space such that $D\subset \check{D}$ as an open domain.}  
From this expression one gets the asymptotic behaviour of the metric at infinity in terms of the action of the $N_i$'s.
For instance, let us approach the $j$-th prime divisor $Y_j$, i.e.\! we take $\mathrm{Im}\,z_j\to\infty$ at fixed values of the other
coordinates $q_{i\neq j}, z_a$.
Set 
\be
\kappa_j=\max\{\ell\, :\, N_j^\ell \varpi_3 F|_{q_j=0}\neq0\}.
\ee 
Then \cite{noncom}
\be\label{pqaw12}
G_{\bar z_j z_j}= \frac{\kappa_j}{4(\mathrm{Im}\,z_j)^2}+\text{exponentially small}
\ee
so completeness follows from comparison with the Poincar\'e metric, unless
$\kappa_j=0$, in which case we see from eqn.\eqref{gaqwe} that  
 the map 
 \be
 \widetilde{z}\equiv \varpi_3\mspace{2mu}\widetilde{p}\colon S\to \mathscr{D}^{(3)}_m
 \ee 
 extends to $q_j=0$ \cite{noncom} sot that we may extend continuously the covering Legendre manifold $\widetilde{L}$ there.

\paragraph{(3) Asymptotically at infinity the curvature of $S$ is non-positive.} This is \textbf{Conjecture 3} of \cite{OoV}. It follows from the asymptotic formula \eqref{pqaw12}.
Note that near a conifold point the curvature is indeed positive (cfr.\! eqn.\eqref{tqerw}) but this very fact implies that the singular point is at \emph{finite} distance in the WP metric. 

\paragraph{(4) $S$ has finite volume.} 
The above asymptotic expressions, together with  compactness of $\overline{S}$, show that the volume of $S$ is finite \cite{GII}. Ref.\!\cite{GII} shows that the volume of $\varpi(S)$, computed with the Hodge metric $K_{i\bar j}$, is finite when the VHS has a compactifiable base $S=\overline{S}\setminus Y$ \cite{GII}.
The OV conjecture refers to the volume computed with the WP metric $G_{i\bar j}$. The argument of \cite{GII} applies to this metric as well, since the two metrics are simply related in the asymptotic regime as a consequence of eqn.\eqref{pqaw12}.
For more details and precise results on the WP volumes of CY moduli space -- which apply in general to all special geometries consistent with the structure theorem --  see \cite{cymoduli1,cymoduli2}.

\paragraph{(5) Properties of $\pi_1$.}
\textbf{Conjecture 4} of \cite{OoV} states:
\vskip 8pt

\noindent{\it 
In the scalars' manifold there is no non-trivial 1-cycle with minimum length within a given homotopy class.}
\vskip8pt

The precise wording of this conjecture requires some refinement.\footnote{\ I thank Cumrun Vafa for a discussion on the issue.}
There are $h^{2,1}=1$ Calabi-Yau 3-folds whose complex moduli space has the form
\be
S=\mathbb{P}^1\setminus\{n\geq 4\ \text{points}\}
\ee
(the monodromies around the $n$ punctures being non-semisimple). There are plenty of such examples. For instance, the double octic corresponding to the eight plane arrangement number 254 in Meyer's list \cite{meyer}, i.e.\! a certain crepant resolution of the one-parameter family of
\emph{singular} degree 8 hypersurface in the weighted projective space $\mathbb{P}(1,1,1,1,4)$ \cite{draco}
\be
x_1x_2x_3x_4(x_1+x_2+x_3+x_4)\mspace{-2mu}(x_4+x_2+s\mspace{1mu}x_3)\mspace{-2mu}
(s\mspace{1mu} x_3+s\mspace{1mu} x_4+x_1+x_2)\mspace{-2mu}(x_1+s\mspace{1mu}x_2+s\mspace{1mu}x_3)=x_5^2
\ee
where
\be
s\in S\equiv \mathbb{P}^1\setminus \left\{0,1,\infty,\frac{1}{2}, \frac{3-\sqrt{5}}{2}, \frac{3+\sqrt{5}}{2}
\right\}.
\ee
$\{0,1,\infty\}$ are MUM points, whereas the other three punctures are conifolds points, see \cite{draco}. 
Alternatively we may take as ``scalars' manifold''  the metrically completed Legendre manifold 
\be
L^\text{comp.}\cong \mathbb{P}^1\setminus\{0,1,\infty\}
\ee 
in which the three conifold punctures are filled in. Let $\sigma_x\in \pi_1(L^\text{comp})$ be a loop encircling the puncture $x\in\{0,1,\infty\}$.
The length of a path in the class $\sigma_0\sigma_1^{-1}$
is  below by a positive constant, so the wording of the conjecture needs some minor modification.
 
Here we adopt a conservative attitude: we take as ``scalars' space'' the nicer $S$ and slightly modify the statement of the conjecture by replacing  ``homotopy class'' with  ``homology class''.
Since, as \underline{abstract} groups, $\pi_1(S)\cong\Gamma$ (because $\Gamma$ is assumed torsion-free)
the conservative version of the conjecture is equivalent to saying that, as a \underline{concrete} matrix group, $\Gamma$ is generated by unipotent elements. This is automatically true \cite{vanP} whenever $\Gamma$ is an arithmetic group of rank at least 2 (as it happens in most ``elementary'' examples such as all complete intersections), or when $\overline{S}$ may be chosen 
simply-connected. I am not aware of an example where neither conditions apply.

Let us be general. Let $\Gamma_u\triangleleft \Gamma$ be the subgroup generated by all unipotent elements of $\Gamma$: when $\dim_\C S>0$, $\Gamma_u$ is an infinite normal subgroup. Taking the $\mathbb{Q}$-Zariski closure, we have $\overline{\Gamma}_u^{\mspace{2mu}\mathbb{Q}}\subseteq \overline{\Gamma}^{\mspace{2mu}\mathbb{Q}}$; we may assume $\overline{\Gamma}^{\mspace{2mu}\mathbb{Q}}$ to be simple (otherwise $\Gamma$ is automatically arithmetic by \textbf{Lemma \ref{lllemmaq}}). Then $\overline{\Gamma}_u^{\mspace{2mu}\mathbb{Q}}\equiv \overline{\Gamma}^{\mspace{2mu}\mathbb{Q}}\equiv P$. In other terms, 

\begin{corl} $S$ is a special geometry with period map  $p$ as in \eqref{strthemre}.
Then $\pi_1(S)$ -- seen as a concrete matrix group $\Gamma$ acting on $\mathbb{Q}^{2m+2}$ --
has the same tensor invariants as a discrete group $\Gamma_u$ generated by nilpotent elements. In other words: the Ooguri-Vafa statement above holds (after refinement) for $S$ \emph{at least} in the algebraic sense.
\end{corl}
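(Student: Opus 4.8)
The plan is to compare the $\mathbb{Q}$-Zariski closures of $\Gamma$ and of the subgroup $\Gamma_u$ generated by its unipotent elements, show that the two closures coincide, and then deduce equality of tensor invariants from the principle (already used above, in the discussion following \eqref{mufttate}) that any matrix group and its $\mathbb{Q}$-Zariski closure have exactly the same rational tensor invariants. First I would dispose of the reducible case. By \textbf{Lemma \ref{lllemmaq}}, if $\overline{\Gamma}^{\mathbb{Q}}\approx M_1\times\cdots\times M_\ell$ has $\ell\geq2$ then $S$ is locally symmetric, so the monodromy $\Gamma$ is arithmetic of real rank $\geq2$; in that situation $\Gamma$ is already generated by unipotents, $\Gamma_u=\Gamma$ (cf.\ \cite{vanP}), and there is nothing to prove. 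Hence I may assume $\ell=1$, i.e.\ $P\equiv\overline{\Gamma}^{\mathbb{Q}}$ is a single $\mathbb{Q}$-simple group.

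Next, assuming $\dim_\C S>0$, I would verify that $\Gamma_u$ is an \emph{infinite normal} subgroup. Normality is immediate because conjugation preserves unipotency, so $\Gamma$ permutes its unipotent elements and therefore stabilizes the subgroup they generate. Infiniteness follows from the structure theorem together with step \textbf{(1)} of \S\ref{recovering}: since $S=\overline{S}\setminus Y$ is non-compact there is at least one prime divisor $Y_i$ at infinity, and (after passing to a neat finite cover) the local monodromy $\gamma_i=\exp(N_i)$ about $Y_i$ is a nontrivial unipotent element, so the cyclic group $\langle\gamma_i\rangle\cong\mathbb{Z}$ already forces $\Gamma_u$ to be infinite.

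The central step is then to identify the closures. Because $\Gamma_u\triangleleft\Gamma$ and conjugation is a $\mathbb{Q}$-morphism, the closure $\overline{\Gamma_u}^{\mathbb{Q}}$ is normalized by every element of $\Gamma$; since $\Gamma$ is Zariski-dense in $P$ and the normalizer is Zariski-closed, $\overline{\Gamma_u}^{\mathbb{Q}}$ is a \emph{normal} $\mathbb{Q}$-algebraic subgroup of $P$. Its identity component is a connected normal $\mathbb{Q}$-subgroup of $P$, and it is positive-dimensional because $\Gamma_u$ is infinite; the $\mathbb{Q}$-simplicity of $P$ therefore forces $\overline{\Gamma_u}^{\mathbb{Q}}=P=\overline{\Gamma}^{\mathbb{Q}}$. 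Chaining the tensor-invariant principle twice now finishes the argument: $\Gamma_u$ shares its rational tensor invariants with $\overline{\Gamma_u}^{\mathbb{Q}}$, which equals $\overline{\Gamma}^{\mathbb{Q}}$, which in turn shares its invariants with $\Gamma$; hence $\Gamma$ and $\Gamma_u$ have identical tensor invariants, i.e.\ the refined (``algebraic'') form of Ooguri-Vafa \textbf{Conjecture 4}.

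I expect the only genuinely delicate point to be the simplicity argument, namely being careful that ``$\mathbb{Q}$-simple'' is precisely the hypothesis that excludes a proper positive-dimensional normal $\mathbb{Q}$-subgroup, and that the passage to a neat finite cover (needed to make the $\gamma_i$ honestly unipotent rather than merely quasi-unipotent) affects neither the $\mathbb{Q}$-Zariski closures nor the tensor invariants, since we work throughout modulo commensurability. Everything else is a formal consequence of the structure theorem \eqref{strthemre} and of the semisimplicity of $\overline{\Gamma}^{\mathbb{Q}}$.
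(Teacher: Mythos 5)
Your proposal is correct and follows essentially the same route as the paper: the paper likewise reduces to the case where $\overline{\Gamma}^{\mspace{2mu}\mathbb{Q}}$ is simple via \textbf{Lemma \ref{lllemmaq}}, notes that $\Gamma_u\triangleleft\Gamma$ is an infinite normal subgroup when $\dim_\C S>0$, concludes $\overline{\Gamma}_u^{\mspace{2mu}\mathbb{Q}}=\overline{\Gamma}^{\mspace{2mu}\mathbb{Q}}=P$ by simplicity, and invokes the equality of rational tensor invariants between a group and its $\mathbb{Q}$-Zariski closure. You merely spell out the steps the paper leaves implicit (normality of the closure via Zariski-density, positive-dimensionality from infiniteness, the source of unipotents in the local monodromies at infinity), all of which are accurate.
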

 
\noindent We feel that this algebraic version is the most natural formulation of \textbf{Conjecture 4} of \cite{OoV}.

\paragraph{(6) The distance conjecture.}
When the special geometry satisfies our structural swampland criterion, its asymptotic behaviour at infinity is described by the nilpotent-orbit and $sl_2$-orbit theorems \cite{schm,morrison}. It is known that these results imply the distance conjecture. For a rough sketch of the argument see \cite{book}, for nice and detailed analyses see \cite{Grimm:2018ohb,klemm,grimm2,grimm3,grimm4,grimm5}.

We stress that, in the present set-up, the distance conjecture is almost tautological. 
Our viewpoint is that -- whenever our special geometry \textbf{does not} belong to the swampland -- we have a natural compactification $\overline{S}$ of the scalars' manifold with a sub-locus $Y\subset \overline{S}$ where the special geometry gets singular. All points in the prime divisors $Y_i\subset Y$ are at infinite distance in the nicer Hodge metric $K_{i\bar j}$ but not necessarily in the WP metric $G_{i\bar j}$ (cfr.\! \S.\textbf{(2)}). 
As discussed around eqn.\eqref{tqerw}, the points on the divisor $Y_j$ are at \emph{finite} WP distance if the singularity arises from \emph{finitely many}
particles becoming massless along the locus $Y_j$. In order to have a more severe singularity of the metric, strong enough to make the WP distance infinite, the theory 
should have at $Y_j$ infrared divergences worse than those produced by any finite number of particles becoming massless. Clearly, the only possibility is that an \emph{infinite} tower of states get light.

\paragraph{(7) No global symmetries.} All global symmetries should act non-trivially on the bosonic fields
since the $R$-symmetry is a gauge symmetry in \textsc{sugra}. From the discussion in \S.\ref{preliminary} we see that the global symmetry group is
\be\label{glosymgg}
G_\text{glob}= \cn_{\mathsf{Sym}_\Z}(\Gamma)/\Gamma,
\ee
where $\cn_{\mathsf{Sym}_\Z}(\Gamma)$ is the normalizer of $\Gamma$ in $\mathsf{Sym}(\widetilde{S})_\Z$.
Modulo commensurability, the group in the \textsc{rhs} is automatically trivial.\footnote{\ Indeed one has
$$\mathsf{Vol}(\Gamma\backslash \widetilde{S})=\big[\cn_{\mathsf{Sym}_\Z}(\Gamma)\colon \Gamma\big]\;\mathsf{Vol}\big(\cn_{\mathsf{Sym}_\Z}(\Gamma)\backslash \widetilde{S}\big)
$$
and, since $\mathsf{Vol}(\Gamma\backslash \widetilde{S})<\infty$, one has $[\cn_{\mathsf{Sym}_\Z}(\Gamma)\colon \Gamma]<\infty$.} Note that from the structure theorem 
\be
{\overline{\mathsf{Sym}(\widetilde{S})_\Z}}^\mathbb{\,Q}\equiv P,
\ee
so there is no algebraic invariant which may detect a difference between the group of \emph{all} symmetries of $\widetilde{S}$,
$\mathsf{Sym}(\widetilde{S})_\Z$, and the actual \emph{gauge} symmetry group $\Gamma$.

\paragraph{(8) Completeness of charge spectrum.}
The group $\Gamma$ acts irreducibly on the $\mathbb{Q}$-space $V=\Lambda\otimes_\Z\mspace{-1mu}\mathbb{Q}$,
where $\Lambda$ is the symplectic lattice of electric-magnetic charges. Thus, if there is a state of charge $v\neq0$, we have states of charges $\{\Gamma v\}$ and these span $V$. 
Then the physically realized electro-magnetic charges make a sublattice of finite-index in $\Lambda$. Since we are working modulo finite groups, this statement is equivalent to saying that all possible charges are physically realized provided \emph{one} charged state exists. That charged states exist follows, say, from the validity of the distance conjecture. Alternatively the compleness of electro-magnetic spectrum is a formal consequence of the absence of global symmetries
\cite{McNamara:2019rup}\!\!\cite{Rev2}\footnote{\ Since in this note we work modulo commensurability (and so use rational VHS rather than integral VHS) we may only conclude that the occupied charge lattice $\Lambda_\text{occ}$ is a finite-index sublattice of the electromagnetic lattice $\Lambda$ and not $\Lambda_\text{occ}\equiv \Lambda$. Equivalently we only show $G_\text{glob}=1$ modulo finite groups.} and hence follows from our previous discussion.

\paragraph{(9) No free parameter.} This conjecture states that the couplings in the effective Lagrangian $\mathscr{L}$ are either the v.e.v.\! of light fields (so they are non-trivial functions on the moduli $S$), or they are frozen to some very specific ``magic'' numerical value.
 In the UV completed QG these ``magic'' numerical values become the v.e.v.'s of heavy fields \cite{Cecotti:2018ufg}, and so are given by the isolated critical points of some high energy potential, which is also subjected to quite strong consistency requirements. 
Then any small perturbation away from the ``magical''  numerical values makes the effective theory inconsistent at the quantum level. 
\textbf{Fact \ref{tubearg}} gives a first illustration of this state of affairs. Saying that the pre-potential $\cf$ is purely cubic, is equivalent to saying that the $\cn=2$ Pauli couplings (or the $\cn=1$ heterotic Yukawa couplings) are field-independent numerical parameters. Then \textbf{Fact \ref{tubearg}} asserts that in a consistent theory such numerical Pauli/Yukawa couplings should be the cubic determinant form of a (possibly reducible) rank-3 real Jordan algebra whose classification is provided by the Freudenthal-Rozenfeld-Tits magic square \cite{magicsquare} (so the adjective ``magical'' is technically accurate for these couplings). We shall elaborate a bit more on these aspects in the next subsection.

\paragraph{(10) the weak gravity conjecture.}
This conjecture states that there must exists states for which the electromagnetic repulsion is larger than the gravitational attraction.  This means that the squared-mass should be less than the 
square of the electromagnetic charge both measured in intrinsic normalizations.
The invariant square of the electromagnetic charge
$\boldsymbol{q}$ is its Hodge squared-norm $Q(C\boldsymbol{q},\boldsymbol{q})$. 
The square of the susy central charge (which is the mass for a BPS state) is the Hodge norm of the
$(3,0)$ projection of $\boldsymbol{q}$, which is smaller by the Schwarz inequality.

\paragraph{(11) dS conjecture, AdS conjecture, \emph{etc.}} These conjectures \cite{Obied:2018sgi,Agrawal:2018own,Lust:2019zwm}\!\!\cite{Rev1,Rev2} refer to properties of the scalar potential $V(\phi)$. In the present context -- low-energy effective theories which are \emph{ungauged} $\cn=2$ supergravities -- 
the scalar potential is identically zero and all
these conjectures are trivially satisfied.

\subsection{More on ``no free parameter''}

Let us revisit the example in \textbf{(9)} of numeric (i.e.\! field-independent) Pauli/Yukawa couplings exploiting the fact that the polarized VHS's over a connected complex manifold $S$  form a semi-simple Tannakian
category over $\mathbb{Q}$ (see \textsc{Proposition} 2.16 in \cite{milne3}).
Restricting to the fiber over a point $s\in S$, this entails that a polarized Hodge structure
over the fixed $\mathbb{Q}$-space $V$,
specified by a Hodge decomposition
(say, of pure weight $n$)
\be\label{hhiosegwa}
V_\C\equiv V\otimes \C=\bigoplus_{p+q=n}H^{p,q}_s,
\ee
induces functorially Hodge structures on all its tensor spaces\footnote{\ Since $V$ is polarized $T^{k,l}\cong T^{k+l,0}$.} $T^{k,l}\equiv V^{\otimes k}\otimes (V^\vee)^{\otimes l}$ given by the Hodge decomposition
\cite{reva,revb,MT4} 
\be
T^{k,l}\otimes\C= \bigoplus_{p+q=(k-l)n} (T_s^{k,l})^{p,q}.
\ee

Let us specialize to a VHS underlying an $\cn=2$ \textsc{sugra}.
 Fix a point $s\in S$ in the scalars' manifold; its image
$p(s)\in\mathscr{D}^{(2)}_m$ specifies a particular Hodge decomposition as in eqn.\eqref{hhiosegwa} of the fixed $\mathbb{Q}$-space $V$. The Pauli/Yukawa coupling
at $s$ is\footnote{\ The Pauli/Yukawa is the cubic form which is the only invariant of the corresponding \emph{infinitesimal variation of Hodge structure}, see \cite{Gbook,bgrif}.}
\be
\partial_{X^i}\partial_{X^j}\partial_{X^k}\cf\Big|_s \in \left(\odot^3\mathrm{End}(H_s^{3,0},H_s^{2,1})\right)\otimes \big(H_s^{0,3}\big)^{\otimes 2}\subset 
(T_s^{5,3})^{3,3}.
\ee
More generally, we may consider other couplings defined as suitable invariant combinations of higher derivatives of $\cf$; such couplings will correspond to elements of some higher tensor space $T^{k,l}$
which have \emph{pure type} 
\be\label{reqwtu}
(p,q)=\left(\frac{3}{2}(k-l),\frac{3}{2}(k-l)\right).
\ee

In view of the VHS rigidity theorems,
saying that the Pauli/Yukawa coupling, or any higher tensor coupling $\lambda$, is a numerical parameter independent of the scalar v.e.v.\!\! $s$ is equivalent to saying that it is fixed by the monodromy $\Gamma$. Since $\lambda$ has pure type \eqref{reqwtu}, $\lambda$ is\footnote{\ Here we use the fact that the linear space of tensors $T^{\bullet,\bullet}$ invariant under $\Gamma$ is defined over $\mathbb{Q}$.} an element of the complexified space of \emph{Hodge tensors} \cite{reva,revb,MT4,periods}
\be
\lambda\in \mathsf{Hg}^{\bullet,\bullet}\otimes_\mathbb{Q}\mspace{-1mu}\C
\ee 
at the generic point of $S$. One shows \cite{reva,revb,MT4} that the Mumford-Tate group $\boldsymbol{M}$, eqn.\eqref{mufttate}, is precisely the subgroup of $G_m\equiv Sp(2m+2,\mathbb{Q})$ which fixes the elements of $\mathsf{Hg}^{\bullet,\bullet}$. 
 
In particular, when the dimension of the Hodge tensors of the appropriate type and symmetry is at most 1 (as it happens for the ``magical'' Pauli/Yukawa couplings) 
 such couplings -- if field-independent -- should be  \emph{integers} in some suitable normalization. They cannot be arbitrary integers, however, since the presence of a non-trivial Hodge tensor $\lambda$ implies 
strong restrictions on the group $P$ appearing in the structure theorem: $P$ should leave all Hodge tensor invariants, so the more numerical field-independent couplings there are, the smaller $P$ is. On the other hand, when $S$ has positive dimension (i.e.\! we are not in the \emph{rigid} case), the group $P$ cannot be too small since $\varpi$ is an embedding and hence
\be
\dim_\C P(\R)/H_P\geq \dim_\C S
\ee
with equality iff $S$ is locally symmetric and $\mathsf{Iso}(\widetilde{S})=P(\R)$.
 In particular, a \emph{generic} field-independent coupling $\lambda$ would imply $P$ trivial, and then $\dim_\C S=0$, leading to a contradiction in presence of light vector-multiplets. 

The fact that field-independent couplings require the existence of non-trivial Hodge tensors leads to a classification of their possible ``magical'' values in terms of the finite list of $\mathbb{Q}$-algebraic subgroups of $G_m\equiv Sp(2m+2,\mathbb{Q})$ which have Hodge representations \cite{reva,MT4} of the appropriate kind. This yields back our classification of the allowed numerical Yukawas in terms of determinant forms for rank-3 Jordan algebras. 

The situation looks very much in line with the arguments of  \cite{Heckman:2019bzm}.

\section{Functional equations for quantum-consistent $\cf$'s}\label{functeqns}

\emph{A priori,} giving a pre-potential $\cf$
only specifies (locally) 
 a covering period map $\widetilde{p}\colon \widetilde{S}\to \mathscr{D}^{(2)}_m$
which satisfies the Griffiths infinitesimal period relations \cite{GII,Gbook,periods}.
With some abuse of language, we shall say that \textit{the pre-potential $\cf$ belongs to the swampland} if there is no subgroup 
\be
\Gamma\subset \mathsf{Sym}(\widetilde{S})_\Z
\ee 
such that the quotient special K\"ahler geometry
$S\equiv \Gamma\backslash \widetilde{S}$
satisfies our swampland \textbf{Criterion \ref{tttqwaswam}}.
Said differently, in order $\cf$  \textbf{not} to belong to the swampland (i.e.\! to be quantum-consistent) there must be a monodromy group
$\Gamma$ such that
 the 
induced quotient period map
\be
p\colon S\equiv\Gamma\backslash \widetilde{S}\to \Gamma\backslash \mathscr{D}^{(2)}_m 
\ee 
satisfies the VHS structure theorem \eqref{strthemre}.
 In principle one may translate the swampland criterion
\eqref{strthemre} into a set of functional equations for  the  analytic function $\cf(X^I)$. While these equations do not look promising as an effective tool for explicit computations, they are conceptually relevant.
\medskip

Let us summarize our previous discussions of the properties of  a quantum-consistent covering geometry $\widetilde{S}$ in the following 
commutative diagram
\be\label{ddrigr}
\begin{gathered}
\xymatrix{\widetilde{S}\; \ar@<0.2ex>@/_2.2pc/[drrrr]_{\widetilde{z}}\ar@<1ex>@/^2.5pc/[rrrr]^{\widetilde{p}}\ar@{^{(}->}[rr]^(0.4){\varpi} && P(\R)/H_P\;\ar[drr]^{\sigma\widetilde{\iota}}\ar@{^{(}->}[rr]^(.6){\widetilde{\iota}} && \mathscr{D}^{(2)}_m\ar@{->>}[d]^(.4){\sigma}\\
&& &&\mathscr{D}^{(3)}_m}
\end{gathered}
\ee
The inclusion $\widetilde{\iota}$ is induced by an  
\emph{irreducible}, faithful, real, symplectic, Hodge representation \cite{reva,revb,MT4} $\varrho$ of the real Lie group $P(\R)$ which makes the following diagram to commute:
\be\label{hodgrep}
\begin{gathered}
\xymatrix{P(\R)\ar@{->>}[d]\ar[r]^{\varrho} & G_m(\R)\ar@{->>}[d]
\\
P(\R)/H_P \ar[r]^{\widetilde{\iota}}& \mathscr{D}^{(2)}_m\ar@{->>}[r]^{\varpi_3} &\mathscr{D}^{(3)}_m.}
\end{gathered}
\ee
By definition of the $\mathbb{Q}$-algebraic group $P\mspace{-2mu}$,
$\varrho$ is \textit{defined over $\mathbb{Q}$.}
The image of the composite map $\varpi_3\widetilde{\iota}$ is
a $P(\R)$-homogeneous complex manifold \be
\mathscr{D}_P^{(3)}=P(\R)/[U(1)\times J]\equiv P(\R)/H^{(3)}_P
\ee 
which may be easily constructed with the help of the weight-3 Hodge representation $\varrho$. When (as it looks to be the general case) $\varrho$ is a fundamental representation associated to  a node in the Vogan diagram of $P(\R)$,
$J$ is the real Lie group whose diagram is obtained by deleting from the Vogan graph of $P(\R)$ the $\varrho$ node.

\begin{exe}[Figure \ref{fig2}] For instance, when $P(\R)=E_{7(-25)}$ and $\varrho$ is the $\boldsymbol{56}$, so that $m=27$, we get the holomorphic domain
\be\label{pqw12lll}
\mathscr{D}_{E_{7(-25)}}^{(3)}= E_{7(-25)}\big/[U(1)\times E_6],
\ee
which in this particular example coincides with the MT domain $P(\R)/H_P$. Since the isotropy group in \eqref{pqw12lll} is compact,  $\mathscr{D}_{E_{7(-25)}}^{(3)}$ is a Hermitian symmetric space. By general theory we conclude that when $P(\R)=E_{7(-25)}$ the special K\"ahler manifold $S$
is locally isometric to the Cartan symmetric domain
\eqref{pqw12lll}. \end{exe}

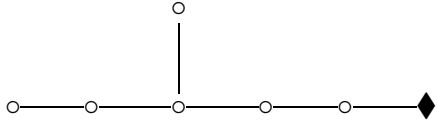
\begin{figure}
$$
\xymatrix{&& \circ\ar@{-}[d]\\\circ\mspace{-5mu}\ar@{-}[r]&\mspace{-5mu}\circ\mspace{-5mu}\ar@{-}[r]&\mspace{-5mu}\circ\mspace{-5mu}\ar@{-}[r]&\mspace{-5mu}\circ\mspace{-5mu}\ar@{-}[r]&\mspace{-5mu}\circ\mspace{-5mu}&\mspace{-5mu}\blacklozenge\ar@{-}[l]}
$$
\caption{\label{fig2} The diagram of the Klein geometries for the example $P(\R)=E_{7(-25)}$ and $\varrho=\boldsymbol{56}$.
The white/black color specifies the particular real form of the Lie group $E_7$, the lozenge stands for the highest weight of the representation $\varrho$. Since $\varrho$ corresponds to the black node, which  is an extension node, $H^{(3)}_{E_{7(-25)}}\subset E_{7(-25)}$
is a maximal compact subgroup, so $\mathscr{D}^{(3)}_{E_{7(-25)}}$,
and hence the special geometry, is locally symmetric. 
}
\end{figure} 

\medskip 

From the analysis in \S.\,\ref{preliminary} we know that the universal cover $\widetilde{S}$ is identified with a complex submanifold of the Griffiths domain $\mathscr{D}^{(2)}_m$, and that its \emph{naive} symmetry group $\mathsf{Sym}(\widetilde{S})$ is the subgroup of the automorphism group $Sp(2m+2,\R)$ of the ambient space $\mathscr{D}^{(2)}_m$ which fixes (set-wise) $\widetilde{S}$.
From diagram \eqref{ddrigr} we learn that $\widetilde{S}$ is completely contained in a \emph{single} orbit of the subgroup $P(\R)\subseteq Sp(2m+2,\R)$,  so that
\be\label{syminclusi}
\mathsf{Sym}(\widetilde{S})\subseteq P(\R),
\ee
and all naive symmetries arise from automorphisms of the smaller homogeneous ambient space
\be
 \mathscr{D}^{(2)}_P\overset{\rm def}{=}P(\R)/H_P.
 \ee 
From the diagram we also see that $\mathsf{Sym}(\widetilde{S})$ is, tautologically, a group of automorphisms of the contact manifold $\mathscr{D}_{m}^{(3)}$ (in facts of its subspace\footnote{\ Note that $\mathscr{D}_P^{(3)}$ is not a contact manifold in general, as the \textbf{Example} in figure \ref{fig2} shows.} $\mathscr{D}_P^{(3)}$) which fixes the Legendre submanifold
$\widetilde{L}=\widetilde{z}(\widetilde{S})$
whose generating function is the homogeneous pre-potential $\cf$ 
\be\label{whichformFF1}
\cf(X^0,X^i)= (X^0)^2\, F(X^i/X^0),\qquad z^i=X^i/X^0,\quad i=1,2,\cdots, m.
\ee
In particular,
$\widetilde{L}\subset\mathscr{D}_{m}^{(3)}$ should be invariant under
$\Gamma\subset Sp(2m+2,\R)$ acting as a group of automorphisms of the ambient contact domain.
Then, for each $(2m+2)\times (2m+2)$ matrix
\be\label{matgamma}
\gamma=\begin{pmatrix}A & B\\ C & D\end{pmatrix}\in \Gamma\subset Sp(2m+2,\mathbb{Z}),
\ee
we get a system of $(m+1)$ functional equations
for the holomorphic functions $\cf_I(X^K)$: 
\be\label{kqawert1}
\cf_I\mspace{1mu}\big(C^{KL}\cf_L+{D^K}_LX^L\big)={A_I}^J\mspace{1mu}\cf_J\big(X^K\big)+B_{IJ}\mspace{1mu}X^J.
\ee
The meaning of these equations is a bit subtle because  the function $\cf$ may be multi-valued.
In facts this is the generic case. 
The functional equations \eqref{kqawert1} refer to the \emph{global}
analytic continuation of $\cf$ to all of $\widetilde{L}$.

 In most applications, the putative pre-potential $\cf$ (or rather the function $F(z^i)$ in eqn.\eqref{whichformFF1}) has a local analytic expression as a series which converges only in some small domain $U\subset \C^{m}$. Then the  functional equations associated to elements of the subgroup $\Lambda_U\subset \Gamma$ which maps $U$ into itself may be used rather straightforwardly to constrain the terms in the series, but for most elements $\gamma\in \Gamma$ the functional equations \eqref{kqawert1} relate $U$ to far away regions of $\widetilde{L}$ where the  analytic expression of $\cf$ is not known.
Typically we do not even know if a global analytic continuation exists,
and when it exists, it is hard to establish whether its has the right branching properties.

The existence of the global analytic continuation of $\cf$ to the full $\widetilde{L}$  should be seen as a very subtle part of our swampland criterion.
As far as its uniqueness goes, there is some evidence that the correct statement is that the difference between two determinations of the global holomorphic function $\cf$
should be \emph{physically invisible,} that is, the values of all observables are independent of the choice of determination of $\cf$. This should also be seen as
 a fine point in our swampland criterion. 
 \medskip

We conclude 
\begin{cri}\label{critfunct} Let the holomorphic function $\cf(X^J)$,
homogeneous of degree 2, be the global analytic continuation of a local pre-potential which does \textbf{not}
belong to the swampland. Then $\cf$ satisfies the system of functional equations \eqref{kqawert1}
for all elements of a finitely-generated group $\Gamma$ such that
\be
\Gamma\subset Sp(2m+2,\Z),\qquad
\overline{\Gamma}^{\mspace{2mu}\mathbb{Q}}=P\subseteq Sp(2m+2,\mathbb{Q}).
\ee
%Moreover, if in addition we  fix the value of the period map at one base point $s_\ast \in \widetilde{S}$, the solution $\cf$ for a given $\Gamma$, (if it exists) is unique by the VHS rigidity theorem.
\end{cri}

Since $\Gamma$ is infinite this looks like a huge set of equations. However not all of them are independent:
it suffices to impose the ones corresponding to the finitely-many generators of $\Gamma$. 
For instance,
consider the generic situation  where 
$\overline{\Gamma}^{\mspace{2mu}\mathbb{Q}}\equiv Sp(2m+2,\mathbb{Q})$. If the monodromy is arithmetic, $\Gamma$ is a finite-index quotient of the maximal arithmetic subgroup $Sp(2m+2,\Z)$.
The simplest possibility is $\Gamma\equiv Sp(2m+2,\Z)$: this happens, say, for the universal family of complete intersections of two cubics in $\mathbb{P}^5$ \cite{beauv}.
Then we have one independent system of functional equations of the form \eqref{kqawert1} per generator of the Siegel modular group. Ref.\!\!\cite{gen} yields an economic set of generators
for $Sp(2m+2,\Z)$ given by (at most) 3 explicit matrices. We get a set of $3m+3$ functional equations for the $\cf_I$. 
However, typically there exist other elements $\gamma\in \Gamma$ (or even infinite subgroups $\Lambda\subset \Gamma$) which lead to simpler relations which yield elementary but useful constraints on  $\cf$.
\textbf{Criterion \ref{critfunct}} is a concrete (if unpractical)  realization of the physical idea that in a consistent quantum gravity the gauge group (i.e.\! $\Gamma$) determines the Lagrangian (i.e.\! $\cf$).

\section{Dicothomy}

\subsection{Proof of dicothomy}

In this section we prove \textbf{Fact \ref{cido}} (dicothomy), that is,

\begin{fact}\label{ddiiye} Let  $S\equiv \Gamma\backslash \widetilde{S}$ be a special K\"ahler manifold (with $\widetilde{S}$ smooth simply-connected). 
Assume, in addition, that $S$ satisfies our  \emph{swampland structural criterion:} namely, its underlying Griffiths period map 
\be
p\colon S\to \Gamma\backslash D
\ee
satisfies the VHS structure theorem \eqref{strthemre}. 
We write $P$ for the \emph{semi-simple} $\mathbb{Q}$-algebraic group $\overline{\Gamma}^{\mspace{2mu}\mathbb{Q}}$
and $\mathfrak{p}^\R\equiv\mathfrak{p}\otimes_\mathbb{Q}\mspace{-2mu}\R$ for the real Lie algebra of the group of its real points $P(\R)$. Let $\mathfrak{sym}(\widetilde{S})\subseteq \mathfrak{p}^\R$
 be the  real Lie algebra of the naive symmetry group
 $\mathsf{Sym}(\widetilde{S})$ of the covering special K\"ahler geometry $\widetilde{S}$.
Then
\be
\text{either}\ \ \mathfrak{sym}(\widetilde{S})=0\quad\text{or}\ \   
\mathfrak{sym}(\widetilde{S})=\mathfrak{iso}(\widetilde{S})=\mathfrak{p}^\R.
\ee 
In the second case $\widetilde{S}$ is isometric to the Hermitian symmetric space $P(\R)/K$ where 
$P(\R)$ is either  $SU(1,m)$ or one of the `magic' isometry groups in \textsc{Table} \ref{tableiso}.
\end{fact}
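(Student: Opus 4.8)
The plan is to study the naive symmetry algebra $\mathfrak{sym}(\widetilde{S})$ purely as a Lie subalgebra of $\mathfrak{p}^\R$, exploiting the two facts that the structure theorem has already handed us: that $P=\overline{\Gamma}^{\mspace{2mu}\mathbb{Q}}$ is \emph{semisimple} (so $\mathfrak{p}^\R=\bigoplus_{i=1}^\ell\mathfrak{m}_i$ splits into simple ideals corresponding to the factors $M_i(\R)$), and that $\Gamma$ is by definition Zariski-dense in $P$. After passing to a finite cover I may assume, as in the excerpt, that $\Gamma$ is neat and $\Gamma=\prod_i\Gamma_i$. First I would record the two inputs from \S\ref{preliminary} and \S\ref{functeqns}: by \eqref{syminclusi} one has $\mathsf{Sym}(\widetilde{S})\subseteq P(\R)$, hence $\mathfrak{sym}(\widetilde{S})\subseteq\mathfrak{p}^\R$; and since $\Gamma\subseteq\mathsf{Sym}(\widetilde{S})_\Z\subseteq\mathsf{Sym}(\widetilde{S})$, conjugation by $\Gamma$ maps the group $\mathsf{Sym}(\widetilde{S})$ (and its identity component) into itself, so the subspace $\mathfrak{sym}(\widetilde{S})$ is stable under $\mathrm{Ad}(\Gamma)$.

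The crucial reduction is then a Borel-density step. The stabilizer $\{g\in P(\R):\mathrm{Ad}(g)\,\mathfrak{sym}(\widetilde{S})=\mathfrak{sym}(\widetilde{S})\}$ is a real algebraic subgroup of $P(\R)$ containing $\Gamma$; because $\Gamma$ is Zariski-dense in $P$ over $\mathbb{Q}$, hence over $\R$, this stabilizer must be all of $P(\R)$. Therefore $\mathfrak{sym}(\widetilde{S})$ is $\mathrm{Ad}(P(\R))$-invariant, i.e.\ an \emph{ideal} of the semisimple algebra $\mathfrak{p}^\R$, and so is a sum $\bigoplus_{i\in I}\mathfrak{m}_i$ of simple factors. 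If $I=\varnothing$ we land in the first alternative $\mathfrak{sym}(\widetilde{S})=0$, and we are done; the content is to show that a nonempty $I$ forces $I=\{1,\dots,\ell\}$ together with symmetry of $\widetilde{S}$.

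So suppose $\mathfrak{sym}(\widetilde{S})\neq0$ and pick $i\in I$, so the connected group $M_i(\R)$ acts by holomorphic isometries preserving $\widetilde{S}$. Under the global-Torelli identification $\widetilde{S}\subseteq D_P=\prod_j M_j(\R)/H_j$ of Lemma \ref{lllemmaq}, the factor $M_i(\R)$ acts transitively on $D_i=M_i(\R)/H_i$ and trivially on the remaining factors; invariance of $\widetilde{S}$ then forces a product splitting $\widetilde{S}=D_i\times\widetilde{S}'$, where $\widetilde{S}'$ is the image of $\widetilde{S}$ under projection to $\prod_{j\neq i}D_j$ and is again special K\"ahler. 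Now either $\ell=1$, in which case $\widetilde{S}'$ is a point, $\widetilde{S}=D_P$, the map $\varpi$ is onto, and Lemma \ref{lllemmaq}(2) makes $\widetilde{S}$ locally symmetric; or $\ell\ge2$, in which case $\widetilde{S}$ is reducible and the Ferrara--van Proeyen classification invoked in Lemma \ref{lllemmaq} identifies it with $SL(2,\R)/U(1)\times SO(2,k)/[SO(2)\times SO(k)]$. In both situations $\widetilde{S}$ is Hermitian symmetric, so $\mathsf{Iso}(\widetilde{S})^{\circ}=P(\R)$ acts entirely by duality rotations and $\mathfrak{sym}(\widetilde{S})=\mathfrak{iso}(\widetilde{S})=\mathfrak{p}^\R$; the classification of symmetric special K\"ahler manifolds then pins $P(\R)$ to \textsc{Table} \ref{tableiso}, namely $SU(1,m)$ in the simple quadratic case and a `magic' group (including the reducible $SL(2,\R)\times SO(2,k)$) in the cubic case.

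I expect the main obstacle to be the passage in the second paragraph from $\mathrm{Ad}(\Gamma)$-invariance to being a genuine ideal of $\mathfrak{p}^\R$ — this is where Zariski density over $\R$ (a Borel-density-type statement) is essential and must be handled carefully, since $\mathfrak{sym}(\widetilde{S})$ is only \emph{a priori} a real subspace with no rational structure. The secondary difficulty is the last implication in the third paragraph: upgrading the mere \emph{reducibility} produced by the splitting into full \emph{symmetry} and the equality $\mathfrak{sym}=\mathfrak{iso}=\mathfrak{p}^\R$, i.e.\ excluding a proper nonzero ideal, which is exactly what the Ferrara--van Proeyen classification (together with the $\ell=1$ case of Lemma \ref{lllemmaq}) is used to rule out.
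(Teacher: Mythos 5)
Your proof is correct and follows essentially the same route as the paper: both arguments hinge on the fact that Zariski density of $\Gamma$ in the semisimple group $P$ promotes $\mathrm{Ad}(\Gamma)$-invariance of $\mathfrak{sym}(\widetilde{S})$ to $\mathrm{Ad}(P(\R))$-invariance, so that $\mathfrak{sym}(\widetilde{S})$ is an ideal of $\mathfrak{p}^\R$ and simplicity forces the dichotomy, after which \textbf{Lemma \ref{lllemmaq}} and the Ferrara--van Proeyen classification identify the nonzero case with the symmetric geometries of \textsc{Table} \ref{tableiso}. The only organizational difference is that the paper disposes of the reducible ($\ell\geq 2$) case at the outset and then runs the density argument with $P$ simple, phrasing it as a Schur-type statement about $\Gamma$-invariant tensors being defined over $\mathbb{Q}$, whereas you keep the product decomposition and treat reducibility at the end --- a cosmetic rearrangement of the same mechanism.
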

That is, either the naive symmetry group of $\mathsf{Sym}(\widetilde{S})\subseteq \mathsf{Iso}(\widetilde{S})$ is purely discrete, or $\widetilde{S}$ is a Hermitian symmetric space and hence, being special K\"ahler, is one of the spaces in \textsc{table} \ref{tableiso}. In this second case the moduli space $S$ is a Shimura variety. As already mentioned, \textbf{Fact \ref{ddiiye}} is similar in spirit  to the DG statement \textbf{Fact \ref{weakre}} (but more general).

\begin{proof}
We may assume $\widetilde{S}$ to be irreducible of positive dimension, since simply-connected reducible special K\"ahler manifolds have been classified (see \cite{ferrara}\!\!\cite{book}) and are  all symmetric spaces,  so the statement is trivially true in the reducible case.
Then the $\mathbb{Q}$-Zariski closure $\overline{\Gamma}^{\mspace{2mu}\mathbb{Q}}\equiv P$ of the monodromy group $\Gamma$
is an almost-simple $\mathbb{Q}$-algebraic group (cfr.\! \textbf{Lemma \ref{lllemmaq}}),
and its $\mathbb{Q}$-Lie algebra $\mathfrak{p}$ is non-zero and simple. The structure theorem yields factorizations
of the period map $p$ and of its lift $\widetilde{p}$ as in the commutative diagram:
\be\label{oooowq}
\begin{gathered}
\xymatrix{\widetilde{S}\ar@{->>}[d]\ar@/^1.6pc/[rr]^{\widetilde{p}}\ar[r]_{\widetilde{\varpi}\phantom{mmmm}}&  P(\R)/H_P\;\ar@{->>}[d] \ar@{^{(}->}[r]_{\phantom{mmn}\widetilde{\iota}} &\; \mathscr{D}_m^{(2)}\ar@{->>}[d]\\
S\ar@/_1.6pc/[rr]_{p}\ar[r]^{\varpi\phantom{mmmm}}& \Gamma\backslash P(\R)/H_P\; \ar@{^{(}->}[r]^{\phantom{mmn}\iota} &\;\Gamma\backslash \mathscr{D}_m^{(2)},} 
\end{gathered}
\ee
where the vertical double-headed arrows are canonical projections,
and $\iota$, $\widetilde{\iota}$ closed inclusions.
$\widetilde{\iota}$ is induced by the irreducible representation $\varrho$ as in the diagram \eqref{hodgrep}.
It is easy to see that when $\varrho$ is irreducible, but not absolutely irreducible, the special K\"ahler manifold $\widetilde{S}$ is symmetric, in facts a complex hyperbolic space $H\C^m\equiv SU(1,m)/U(m)$ \cite{book}. Then we may assume $\varrho$ to be absolutely irreducible without loss.
\smallskip

 The inclusion \eqref{syminclusi} of the Lie groups induces the inclusion of the corresponding (real) Lie algebras
 \be
\mathfrak{sym}(\widetilde{S})\subseteq \mathfrak{p}^\R.
 \ee
On the other hand, 
\be
\Gamma\subset\mathsf{Sym}(\widetilde{S})\cap P(\Z),
\ee 
so that the subalgebra $\mathfrak{sym}(\widetilde{S})$ is preserved by the adjoint action of $\Gamma$
\begin{equation}
x\in \mathfrak{sym}(\widetilde{S})\quad\Rightarrow\quad \gamma x\gamma^{-1}\in \mathfrak{sym}(\widetilde{S})\ \ \text{for all }\gamma\in \Gamma.
\end{equation}
Thus $\mathfrak{sym}(\widetilde{S})$ is an 
$\mathrm{ad}\,\Gamma$-invariant  real Lie subalgebra of the simple real Lie algebra $\mathfrak{p}^\R$. 
 \textbf{Fact \ref{ddiiye}} then follows from the following \textbf{Lemma}.
\end{proof}

\begin{lem} Let the $\mathbb{Q}$-algebraic group $P\equiv \overline{\Gamma}^{\mspace{2mu}\mathbb{Q}}$ be simple, and let $\mathfrak{p}$ be its
$\mathbb{Q}$-Lie algebra. Set $\mathfrak{p}^\R=\mathfrak{p}\otimes_\mathbb{Q}\mspace{-2mu}\R$ for the Lie algebra of $P(\R)$.
Suppose that $\mathfrak{s}\subseteq \mathfrak{p}^\R$ is an $\mathrm{ad}\,\Gamma$-invariant 
real Lie subalgebra. Then either $\mathfrak{s}=0$ or $\mathfrak{s}=\mathfrak{p}^\R$.
\end{lem}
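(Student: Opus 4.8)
The plan is to upgrade the discrete invariance of $\mathfrak{s}$ under $\mathrm{ad}\,\Gamma$ to invariance under the full real group $\mathrm{Ad}\,P(\R)$, deduce from this that $\mathfrak{s}$ is an ideal of $\mathfrak{p}^\R$, and then invoke the simplicity of $\mathfrak{p}^\R$ as a \emph{real} Lie algebra. The bridge from the discrete group $\Gamma$ to the Lie group $P(\R)$ is Zariski density. Since $\Gamma\subseteq P(\mathbb{Q})$ consists of $\mathbb{Q}$-rational points, the vanishing ideal of $\Gamma$ in the coordinate ring of $P$ is stable under $\mathrm{Gal}(\overline{\mathbb{Q}}/\mathbb{Q})$, hence generated by its $\mathbb{Q}$-rational elements; consequently the complex (and therefore the real) Zariski closure of $\Gamma$ descends to its $\mathbb{Q}$-Zariski closure, which by hypothesis is all of $P$. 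Thus $\Gamma$ is Zariski dense in $P(\R)$ for the real Zariski topology.

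First I would consider the stabilizer $H=\{\,g\in P(\R):\mathrm{Ad}(g)\,\mathfrak{s}=\mathfrak{s}\,\}$. The condition $\mathrm{Ad}(g)\,\mathfrak{s}=\mathfrak{s}$ is cut out by polynomial equations in the entries of $\mathrm{Ad}(g)$ (those expressing that $\mathrm{Ad}(g)$ and $\mathrm{Ad}(g)^{-1}$ carry a fixed basis of $\mathfrak{s}$ back into $\mathfrak{s}$), so $H$ is the preimage, under the algebraic morphism $\mathrm{Ad}\colon P\to GL(\mathfrak{p}^\R)$, of the real-algebraic stabilizer of the subspace $\mathfrak{s}$; in particular $H$ is a Zariski-closed subgroup of $P(\R)$. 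Since $\Gamma\subseteq H$ by hypothesis and $\Gamma$ is Zariski dense by the previous step, we get $H=P(\R)$. Differentiating the identity $\mathrm{Ad}(\exp(tX))\,\mathfrak{s}=\mathfrak{s}$ at $t=0$ for each $X\in\mathfrak{p}^\R$ then yields $[\mathfrak{p}^\R,\mathfrak{s}]\subseteq\mathfrak{s}$, i.e.\ $\mathfrak{s}$ is an ideal of $\mathfrak{p}^\R$.

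It then remains only to use that $\mathfrak{p}^\R$ is simple as a real Lie algebra. Here I would invoke the fact recorded in the discussion of the structure theorem, that for the factors entering $\overline{\Gamma}^{\mathbb{Q}}$ the real group $P(\R)\equiv M_i(\R)$ is a noncompact \emph{simple} real Lie group (equivalently, after reducing to an absolutely irreducible Hodge representation $\varrho$ as in the proof of \textbf{Fact \ref{ddiiye}}, the Lie algebra $\mathfrak{p}^\R$ does not split off a proper factor). A simple real Lie algebra has no ideals other than $0$ and itself, so $\mathfrak{s}=0$ or $\mathfrak{s}=\mathfrak{p}^\R$, which is the assertion.

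The main obstacle is precisely this passage from discrete to continuous invariance, and the subtle point buried inside it is that $\mathbb{Q}$-simplicity of $P$ does \emph{not} by itself suffice: by restriction of scalars a $\mathbb{Q}$-simple group may have $\mathfrak{p}^\R$ decompose as a product of several real simple factors, in which case each subproduct is an $\mathrm{ad}\,\Gamma$-invariant proper ideal and the conclusion fails. Hence the real-simplicity of $\mathfrak{p}^\R$ — supplied by the Hodge-theoretic constraints on the admissible real forms, not by $\mathbb{Q}$-simplicity — is an indispensable input, and I would state it explicitly rather than let it remain implicit in the phrase ``simple $\mathbb{Q}$-algebraic group''.
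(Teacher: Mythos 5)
Your proof is correct and follows the same core strategy as the paper: promote $\mathrm{ad}\,\Gamma$-invariance to $\mathrm{Ad}\,P(\R)$-invariance and then invoke simplicity of $\mathfrak{p}^\R$. The only difference is in implementation: you run the Zariski-density argument through the stabilizer of the subspace $\mathfrak{s}$ (a Zariski-closed subgroup containing the dense $\Gamma$, hence all of $P(\R)$) and then differentiate to see that $\mathfrak{s}$ is an ideal, whereas the paper encodes $\mathfrak{s}$ as a $\Gamma$-invariant tensor in $(\mathfrak{p}\otimes\mathfrak{p}^\vee)\otimes_\mathbb{Q}\R$, observes that the space of $\Gamma$-invariants is defined over $\mathbb{Q}$ and hence fixed by $P=\overline{\Gamma}^{\,\mathbb{Q}}$, and concludes by Schur's lemma for the adjoint module; these are two standard phrasings of the same mechanism and buy the same thing. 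Your closing caveat is well taken and applies to the paper's own proof as written: the lemma's hypothesis is $\mathbb{Q}$-simplicity of $P$, but the final step (in both arguments) genuinely needs $\mathfrak{p}^\R$ to be simple as a \emph{real} Lie algebra --- a $\mathbb{Q}$-simple group obtained by restriction of scalars would furnish proper $\mathrm{ad}\,\Gamma$-invariant ideals --- so the real-simplicity of $P(\R)$, which the paper supplies earlier from the Hodge-theoretic constraints on the factors $M_i$ and from the reduction to the absolutely irreducible case, should indeed be stated as an explicit hypothesis.
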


\begin{proof} The inclusion of $\R$-spaces $\mathfrak{s}\hookrightarrow\mathfrak{p}^\R$ defines an element of $\mathrm{End}_{\R[\Gamma]\textsf{-mod}}(\mathfrak{p}^\R)$,
that is, a $\Gamma$-invariant tensor
which belongs to the $\R$-space
\be
T_\R\overset{\rm def}{=} (\mathfrak{p}\otimes \mathfrak{p}^\vee)\mspace{-1mu}\otimes_\mathbb{Q}\mspace{-2mu}\R\subset (V\otimes V^\vee)^{\otimes 2}\mspace{-1mu}\otimes_\mathbb{Q}\mspace{-2mu}\R,
\ee where $V$ is the underlying $\mathbb{Q}$-space of the VHS, namely the representation space of the \emph{rational,} irreducible, faithful, symplectic Hodge representation $\varrho$, see
eqn.\eqref{hodgrep}.
Therefore $\mathfrak{p}\subset\mathrm{End}(V)$ \emph{via}
$\varrho$. 
Since $\Gamma\subset GL(V)$, the real vector subspace $T_\R^\Gamma\subset T_\R$ of vectors fixed by $\Gamma$ is defined over $\mathbb{Q}$, 
that is,
\be
T_\R^\Gamma=(\mathfrak{p}\otimes \mathfrak{p}^\vee)^\Gamma\mspace{-2mu}\otimes_\mathbb{Q}\mspace{-2mu}\R.
\ee
All tensors in $(\mathfrak{p}\otimes \mathfrak{p}^\vee)^\Gamma$ are also fixed by the $\mathbb{Q}$-algebraic group $P\equiv\overline{\Gamma}^{\mspace{2mu}\mathbb{Q}}$, hence by the group of its real points $P(\R)$. Then
$\mathfrak{s}\hookrightarrow\mathfrak{p}^\R$ is also an element of $\mathrm{End}_{P(\R)\textsf{-mod}}(\mathfrak{p}^\R)$, i.e.\! a $P(\R)$ intertwiner,
 hence zero or an isomorphism because the Lie group $P(\R)$ is simple.
\end{proof}

\subsection{Existence of quantum-consistent symmetric  geometries}

By dicothomy, the quantum-consistent 4d $\cn=2$ \textsc{sugra} are divided in two classes: (1) models whose 
scalars' manifold $S$  is locally symmetric;
and (2) models whose $S$  has no \emph{local} Killing vector at all.

We wish to show that the first class is not empty.
We distinguish two cases, namely quadratic versus cubic pre-potentials. We consider only geometries of positive dimension.

\subparagraph{Quantum-consistent quadratic special geometries.} 

There are known examples of geometric families of
Calabi-Yau 3-folds whose period map is 
given by a purely quadratic pre-potential 
\be
\cf=\frac{1}{2}\,A_{IJ}\,X^IX^J.
\ee 
By the structure theorem, this happens if and only if 
the $\boldsymbol{(2m+2)}$-dimensional representation of $\Gamma$ (or equivalently of the real Lie group $P(\R)$) is irreducible but not absolutely irreducible.
A mechanism which guarantees reducibility over $\C$ is described in ref.\!\cite{maxsy} (see, in particular, his \textbf{Theorem 2.5}). The idea is that there is a global automorphism $\alpha$ of the universal deformation space of the CY which acts on $H^{3,0}$ as multiplication by $\eta\neq \pm1$. Clearly $\alpha$ centralizes $\Gamma$, so that its representation becomes reducible over $\C$. 
In \cite{maxsy} there are 6 explicit examples of such CY 3-folds  
 with Hodge numbers
 \be
1\leq h^{2,1}\leq 6,\quad h^{1,1}=84-11\cdot h^{2,1}\quad \text{and}\quad h^{1,1}=61,\quad h^{2,1}=1.
\ee
Another source of examples with $h^{2,1}=1$ arises from the classification of Picard-Fuchs operators for 
Calabi-Yau manifolds with one-dimensional moduli spaces. It is clear that $\cf$ is quadratic precisely when the Picard-Fuchs operator has order 2 instead of the generic 4, so that over $\C$ there are only two linear independent periods so that $\cf_0$ and $\cf_1$ are complex linear combinations of $X^0$ and $X^1$. Seven explicit examples of this situation are described in ref.\!\cite{orphans}: they corresponds to the resolved double octics arising from  the Meyer arrangements of eight lines \cite{meyer}
of numbers 
\be
\bf 4,\ 13,\ 34,\ 72,\ 261,\ 264,\ 270.
\ee

\subparagraph{Quantum-consistent `magic' cubic pre-potentials.} The Calabi-Yau manifolds
which are finite quotients of either an Abelian variety $A$ or a product of a K3 surface with an elliptic curve \cite{OSaku}, have cubic pre-potentials corresponding to arithmetic quotients of the reducible symmetric special geometry
\be
SL(2,\R)/U(1)\times SO(2,m-1)/[SO(2)\times SO(m-1)].
\ee    
 These examples are discussed in detail in the next two sections. In addition to these ones, we have the consistent truncation of higher supergravities to the subsector invariant under a discrete group.
Such truncated models arise, for instance, in the compactifications of  Type II on a Kummer CY 3-fold of the form $T^6/\Sigma$,
 where $\Sigma$ acts non-freely, when we froze the twisted sector degrees of freedom to zero. The sub-sector 
 special geometries arising in this way are listed in the tables of ref.\!\!\cite{CFG}.

\section{Swampland criterion $\leadsto$  mirror symmetry}\label{mirror:sec}

In this section we study in slight more detail the asymptotic behavior  at infinity of a special K\"ahler geometry which satisfies our swampland criterion, and see how it automatically reproduces most of the properties we usually associate with mirror symmetry \cite{candelas,mirrorbook} without any need to assume that the special geometry arises from a pair $X$, $X^\vee$ of mirror Calabi-Yau 3-folds. The relevant properties just 
follow from quantum consistency of the 4d $\cn=2$ effective gravity theory
described by that special geometry.
For a deeper perspective on the asymptotic structure of a quantum-consistent pre-potential $\cf$, see Deligne \cite{deligneL}.
\medskip

The validity of the VHS structure theorem implies, in particular, that the asymptotic behavior of the period map at infinity is described by the (multi-variable)
nilpotent-orbit theorem \cite{schm}. 
As we approach infinity in $S$ along a certain direction,
we may end up with 
 nilpotent orbits of several different kinds: they are classified in terms of the degenerating mixed Hodge structures which may arise \cite{multisl2}.
Here we focus on just one particularly simple possibility, namely what happens when we approach a MUM (maximal unipotent monodromy) point \cite{morrison}.
The special situation we have in mind is as follows:
our special K\"ahler manifold has the form
$S=\overline{S}\setminus Y$, for $\overline{S}$ a compact complex space and $Y$ a \textsc{snc} divisor, and there is a small open set
$U\subset \overline{S}$ and local coordinates $x_i$ such that 
\be
U\cap S= U\setminus \{x_1x_2\cdots x_m=0\},\qquad \dim_\C S=m,
\ee
while the monodromy $\gamma_i\in\Gamma$ around the
$i$-th component $\{x_i=0\}$ of $Y$ is non-trivial.
We may assume with no loss that $\Gamma$ is neat, so that $\gamma_i=\exp(N_i)$ for some
nilpotent element $N_i\in \mathfrak{sp}(2m+2,\mathbb{Q})$. The $N_i$ commute between themselves, and 
 $(\sum_i\lambda_i N_i)^4=0$ for all choices of the coefficients $\lambda_i\in\R$ by the strong monodromy theorem.
The point 
\be
\big\{x_1=x_2=\cdots=x_m=0\big\}\in \overline{S}
\ee 
is a MUM point iff $(\sum_i\lambda_i N_i)^3\not\equiv 0$. 
 
Consider the commutative, unipotent, $\mathbb{Q}$-algebraic subgroup $J\subset \overline{\Gamma}^{\mspace{2mu}\mathbb{Q}}$
\be\label{whattttL}
J(\mathbb{Q})=\left\{\exp\mspace{-4mu}\left[\sum\nolimits_i q^i N_i\right]\;\colon\; (q^i)\in\mathbb{Q}^m\right\}\subset P(\mathbb{Q})\subseteq Sp(2m+2,\mathbb{Q}).
\ee
Its group of integral points $J(\Z)\equiv J(\mathbb{Q})\cap G_m(\Z)$ contains the finite-index subgroup  
\be
\Lambda\overset{\rm def}{=}\left\{\prod_i\gamma_i^{n_i}\ :\ n_i\in\Z\right\}\subset\Gamma.
\ee 
The group $\Lambda\subset Sp(2m+2,\Z)$
acts on the covering Legendre manifold $\widetilde{L}\equiv\widetilde{z}(\widetilde{S})$ through
automorphisms of the ambient space $\mathscr{D}^{(3)}_{m}$, i.e.\! by integral symplectic rotations of its homogeneous coordinates $(\cf_I,X^J)$.
 
The universal cover of $U\cap S\cong (\Delta^*)^m$ is
\be
\widetilde{U\cap S}=\ch^m,\quad\text{where}\quad \ch=\big\{w\in \C\;\big|\; \mathrm{Im}\,w>0\big\},
\ee
with cover map 
 \be
 w^i\mapsto x_i=\exp(2\pi i\mspace{1mu} w^i),\quad w^i\in\ch,\quad i=1,\dots,m.
 \ee 
The MUM point corresponds to the limit $\mathrm{Im}\,w^i\to\infty$ for all $i$. The $i$-th monodromy element 
$\gamma_i\in C$ acts on the covering local coordinates $w^j$
as $w^j\to w^j+{\delta^j}_{i}$:
this reflects in an integral symplectic action on the $X^I$.
Then, for a suitable choice of symplectic homogeneous coordinates $X^I$, the action of the $\gamma_i$'s  takes the form 
\be\label{somegammma2}
\gamma_i \colon \begin{cases}
X^j\mapsto X^j+ {f^{\mspace{2mu}j}}_i\mspace{3mu} X^0\\
X^0\mapsto X^0
\end{cases}\ \ \text{with}\quad z^{\mspace{1mu}j}\overset{\rm def}{=}\frac{X^j}{X^0}={f^{\mspace{2mu}j}}_i\, w^i,\qquad i,j=1,\dots,m,
\ee
where ${f^{\mspace{2mu}j}}_i$ is an integral matrix
such that 
\be\label{deetrq}
\det f=[J(\Z)\colon \Lambda].
\ee 
 In most examples $\det f=1$ and there is no need to distinguish between $z^i$'s and $w^i$'s. 

The subgroup $\Lambda\subset \Gamma$ maps small neighborhoods $U$ of the MUM point into themselves.  
Then we
may use directly the functional equations \eqref{kqawert1} with respect to elements of $\Lambda$ to constrain the asymptotic structure of $\cf$ at the MUM point. The requirement that each $\gamma_i$ acts by an integral $Sp(2m+2,\Z)$-rotations
on the full vector
$(\partial_{X^I}\cf, X^J)$ (cfr.\! eqn.\eqref{kqawert1})
requires that $\cf(X^I)$ satisfies for all $i=1,\dots,m$
a functional equation of the form (cfr.\! eqn.\eqref{changeF})
\be\label{pqw123}
\cf(X^0,X^j+{f^{\mspace{2mu}j}}_i\mspace{3mu} X^0)=
\cf(X^I)+\frac{1}{2} S^{(i)}_{IJ}\mspace{2mu} X^IX^J
\ee
for some symmetric integral matrix $S_{IJ}^{(i)}$.
This condition restricts the function $F(z^j)$ 
(see eqn.\eqref{whichformFF1}) to the general form\footnote{\ Here $Q=\{(n_1,\dots,n_m)\in \Z^m\;:\; n_i\geq0\}$ is the positive $m$-tant.} 
\be\label{whichformFFX}
F(z^i)= -\frac{d_{ijk}}{3!}\, z^iz^jz^k+ \frac{b_{ij}}{2}\, z^i z^j+ \frac{c_i}{24}\, z^i + \frac{\zeta(3)}{2(2\pi i)^3}\,e+\overbrace{\sum_{\vec n\in Q\setminus 0} \lambda_{\vec n}\;e^{2\pi i\mspace{1.5mu} \vec n\cdot\vec w}}^{\text{Fourier series}},
\ee
($z^{\mspace{1mu}j}\equiv {f^{\mspace{2mu}j}}_i\, w^i$)
where the coefficients $d_{ijk}$, $b_{ij}$ and $c_i$ are rational numbers which need to satisfy strict integrality
conditions  in order to satisfy \eqref{pqw123} with ${f^j}_i$ and $S^{(i)}_{IJ}$ integral. In particular,
\be
d_{ijk}\mspace{2mu}{f^k}_\ell\in\Z.
\ee
The constant coefficient in the \textsc{rhs} of \eqref{whichformFFX} has been written with a particular normalization for later convenience. From the nilpotent orbit theorem we see that 
\be
N_iN_jN_k=d_{ijk}\,M
\ee
for a certain non-zero matrix $M$, so MUM point $\Rightarrow$ the cubic form $d_{ijk}$ is non-zero.
Eqn.\eqref{whichformFFX} may be seen as a strengthened version of the nilpotent-orbit theorem. At this stage the coefficients $e$ and $\lambda_{\vec n}$ are still arbitrary complex numbers.

Now we appeal (a bit heuristically)
to some more \emph{arithmetic} Hodge theory.
The Fourier series in eqn.\eqref{whichformFFX} stands for a generic holomorphic function, defined in the neighborhood $U$ (possibly after restricting it), and periodic under the integral shifts $w^i\to w^i+{\delta^i}_j$, which yields the general solution to the functional equations \eqref{pqw123} associated to the Abelian subgroup $\Lambda\subset \Gamma$.
Such general function was written as a sum
over elements of the standard basis for the periodic holomorphic functions in $\ch^m$, namely $\{e^{2\pi i\mspace{1.5mu} \vec n\cdot\vec w}\}_{\vec n\in Q}$.
However this way of writing the solution is not intrinsic, and the expression may be made more illuminating  by going to a more natural basis of such functions $\{\boldsymbol{L}(\vec n\mspace{-2mu}\cdot\mspace{-2mu}\vec w)\}_{\vec n\in Q}$.

Which basis is the ``intrinsic'' one in the present context?

In the MUM limit $\mathrm{Im}\,w^i\to\infty$ the VHS
degenerates to a mixed Hodge structure (more precisely to a mixed Hodge-Tate structure \cite{deligneL}) of maximal weight 3,
so that the  natural candidate is the function $\boldsymbol{L}(w)$ which canonically represents the Tate Hodge structure $\mathbb{Q}(3)$, i.e.\! the natural function is the one which satisfies the appropriate Hodge-theoretic functional equations.
From, say, \textbf{Theorem 7.1} in the review \cite{polylog} we learn that such function $\boldsymbol{L}(w)$ is the trilogarithm
\be
\boldsymbol{L}(w)\overset{\rm def}{=}\frac{1}{(2\pi i)^3}\,\mathrm{Li}_3(e^{2\pi i w})=\sum_{k\geq 1}\frac{e^{2\pi i k\mspace{1mu} w}}{(2\pi i\,k)^3}.
\ee
Formally, all Fourier series of the form \eqref{whichformFFX} 
may be rewritten
as a series in the $\boldsymbol{L}(z)$.
Therefore we shall rewrite
\be\label{seriesrep}
\sum_{\vec n\in Q\setminus 0} \lambda_{\vec n}\;e^{2\pi i\mspace{1.5mu} \vec n\cdot\vec w}\to
\sum_{\vec n\in Q\setminus 0} N_{\vec n}\;\boldsymbol{L}(\vec n\mspace{-2mu}\cdot\mspace{-2mu}\vec w),
\ee
for the new (\emph{a priori}  complex) coefficients $N_{\vec n}$
\be
N_{\vec n}=(2\pi i)^3\!\sum_{d\,\mid\, \gcd\{\vec n\}} \mu(d)\,\frac{\lambda_{\vec n/d}}{d^3}\qquad \text{($\mu$ $\equiv$ M\"obius function).}
\ee 
The structural swampland criterion suggests that the intrinsic coefficients in the expansion of $\cf$ at a MUM point are the $N_{\vec n}$ not the $\lambda_{\vec n}$.
In particular the swampland philosophy suggests that the $N_{\vec n}$'s -- contrary to the $\lambda_{\vec n}$'s -- \textit{should have interesting arithmetic properties} in quantum-consistent $\cn=2$ supergravities with a MUM point.
\smallskip

Let us see how these arithmetic conditions arise.
While $\boldsymbol{L}(\vec n\mspace{-3mu}\cdot\mspace{-3mu}\vec w)$ may be defined to be univalued
in the neighborhood\footnote{\ $\overline{\widetilde{L}}$ stands for a suitable compactification of the covering  Legendre manifold $\widetilde{L}$ \cite{morrison}.} $U\subset \overline{\widetilde{L}}$ around the MUM point, its global analytic continuation is certain \emph{not} univalued. Indeed $\boldsymbol{L}(w)$ has an interesting monodromy  \cite{polylog} which is  the basic reason why this special function describes the relevant Hodge structure: this is the very  property which makes $\boldsymbol{L}(w)$ prominent  in the swampland program. 
From the discussion in the previous section, we know that the swampland condition refers to the global analytic continuation of the pre-potential $\cf$, not to its
local expression in a particular region of  $\overline{\widetilde{L}}$. Thus, to procede with the swampland program, we are forced to
understand the physical implications of the multivaluedness of $\boldsymbol{L}(\vec n\mspace{-1mu}\cdot\mspace{-1mu}\vec w)$.
\smallskip

Consistency of the swampland scenario we are proposing leads to the following principle:\vglue6pt

{\it In a quantum-consistent 4d $\cn=2$ supergravity, the ambiguity in the global definition of each term in the sum in the \textsc{rhs} of \eqref{seriesrep} should be \emph{physically invisible,}
that is, no physical observable should depend on the particular determination we choose. }\vglue6pt

Two determinations
of $\boldsymbol{L}(\vec n\mspace{-1mu}\cdot\mspace{-1mu}\vec w)$ differ by an integral multiple of $(\vec n\mspace{-1mu}\cdot\mspace{-1mu}\vec w)^2/2$. For, say, $f=1$, this reflects to an indeterminacy of $\cf$ of the form
\be
\cf(X^I)+\frac{1}{2}\text{(integer)} N_{\vec n}\, (n_i X^i)^2
\ee
and this is invisible precisely when it can be compensated by a $Sp(2m+2,\Z)$-rotation of the electro-magnetic frame. This requires the $N_{\vec n}$ to be integers.\footnote{\ More precisely, one has the slightly weaker condition that $\gcd\{{f_i}^kn_k {f_j}^ln_l\}\mspace{2mu}N_{\vec n}$ should be integral (${f_i}^j$ is the inverse of the matrix ${f^j}_i$). }

 The same token leads to the condition $e\in 2\mspace{1mu}\Z$.
Indeed the constant term in eqn.\eqref{whichformFFX} may be simply written as
\be
\frac{e}{2}\mspace{2mu}\boldsymbol{L}(0),
\ee 
so it may be absorbed in the sum by extending the summation index $\vec n$ to zero and setting
$N_{\vec n=0}\equiv e/2$. Applying to $N_{\vec n=0}$ the same integrality condition which holds for the $\vec n\not=0$ terms then yields $e/2\in\Z$.
Then \textbf{Criterion \ref{critfunct}}, which follows from our swampland condition, implies that around a MUM point with $f=1$ ($w^i\equiv z^i$)
\be\label{whichformFFY}
F(z^i)= -\frac{d_{ijk}}{3!}\, z^iz^jz^k+ b_{ij}\, z^i z^j+ \frac{c_i}{24}\, z^i + \overbrace{\frac{\zeta(3)}{2(2\pi i)^3}\,e}^{\text{``loop''}}+\overbrace{\sum_{\vec n\in Q\setminus 0} N_{\vec n}\;{\boldsymbol{L}(\mspace{1.5mu} \vec n\cdot\vec z})}^{\text{``instanton corrections''}},
\ee
with integral coefficients $d_{ijk}$, $b_{ij}$, $c_i$, $e/2$ and $N_{\vec n}$.
\medskip

The large-volume expansion of the pre-potential $\cf$ for type IIA compactified on a Calabi-Yau 3-fold $X$ has exactly the form \eqref{whichformFFY} where $d_{ijk}\in\Z$ is the triple intersection of an integral basis $\{\omega_i\}$ of harmonic (1,1) forms, $c_i=\int c_2\wedge \omega_i\in\Z$, while \be
e=\int_X c_3\equiv 2(h^{1,1}-h^{2,1})\in 2\mspace{2mu}\Z
\ee 
is the Euler characteristic of $X$ \cite{candelas}.
We dubbed 
 the last term in \eqref{whichformFFY} the ``instanton corrections'' since this term has this physical interpretation in this particular class of models; for the same reason we denoted the constant term as the ``loop'' ($\equiv$ perturbative) contribution. 
  In the Type IIA  large CY volume set-up, the coefficients $N_{\vec n}$ are (non-negative) integers because they count
  the rational curves on $X$ of class $\vec n\mspace{-1mu}\cdot\mspace{-1mu}\vec \omega$ \cite{candelas}. 
  
 The coefficient $c_i$ in eqn.\eqref{whichformFFY}
 drops out of the special K\"ahler metric $G_{i\bar j}$ so it is ambiguous at the differential-geometric level. However, in the full VHS its value (mod 24) is crucial to reproduce the correct $K$-theoretic quantization of electro-magnetic charges,
see e.g.\! the detailed analysis in \S.\, 4.1.2 of
\cite{piopio}. For simply-connected 3-CY one gets
$c_i=\int c_2\wedge \omega_i\in\Z$ (mod 24). 
 \medskip 
  
 We see that essentially all the general predictions that we may infer from the existence of an actual pair $X$, $X^\vee$ of mirror Calabi-Yau 3-folds 
 are valid in full generality for any
 \emph{abstract} 4d $\cn=2$ effective supergravity (with a $f=1$ MUM point at infinity) \emph{provided}
 it satisfies the structural swampland criterion.
 Morally speaking, quantum consistency of gravity {implies} mirror symmetry as a special case. 
 
 \medskip
 
 \subparagraph{The set $N\mspace{-4mu}E$ of non-empty instanton sectors.} There is one counter-intuitive point to be stressed. 
 At first sight only positive instanton-charges $\vec n$ may contribute
to the local sum \eqref{whichformFFY} in $U$:
  
 \be
 N\mspace{-4mu}Z_U \overset{\rm def}{=}\Big\{ \vec n\in \Z^m\;\colon\; N_{\vec n}\neq0\Big\}\subset Q
 \ee  
 In the geometric set-up of Type IIA on a 3-CY, this restriction to positive instanton-charges is obvious: only \emph{effective} curves -- whose volume is positive -- may possibly contribute in the large-volume limit.
Thus, naively, the only instanton-charge sectors which do contribute to $\cf$ are the ones with instanton charge in $C_M\subset \Z^m$, the Mori strict convex cone of  effective curves.

However the general story is subtler: in order to apply the swampland criterion, we should look at the covering  Legendre manifold $\widetilde{L}$ \emph{globally,} and this requires analytic continuation of $\cf$ outside the small domain $U$ in which the sum \eqref{whichformFFY} makes sense.

To make our point sharp, let us consider the example discussed in \S.\ref{functeqns} of a quantum consistent $\cn=2$ supergravity with $\Gamma=Sp(2m+2,\Z)$ which, in view of eqn.\eqref{deetrq}, implies $f=1$. In particular, $\Gamma$ contains the element
\be
\mathrm{diag}(+1,-1,\cdots,-1,+1,-1,\cdots, -1)\in Sp(2m+2,\Z),
\ee
which acts as $(X^0,X^i)\to (X^0,-X^i)$ that is $z^i\equiv X^i/X^0\to -z^i$. 
This  operation formally ``invertes the sign'' of the  instanton charges. 

Of course, the region
$\mathrm{Im}\, z^i\to -\infty$ is outside the domain of validity of the representation \eqref{whichformFFY} for the function $\cf$. We may however, use the functional equation of the function $\boldsymbol{L}(z)$
\be
\boldsymbol{L}(-z)=\boldsymbol{L}(z)+\frac{z^3}{3!}-\frac{z}{24},
\ee
to formally flip the sign of the topological charge for some primitive instanton from positive to negative. 
The price we pay is 
a redefinition of the polynomial part of $\cf$
which is perfectly compatible with the required  integrality properties of its coefficients.

In view of this situation, it looks natural to define the set 
$N\mspace{-4mu}E$ of non-empty instanton-charge sectors to be the {global} one, containing the instanton charges which virtually contribute in all of the asymptotic regions of $\widetilde{L}$ and in all determinations of the multivalued $\cf$.
In particular $N\mspace{-4mu}E$ should contain
instantons contributing to all asymptotic expansion
\be
\left\{ \vec n\in \Z^m\;\colon\;
\text{\begin{minipage}{190pt}the Fourier expansion of $\cf$ in \textbf{\emph{some}}
asymptotic region at $\infty$ of $\widetilde{S}$ contains $e^{2\pi i \vec n\cdot\vec w}$ with a non-zero coefficient\end{minipage}}\;\right\}\subset N\mspace{-4mu}E \subseteq \Z^m.
\ee

Let $\cn_\Gamma(J(\Z))$ be the normalizer in $\Gamma$ of the unipotent subgroup $J(\Z)$, and set 
\be
\Upsilon \overset{\rm def}{=} \cn_\Gamma(J(\Z))/J(\Z).
\ee
Note that $\Upsilon$ has the same formal structure as the would-be global symmetry group
\eqref{glosymgg}. $\Upsilon$ acts linearly on the $\vec z$, hence on the instanton-charges; morally speaking it plays the same role for the instanton-charges that $G_\text{glob}$ plays for the electro-magnetic ones. 
Therefore, \emph{to the very least} $N\mspace{-4mu}E$ should contain
\be\label{nnnwerq}
N\mspace{-4mu}E\supset \bigcup_{d=1}^\infty \bigcup_{\upsilon\in\Upsilon} d\,N\mspace{-4mu}Z_{\upsilon(U)},
\ee
where $d\,N\mspace{-4mu}Z_U$ correspond to $d$-fold cover instantons contributing to the local expansion in the asymptotic region $U$.
The physical expectation \cite{vvvaf} is that $N\mspace{-4mu}E$ is essentially the full lattice $\Z^m$. To establish this fact is our next task.

\subsection{Necessity of non-perturbative corrections}\label{jasqw12}

We note that when  there are no ``instanton corrections'' in eqn.\eqref{whichformFFY}, that is, when $N_{\vec n}=0$ for all $\vec n\neq0$,
the real shifts
\be
X^i\to X^i+y^i\, X^0,\quad y^i\in\R^m
\ee 
form a dimension-$m$, commutative, unipotent, real Lie subgroup $J(\R)\subset \mathsf{Sym}(\widetilde{S})$ of symmetries of the special geometry.
In particular, its Killing vectors
\be
X^0\partial_{X^i}\in \mathfrak{j}^\R\subset \mathfrak{sym}(\widetilde{S})\qquad i=1,2,\dots,m
\ee 
are non-zero. From \textbf{Fact \ref{ddiiye}} we conclude that a pre-potential of the form \eqref{whichformFFY} with $N_{\vec n}=0$ ($\vec n\neq0$)
 belongs to the swampland \emph{unless} its covering  space $\widetilde{S}$ is
 a Hermitian symmetric manifold, hence
 the tube domain $T(V)$ of a rank-3 symmetric convex cone $V$ ($\equiv$ the positive cone of a rank-3 Euclidean Jordan algebra), see the right-hand side of  \textsc{Table}  \ref{tableiso}.
 In this exceptional case $\cf$ is uniquely determined by  the corresponding Jordan algebra, modulo the trivial ambiguity
 due to possible different choices of integral-symplectic frames for the homogeneous coordinates of $\mathscr{D}_{m}^{(3)}$. For $T(V)$ symmetric we may choose the frame so that $6!\mspace{2mu}F(z^i)$ is a homogenous cubic form over $\Z$ equivalent over $\R$ to  the determinant in the Jordan algebra whose explicit expression is given in eqns.\eqref{det1} and \eqref{det2}.
 
 In particular $N_{\vec n}=0$ for all $\vec n\neq0$  implies that the constant term in \eqref{whichformFFY}  vanishes as well,
 that is, the ``Euler characteristic'' $e$ should be zero 
 \be\label{ezero}
\text{$N_{\vec n}=0$ for all $\vec n\neq0$}\quad\Longrightarrow\quad e=0.
 \ee
 Naively, $c_i=0$ also, but the coefficients $b_{ij}$,  $c_i$ are not uniquely defined since they depend on the choice of the duality frame. Indeed the change of the duality frame
 \be
 \begin{cases}\cf_I\to \cf_I+n_i(\delta_I^i\,X^0+\delta_I^0 X^i)\\
 X^I\to X^I\end{cases}\quad\Rightarrow\quad \cf\to\cf +n_i\, X^0X^i,\quad n_i\in \Z,
 \ee
 has the effect $c_i\to c_i+24\, n_i$ (while keeping fixed all other coefficients), so that,  in absence of instanton corrections one can assert \emph{at most} the weaker condition $c_i=0\bmod24$. The actual condition seems to be even weaker: 
 if the cohomology groups of $X$ have some $k$-torsion, 
 the actual congruence seems to be something like $k\,c_i=0\bmod24$. 
 \medskip
 
In facts, dicothomy yields more: if $\widetilde{S}$ is not symmetric, the quantum corrections should break the continuous unipotent symmetry
$J(\R)$ down to $\Lambda\subset J(\Z)\subset \Gamma$,
because, in this case, \emph{no} Killing vector
is allowed to survive quantization in a consistent theory. This condition  requires that the set $\vec n\in N\mspace{-4mu}Z_U$ of contributing instanton-charge vectors is big enough  
to generate the full lattice $\Z^m$.

 \begin{rem} Let us compare this result with the differential-geometric \textbf{Fact \ref{tubearg}} in  section \ref{dummies}. 
There we shown that a non-symmetric asymptotically-cubic special geometry -- which does \emph{not} belong to the swampland -- should receive \emph{some} quantum correction, perturbative or non-perturbative.
Here we see that we need \emph{non-perturbative} corrections, the perturbative ones being not enough to satisfy the swampland criterion. Moreover, the absence of non-perturbative correction in some exceptional cases \emph{implies} that the perturbative loop corrections vanish as well. We shall elaborate more on this below.
 \end{rem} 
 
 \subsection{Non-empty instanton-charge sectors}
 
 In the previous section we have seen that when $\widetilde{S}$ is not symmetric, the instanton-charges of the non-empty  sectors should generate the lattice $\Z^m$. However physically we expect \cite{vvvaf} a stronger condition, that is, that all instanton-charges $\vec n \in \Z^m$ correspond to a non-empty topological sector (in some asymptotic region of $S$).
 
 That infinitely many instanton-charges should be realized is already clear from the functional equations \eqref{kqawert1} in view of the fact that the $\mathbb{Q}$-Zariski closure of $\Gamma$ is semisimple, and hence the matrices \eqref{matgamma} of  some of its generators $\{\gamma_t\}$ necessarily have non zero left-bottom block, $C^{IJ}\neq0$,
 and the corresponding functional equation cannot be solved by any finite sum of the form 
 \eqref{whichformFFY}. Indeed, the structure of these equations strongly suggests that instantons of all charges will appear.

One may make the above heuristics a little more explicit (we take $f=1$ for simplicity).     
 Consider the subgroup $GL(m,\Z)\subset Sp(2m+2,\Z)$ given by the block-diagonal embedding
\be
GL(m,\Z)\ni \boldsymbol{A}\mapsto\text{Diag}\big(1,\boldsymbol{A},1,(\boldsymbol{A}^t)^{-1}\big)\in Sp(2m+2,\Z),
\ee
and let 
\be
\Xi\overset{\rm def}{=} \Gamma\cap GL(m,\Z)\subset Sp(2m+2,\Z).
\ee 
Clearly 
\be
\Xi\subseteq \cn_{\Gamma}(J(\Z))\quad\text{and}\quad \Xi\hookrightarrow \cn_\Gamma(J(\Z))/J(\Z)\equiv\Upsilon
\ee
then by eqn.\eqref{nnnwerq} 
\be\label{nnnwerq2}
 \bigcup_{d=1}^\infty \bigcup_{\xi\in\Xi} d\,N\mspace{-4mu}Z_{\xi(U)}\subset N\mspace{-4mu}E\subseteq \Z^m,
\ee
so to show that the set of non-empty instanton-charge sectors $N\mspace{-4mu}E$ is the full lattice $\Z^m$ it suffices to show that $\Xi$ is ``big enough''.
This clearly holds  when $\Gamma=Sp(2m+2,\Z)$
since in this case $\Xi = GL(n,\Z)\equiv\pm SL(n,\Z)$, by \S.\,\ref{jasqw12}. $N\mspace{-4mu}Z_{U}$ contains a \emph{primitive} element of the lattice $\Z^m$, while its $SL(n,\Z)$-orbit
consists of all primitive elements of 
$\Z^m$. Since $N\mspace{-4mu}E$
contains all multiples of its primitive elements, in this case $N\mspace{-4mu}E$ is the full lattice $\Z^m$.

The situation when $\Gamma\subset Sp(2m+2,\Z)$ is arithmetic reduces to the previous one.
We  have the commutative diagram (first row exact, $i$ and $\iota$ mono)
\be
\begin{gathered}
\xymatrix{1\ar[r]& \Gamma\ar[r]& Sp(2m+2,\Z)\ar[r]^{\phantom{mmm}\sigma}& \textsf{Finite}\ar[r] & 1\\
1 \ar[r]& \Xi\ar[u]^i \ar[r]& GL(m,\Z)\ar[u]^\iota \ar@/_1.5pc/[ru]_{\sigma\,\iota}}
\end{gathered}
\ee
and since $\Xi\equiv\mathrm{ker}\, \sigma\mspace{1mu}\iota$ we see that $\Xi$ is still $GL(m,\Z)$ modulo a finite group.

The case of a thin monodromy, is less clear, although 
the same argument will typically lead to a large set
$N\mspace{-4mu}E$ because 
in many respects thin subgroups of $Sp(2m+2,\Z)$ ``behave as they were the full group''. For instance:\footnote{\ There exists a more precise result, namely 
the \emph{super-}strong approximation of monodromy groups \cite{superstrong}.}

\begin{proper}[Strong approximation for  monodromy  \cite{MVW}\footnote{\ See also \textbf{Theorem A} in  \cite{strongapprox}}] $\Gamma\subset GL(n,\Z)$ a monodromy group consistent with the VHS structure theorem, that is, $\Gamma$ is finitely-generated and Zarinski-dense in $P(\mathbb{Q})\subset GL(n,\mathbb{Q})$, with $P$ a Zariski-connected, simply-connected, semi-simple $\mathbb{Q}$-group. Then
for \emph{almost all} prime $p$ the residue map
\be
\Gamma\xrightarrow{\ \pi_p\ } P(\Z/p\Z)\quad\text{is onto.}
\ee
\end{proper}

\subsection{Applications to Type IIA compactifications}
We apply the previous analysis to Type IIA compactifications:
 
 \begin{corl} $X$ a 3-CY with a mirror $X^\vee$. If the large-volume limit of the pre-potential for Type IIA on $X$ has no instanton correction, then
 \be
 c_3(X)=0,\quad c_2(X)=0\bmod 24,
 \ee
 and the quantum K\"ahler space $S$ is an arithmetic quotient of a Hermitian symmetric manifold of the ``magic'' type (in particular, $S$ is a Shimura variety).  
 \end{corl}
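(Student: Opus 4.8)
The plan is to reduce the statement to the results of Section \ref{jasqw12} through the mirror dictionary. First I would use the hypothesis that $X$ admits a mirror $X^\vee$: by mirror symmetry the quantum K\"ahler moduli space $S$ of $X$ is identified, as a special K\"ahler geometry, with the complex-structure moduli space of $X^\vee$, which carries a polarized VHS of genuine algebro-geometric origin. Hence the structure theorem --- equivalently \textbf{Criterion \ref{tttqwaswam}} --- applies to $S$, and the whole apparatus of Sections \ref{jqwaslll}--\ref{mirror:sec} becomes available. The large-volume limit of Type IIA on $X$ is precisely a MUM point of $S$, so near it the pre-potential takes the normal form \eqref{whichformFFY} with integral coefficients $d_{ijk},b_{ij},c_i,e/2,N_{\vec n}$, where by the standard large-volume dictionary \cite{candelas,KKV} one has $e=\int_X c_3=\chi(X)$, $c_i=\int_X c_2\wedge\omega_i$ for an integral basis $\{\omega_i\}$ of $H^{1,1}$, and the $N_{\vec n}$ are the genus-zero Gromov--Witten invariants.

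Next I would feed in the no-instanton hypothesis, $N_{\vec n}=0$ for all $\vec n\neq0$. By eqn.\eqref{ezero} this forces $e=0$; since $H^6(X,\Z)\cong\Z$ the vanishing of $\chi(X)=\int_X c_3$ is equivalent to $c_3(X)=0$ as a cohomology class. Simultaneously, the absence of instanton corrections turns the real axionic shifts $X^i\mapsto X^i+y^i X^0$ into a dimension-$m$ commutative unipotent subgroup $J(\R)\subset\mathsf{Sym}(\widetilde S)$ whose generators $X^0\partial_{X^i}$ are nonzero holomorphic Killing vectors; thus $\mathfrak{sym}(\widetilde S)\neq0$.

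Now I would invoke the dichotomy, \textbf{Fact \ref{ddiiye}}: since $\mathfrak{sym}(\widetilde S)\neq0$, the covering geometry $\widetilde S$ is Hermitian symmetric, with $P(\R)$ either $SU(1,m)$ or one of the `magic' groups of \textsc{Table \ref{tableiso}}. Because the MUM point forces $d_{ijk}\neq0$ (the cubic form is read off from $N_iN_jN_k=d_{ijk}M$ with $M\neq0$, as in Section \ref{mirror:sec}), $\widetilde S$ cannot be of the purely quadratic rank-1 type $SU(1,m)$, so it must be one of the cubic (`magic') tube domains $T(V)$ attached to a rank-3 Euclidean Jordan algebra. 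Since a locally symmetric special geometry has arithmetic monodromy (Section \ref{motSK}), $S=\Gamma\backslash\widetilde S$ is an arithmetic quotient, i.e. a Shimura variety of `magic' type. Finally, for such a symmetric domain $\cf$ equals the determinant form of the Jordan algebra up to an integral symplectic frame change, whose only effect on the linear coefficient is $c_i\mapsto c_i+24\,n_i$; as the determinant form carries no linear term this yields $c_i=0\bmod24$, that is $c_2(X)=0\bmod 24$.

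The main obstacle --- and the point demanding the most care --- is the first step: justifying rigorously that the mirror hypothesis delivers the structural criterion for $S$ (so that the normal form \eqref{whichformFFY} and \textbf{Fact \ref{ddiiye}} are legitimately applicable) together with the precise identification of the abstract coefficients $e$ and $c_i$ with the topological integrals $\int_X c_3$ and $\int_X c_2\wedge\omega_i$ in the correct \emph{integral} normalization. The congruence $c_2(X)=0\bmod24$ is in particular frame-dependent and, as flagged in Section \ref{jasqw12}, may have to be weakened to $k\,c_i=0\bmod24$ in the presence of $k$-torsion in $H^\ast(X,\Z)$; pinning down the exact integral statement is the delicate part, whereas the remainder is a direct assembly of eqn.\eqref{ezero} and \textbf{Fact \ref{ddiiye}}.
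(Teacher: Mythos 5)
Your proposal is correct and follows essentially the same route as the paper: the mirror hypothesis supplies the structure theorem for $S$, the MUM normal form \eqref{whichformFFY} with eqn.\eqref{ezero} gives $c_3(X)=0$, the unbroken axionic shift symmetry triggers the dichotomy of \textbf{Fact \ref{ddiiye}} forcing $\widetilde S$ to be a cubic (`magic') symmetric tube domain with arithmetic $\Gamma$, and the frame ambiguity $c_i\mapsto c_i+24\,n_i$ yields the congruence $c_2(X)=0\bmod 24$. Your explicit exclusion of the quadratic $SU(1,m)$ case via $N_iN_jN_k=d_{ijk}M\neq0$ at the MUM point makes precise a step the paper leaves implicit, and your caveat about torsion weakening the congruence matches the paper's own remark in \S.\,\ref{jasqw12}.
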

 
 We recall that a 3-CY has $c_2=0$ if and only if its Ricci flat K\"ahler metric is flat \cite{koba}.

\begin{exe} We check the \textbf{Corollary} in the two classes of known examples in which the pre-potential does not receives instanton corrections, that is, when \textit{(i)} $X$ is  a complex 3-torus, \textit{(ii)} $X$ is an elliptic curve times a K3, or \textit{(iii)} $X$ is a 
finite free quotient of \textit{(i)} or \textit{(ii)} (see next section for more details).
The 
pre-potential for all Calabi-Yau of type \textit{(iii)} was computed (in some convenient frame) in ref.\!\!\cite{type-K2} getting a purely cubic polynomial, in agreement with the prediction from \textbf{Fact \ref{ddiiye}} that there should exist an electro-magnetic frame with this property. Indeed, in these cases it is fairly obvious that the Weil-Petersson metric should be locally symmetric to
\be\label{pqwertt}
SL(2,\R)/U(1)\times SO(2,\rho-1)/[SO(2)\times SO(\rho-1)]
\ee 
($\rho\geq2$ being the Picard number)
and all such locally symmetric special geometry have a purely cubic pre-potential in a suitable frame.   
Note that the space \eqref{pqwertt} is a power $\ch^\rho$ of the upper half-plane $\ch$ if and only if $\rho=2$ or $3$. This observation is related to our next topic, the the Oguiso-Sakurai question.
In the next section we shall be mathematically more precise.
\end{exe}

\section{Answering the Oguiso-Sakurai question}
\label{sec:os}

In the rest of this note we adopt the notion of ``Calabi-Yau 3-fold'' which is natural in Algebraic Geometry.

\begin{defn} A Calabi-Yau 3-fold  (3-CY) $X$ is a compact K\"ahler manifold $X$ of dimension 3 with
\be
\ck_X\cong \co_X,\qquad H^1(X,\co_X)=0,
\ee
where $\ck_X$ is the canonical sheaf and $\co_X$ the structure sheaf. All 3-CY (in this sense) are smooth projective algebraic varieties over $\C$.
\end{defn}

Given this definition, there are two possibilities: either $\pi_1(X)$ is finite or it is infinite. When $|\pi_1(X)|<\infty$ the holonomy Lie algebra $\mathfrak{hol}(X)$ of the Ricci-flat metric is exactly
$\mathfrak{su}(3)$ \cite{beauv2}, and $X$ is a 3-CY in the \emph{strict} sense. 
\vglue8pt

A 3-CY with $|\pi_1(X)|=\infty$  has a finite unbranched cover which is either an Abelian variety $A$ or the product of an elliptic curve $E$ and a K3 surface $K$ \cite{beauv2}. 
A 3-CY of the form $A/\Sigma$ is said to be of \emph{A-type}, while one of the form $(E\times K)/\Sigma$ is called of \emph{K-type} \cite{OSaku,type-K1,type-K2}. In both cases $\Sigma$ is a finite group of automorphisms  acting freely.  
\vglue8pt

There are six deformation types of A-type 3-CY explicitly constructed in refs.\!\!\cite{OSaku,type-K1}. Their Ricci-flat K\"ahler metric is flat, so $\chi(X)=0$ and $c_2(X)=0$.  Indeed, a CY has $c_2=0$ if and only if its Ricci-flat metric is flat \cite{koba,OSaku,type-K1} so 
\be
c_2(X)=0\quad\Leftrightarrow\quad \text{$X$ is A-type}.
\ee
The Picard number $\rho\equiv h^{1,1}\equiv h^{2,1}$ of an A-type 3-CY is either 2 or $3$
\cite{OSaku}.
An A-type 3-CY $X$ contains no rational curve for obvious reasons.\footnote{\ \label{footfoot}In facts, let $\pi\colon A\to A/\Sigma\equiv X$ be the unbranched cover, and suppose (by absurd) that $C\subset X$ is a rational curve. Since $\pi$ is unbranched, $\pi^{-1}(C)$ is the disjoint union of connected curves $C_i$ in $A$,  each of which is an unbranched cover of $C\cong\mathbb{P}^1$. But any connected unbranched cover of $\mathbb{P}^1$ has genus zero by Gauss-Bonet. Hence $C_i$ is a rational curve in $A$, but this contradicts the fact that an Abelian variety does not contain any rational curve.}

There are eight deformation types of K-type 3-CY again explicitly
constructed in refs.\!\!\cite{OSaku,type-K1}. 
A K-type 3-CY $X$ has $\chi(X)=0$  in agreement with eqn.\eqref{ezero}.
Their possible Picard numbers 
 are \cite{OSaku,type-K1}
\be
\rho\equiv h^{1,1}=h^{2,1}=\text{$11,7,5,4$ or $3$.}
\ee
 A K-type 3-CY contains several rational curves, but all of them appear in one-parameter families parametrized by the curve $E$ (same argument as in footnote \ref{footfoot}), so the corresponding instanton corrections to the pre-potential $\cf$ vanish because of too many fermionic zero-modes; this fact may be stated  more intrinsically: one lifts the 2d \textsc{susy} $\sigma$-model to the finite-cover target $E\times K$ and uses the (4,4) non-renormalization theorems to show the absence of corrections.
\vglue8pt

In ref.\!\!\cite{OSaku} Oguiso and Sakurai raise the following

\begin{queos}
Is it true that all 3-CY without rational curves have Picard number $\rho=2$ or $3$?
\end{queos}

A stronger statement would be that a 3-CY $X$
has no rational curve if and only if it is A-type.
An even stronger result will be a positive answer to the following

\begin{que}\label{uuuquest} Is it true that the ``instanton corrections''
in eqn.\eqref{whichformFFY} for Type IIA on $X$ vanish if and only if $\pi_1(X)$ is infinite? \emph{That is, (in physical language):} is it true that the ``instanton corrections'' vanish if and only if they are forbidden by the non-renormalization theorem of a \textbf{higher} $(p,q)>(2,2)$
world-sheet supersymmetry?
\end{que}

In this section we answer \textbf{Question \ref{uuuquest}} in the positive assuming that $X$ has a mirror $X^\vee$ and Picard number $\rho\geq2$. The case of $\rho=1$ will be settled in the next section below using less rigorous physical arguments.
The results of the present section are mathematically rigorous (under the stated assumptions); indeed they follow directly from the VHS structure theorem \cite{reva,revb,MT4,periods} applied to the universal deformation of the mirror Calabi-Yau $X^\vee$ assumed to exist.
From a physical perspective, the hypothesis that $X$ has a mirror $X^\vee$ \textit{should be dropped:} indeed the only thing we use in the argument below is that the period map of the VHS which describes the quantum K\"ahler moduli satisfies the VHS structure theorem, 
as required by our swampland criterion.   

The positive answer to \textbf{Question \ref{uuuquest}} implies
\be
\text{\begin{minipage}{80pt}\centering no instanton corrections\end{minipage}}\ \Longleftrightarrow\ 
\text{\begin{minipage}{110pt}\centering higher 2d  \textsc{susy}\\ non-renormalization\end{minipage}}\ \Longrightarrow\ \text{\begin{minipage}{100pt}\centering no perturbative loop corrections\end{minipage}}
\ee
so that the absence of non-perturbative world-sheet correction also entails the absence of the perturbative ones, as we found in the previous section by the dichotomy argument. In particular, absence of rational curves implies $\chi(X)=0$.

\subsection{The mathematical argument}
 We consider the projective cubic hypersurface 
\be
W\subset P H^{2}(X),\qquad \rho\equiv \dim H^2(X),
\ee 
whose affine cone is
\begin{equation}\label{affcone}
C(W)=\Big\{D\in H^{2}(X)\;\Big|\; D^3=0\Big\}\subset H^{2}(X).
\end{equation}
Since the intersection numbers are integers, $C(W)$ (resp.\! $W$) is an  affine (resp.\! projective) algebraic hypersurface \emph{defined over $\mathbb{Q}$.} 
We recall the 
\begin{pro}[P.M.H. Wilson \cite{wilson1,wilson2}]\label{proWW} If $X$ has finite fundamental group and the projective hypersurface $W$ satisfies the condition that its rational points $W(\mathbb{Q})$ are dense in its real locus $W(\R)$, then there exists a rational curve on $X$. 
\end{pro}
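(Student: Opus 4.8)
The plan is to argue by contradiction: assume that $X$ carries no rational curve while $\pi_1(X)$ is finite, and derive a contradiction with the density hypothesis (which is the same as producing the sought rational curve). The first step is to invoke P.M.H.~Wilson's structure theorem for the Kähler cone $\mathcal{K}(X)\subset H^{1,1}(X,\R)$ of a Calabi--Yau threefold. Wilson shows that each codimension-one face of $\mathcal{K}(X)$ is of one of two kinds: a ``cubic'' face lying on the cone $C(W)=\{D^3=0\}$, or a ``non-cubic'' face cut out by a primitive contraction (divisorial, flopping, or of fibre type) whose exceptional locus is swept out by rational curves. Under the assumption that $X$ contains no rational curve the non-cubic faces are all absent, so $\mathcal{K}(X)$ is forced to be a connected component of the positive cone $\{D\in H^{1,1}(X,\R): D^3>0\}$, and in particular $\partial\mathcal{K}(X)\subseteq C(W)$.

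The second step uses density to extract a \emph{rational} nef class on this boundary. Projectivizing, $\partial\mathcal{K}(X)$ is a nonempty open subset of the real locus $W(\R)$ (both are real hypersurfaces of the same dimension in $\mathbb{P}H^2(X,\R)$). Since $W(\mathbb{Q})$ is dense in $W(\R)$, I would choose a rational point of $W$ lying in $\partial\mathcal{K}(X)$; clearing denominators yields an integral class $D\in\mathrm{Pic}(X)$ that is nef with $D^3=0$. Because $\{D^2=0\}\cap W$ is a proper subvariety of $W$, I may further take this point generic so that $D^2\neq 0$, i.e. the numerical dimension is $\nu(D)=2$.

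The third step is the geometric heart. Granting semiampleness of $D$, the system $|mD|$ for $m\gg 0$ defines a fibration $\phi\colon X\to Y$ with $\dim Y=\nu(D)=2$; since $\ck_X\cong\co_X$, the generic fibre is a smooth genus-one curve, so $\phi$ is an elliptic fibration. Such a fibration either has a singular fibre — and every singular fibre of an elliptic fibration contains a rational curve, giving the conclusion — or is everywhere smooth, in which case it is isotrivial up to étale base change and forces $\pi_1(X)$ to be infinite, contradicting $|\pi_1(X)|<\infty$. (If instead one is pushed into the case $\nu(D)=1$, the fibres are K3 or abelian surfaces; K3 surfaces always contain rational curves by the Bogomolov--Mumford theorem, while the abelian case again forces infinite $\pi_1$.) Either branch contradicts the hypotheses, so $X$ must contain a rational curve.

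The main obstacle I anticipate is precisely the semiampleness of $D$: since $D^3=0$ the class is nef but not big, so the Kawamata--Shokurov base-point-free theorem does not apply directly, and one must instead lean on abundance in dimension three together with Wilson's finer analysis of nef divisors on Calabi--Yau threefolds to guarantee that the boundary class is semiample with the asserted fibre structure. A secondary technical point is ensuring the chosen rational boundary class is generic enough to avoid the degenerate loci ($D=0$, $D^2=0$, or faces supporting small contractions); this is exactly the role of the density of $W(\mathbb{Q})$ in $W(\R)$, for without it the nef classes realizing a fibration could all be irrational, and no algebraic contraction — hence no guaranteed rational curve — would follow.
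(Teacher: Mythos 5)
The paper does not prove this \textbf{Proposition}: it is recalled verbatim from Wilson's work \cite{wilson1,wilson2} and used as a black box, so there is no in-paper argument to compare yours against. Judged on its own terms, your reconstruction follows Wilson's original strategy quite faithfully: (i) no rational curves plus Wilson's K\"ahler-cone theorem forces $\partial\mathcal{K}(X)\subseteq C(W)$, so $\mathcal{K}(X)$ is a full connected component of $\{D^3>0\}$; (ii) density of $W(\mathbb{Q})$ in $W(\R)$ plants an integral nef class $D$ with $D^3=0$ on that boundary; (iii) sections of $mD$ yield a fibration whose fibres either contain rational curves or force $|\pi_1(X)|=\infty$. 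Two points deserve care. First, a small imprecision: faces of the nef cone supported by fibre-type contractions have $D^3=0$ and therefore lie \emph{on} the cubic cone, so the correct dichotomy is ``faces on $C(W)$'' versus ``faces cut out by rational curves with $D\cdot C=0$''; your step (iii) absorbs the fibre-type case anyway, so the argument survives. Second, the gap you flag at semiampleness is the genuine crux, and it is worth noting that Wilson's own route partially sidesteps abundance: he uses Riemann--Roch, $\chi(\co_X(mD))=\tfrac{m}{12}\,D\cdot c_2$, together with Miyaoka's inequality $D\cdot c_2\geq 0$ on nef classes, to produce sections of $mD$ directly when $D\cdot c_2>0$, reserving the case $D\cdot c_2=0$ for the structure results on Calabi--Yau threefolds with vanishing $c_2$-pairing (which again force infinite fundamental group, cfr.\! the characterization $c_2=0\Leftrightarrow$ flat metric quoted in section 8). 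With those two external inputs cited explicitly, your outline is a correct account of why the statement holds; as written, it is a plausible sketch with the hard analytic/birational content delegated to the literature, which is appropriate for a result the paper itself only quotes.
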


\begin{rem} Note that the \textbf{Proposition} says nothing when $\rho\equiv \dim H^2(X)=1$, since in this case $W$ is not defined. 
\end{rem}

If $W$ contains a hyperplane it should be rational\footnote{\ For this claim and the list of properties in items (a),(b),(c) see  comment after \textbf{Lemma 4.2} in  ref.\!\cite{wilson2}.}, that is, if $W$ is absolutely reducible it is already reducible over $\mathbb{Q}$.
Then we remain with three possibilities:
\begin{itemize} 
\item[(a)] $W$ is an irreducible cubic defined over $\mathbb{Q}$;
\item[(b)] $W$ contains a hyperplane
and an irreducible quadric, both defined over $\mathbb{Q}$;
\item[(c)] $W$ consists of 3 hyperplanes defined over $\mathbb{Q}$, possibly counted with multiplicity.
This cannot happen for  $\rho>3$ \cite{wilson2}.
\end{itemize} 

The rational points are trivially dense along the irreducible components which are hyperplanes. Using general facts about quadratic and cubic integral forms representing zero, one shows:
\begin{lem}[P.M.H. Wilson \cite{wilson2}]\label{jaswelem} The rational points are dense in the real locus:
\begin{itemize} 
\item[\bf(1)] In case {\rm(b)} when $\rho>5$ (Meyer's theorem);
\item[\bf(2)] If $W$ is irreducible, $\rho>5$,
and $W$ contains a rational linear space of dimension 3;
\item[\bf(3)] If $W$ is irreducible and $\rho>19$.
\end{itemize}
\end{lem}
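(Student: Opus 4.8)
The plan is to reduce the statement, in all three cases, to a single principle: the $\mathbb{Q}$-points are dense in the real locus of any variety that admits a dominant rational parametrization, defined over $\mathbb{Q}$, by a projective space. Indeed $\mathbb{Q}$-points are dense in $\mathbb{P}^k(\R)$, and a dominant $\mathbb{Q}$-rational map carries a dense set to a dense subset of the real points of its image, away from the indeterminacy locus. Since density may be checked one $\mathbb{Q}$-irreducible component at a time, and the $\mathbb{Q}$-irreducible decomposition of $W$ is exactly the trichotomy (a)/(b)/(c), it suffices to treat each component. On a component that is a rational hyperplane the claim is trivial. For part (1) the only non-linear $\mathbb{Q}$-component is the irreducible quadric $Q=\{q=0\}$ cut out by an integral quadratic form $q$ in the $\rho$ homogeneous coordinates. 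First I would note that $q$ is \emph{indefinite}: its real zero set is a non-empty real hypersurface (a piece of $W(\R)$), which forces mixed signature. Meyer's theorem, the indefinite case of Hasse--Minkowski, then applies as soon as the number of variables is at least $5$, which $\rho>5$ guarantees, and produces a non-trivial $p\in Q(\mathbb{Q})$. Stereographic projection from $p$ gives a birational equivalence over $\mathbb{Q}$ between $Q$ and $\mathbb{P}^{\rho-2}$, so $Q$ is $\mathbb{Q}$-rational and $Q(\mathbb{Q})$ is dense in $Q(\R)$.

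For part (2) I would use the rational $3$-plane $\Pi\cong\mathbb{P}^3\subseteq W$ to build a $\mathbb{Q}$-unirational parametrization of the irreducible cubic $W$. Projecting $W$ from $\Pi$ exhibits $W\setminus\Pi$ as a quadric bundle over $\mathbb{P}^{\rho-5}$: a generic linear $\mathbb{P}^4$ containing $\Pi$ meets the cubic in $\Pi$ together with a residual quadric threefold $Q_y\subset\mathbb{P}^4$, and the incidence with $\Pi$ supplies a rational structure on the fibres. Running the classical ``lines through $\Pi$'' unirationality construction for cubics (Predonzan, Morin; cf.\ Koll\'ar), which is defined over the ground field, yields a dominant rational map over $\mathbb{Q}$ from a projective bundle over $\Pi$ onto $W$; the dimension count required for dominance is precisely $\rho>5$. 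Since $\Pi(\mathbb{Q})$ is dense, it follows that $W(\mathbb{Q})$ is dense in $W(\R)$.

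For part (3) the goal is to manufacture, over $\mathbb{Q}$, a rational linear $3$-plane lying on the irreducible cubic $W$, thereby reducing to part (2). The condition that a $\mathbb{P}^3$ lie on a cubic hypersurface is the vanishing of the restricted cubic form, i.e.\ of its $\binom{6}{3}=20$ degree-$3$ coefficients; this is where the threshold $\rho>19$ enters. Over $\overline{\mathbb{Q}}$ the Fano scheme of such $3$-planes is non-empty by the na\"ive dimension count; the \emph{arithmetic} statement that a $\mathbb{Q}$-rational $3$-plane exists once $\rho>19$ is supplied by circle-method results on rational linear subspaces of cubic forms in many variables (Birch-type estimates for the associated system of cubic equations). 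With such a $\Pi$ in hand, part (2) applies verbatim.

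The hard part is part (3): the passage from the easy geometric non-emptiness of the Fano scheme of $3$-planes over $\overline{\mathbb{Q}}$ to the genuinely arithmetic existence of a \emph{rational} $3$-plane on $W$. This is an analytic number-theory input (the circle method for systems of cubic forms), and the precise threshold $\rho>19$ is dictated by the $20$ cubic conditions encoding ``$\Pi\subseteq W$''. Two subsidiary points must also be verified, though they are routine: in part (2) that the unirational parametrization is genuinely defined over $\mathbb{Q}$ and dominant, and in part (1) that the stereographic projection is a $\mathbb{Q}$-morphism on a dense open set so that density transfers correctly.
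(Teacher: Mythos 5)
This lemma is not proved in the paper at all: it is imported verbatim from Wilson \cite{wilson2}, with the one--line gloss that it rests on ``general facts about quadratic and cubic integral forms representing zero''. Measured against the standard argument behind that citation, your part (1) is correct (Meyer's theorem gives a nontrivial rational zero of the indefinite five-or-more-variable quadratic form, and stereographic projection from it makes the quadric $\mathbb{Q}$-rational). Your part (2) has the right geometric skeleton --- the pencil of $\mathbb{P}^4$'s through the rational $3$-plane $\Pi$ exhibits $W$ as swept out by residual quadric threefolds $Q_y$ over the base $\mathbb{P}^{\rho-5}$ --- but you misplace the arithmetic input. The rational points on a fibre $Q_y$ over a rational $y$ come from applying Meyer's theorem directly to the five-variable form defining $Q_y\subset\mathbb{P}^4$ (this is exactly why the threshold is $\rho>5$), \emph{not} from ``the incidence with $\Pi$'': the intersection $Q_y\cap\Pi$ is a quadric surface in $\mathbb{P}^3$, a four-variable form, which can be indefinite and still fail to represent zero over $\mathbb{Q}$ (e.g.\ $x^2+y^2-3z^2-3w^2$). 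Once each fibre is handled by Meyer, density follows by sweeping; the detour through a $\mathbb{Q}$-unirational parametrization is both unnecessary and dangerous, because your ``single principle'' is false as stated: a dominant rational map defined over $\mathbb{Q}$ gives $\mathbb{Q}$-points dense only in the \emph{real image} of the map, which for a non-birational parametrization may omit whole regions of $W(\R)$ (compare $x\mapsto x^2$ on the affine line). For quadrics the parametrization is birational, so this is harmless; for the cubic it is not the ``routine'' point you claim.

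The genuine gap is part (3). There is no circle-method theorem producing a rational $3$-plane on an arbitrary integral cubic form in $20$ variables: the known thresholds for rational linear spaces on cubic forms (Birch, Lewis, Dietmann--Wooley) are vastly larger than $20$ even for rational \emph{lines}, and your count of $\binom{6}{3}=20$ cubic conditions only shows that the Fano scheme of $3$-planes is non-empty over $\overline{\mathbb{Q}}$ --- which already holds for $\rho\geq 9$ and carries no arithmetic content. The threshold $\rho>19$ has a different origin, namely Davenport's $h$-invariant dichotomy for cubic forms: write $d=\sum_{i=1}^{h}L_iQ_i$ with $L_i$ linear, $Q_i$ quadratic, and $h$ minimal. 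If $h\leq\rho-4$, the rational linear space $\{L_1=\cdots=L_h=0\}$ has projective dimension at least $3$ and lies on $W$, so one reduces to part (2). If instead $h\geq\rho-3>16$, Davenport's circle-method analysis of cubic forms in sixteen variables applies directly and yields an asymptotic count of integer zeros in arbitrary real cones, i.e.\ density of the rational zeros in the real locus, with no $3$-plane ever being produced. Your version retains only the first branch of this dichotomy and tries to force it through with an arithmetic existence theorem that does not exist at this number of variables; as written, part (3) does not go through.
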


The swampland structural  criterion sets severe constraints on the possible integral cubic forms $D^3$
which may arise from the triple intersection of divisors in a 3-CY without rational curves.
Then, in view of \textbf{Proposition \ref{proWW}}, to answer the  Oguiso--Sakurai question we need to study the rational points of only these very \emph{specific}  cubic hypersurfaces.
\medskip

Below we show the following:

\begin{cla}\label{pq123} $X$ a 3-CY (with mirror and $\rho\geq2$) without rational curves or, more generally, such that its A-model TFT has no instanton corrections $\Rightarrow$ the rational points $W(\mathbb{Q})$ are dense in the real locus $W(\R)$.
\end{cla}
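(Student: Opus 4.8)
The plan is to combine the swampland structural criterion — which collapses the possible cubic forms $D^3$ to a short explicit list — with a single arithmetic input, namely $\mathbb{Q}$-isotropy of the monodromy group, that makes Wilson's density criterion \textbf{Lemma \ref{jaswelem}} applicable uniformly across all surviving cases. First I would reduce the cubic form. By mirror symmetry the quantum Kähler moduli of $X$ coincide with the complex-structure VHS of $X^\vee$, so the structural criterion applies, and the hypothesis that the A-model has no instanton corrections forces (Section 7, via the dichotomy \textbf{Fact \ref{ddiiye}}) the covering geometry $\widetilde S$ to be Hermitian symmetric with a purely cubic pre-potential, whose integral cubic form $d_{ijk}$ is, in a suitable frame, the norm form of a rank-3 real Jordan algebra $\mathfrak J$ defined over $\mathbb{Q}$. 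Reading off \textsc{Table \ref{tableiso}} and \textsc{Table \ref{octionions}}, this places $W=\{D^3=0\}$ in exactly one of Wilson's three classes: (c) the split algebra $\mathbb{Q}^3$, where $W$ is a union of three rational hyperplanes; (b) a scalar factor times a spin factor, where $W=\{x_0=0\}\cup\{Q(\vec x)=0\}$ is a rational hyperplane together with a rational quadric of signature $(1,\rho-2)$; or (a) the simple cases $\mathsf{Her}_3(\mathbb{F})$ with $\mathbb F=\mathbb R,\mathbb C,\mathbb H,\mathbb O$ (so $\rho=6,9,15,27$), where $W$ is an irreducible determinantal cubic.

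The decisive step is to feed in non-compactness. Since $\rho\ge 2$ we have $\dim_{\mathbb C}S=m\ge 1$, and under the hypothesis the large-volume limit is a MUM point, so the local monodromies $\gamma_i=\exp N_i\in\Gamma\subset P(\mathbb Z)$ are non-trivial unipotent elements. A $\mathbb{Q}$-anisotropic semisimple group contains no non-trivial unipotent element, hence their presence forces $P=\overline{\Gamma}^{\,\mathbb{Q}}$ to be $\mathbb{Q}$-isotropic, i.e.\ of positive $\mathbb{Q}$-rank (this holds equally whether the monodromy is arithmetic or thin, cf.\ \textbf{Lemma \ref{lllemmaq}}). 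In case (b) the relevant factor is $SO(Q)$, whose $\mathbb{Q}$-rank is the Witt index of $Q$; isotropy therefore says $Q$ represents zero over $\mathbb{Q}$, so $\{Q=0\}$ carries a rational point and, by stereographic projection from it, is $\mathbb{Q}$-rational — whence $W(\mathbb{Q})$ is dense in $W(\mathbb{R})$. This is exactly what rescues the low-Picard-number cases $\rho=3,4,5$ (and $\rho=2$, where $\mathfrak J$ is scalar-plus-one-dimensional-spin and $W$ is a pair of rational points), which fall below the range $\rho>5$ of \textbf{Lemma \ref{jaswelem}(1)} and where an $\mathbb{R}$-isotropic but $\mathbb{Q}$-anisotropic $Q$ would otherwise destroy density.

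For the simple cubic cases (a) the group $P$ is of type $C_3,A_5,D_6,E_7$ and is again $\mathbb{Q}$-isotropic, so the determinantal hypersurface $W$ is $\mathbb{Q}$-unirational: the rank-$\le 2$ locus is swept out by the rational family of ``kernel'' linear subspaces, each defined over $\mathbb{Q}$, and a dominant $\mathbb{Q}$-morphism from projective space gives density of $W(\mathbb{Q})$ in $W(\mathbb{R})$. Alternatively one verifies the hypotheses of \textbf{Lemma \ref{jaswelem}} directly: for $\rho=27$ part (3) applies ($\rho>19$); for $\rho=9,15$ part (2) applies, exhibiting the rational $\mathbb{P}^3\subset W$ cut out by $\{x_3=z_{13}=z_{23}=0\}$, of affine dimension $2+\dim_{\mathbb R}\mathbb F\ge 4$; and $\rho=6$ ($\mathsf{Her}_3(\mathbb R)$, with $P=Sp(6,\mathbb Q)$ always $\mathbb{Q}$-split) is handled by the unirationality argument, since only a rational $\mathbb{P}^2$ sits inside $W$ there and part (2) is unavailable.

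I expect the main obstacle to be precisely the small-$\rho$ quadric in case (b): a priori an indefinite integral binary or ternary form can be $\mathbb{R}$-isotropic yet $\mathbb{Q}$-anisotropic, in which case $W(\mathbb{Q})$ would genuinely fail to be dense in $W(\mathbb{R})$, so the entire statement stands or falls on the identification ``swampland non-compactness $\Rightarrow$ unipotent monodromy $\Rightarrow$ $\mathbb{Q}$-isotropy of $P$ $\Rightarrow$ $Q$ isotropic over $\mathbb{Q}$.'' Making this chain watertight — the no-unipotents property of anisotropic $\mathbb{Q}$-groups, and checking that the MUM degeneration really survives under the stated hypotheses — is the crux; once it is in place, the remaining density statements are routine (Hasse–Minkowski and rationality of isotropic quadrics for (b), unirationality of determinantal varieties for (a)).
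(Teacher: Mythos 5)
Your overall architecture coincides with the paper's: reduce via the structural criterion to the statement that $W$ is cut out by the norm form of a rank-3 Jordan algebra, run Wilson's trichotomy, and supply an arithmetic non-compactness input to handle the cases his \textbf{Lemma \ref{jaswelem}} leaves open. Your treatment of the quadric case is essentially the paper's: the paper invokes the Godement criterion (finite-volume non-compact arithmetic quotient $\Rightarrow$ $\mathbb{Q}$-rank $\geq 1$ $\Rightarrow$ rational point on the light-cone), and your route through the unipotent MUM monodromies is the same criterion read in the opposite direction. So far so good, and you are right that this is what rescues $\rho=4$.

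The genuine gap is in the cubic cases, which you dismiss as ``routine'' but which are the hard core of the paper's proof. From ``$P$ is $\mathbb{Q}$-isotropic'' you conclude that the determinantal hypersurface is $\mathbb{Q}$-unirational with kernel subspaces defined over $\mathbb{Q}$, and for $\rho=9,15$ you exhibit the linear space $\{x_3=z_{13}=z_{23}=0\}$ as \emph{rational}. But the rational structure on $d_{ijk}$ is not a priori the obvious determinant form: the real equivalence class of $d$ contains many inequivalent $\mathbb{Q}$-forms (the paper's ``crucial caveat'' in \S.\,3.3), and whether your coordinate subspace, or any kernel subspace, is defined over $\mathbb{Q}$ depends on which $\mathbb{Q}$-form of $L_{\rm cub}$ quantum gravity has selected. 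The paper resolves this by classifying the possible $\mathbb{Q}$-algebraic groups underlying $SL(3,\mathbb{F})$: it uses the equality $\mathbb{Q}\text{-rank}=\mathbb{R}\text{-rank}$ from eqn.\eqref{rrrianks} (equal to $2$ here, strictly stronger than the isotropy $\geq 1$ you derive — e.g.\ a $\mathbb{Q}$-rank-one outer form of $SL(3,\R)$ over a real quadratic field survives your test), or equivalently the structure of the rational representation on $\mathbb{Q}^6$, to conclude $L_{\rm cub}(\mathbb{Q})=SL(3,\mathbb{D}_\mathbb{F})$ with $\mathbb{D}_\mathbb{F}$ a $\mathbb{Q}$-division algebra carrying a positive Rosati involution in Shimura's classification. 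Only then does property \eqref{ponnbv0} (diagonal entries of $\mathsf{Her}_3(\mathbb{D}_\mathbb{F})$ are rational) produce the rational rank-$\leq 2$ points $\mathrm{diag}(1,\pm1,0)$, whose $G_\alpha(\mathbb{Q})$-orbit closures cover $C_{\rm cub}(\R)$ — the paper's mechanism is orbit density of $G_\alpha(\mathbb{Q})$ in $G_\alpha(\R)$ applied to these base points, not unirationality. Without this identification of the $\mathbb{Q}$-form your argument for $\rho=6,9,15$ does not close; you have also misplaced the crux, which lies here rather than in the small-$\rho$ quadrics.
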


Together with \textbf{Proposition \ref{proWW}} this \textbf{Claim} gives

\begin{corl} $X$ a 3-CY (with mirror and $\rho\geq2$) such that its A-model TFT has no instanton correction. Then
$|\pi_1(X)|=\infty$.
\end{corl}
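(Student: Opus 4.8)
The plan is to prove the Corollary by contradiction, extracting it from \textbf{Claim \ref{pq123}} together with Wilson's \textbf{Proposition \ref{proWW}}. Suppose the A-model of $X$ has no instanton corrections yet $|\pi_1(X)|<\infty$; then $X$ is a strict Calabi--Yau whose Ricci-flat holonomy is exactly $\mathfrak{su}(3)$. Because $X$ has vanishing instanton corrections and $\rho\geq2$, \textbf{Claim \ref{pq123}} gives that the rational points $W(\mathbb{Q})$ of the cubic hypersurface $W=\{D^3=0\}\subset PH^2(X)$ are dense in the real locus $W(\mathbb{R})$. \textbf{Proposition \ref{proWW}}, whose hypotheses (finite $\pi_1$ plus density of rational points) are now met, produces a rational curve $C\subset X$. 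On a strict $3$-CY such a curve is not swept out by a positive-dimensional family — that degeneracy is confined to the K-type geometries, which have \emph{infinite} $\pi_1$ — so $C$ yields a non-zero genus-zero Gromov--Witten invariant, i.e. a genuine instanton contribution to $\cf$. This contradicts the hypothesis, and hence $|\pi_1(X)|=\infty$.

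The real content thus sits inside \textbf{Claim \ref{pq123}}, which I would prove in two moves. First, I invoke the dichotomy: by the argument of \S\ref{jasqw12}, vanishing of all $N_{\vec n}$ ($\vec n\neq0$) promotes the real shifts $X^i\mapsto X^i+y^iX^0$ to honest Killing vectors of $\widetilde S$, so $\mathfrak{sym}(\widetilde S)\neq0$; \textbf{Fact \ref{ddiiye}} then forces $\widetilde S$ to be Hermitian symmetric with cubic pre-potential, hence (by \S\ref{3tube}) the tube domain of a rank-$3$ Euclidean Jordan algebra. Consequently the integral cubic form $d_{ijk}D^iD^jD^k$ is, up to $\mathbb{Q}$-equivalence, one of the determinant forms \eqref{det1}--\eqref{det2}: either the reducible form $3\,x_0A_{ij}x^ix^j$ attached to $SL(2,\mathbb{R})\times SO(2,k)$, or the determinant of $\mathsf{Her}_3(\mathbb{F})$ with $\mathbb{F}=\mathbb{R},\mathbb{C},\mathbb{H},\mathbb{O}$ (Picard numbers $\rho=6,9,15,27$ by \textsc{Table \ref{octionions}}).

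Second, I verify density of $W(\mathbb{Q})$ for each of these very special cubics by matching them against the criteria of \textbf{Lemma \ref{jaswelem}}. For the series $SL(2,\mathbb{R})\times SO(2,k)$ the hypersurface $W$ splits over $\mathbb{Q}$ as the union of the rational hyperplane $\{x_0=0\}$ and the rational quadric $\{A_{ij}x^ix^j=0\}$; rational points are automatically dense on the hyperplane, while on the quadric the indefiniteness of $A$ (signature $(1,k-1)$) lets stereographic projection spread a single rational zero densely — existence of that zero coming from Meyer's theorem, \textbf{Lemma \ref{jaswelem}(1)}, when $\rho>5$, and from a direct local-global check in the few remaining ranks. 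For the exceptional Jordan determinants I would exhibit a rational coordinate subspace on which \eqref{det2} vanishes identically, e.g. the Hermitian matrices whose first row and column are zero, of dimension $2+\dim_{\mathbb{R}}\mathbb{F}$; this feeds \textbf{Lemma \ref{jaswelem}(2)} for $\rho=6,9,15$ (all $>5$), while \textbf{Lemma \ref{jaswelem}(3)} independently covers the irreducible case $\rho=27>19$.

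The main obstacle I anticipate is precisely this case analysis, and within it the delicate dimension bookkeeping for the exceptional algebras: one must confirm that $\det$ on $\mathsf{Her}_3(\mathbb{F})$ is irreducible over $\mathbb{Q}$ (so that items \textbf{(2)}--\textbf{(3)} rather than the trivial hyperplane case are in force) and that the rational singular loci one writes down are \emph{large enough} for Wilson's threshold — the smallest case $\rho=6$, where the obvious common-kernel subspace is only three-dimensional as a cone, needs the most care and may require either a sharper rational linear family inside the symmetric determinantal variety or a bespoke density argument. A secondary subtlety is the closing implication ``rational curve $\Rightarrow$ non-trivial instanton correction'': here the finiteness of $\pi_1$ must be used essentially to exclude the positive-dimensional families of rational curves of the K-type, so that the Gromov--Witten count is genuinely non-zero. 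Reconciling this algebro-geometric input of Wilson with the physical dichotomy is where the proof must be tightened.
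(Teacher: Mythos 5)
Your skeleton coincides with the paper's: the Corollary is obtained by feeding \textbf{Claim \ref{pq123}} into Wilson's \textbf{Proposition \ref{proWW}}, and the Claim itself is reduced, via the dichotomy of \textbf{Fact \ref{ddiiye}} and the tube-domain analysis of \S\,\ref{3tube}, to checking density of rational points on a short list of cubic hypersurfaces. Where you diverge is in \emph{how} that density is checked, and this is where the gaps sit. First, you assert that the integral cubic form is, ``up to $\mathbb{Q}$-equivalence,'' one of the split determinant forms \eqref{det1}--\eqref{det2}. The dichotomy only gives $\R$-equivalence; over $\mathbb{Q}$ the form can be a twisted determinant on $\mathsf{Her}_3(\mathbb{D}_\mathbb{F})$ for a division algebra $\mathbb{D}_\mathbb{F}$ with $\mathbb{D}_\mathbb{F}\otimes_\mathbb{Q}\R=\mathbb{F}$ (this is exactly the ``crucial caveat'' of \S\,\ref{3tube}), and identifying the allowed $\mathbb{Q}$-forms --- via the $\mathbb{Q}$-rank constraint and the Shimura classification of algebras with positive involution --- is an essential part of the paper's argument, not a cosmetic one. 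Second, for the quadric component at small Picard number your ``direct local-global check'' would fail: an indefinite ternary (or quaternary) rational form need not be isotropic, so for $\rho=4$ there is no purely arithmetic guarantee of a rational point on the light-cone. The paper gets isotropy from the Godement non-compactness criterion applied to the arithmetic quotient $\Gamma\backslash G(\R)/K$, i.e.\ from the swampland input that the moduli space is non-compact of finite volume; without that input the step is simply false for some integral forms.

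Third, and most concretely, your route through \textbf{Lemma \ref{jaswelem}(2)} breaks at $\rho=6$. The rational linear subspace you exhibit (Hermitian matrices with vanishing first row and column) has projective dimension $1+\dim_\R\mathbb{F}$, which is $\geq 3$ only for $\mathbb{F}=\C,\mathbb{H}$; for $\mathbb{F}=\R$ it is a $\mathbb{P}^2$, and the symmetric $3\times3$ determinantal cubic in $\mathbb{P}^5$ contains no rational $\mathbb{P}^3$, so no ``sharper rational linear family'' will rescue the argument. You flag this but leave it open. The paper's resolution is a different mechanism that works uniformly in all cubic cases: pin down the $\mathbb{Q}$-structure of the automorphism group as $SL(3,\mathbb{D}_\mathbb{F})$, exhibit the rational rank-$2$ base points $\mathrm{diag}(1,\pm1,0)\in\mathsf{Her}_3(\mathbb{D}_\mathbb{F})$, and use that $G_\alpha(\mathbb{Q})$ is dense in $G_\alpha(\R)$ (because $G_\alpha$ is defined over $\mathbb{Q}$) so that the rational orbits are dense in $C_\alpha(\R)$. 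That orbit argument is the missing idea. Finally, your closing step ``rational curve $\Rightarrow$ non-zero instanton correction when $\pi_1$ is finite'' is indeed the tacit link the paper leans on; your justification (excluding positive-dimensional families as a K-type phenomenon) is heuristic --- strict Calabi--Yau threefolds can carry families of rational curves --- and if one wants this airtight one should use the stronger form of Wilson's results, in which the curves produced are contracted by extremal contractions and demonstrably contribute to the Gromov--Witten counts.
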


\noindent From this \textbf{Corollary} all other claims follow provided the Picard number $\neq1$.

\subparagraph{Proof of Claim.} As we saw sections 6 and 7, the structure theorem of VHS (applied to the complex moduli of the mirror $X^\vee$) implies that, in the absence of instanton corrections, the K\"ahler moduli of $X$ do not get any quantum correction, and that  their special K\"ahler geometry $S$ (after replacing it with a finite cover, if necessary) 
is 
the product of irreducible Shimura varieties
in one-to-one correspondence with the irreducible components $C_\alpha$ of the affine cone $C(W)$, eqn.\eqref{affcone}
\be
S=\prod_\alpha S_\alpha,\qquad S_\alpha\ \text{Shimura variety of appropriate type.}
\ee  By \cite{wilson2}
the $C_\alpha$'s 
 are defined over $\mathbb{Q}$. Each irreducible Shimura variety $S_\alpha$ is the quotient of a Hermitian symmetric space $M_\alpha(\R)/K_\alpha$ by the arithmetic subgroup $\Gamma_\alpha\subset M_\alpha(\Z)$, where $M_\alpha$ is the \emph{simple} $\mathbb{Q}$-algebraic group 
\be
M_\alpha\equiv \overline{\Gamma}_\alpha^{\mspace{2mu}\mathbb{Q}}\subset Sp(2m+2,\mathbb{Q})\quad\text{with}\quad
\Gamma_\alpha\subset M_\alpha\cap Sp(2m+2,\Z).
\ee
\be
\mathrm{rank}\,\Gamma_\alpha=\mathbb{Q}\text{-rank}\,M_\alpha=\mathbb{R}\text{-rank}\,M_\alpha(\R),
\label{rrrianks}\ee
where, as before, $\overline{\Gamma}_\alpha^{\mspace{2mu}\mathbb{Q}}$ stands for the $\mathbb{Q}$-Zariski closure of $\Gamma_\alpha\subset Sp(2m+2,\mathbb{Q})$ in the ambient $\mathbb{Q}$-algebraic group. 
The Lie group $M_\alpha(\R)$ of real points is
\be
\begin{aligned}
&SL(2,\R) &\quad &\text{for hyperplanes}\\
&SO(2,\rho-1) &\quad &\text{for quadrics}\\
&Sp(6,\R),
\ U(3,3),\ SO^*\mspace{-1mu}(12),\ E_{7(-25)} &\quad & \text{for irreducible  cubics}.
\end{aligned}
\ee 
The density of the rational points in the real locus of the irreducible component $C_\alpha$ is obvious when $C_\alpha$ is an hyperplane. From now on we shall consider only quadric and cubic components.
For quadrics the only open case is $\rho=4$, since for $\rho=2,3$ the special  geometry becomes reducible (so we are back to the case of hyperplanes) while when $\rho\geq 5$ it is settled by Meyer's theorem, see \textbf{Lemma \ref{jaswelem}} \textbf{(1)}. Part \textbf{(3)} of the \textbf{Lemma} implies the \textbf{Claim} for  the last cubic case which has $\rho=27>19$.

From sections 3, 4, 6 and 7 we know that each irreducible Hermitian symmetric space
$M_\alpha(\R)/K_\alpha$ is (biholomorphic to) a tube domain $T(V_\alpha)$ for a \emph{symmetric}, open, strict, convex cone $V_\alpha\subset \R^m$ where $m=\rho-1$ or  $\rho$ for quadrics or, respectively, cubics. The open cone $V_\alpha$ is a connected component of the open domain in $\R^m$
\be\label{poqwertb}
d_{ij}\,x^ix^j>0\quad\text{
respectively}\quad  d_{ijk}\,x^ix^jx^k>0
\ee
with $d_{ij}$, resp.\! $d_{ijk}$ integers.
The integer quadratic form $d_{ij}$ has Lorentzian signature $(1,m-1)$ so, in the quadric case, the convex cone $V_\text{quad.}$ is the set of ``forward time-like'' vectors, whereas in the cubic case
$V_\text{cub.}$ is the cone of positive-definite elements in the $\R$-space $\mathsf{Her}_3(\mathbb{F})$ of $3\times 3$ Hermitian matrices  with entries in $\mathbb{F}=\R,\C,\mathbb{H}$, and $\mathbb{O}$, respectively, see \S.\,\ref{3tube}. The classical K\"ahler cone -- which coincides with the quantum one under the present hypothesis -- is the product of the convex cones $V_\alpha$ associated to the various irreducible components $C_\alpha$.

The equation of the associated irreducible component $C_\alpha$ of the hypersurface $C(W)\subset \R^m$ is obtained by replacing in eqn.\eqref{poqwertb} $>0$ with $=0$. Thus $C_\text{quad.}(\R)$ is the full  light-cone, whereas $C_\text{cub.}(\R)$ is identified with  the space of Hermitian matrices in $\mathsf{Her}_3(\mathbb{F})$ of rank $\leq2$. 

Let $G_\alpha$ be the
 $\mathbb{Q}$-algebraic group
\be
G _\alpha=M_\alpha\cap GL(m,\mathbb{Q})
\ee
where both groups in the \textsc{rhs} are seen as subgroups of the $\mathbb{Q}$-algebraic group $Sp(2m+2,\mathbb{Q})$
through the block-diagonal embedding
\be
g\mapsto \mathrm{diag}(1,  (g^{-1})^t, 1, g)\in Sp(2m+2,\mathbb{Q}),\quad g\in GL(m,\mathbb{Q}).
\ee 
Modulo finite groups, one has
\be\label{pqwertv}
G_\alpha=\mathbb{G}_m\times L_\alpha,
\ee 
where $\mathbb{G}_m$ is the multiplicative group
and $L_\alpha=M_\alpha\cap SL(m,\mathbb{Q})$ is the simple $\mathbb{Q}$-algebraic group  
which leaves invariants the rational tensors
$d_{ij}$ resp.\! $d_{ijk}$
\be
\begin{split}
L_\text{quad.}=\;&\big\{{g^i}_j\in SL(m)\;:\; d_{ij}\,{g^i}_k\,{g^j}_l=d_{kl}\big\}\\ 
L_\text{cub.}=\;  &\big\{{g^i}_j\in SL(m)\;:\; d_{ijk}\,{g^i}_l\,{g^j}_m\,{g^k}_n=d_{lmn}\big\}.
\end{split}
\ee
The connected group $G_\alpha(\R)^\circ$ of real points of $G_\alpha$
is nothing else than the automorphism group of the cone $V_\alpha\subset \R^m$. Then
\begin{equation}\label{jasqwerpp}
\begin{aligned}
G_\text{quad.}(\R)=\;&\R^\times \times SO(1,m-1),\\
G_\text{cub.}(\R)=\;&\R^\times\times SL(3,\mathbb{F}),
\end{aligned}
\end{equation}
where $SL(3,\mathbb{O})\equiv E_{6(-26)}$. 

\begin{rem} The $\mathbb{Q}$-algebraic group $\prod_\alpha L_\alpha$ has a simple physical interpretation: its real points $\prod_\alpha L_\alpha(\R)$ form  the \emph{naive} symmetry group of the 5d $\cn=1$ \textsc{sugra} obtained by compactifying M-theory on the Calabi-Yau $X$. $\prod_\alpha L_\alpha(\R)$ is the isometry group of the universal cover of the 5d vector-multiplet scalars' space  \cite{magicsquare}, and its $\mathbb{Q}$-algebraic structure is induced by the 5d flux quantization. In particular, the 5d swampland condition yields
\be\label{pqwertxx}
\mathbb{Q}\text{-rank}\, L_\alpha=\R\text{-rank}\,L_\alpha(\R)=\begin{cases} 1 &\text{quadrics}\\
2 & \text{cubics,}\end{cases}
\ee 
a fact already implied by \eqref{rrrianks}.  
\end{rem}
\medskip

For each irreducible component $C_\alpha$ of $C(W)$ we may find a finite set of real-valued  points
$\{x_a\}\in C_\alpha(\R)$ such that the union of the (analytic) closure of their orbits under $G_\alpha(\R)$
is the full real locus $C_\alpha(\R)$ of the  component
\be\label{pqw12}
C_\alpha(\R)=\bigcup_a\mspace{2mu} \overline{G_\alpha\mspace{-1mu}(\R)\mspace{-3mu}\cdot\mspace{-2mu} x_a}\mspace{4mu}.
\ee
Indeed, in the quadric case we may take as $\{x_a\}$ just a single non-zero point $x$ lying in the light-cone, while in the cubic case we may take a pair of rank 2 Hermitian matrices of different signature, say $x_\pm= \mathrm{diag}(1,\pm1,0)\in\mathsf{Her}_3(\mathbb{F})$.

Suppose for the moment that it is possible to choose the points $\{x_a\}$ in \eqref{pqw12} to be $\mathbb{Q}$-valued. In this case
\be\label{ratorbit}
\bigcup_a\mspace{2mu} G_\alpha\mspace{-1mu}(\mathbb{Q})\mspace{-3mu}\cdot\mspace{-2mu} x_a\subset C_\alpha(\mathbb{Q}).
\ee
We claim that the union of the rational orbits in the \textsc{lhs} of \eqref{ratorbit} is dense in the real locus $C_\alpha(\R)$. This follows from eqn.\eqref{pqw12} together with the fact that $G_\alpha\mspace{-1mu}(\mathbb{Q})$ is dense in $G_\alpha\mspace{-1mu}(\R)$.
 Indeed, $G_\alpha=\mathbb{G}_m\times L_\alpha$. Density of the rational points is obvious for the multiplicative group $\mathbb{G}_m$. For the simple $\mathbb{Q}$-algebraic group $L_\alpha$ we invoke 
 the following fact\footnote{\ From  \textbf{Theorem}  (4.6.3) of \cite{morris2}
we know that all connected, semi-simple subgroup
$L\subset SL(n, \R)$ is almost $\R$-Zariski closed
in $SL(n,\R)$. Then apply
\textbf{Proposition} (5.1.8) of \cite{morris}:  let $L$ be a connected subgroup of $SL(n,\R)$ which is almost $\R$-Zariski closed; then the group $L$ is defined over $\mathbb{Q}$ if and only if $$L_\mathbb{Q}\overset{\rm def}{=} L\cap SL(n,\mathbb{Q})$$ is dense in $L$.}
\vskip6pt

\begin{lem}
Let $L_\R\subset SL(m,\R)$ be a semi-simple and connected (closed) Lie subgroup.
Some finite cover of $L_\R$ is the Lie group of real points of a $\mathbb{Q}$-algebraic group if and only if
$L_\R\cap SL(m,\mathbb{Q})$ is dense in $L_\R$.
\end{lem}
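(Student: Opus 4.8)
The plan is to realize this Lemma as the combination of two standard results in the theory of real algebraic and arithmetic groups, exactly the two quoted in the footnote above. Introduce $\mathbf{G}\subset SL(m)$, the $\R$-Zariski closure of $L_\R$, an $\R$-algebraic group whose real points $\mathbf{G}(\R)$ contain $L_\R$. The first input (Theorem~(4.6.3) of \cite{morris2}) is that a connected semisimple Lie subgroup of $SL(m,\R)$ is automatically \emph{almost $\R$-Zariski closed}; since $L_\R$ is connected this yields $L_\R=\mathbf{G}(\R)^{\circ}$, so that $L_\R$ is open and of finite index in $\mathbf{G}(\R)$, and $\mathbf{G}$ is a semisimple $\R$-group (it is the Zariski closure of its own Zariski-dense semisimple subgroup $L_\R$). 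The second input (Proposition~(5.1.8) of \cite{morris}) is the criterion that, for a connected almost $\R$-Zariski closed subgroup, $\mathbf{G}$ is \emph{defined over $\mathbb{Q}$} if and only if $\mathbf{G}\cap SL(m,\mathbb{Q})$ is dense in $\mathbf{G}(\R)$ for the real topology. As $L_\R$ is open of finite index in $\mathbf{G}(\R)$, density of $L_\R\cap SL(m,\mathbb{Q})$ in $L_\R$ is equivalent to density of $\mathbf{G}\cap SL(m,\mathbb{Q})$ in $\mathbf{G}(\R)$; hence the right-hand condition of the Lemma is \emph{precisely} the statement that $\mathbf{G}$ is defined over $\mathbb{Q}$. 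It therefore remains to match ``$\mathbf{G}$ defined over $\mathbb{Q}$'' with ``a finite cover of $L_\R$ is the real points of a $\mathbb{Q}$-group''.

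Assume first that $\mathbf{G}$ is defined over $\mathbb{Q}$. Its algebraic simply-connected cover $\mathbf{G}^{\mathrm{sc}}$ is then again a semisimple $\mathbb{Q}$-group, equipped with a central $\mathbb{Q}$-isogeny $\phi\colon\mathbf{G}^{\mathrm{sc}}\to\mathbf{G}$. Because $\mathbf{G}^{\mathrm{sc}}$ is simply connected semisimple, the Lie group $\mathbf{G}^{\mathrm{sc}}(\R)$ is connected, and $\phi$ induces an isomorphism of Lie algebras, so its image $\phi\big(\mathbf{G}^{\mathrm{sc}}(\R)\big)$ is an open connected subgroup of $\mathbf{G}(\R)$, hence equals $\mathbf{G}(\R)^{\circ}=L_\R$. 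Thus $\phi\colon\mathbf{G}^{\mathrm{sc}}(\R)\to L_\R$ is surjective with finite (central) kernel, exhibiting $\mathbf{G}^{\mathrm{sc}}(\R)$ as a finite cover of $L_\R$ which is, by construction, the group of real points of the $\mathbb{Q}$-algebraic group $\mathbf{G}^{\mathrm{sc}}$. This gives one of the two desired implications with no further work.

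For the converse, suppose a finite cover $\pi\colon\widetilde{L}\to L_\R$ is isomorphic, as a Lie group, to $\mathbf{H}(\R)$ for some connected semisimple $\mathbb{Q}$-group $\mathbf{H}$. By the real-approximation theorem for connected $\mathbb{Q}$-groups, $\mathbf{H}(\mathbb{Q})$ is dense in $\mathbf{H}(\R)=\widetilde{L}$ in the real topology; since $\pi$ has finite kernel and is therefore open, $\pi\big(\mathbf{H}(\mathbb{Q})\big)$ is dense in $L_\R$. To conclude that $L_\R\cap SL(m,\mathbb{Q})$ is dense it suffices to know that $\pi\big(\mathbf{H}(\mathbb{Q})\big)\subset SL(m,\mathbb{Q})$, i.e.\! that the composite $\mathbf{H}\xrightarrow{\ \pi\ }L_\R\hookrightarrow SL(m)$ carries $\mathbb{Q}$-points to $\mathbb{Q}$-points. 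Granting this, one obtains a dense subgroup of $L_\R$ consisting of rational matrices, which is exactly the right-hand condition.

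The main obstacle is the compatibility invoked at the end of the converse. The hypothesis provides only an \emph{abstract} Lie-group isomorphism between a finite cover of $L_\R$ and $\mathbf{H}(\R)$, with no a priori link between the $\mathbb{Q}$-structure of $\mathbf{H}$ and the fixed matrix embedding $L_\R\hookrightarrow SL(m,\R)$; upgrading the cover $\pi$ to a $\mathbb{Q}$-rational morphism for this embedding is the delicate step. I expect this to be handled by the rigidity that $\mathbf{G}$ is \emph{canonically} the $\R$-Zariski closure of $L_\R$ (Theorem~(4.6.3) of \cite{morris2}), so that the abstract $\mathbb{Q}$-form may be transported through the central isogeny $\phi$ of the previous paragraph to a genuine $\mathbb{Q}$-structure on $\mathbf{G}\subset SL(m)$ itself, after which Proposition~(5.1.8) of \cite{morris} applies directly and forces $\mathbf{G}\cap SL(m,\mathbb{Q})$ to be dense. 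Everything here is to be read modulo the finite-group and isogeny ambiguities under which the whole paper operates.
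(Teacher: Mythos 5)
Your reduction to Theorem (4.6.3) of \cite{morris2} plus Proposition (5.1.8) of \cite{morris} is exactly the paper's own proof --- the footnote attached to the Lemma \emph{is} the proof --- and your forward implication is correctly and completely fleshed out: almost-Zariski-closedness identifies $L_\R$ with $\mathbf{G}(\R)^\circ$ for the $\R$-Zariski closure $\mathbf{G}$, density of rational points makes $\mathbf{G}$ defined over $\mathbb{Q}$, and the simply connected cover $\mathbf{G}^{\mathrm{sc}}(\R)\to L_\R$ supplies the required finite cover.

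The obstacle you flag in the converse is a genuine one, and the repair you sketch does not close it. If ``$\mathbb{Q}$-algebraic group'' is read abstractly --- a mere Lie-group isomorphism between a finite cover of $L_\R$ and $\mathbf{H}(\R)$ --- the statement is false. Take the standardly embedded copy of $SL(2,\R)$ in $SL(4,\R)$ (two copies of the defining representation) and conjugate it by $g\in SL(4,\R)$: every conjugate is abstractly $SL(2,\R)$, the real points of the $\mathbb{Q}$-group $SL(2)$, but only countably many closed subgroups of $SL(4,\R)$ can be defined over $\mathbb{Q}$ while the conjugates form an uncountable family; so for generic $g$ the conjugate is not defined over $\mathbb{Q}$ and, by Proposition (5.1.8) itself, its intersection with $SL(4,\mathbb{Q})$ is not dense. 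No transport of the abstract $\mathbb{Q}$-form through the central isogeny can help, because possessing an abstract $\mathbb{Q}$-form is strictly weaker than the embedding $\mathbf{G}\hookrightarrow SL(m)$ being defined over $\mathbb{Q}$, and only the latter controls $L_\R\cap SL(m,\mathbb{Q})$. The Lemma must therefore be read --- as it is in its sole application in \S.\,8, where $L_\alpha=M_\alpha\cap SL(m,\mathbb{Q})$ is by construction a $\mathbb{Q}$-subgroup for the \emph{standard} $\mathbb{Q}$-structure on $SL(m)$ --- with the $\mathbb{Q}$-algebraic group being a $\mathbb{Q}$-subgroup of $SL(m)$, i.e.\! the left-hand condition is that the $\R$-Zariski closure of $L_\R$ is defined over $\mathbb{Q}$ as a subvariety of $SL(m)$. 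With that reading the converse is precisely the ``only if'' half of Proposition (5.1.8) and there is nothing left to transport. A minor further point: that proposition is stated for the connected subgroup $L_\R$ itself, so you should apply it there directly rather than pass through density in the possibly disconnected $\mathbf{G}(\R)$, where rational points need not meet every component.
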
 

Putting everything together, the density of the rational points $C(W)(\mathbb{Q})$ in the real locus $C(W)(\R)$ will follow if we can show that we may choose the points $\{x_a\}$ in \eqref{pqw12} to be $\mathbb{Q}$-valued.
\medskip

For quadrics it is enough to show that the light-cone $d_{ij}x^ix^j=0$ contains a non-zero rational point. In view of eqn.\eqref{pqwertxx}, this is a direct consequence of, say, \textbf{Proposition} (5.3.4)
of \cite{morris}.  In other words: since the quotient
$\Gamma_\text{quad.}\G_\text{quad.}(\R)/K_\text{quad}$ is non-compact of finite volume,
and $\Gamma_\text{quad.}$ is arithmetic, the
$\mathbb{Q}$-rank of $\G_\text{quad.}(\R)$ is at least one by the Godement criterion; for a Lorentz group this is equivalent to the existence of a non-zero rational point in the light-cone \cite{morris}.

More generally, in all instances -- quadrics as well as cubics -- the existence of the rational points $\{x_a\}$ will follow if we can show that the rational structure of the algebraic group $L_\alpha$ is the ``obvious'' one: for instance,
for the first cubic case ($\rho=6$), where  $L_\text{cub.}\mspace{-1mu}(\R)=SL(3,\R)$,
it would follow if $L_\text{cub.}\mspace{-1mu}(\mathbb{Q})$ is the plain $\mathbb{Q}$-group $SL(3,\mathbb{Q})$ rather than some fancier rational structure on the real group $SL(3,\R)$.
Indeed, if this was the case, the rational cubic form $d\colon \mathsf{Her}_3(\mathbb{Q})\to \mathbb{Q}$ would be given by the plain 
determinant formula\footnote{\ The equality \eqref{simplydet} must hold up to $\mathbb{Q}$-equivalence, the expression needs not to  be valid over $\Z$.}
\be\label{simplydet}
\frac{1}{6!}\,d_{ijk}\mspace{2mu}x^ix^jx^k=\det\mspace{-5mu}\begin{pmatrix} x^1 & x^4 & x^5\\
x^4 & x^2 & x^6\\
x^5 & x^6 & x^3
\end{pmatrix}\in \Z[x^1,\dots, x^6],\qquad (x^1,\dots, x^6)\in\mathbb{Q}^6.
\ee
and the ``obvious'' points \be
x_\pm=\mathrm{diag}(1,\pm1,0)\in C_\text{cub.}(\mathbb{Q})\subset \mathsf{Her}_3(\mathbb{Q}),
\ee 
would be defined over $\mathbb{Q}$, and we would be done. 

That the ``obvious'' $\mathbb{Q}$-structure is the correct one for this first cubic example, follows from classification of the $\mathbb{Q}$-algebraic groups whose real locus is the Lie group $SL(3,\R)$. We recall the relevant facts, see e.g.\! the proof of \textbf{Proposition} (18.6.4) in \cite{morris}. There are infinitely many non-isomorphic such rational groups, but only the ``obvious'' one $SL(3,\mathbb{Q})$ has $\mathbb{Q}$-rank equal 2 as required by \eqref{rrrianks}. A more direct argument which does not use eqn.\eqref{rrrianks} is as follows: all rational algebraic groups with underlying real Lie group $SL(3,\R)$ have a natural rational representation on $\mathbb{Q}^6$. Tensoring this representation with $\R$ we get the irreducible representation $\odot^2\R^3$ for rank 2 and the reducible one $\R^3\oplus (\R^3)^\vee$ for rank 1 (see the proof of \textbf{Proposition} (6.6.1) in \cite{morris}). Since we know that the underlying real representation is $\odot^2\R^3$ (\S.\,3.3) we conclude that $SL(3,\mathbb{Q})$ is the rational structure on $SL(3,\R)$ selected by quantum gravity.
So the cubic $\rho=6$ instance is settled. 

In  the second cubic case ($\rho=9$), where $L_\text{cub.}\mspace{-1mu}(\R)=SL(3,\C)$, the same argument leads us to 
\be
L_\text{cub.}\mspace{-1mu}(\mathbb{Q})= SL(3,\mathbb{Q}(\sqrt{-r}))\ee 
for some square-free positive integer $r$
whose precise value is not relevant for our present purposes, although it should be seen as an important physical invariant of the corresponding 5d quantum gravity -- \emph{if it exists}.
Again the points $x_\pm=\mathrm{diag}(1,\pm1,0)\in \mathsf{Her}_3(\mathbb{Q}(\sqrt{-r}))$ are defined over $\mathbb{Q}$ independently of the value of $r$. Alternatively
the rank 2 matrices
\be
\begin{pmatrix}x_1 & x_2+\sqrt{-r}\,x_3& 0\\
x_2-\sqrt{-r}\,x_3 & x_4 & 0\\
0 & 0 & 0 \end{pmatrix}\in \mathsf{Her}_3(\mathbb{Q}(\sqrt{-r}),
\ee
where $(x_1,x_2,x_3,x_4)\in\mathbb{Q}^4$ form a dimension 3 linear space in $W(\mathbb{Q})$
and we may apply \textbf{Lemma 5(2)}.

On the basis of these two examples, let us 
consider the general cubic case.
We have
\be
L_\text{cub.}\mspace{-1mu}(\mathbb{Q})=SL(3,\mathbb{D}_\mathbb{F})
\ee
where $\mathbb{D}_\mathbb{F}$ is a division algebra over $\mathbb{Q}$, with a unique (canonical) positive-definite
Rosati (anti)involution $x\leftrightarrow x^\star$,
such that 
\be
\mathbb{D}_\mathbb{F}\otimes_\mathbb{Q}\mspace{-2mu}\R=\mathbb{F}.
\ee
For $\mathbb{F}=\R$, $\C$, and $\mathbb{H}$, the division algebra $\mathbb{D}_\mathbb{F}$ has, respectively, Shimura type \cite{shi1963}
I, IV, and III over the totally real field $\mathbb{Q}$.  
The case $\mathbb{F}=\mathbb{O}$ is not covered by Shimura, since the octonions do not form an associative algebra. 
The argument formally extends also to this case; anyhow we do not need it, since $\mathbb{F}=\mathbb{O}$ corresponds to Picard number $\rho=27>19$ and this case is already settled by \textbf{Lemma 5(3)}.

The canonical involution
$x\leftrightarrow x^\star$ has the property
\cite{shi1963}
\be\label{ponnbv0}
x+x^\star\in \mathbb{Q},\qquad \forall\;x\in\mathbb{D}_\mathbb{F}
\ee
so that it makes sense to talk about the vector $\mathbb{Q}$-space $\mathsf{Her}_3(\mathbb{D}_\mathbb{F})$ of Hermitian matrices
with entries in $\mathbb{D}_\mathbb{F}$ as well as the positive rational cone 
\be
V_\mathbb{Q}\subset \mathsf{Her}_3(\mathbb{D}_\mathbb{F})\approx \mathbb{Q}^{3+3\dim \mathbb{F}}
\ee 
i.e.\! the would-be rational K\"ahler cone of the putative Calabi-Yau $X$ without rational curves.
By \eqref{ponnbv0} the entries along the main diagonal of any $m\in \mathsf{Her}_3(\mathbb{D}_\mathbb{F})$
are rational numbers. Again, our arguments are independent of the precise division $\mathbb{Q}$-algebra $\mathbb{D}_\mathbb{F}$, whose isomorphism class is an important datum of the consistent quantum gravity (if it exists). 
The $\mathbb{Q}$-algebraic group $SL(3,\mathbb{D}_\mathbb{F})$ acts on the  
$\mathbb{Q}$-space $\mathsf{Her}_3(\mathbb{D}_\mathbb{F})$ as
\be
m\mapsto a\mspace{2mu} m\mspace{2mu} a^\dagger, \quad m\in\mathsf{Her}_3(\mathbb{D}_\mathbb{F}),\ a\in SL(3,\mathbb{D}_\mathbb{F}), \quad a^\dagger\overset{\rm def}{=} (a^\star)^t.
\ee
The cubic form is just the determinant of the matrix $m$ (defined by the Vinberg prescription \eqref{det2} when $\mathbb{D}_\mathbb{F}$ is non-commutative/non-associative)
\be
\det \colon \mathsf{Her}_3(\mathbb{D}_\mathbb{F})\to \mathbb{Q}\qquad\text{(by \eqref{ponnbv0}).}
\ee 
The rational points of the cubic affine cone $C_\text{cub.}(W)(\mathbb{Q})$ are then  identified with the elements of $\mathsf{Her}_3(\mathbb{D}_\mathbb{F})$ of rank $\leq 2$. This set is clearly non-empty, e.g.\!
$\mathrm{diag}(1,\pm1,0)\in\mathsf{Her}_3(\mathbb{D}_\mathbb{F})$, and we are done.

\begin{rem} For $\rho=2,3$ we have 3-CY without rational curves i.e.\! the A-type ones, and in addition there are 3-CY with rational curves but no quantum corrections to the K\"ahler moduli, namely the K-type whose Picard numbers are $\rho=3,4,5,7,11$.
The existence of these special cases is consistent with our findings. In all these cases the special K\"ahler geometry is an arithmetic quotient of
\be
SL(2,\R)/U(1)\times SO(2,\rho-1)/[SO(2)\times SO(\rho-1)],
\ee  
and $|\pi_1(X)|=\infty$.
\end{rem}

\section{The case of Picard number 1}

The argument of the previous section cannot be applied to Calabi-Yau 3-folds with
\be
\rho\equiv h^{1,1}=1.
\ee
In this section we argue that -- as everybody expects \cite{wilson1} -- a 3-CY with $\rho=1$ has (infinitely many) rational curves. To attach this residual case, we use  physical ideas and the discussion is meant to be just heuristic.
\medskip

We assume (by absurd) that $X$ has no rational curve.
Then, by the previous result
\be
F(z)=-\frac{N}{3!}\,z^3,\qquad N\in\mathbb{N},
\ee
so that the covering special geometry $\widetilde{S}$ is the upper half-plane with  the Poincar\'e metric $G_{z\bar z}$ normalized to that 
\be
R_{z\bar z}=-\frac{2}{3}\,G_{z\bar z},
\ee
that is, 3 times the Poincar\'e metric in the standard normalization.
The Hodge metric is
\be\label{hodg12}
K_{z\bar z}\equiv 4\mspace{3mu} G_{z\bar z}+R_{z\bar z} =\frac{10}{3} G_{z\bar z},
\ee
and it does not receive the contact-term  quantum correction described in \cite{Bershadsky:1993ta} since $\chi(X)=0$ in absence of instanton corrections.

 The period map factors through the upper half-plane, so that
$\Gamma$ must be a finite-index subgroup
of $SL(2,\Z)$ containing a power of $T$. Then the global symmetry group is\footnote{\ We use the fact that a normal subgroup $N\triangleleft SL(2,\Z)$ which contains $T$ is the full $SL(2,\Z)$. Indeed, from $S^2=-1$ and $(TS)^3=-1$ we get
$S^{-1}\equiv T(S^{-1}TS)T\in N$ while $T$ and $S^{-1}$ generate $SL(2,\Z)$.}
\be
\mathcal{N}_{SL(2,\Z)}(\Gamma)/\Gamma=SL(2,\Z)/\Gamma,
\ee
and if we assume that it is trivial (as required by the usual swampland arguments) we conclude that
\be\label{hodg13}
\Gamma\equiv SL(2,\Z).
\ee
Using the equations of one-loop holomorphic anomaly \cite{Bershadsky:1993ta} and eqn.\eqref{hodg12}, one finds that the genus one
index $F_1$ is
\be
F_1(z,\bar z)=-\log\!\Big((\mathrm{Im}\,z)^{10}\, |f(z)|^2\Big),
\ee
for some holomorphic function $f(z)$ without zero or poles in the upper half-plane; the expression should be modular invariant by \eqref{hodg13}. Then, following the argument around eqn.(8) of ref.\!\cite{Bershadsky:1993ta}, we get
\be
f(z)=\eta(\tau)^{20}.
\ee
From \cite{Bershadsky:1993ta} we know
\be
-\frac{1}{12}\int_X \omega\wedge c_2=\lim_{\mathrm{Im}\,z\to\infty}\!\left(\frac{1}{2\pi i}\,\partial_z F_1\right)  =-\frac{10}{12}
\ee
which looks as a contradiction since
\be
\int_X\omega\wedge c_2=10\not\equiv 0\bmod 24.
\ee

\section*{Acknowledgments} 
I am very grateful to Cumrun Vafa for sharing his profound insights on the consistency conditions for Quantum Gravity. I have benefit of discussions with E. Palti, S. Katz and I. Zadeh.

\appendix
\section{Pluri-harmonic $\Rightarrow$
solution to $tt^*$ PDEs}\label{revtau}

\textbf{In this appendix:} $X$ is a K\"ahler manifold  which in the standard application of $tt^*$ geometry to 2d (2,2) QFTs  plays the role of the space of coupling constants associated to the chiral primaries \cite{tt*}; all arguments and quantities will be independent of the chosen K\"ahler metric on $X$. $G$ is a semi-simple real Lie group and $K\subset G$ is a maximal compact subgroup. 
$\mathfrak{g}=\mathfrak{k}\oplus \mathfrak{p}$ is the orthogonal decomposition\footnote{\ With respect to the Killing form.} of the Lie algebra of $G$ in the Lie algebra of $K$ plus the complementary space $\mathfrak{p}$. We write $g^{-1}dg\in \Lambda^1(G)\otimes\mathfrak{g}$ for the Maurier-Cartan form on the group manifold $G$,  and $(g^{-1}dg)_\mathfrak{k}$, 
$(g^{-1}dg)_\mathfrak{p}$ for its summands with respect to the decomposition $\mathfrak{g}=\mathfrak{k}\oplus \mathfrak{p}$.
$\Gamma\subset G$ is a discrete subgroup.
\medskip

We wish to show the following \cite{jap, dubrovin}:

\begin{lem} Let 
\be
f\colon X\to \Gamma\backslash G/K
\ee 
be a smooth map. Let $\widetilde{X}$ be the universal cover of $X$ and $\phi\colon \widetilde{X}\to G$ be a lift of $f$. Define the definite-type one-forms on $\widetilde{X}$
\be\label{AAASWE}
\begin{aligned}
A&= \phi^*\mspace{-1mu}(g^{-1}dg)_\mathfrak{k}|_{(1,0)},& \bar A&= \phi^*\mspace{-1mu}(g^{-1}dg)_\mathfrak{k}|_{(0,1)}, \\
 C&=\phi^*\mspace{-1mu}(g^{-1}dg)_\mathfrak{p}|_{(1,0)},& \bar C&=\phi^*\mspace{-1mu}(g^{-1}dg)_\mathfrak{p}|_{(0,1)},
\end{aligned}
\ee
whose coefficients are seen as square matrices acting on some representation space $V$ for $G$,
and consider the $K$-covariant Dolbeault differentials
\be
D=\partial+A,\qquad \bar D=\bar\partial+\bar A.
\ee
Then $D$, $\bar D$, $C$, $\bar C$ solve the $tt^*$ equations {\rm\cite{tt*}}
\be\label{PDESS}
\begin{gathered}
DC=\overline{D}C=D\overline{C}=\overline{D}\mspace{3mu}\overline{C}=C\wedge C=\overline{C}\wedge \overline{C}=0\\
D^2=\overline{D}^2=D\overline{D}+\overline{D}C+C\wedge\overline{C}+\overline{C}\wedge C=0.
\end{gathered}
\ee
\emph{if and only if} $f$ is pluri-harmonic, i.e.\!\! 
in local coordinates $\bar D_{\bar k}\partial_i f^a=0$. Conversely, all
solutions to $tt^*$ arise in this way.
\end{lem}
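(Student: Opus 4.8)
The plan is to read everything off the flatness (Maurer--Cartan) of the pulled-back form $\Omega:=\phi^*(g^{-1}dg)$, organised simultaneously by the Cartan decomposition $\mathfrak{g}=\mathfrak{k}\oplus\mathfrak{p}$ and by $(p,q)$-type on the K\"ahler manifold $X$. First I would split $\Omega=(A+\overline{A})+(C+\overline{C})$ as in \eqref{AAASWE} and expand $d\Omega+\Omega\wedge\Omega=0$. Using $d=\partial+\bar\partial$ together with the symmetric-space bracket relations $[\mathfrak{k},\mathfrak{k}]\subseteq\mathfrak{k}$, $[\mathfrak{k},\mathfrak{p}]\subseteq\mathfrak{p}$, $[\mathfrak{p},\mathfrak{p}]\subseteq\mathfrak{k}$, the single matrix equation decomposes into six pieces (two Lie-algebra components times the three types $(2,0),(1,1),(0,2)$). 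For an \emph{arbitrary} smooth $\phi$ this already yields $DC=0$, $\overline{D}\,\overline{C}=0$ and $\overline{D}C+D\overline{C}=0$ from the $\mathfrak{p}$-part, and $D^2=-C\wedge C$, $\overline{D}^2=-\overline{C}\wedge\overline{C}$, $D\overline{D}+\overline{D}D+C\wedge\overline{C}+\overline{C}\wedge C=0$ from the $\mathfrak{k}$-part. Thus most of \eqref{PDESS} is automatic; the real content of the Lemma is which \emph{extra} relations are equivalent to pluri-harmonicity.

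Second, I would identify the geometric hypothesis with an algebraic one. Under $T(G/K)\cong\mathfrak{p}$ the differential $df$ corresponds to $C+\overline{C}$ with $\partial_i f\leftrightarrow C_i$, so the pluri-harmonic condition $\overline{D}_{\bar k}\partial_i f^a=0$ is exactly $\overline{D}C=0$. Feeding this into the automatic relation $\overline{D}C+D\overline{C}=0$ gives $D\overline{C}=0$ as well. At this point the only remaining $tt^*$ relations are $C\wedge C=0$ and its conjugate $\overline{C}\wedge\overline{C}=0$; once these hold, the curvature identities above collapse to $D^2=\overline{D}^2=0$, and \eqref{PDESS} is complete. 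The converse is then immediate: if \eqref{PDESS} holds it contains $\overline{D}C=0$, i.e.\ pluri-harmonicity, and the associated connection $d+A+\overline{A}+\zeta^{-1}C+\zeta\overline{C}$ (now flat for all $\zeta$) integrates on the simply-connected cover $\widetilde{X}$ to a lift $\phi$ recovering $f$, so every $tt^*$ solution arises this way.

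The main obstacle is the commutativity $C\wedge C=0$, equivalently $[C_i,C_j]=0$: it does \emph{not} follow from Maurer--Cartan and type considerations alone (the $(2,0)$ equation only ties $C\wedge C$ to $D^2$). The hard part is to extract it \emph{pointwise} and in a metric-independent way. My plan is to invoke Simpson's Bochner identity \eqref{hasqwrt}, whose right-hand side is the sum of the two manifestly non-negative terms $\lVert\overline{D}C\rVert^2$ and $\lVert[C_i,C_j]\rVert^2$ (non-negativity of the second uses that $G/K$ has non-positive curvature, $R(X,Y)Z=-[[X,Y],Z]$ on a symmetric space of non-compact type). The key observation is that pluri-harmonicity makes the left-hand side vanish identically: since $\overline{D}C=0$ and $\overline{D}$ is a derivation annihilating the trace, each scalar $\mathrm{tr}(C_iC_j)$ is holomorphic, $\bar\partial_{\bar l}\,\mathrm{tr}(C_iC_j)=\mathrm{tr}\big((\overline{D}_{\bar l}C_i)C_j+C_i(\overline{D}_{\bar l}C_j)\big)=0$, so the total $\overline{D}$-derivative on the left of \eqref{hasqwrt} is zero at every point. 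Hence $0=\lVert\overline{D}C\rVert^2+\lVert[C_i,C_j]\rVert^2$, and with $\overline{D}C=0$ this forces $[C_i,C_j]=0$ pointwise; conjugation gives $\overline{C}\wedge\overline{C}=0$. This is the step I expect to demand the most care, both in fixing the curvature sign conventions that make Simpson's identity come out as a sum of two non-negative terms and in justifying the derivation/trace manipulation; everything else is bookkeeping on the Maurer--Cartan equation.
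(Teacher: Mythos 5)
Your proposal is correct, and its skeleton (the six-fold splitting of the pulled-back Maurer--Cartan identity by Cartan component and $(p,q)$-type, the identification of pluri-harmonicity with $\overline{D}C=0$ feeding $D\overline{C}=0$, and the reconstruction of $\phi$ from the flat $\zeta$-family of connections for the converse) coincides with the paper's proof in appendix A. Where you genuinely diverge is the one non-trivial step, $C\wedge C=0$. The paper stays purely algebraic: from $DC=\overline{D}C=0$ it computes $0=(D\overline{D}+\overline{D}D)C=\overline{C}\wedge(C\wedge C)-(C\wedge C)\wedge\overline{C}$, reads off $[\overline{C}_{\bar k},[C_i,C_j]]=0$, uses the reality condition $\overline{C}_{\bar k}=C_k^\dagger$ to conclude that the commutators $[C_i,C_j]$ form a commuting family of \emph{normal} matrices whose simultaneous eigenspaces are preserved by all $C_k$, $\overline{C}_{\bar k}$, and then kills every eigenvalue because the trace of a commutator vanishes. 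You instead invoke Simpson's Bochner identity \eqref{hasqwrt} \emph{pointwise}: under $\overline{D}C=0$ the double divergence on the left vanishes identically (not merely after integration over a compact or finite-volume $X$), so the non-negative curvature term must vanish at every point, forcing $[C_i,C_j]=0$. This is essentially the integrated argument the paper sketches in \S.4.2, upgraded by your observation that pluri-harmonicity trivializes the boundary/total-derivative issue; the argument is sound. The trade-off is clear: the paper's route needs no choice of K\"ahler metric on $X$ and no curvature computation on the target, only linear algebra in the unitary gauge, while yours requires the metric contractions and the sign $R(X,Y)Z=-[[X,Y],Z]$ of the non-compact symmetric space, but is more geometric and makes visible exactly where semi-simplicity of $G$ without compact factors enters.
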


\begin{rem}
Two local lifts $\phi_1$ and $\phi_2$ differ by a $K$-gauge transformation, so define the same $tt^*$ geometry which depends only on the underlying pluri-harmonic map $f$.
\end{rem}

\begin{proof}
Consider the Maurier-Cartan identity 
\be
\phi^*\mspace{-1mu}(d+g^{-1}dg)^2=0
\ee 
and split it according to Lie algebra decomposition $\mathfrak{g}=\mathfrak{k}\oplus\mathfrak{p}$ and the form type: 
\be
\begin{aligned}
& D^2+C\wedge C= \bar D^2+\bar C\wedge \bar C=0,\\
&DC=\bar D\bar C=0\\
& D\bar D+\bar D D+C\wedge \bar C+ \bar C\wedge C=0\\
& D\bar C+\bar D C=0. 
\end{aligned}
\ee
These equations reduce to the $tt^*$ equations \eqref{PDESS} if we may prove the two equalities $\bar D C=0$ and $C\wedge C=0$. The first one is just the condition that $f$ is pluri-harmonic. The consistency of the condition $\bar D C=0$ and $D C=0$ yields
\begin{equation}\label{pqw12xz}
\begin{aligned}
0= (D\bar D+\bar D D)C&=(C\wedge \bar C+\bar C\wedge C) \wedge C- C\wedge (C\wedge \bar C+\bar C\wedge C)=\\
&=
\bar C\wedge (C\wedge C)-(C\wedge C)\wedge \bar C.
\end{aligned}
\end{equation}
Writing $C\wedge C\equiv \tfrac{1}{2} [C_i, C_j] dt^i \wedge dt^j$, this yields
\be
\big[[C_i,C_j],[C_k,C_l]^\dagger\big]=0\quad \text{for all }i,j,k,l.
\ee 
Hence the matrices $\mathcal{C}_{ij}\equiv [C_i,C_j]$ are normal and may all be simultaneously diagonalized. Let $V$ be a simultaneous eigenspace where $\mathcal{C}_{ij}$ acts by multiplication by $\lambda_{ij}$. By \eqref{pqw12xz} $V$ is left invariant by all $C_k$, $\bar C_k$, then $[C_i,C_j]|_V=0$ since the trace of a commutator vanishes. So all eigenvalues $\lambda_{ij}$ of $[C_i,C_j]$ are zero; since $[C_i,C_j]$ is a normal matrix it must vanish, and $C\wedge C=0$. 

Conversely, write the $tt^*$ equations as the integrability conditions of the linear system
\be
\big(D+\bar D+\zeta^{-1}C+\zeta \bar C\big)\Psi(\zeta)=0
\ee 
for all values of the spectral parameter $\zeta\in\mathbb{P}^1$.
From the reality condition \cite{tt*}, the fundamental solution $\Psi(\zeta)$ may be chosen to take value in $SL(n,\C)$ and $\Psi(1/\zeta^*)=\Psi(\zeta)^*$.
The $SL(n,\C)$-connection 
\be
D+\bar D+\zeta^{-1}C+\zeta \bar C
\ee 
is flat, so its pull-back on the universal cover $\widetilde{X}$ is pure gauge
\be
d+A+\bar A+\zeta^{-1}C+\zeta \bar C=d-\big(d\Psi(\zeta)\big)\Psi(\zeta)^{-1}=d+\Psi(\zeta)\,d\Psi(\zeta)^{-1}.
\ee
Specializing to $\zeta=1$ and taking $g=\Psi(1)^{-1}$, we get the identifications \eqref{AAASWE}.
Then the composition
\be
\widetilde{X}\xrightarrow{\ \Psi(1)^{-1}\ } SL(n,\R)\xrightarrow{\ \pi\ } SL(n,\R)/SO(n),
\ee
yields a pluri-harmonic map $\widetilde{f}$
with $G=SL(n,\R)$. Since the branes satisfy the 
equivariant condition
\be
\xi^\ast\Psi(1)= \Psi(1)\gamma_\xi,\quad \gamma_\xi\in\Gamma\quad\text{for all }\xi\in\mathsf{Deck}(\widetilde{X}\to X),
\ee
$\widetilde{f}$ descends to a pluri-harmonic map
$f\colon X\to \Gamma\backslash SL(n,\R)/SO(n)$.

\end{proof}

\begin{rem} 
Under suitable algebraic conditions on the chiral ring $\mathcal{R}$, i.e.\! the $\C$-algebra generated  by 1 and the matrices $C_i$, the image $\Psi(1)(X)$
lays in a proper subgroup $G(\R)\subset SL(n,\R)$ and we may write $\widetilde{f}\colon X\to G(\R)/K$ (with $K=G(\R)\cap SO(n)$), $\widetilde{f}$ pluri-harmonic. 

\end{rem}

\section{Arithmetics of {\textit{superconformal}} $tt^*$ branes}

The arithmetic properties of the $tt^*$ brane amplitudes of a 2d superconformal (2,2) are much richer and more beautiful than  those of a generic (2,2) QFT, in particular more elegant than the ones for a gapped (2,2) model where the BPS branes are
simple Lefschetz thimbles whose monodromy is governed by the classical  Picard-Lefschetz theory \cite{Cecotti:1992rm,Hori:2000ck}.
\medskip

To describe the \emph{arithmetic} characterization of the branes of a (2,2) SCFT inside the space of \emph{all} $tt^*$ branes
of (2,2) QFTs, we consider the brane amplitudes $\Psi(\zeta)$ along the twistor equator $|\zeta|=1$ and normalize the basis elements of the chiral ring $\mathcal{R}_s$ so that the determinant of the topological 2-point function $\eta_{ij}$ is 1 (such a normalized basis always exists). Then, in a \emph{generic} 2d (2,2) QFT, the brane amplitude $\Psi(\zeta)$ takes values in $SL(n,\R)$, where $n$ is the Witten index (see \S.\,3 of \cite{dubrovin}). More precisely,
for a fixed $\zeta$ along the equator, $\Psi(\zeta)$ is a map
\be
\Psi(\zeta)\colon S\to  SL(n,\R)/\Gamma, 
\ee
with $\Gamma$ the monodromy group.
Composing with the canonical projection
\be
\pi_\text{can}\colon SL(n,\R)\to SO(n)\backslash SL(n,\R)
\ee 
we get a $\zeta$-independent
 pluri-harmonic map 
 \be
 S\to SO(n)\backslash SL(n,\R)/\Gamma
 \ee
which, through the algorithm in appendix
\ref{revtau},  yields the corresponding solution to the $tt^*$ PDEs.

 If the family $\{\mathcal{R}_s\}_{s\in S}$ of chiral rings has special physical properties it may happen
that the ``brane group'',
that is, the image $G(\R)$ of the universal  lift
\be
\widetilde{\Psi}(\zeta)\colon\widetilde{S}\to SL(n,\R)\qquad\quad |\zeta|=1,
\ee
is a proper real Lie subgroup $G(\R)\subsetneq SL(n,\R)$. This reduction of the ``brane group'' has important implications for the quantum theory.  

We focus on a family of superconformal (2,2) models, i.e.\!
the rings $\{\mathcal{R}_s\}_{s\in S}$ are local and graded by the $U(1)_R$ charge operator $\boldsymbol{Q}$ (see eqns.\eqref{pqawe11},\eqref{pqawe12}) while $S$ is a space of exactly marginal deformations. For simplicity of notation we also assume $\hat c\in \mathbb{N}$ (the extension to fractional $\hat c$ being straightforward).
$\boldsymbol{Q}$ yields a $U(1)_R$ grading of the
Lie algebra $\mathfrak{g}$ of  $G(\R)$ of the form
\be
\mathfrak{g}\otimes \C=\bigoplus_{q=-\hat c}^{\hat c}\mathfrak{g}^{-q,q},\qquad X\in \mathfrak{g}^{q,-q}\ \Leftrightarrow\ \big[\boldsymbol{Q},X]=q\,X.
\ee

\begin{lem} The real Lie group $G(\R)$ is of ``Mumford-Tate type'' i.e.\! it contains a compact maximal torus, that is,
\be
\mathrm{rank}\, G(\R)=\mathrm{rank}\,K, \qquad \text{$K\subset G(\R)$ a maximal compact subgroup.}
\ee 
Hence $G(\R)$ is a real Lie groups in \textsc{Figure 6.1} and \textsc{Figure 6.2} of {\rm\cite{knapp}.}
Moreover, if
\be
q_\text{max}\overset{\rm def}{=}\max_q\mspace{-2mu}\Big\{ \mathfrak{g}^{q,-q}\neq0\Big\}
\ee
is \emph{odd} $G(\R)$ must be
in the list of odd-weight groups on page 24 of {\rm\cite{revb}.}
\end{lem}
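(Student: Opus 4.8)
The plan is to read the $U(1)_R$ grading of $\mathfrak{g}$ as a polarized Hodge structure and to extract the group theory from its associated Weil operator. First I would note that, $\mathfrak{g}$ being semisimple, the grading is inner: there is a unique $\boldsymbol{Q}\in\mathfrak{g}\otimes\C$ with $[\boldsymbol{Q},X]=qX$ on $\mathfrak{g}^{q,-q}$, lying in $\mathfrak{g}^{0,0}$. Comparing with the real structure $\sigma$ (which sends $\mathfrak{g}^{q,-q}$ to $\mathfrak{g}^{-q,q}$) and using uniqueness of the grading element gives $\sigma(\boldsymbol{Q})=-\boldsymbol{Q}$, so $i\boldsymbol{Q}\in\mathfrak{g}$ is an elliptic element generating a compact one-parameter subgroup $\theta\mapsto e^{i\theta\boldsymbol{Q}}$ of $G(\R)$. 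I then set $C\overset{\rm def}{=}e^{i\pi\boldsymbol{Q}}\in G(\R)$; because the adjoint charges $q$ are integers, $\mathrm{Ad}(C)$ acts on $\mathfrak{g}^{q,-q}$ by $(-1)^q$ and is therefore an involutive automorphism $\theta$ of $\mathfrak{g}$ whose $(+1)$-eigenspace is $\bigoplus_{q\ \mathrm{even}}(\mathfrak{g}^{q,-q}\cap\mathfrak{g})$ and whose $(-1)$-eigenspace is $\bigoplus_{q\ \mathrm{odd}}(\mathfrak{g}^{q,-q}\cap\mathfrak{g})$.

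The crucial input is that $\theta$ is a \emph{Cartan} involution of $\mathfrak{g}$, which is precisely the Hodge--Riemann positivity of the adjoint Hodge structure; in the present $tt^*$ language it is furnished by the positivity of the ground-state metric of the underlying unitary SCFT (the same positivity that lets us normalize $\det\eta_{ij}=1$ and land $\Psi(\zeta)$ in $SL(n,\R)$ along $|\zeta|=1$). Granting this, $\mathfrak{k}=\mathfrak{g}^\theta$ is a maximal compact subalgebra and in particular $\mathfrak{g}^{0,0}\cap\mathfrak{g}\subseteq\mathfrak{k}$. But $\mathfrak{g}^{0,0}\cap\mathfrak{g}=Z_\mathfrak{g}(i\boldsymbol{Q})$ is the centralizer of a semisimple element, hence a reductive subalgebra of maximal rank containing a Cartan subalgebra of $\mathfrak{g}$; lying inside $\mathfrak{k}$, that Cartan is compact. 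This yields $\mathrm{rank}\,G(\R)=\mathrm{rank}\,K$, i.e.\ $G(\R)$ is of Mumford--Tate type. The passage to \textsc{Figure 6.1} and \textsc{Figure 6.2} of \cite{knapp} is then the standard Vogan-diagram dictionary: a real form possesses a compact Cartan subalgebra exactly when conjugation fixes it pointwise, equivalently when its Vogan diagram carries the trivial diagram automorphism, and those are the diagrams tabulated there.

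For the final clause I would first identify $q_{\max}$ with the weight $w=\hat c$ of the Hodge structure on the representation space $V$. Since the chiral rings are local-graded with one-dimensional top piece $\mathcal{R}^{\hat c}$, so $h^{\hat c,0}=1\neq0$, the operator $\mathrm{ad}\,\boldsymbol{Q}$ on $\mathfrak{g}\subseteq\mathrm{End}(V)$ reaches its extreme eigenvalue by carrying $V^{0,w}$ onto $V^{w,0}$, whence $q_{\max}=w$. The eigenvalues of $\boldsymbol{Q}$ on $V$ then lie in $\tfrac{w}{2}+\Z$, so the central element $C^2=e^{2\pi i\boldsymbol{Q}}$ acts on $V$ by $(-1)^{w}$; thus $q_{\max}$ odd forces $C^2=-\mathrm{id}_V$ and, through the Hodge--Riemann relations, makes the polarization $Q$ on $V$ alternating, so that $G(\R)\subset Sp(V,Q)$ carries a genuine odd-weight Hodge representation. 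At that point I would invoke the Green--Griffiths--Kerr classification of simple real Lie groups admitting an odd-weight Hodge representation --- the list reproduced on page 24 of \cite{revb} --- to conclude.

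The main obstacle is the Cartan-involution step: turning the abstract $tt^*$/unitarity positivity into the precise statement that $\mathrm{Ad}(e^{i\pi\boldsymbol{Q}})$ is a Cartan involution requires matching the reality and positivity conventions of the brane amplitudes along the equator $|\zeta|=1$ with the Hodge--Riemann polarization of the adjoint structure, and verifying that $\theta$ preserves the brane group $\mathfrak{g}$ itself (not merely an outer twist) and that $\mathfrak{g}$ is semisimple. By comparison the odd-weight refinement is routine once $q_{\max}=w$ is pinned down, since it reduces to a citation; the only delicate point there is confirming $h^{\hat c,0}\neq0$, so that $q_{\max}$ is genuinely the weight and not some smaller integer.
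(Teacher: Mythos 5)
Your argument is correct and follows essentially the same route as the paper's: a compact circle generated by the (real) grading element $\boldsymbol{Q}$, the identification of the even part $\bigoplus_{q\ \mathrm{even}}\mathfrak{g}^{-q,q}$ with the maximal compact subalgebra via positivity of the $tt^*$ metric, and a Cartan subalgebra commuting with $\boldsymbol{Q}$ forced into $\mathfrak{g}^{0,0}\cap\mathfrak{g}\subset\mathfrak{k}$. You actually supply more detail than the paper at the two points it elides — the justification that $\mathrm{Ad}(e^{i\pi\boldsymbol{Q}})$ is a Cartan involution, and the whole odd-weight clause, for which the paper gives no argument — so the "obstacles" you flag are exactly the steps the paper takes for granted.
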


\begin{proof} From \S.\,X.3 of the second edition of \cite{knapp} we know that
$\boldsymbol{Q}\in \mathfrak{g}$.
Then $\exp(\theta \boldsymbol{Q})\subset G(\R)$ is a compact one-parameter subgroup since it preserves the positive-definite $tt^*$ metric. Let $T$ a maximal torus containing
$\exp(\theta \boldsymbol{Q})$, and $\mathfrak{t}$ its Lie algebra. Clearly
$[\boldsymbol{Q},\mathfrak{t}]=0$, that is,
$\mathfrak{t}\subset \mathfrak{g}^{0,0}\cap\mathfrak{g}$.
Since the compact part $\mathfrak{k}\subset\mathfrak{g}$ is given by
\be
\mathfrak{k}\otimes \C=\bigoplus_{q\ \text{even}}\mathfrak{g}^{-q,q},
\ee
$\mathfrak{t}\subset \mathfrak{g}^{0,0}\cap\mathfrak{g}\subset\mathfrak{k}$ is compact.
\end{proof}

The generic ``brane group''
$SL(n,\R)$ does \emph{not} have the above property (for $n\geq3$) 
\be
n-1\equiv \mathrm{rank}\, SL(n,\R)\neq \mathrm{rank}\, SO(n)\equiv [n/2].
\ee

The above result shows that the conformal brane amplitudes are very different from the well-known one for the massive 2d models. They have \emph{higher arithmetics.} Indeed the  real Lie groups of the ``Mumford-Tate''
type are characterized by their interesting arithmetic: in particular they are the \emph{only} groups which have discrete series automorphic representations
\cite{disseries,disseries1}.
In view of eqn.\eqref{pqawe12}
we have

\begin{corl} The $tt^*$ pluriharmonic map $w$ of a \emph{superconformal} 2d (2,2) model, as a function of the conformal manifold $S$, factorizes through a quotient of a Mumford-Tate domain $G(\R)/H$,
where $H=\exp(\mathfrak{g}^{0,0}\cap \mathfrak{g})$
\be
\xymatrix{S\ar[r]_(.3){p}\ar@/^2pc/[rr]^w& \Gamma\backslash G(\R)/H\ar@{->>}[r]& \Gamma\backslash G(\R)/K}
\ee
$p$ holomorphic with $p_\ast(TS)\subseteq \co(\mathfrak{g}^{-1,1})$.
\end{corl}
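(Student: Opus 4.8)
The plan is to read the Corollary as the general-$G(\R)$ version of the $tt^*$/VHS dictionary already recorded in the \textbf{Remark} following \textbf{(IV)} and in the \textbf{Fact/Definition} of \S.2.1\textbf{(V)}: a $U(1)_R$-preserving solution of the $tt^*$ PDEs is the same datum as a horizontal holomorphic map into a Mumford-Tate domain. The preceding \textbf{Lemma} supplies exactly the ingredient that makes this work for an arbitrary brane group, namely $\boldsymbol{Q}\in\mathfrak{g}$, so that $\mathrm{ad}\,\boldsymbol{Q}$ grades $\mathfrak{g}\otimes\C=\bigoplus_q\mathfrak{g}^{-q,q}$, the compact part being $\mathfrak{k}\otimes\C=\bigoplus_{q\ \mathrm{even}}\mathfrak{g}^{-q,q}$. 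In particular $\mathfrak{h}:=\mathfrak{g}^{0,0}\cap\mathfrak{g}$ lies inside $\mathfrak{k}$, so $H=\exp(\mathfrak{h})$ is a compact subgroup of $K$ and there is a canonical fibration $G(\R)/H\twoheadrightarrow G(\R)/K$ with fibre $K/H$. Defining $p$ as the composite of a lift $\boldsymbol{\Psi}\colon\widetilde S\to G(\R)$ with the projection $G(\R)\to G(\R)/H$, the further projection onto $G(\R)/K$ recovers $w$, so the factorization triangle of the Corollary holds at the level of smooth maps for free; the entire content is that $p$ is holomorphic and horizontal.

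To extract that content I would compute the differential of $p$ from the pulled-back Maurer-Cartan form. Decomposing $\boldsymbol{\Psi}^*\omega$ by the Cartan splitting $\mathfrak{g}=\mathfrak{k}\oplus\mathfrak{s}$ and by $(1,0)/(0,1)$ type yields precisely the $tt^*$ data $A$, $C$, $\overline C$ of eqn.\eqref{qwertUU}. Since $T_{[e]}\big(G(\R)/H\big)\otimes\C=\bigoplus_{q\neq0}\mathfrak{g}^{-q,q}$ carries the invariant complex structure declaring $T^{1,0}=\bigoplus_{q\geq1}\mathfrak{g}^{-q,q}$ (all of strictly negative $\mathrm{ad}\,\boldsymbol{Q}$-eigenvalue) with Griffiths-horizontal subspace $\mathfrak{g}^{-1,1}$, the differential $p_*$ is read off as the class of $A+C+\overline C$ modulo $\mathfrak{h}$, with $(1,0)$ part $C$ and $(0,1)$ part $\overline C$.

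The decisive step is to invoke the $U(1)_R$-preserving hypothesis to pin down where $A$, $C$, $\overline C$ actually live. Equation \eqref{pqawe12} gives $[\boldsymbol{Q},C]=-C$ and $[\boldsymbol{Q},\overline C]=\overline C$, i.e. $C$ takes values in the eigenspace $\mathfrak{g}^{-1,1}$ and $\overline C$ in $\mathfrak{g}^{1,-1}$. For the connection I would argue that the Berry connection preserves the conserved $U(1)_R$ grading: the charge operator $\boldsymbol{Q}$ is a fixed, $s$-independent grading of the Ramond ground states in the chosen frame, hence covariantly constant, so $d\boldsymbol{Q}+[A,\boldsymbol{Q}]=d\boldsymbol{Q}=0$ forces $[A,\boldsymbol{Q}]=0$. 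As $A$ is moreover real, it is valued in $\mathfrak{g}^{0,0}\cap\mathfrak{g}=\mathfrak{h}$ and therefore drops out modulo $\mathfrak{h}$. Consequently $p_*^{1,0}=C$ takes values in $\mathfrak{g}^{-1,1}\subset T^{1,0}$ and $p_*^{0,1}=\overline C$ in $\mathfrak{g}^{1,-1}=\overline{T^{1,0}}$, which is simultaneously the assertion that $p$ is holomorphic and that $p_*(TS)\subseteq\co(\mathfrak{g}^{-1,1})$. Finally the $\Gamma$-equivariance of the lift $\boldsymbol{\Psi}$ (the monodromy relation recorded below eqn.\eqref{rrrrczq}) is compatible with the $H$-reduction, so $p$ descends to $S\to\Gamma\backslash G(\R)/H$ and the fibration $\Gamma\backslash G(\R)/H\twoheadrightarrow\Gamma\backslash G(\R)/K$ completes the diagram.

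I expect the main obstacle to be precisely the justification that $A$ is $\mathfrak{h}$-valued rather than merely $\mathfrak{k}$-valued; this is the point at which \emph{superconformal} invariance (as opposed to a generic massive local-graded structure) genuinely enters, and it amounts to reading \eqref{pqawe12} together with the flatness of the grading $\boldsymbol{Q}$ as the full triple of conditions $[\boldsymbol{Q},A]=0$, $[\boldsymbol{Q},C]=-C$, $[\boldsymbol{Q},\overline C]=\overline C$. A secondary technical point, which I would treat as routine given the \textbf{Lemma}, is checking that the invariant almost-complex structure on $G(\R)/H$ attached to the $\mathrm{ad}\,\boldsymbol{Q}$-grading is integrable and $G(\R)$-homogeneous, so that $G(\R)/H$ is genuinely a Mumford-Tate domain; this follows from $[\mathfrak{g}^{-q,q},\mathfrak{g}^{-q',q'}]\subseteq\mathfrak{g}^{-q-q',q+q'}$, which makes $\bigoplus_{q\geq1}\mathfrak{g}^{-q,q}$ a subalgebra and hence the chosen $T^{1,0}$ involutive.
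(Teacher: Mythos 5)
Your proposal is correct and follows essentially the route the paper intends: the paper derives the Corollary directly from the preceding Lemma (which puts $\boldsymbol{Q}\in\mathfrak{g}$ and hence makes $G(\R)/H$ a Mumford--Tate domain) together with eqn.\eqref{pqawe12}, exactly as you do. Your extra care in justifying that the Berry connection $A$ is $\mathfrak{h}$-valued (via covariant constancy of $\boldsymbol{Q}$, i.e.\! $[\boldsymbol{Q},A]=0$) fills in the one step the paper leaves implicit, and it is the right justification.
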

The (arithmetic) quotients of Mumford-Tate domains are the natural generalization \cite{MT5} of the Shimura varieties, the ``paradise'' of arithmetics.

\section{Symmetric rigid special K\"ahler manifolds}
\begin{lem} A symmetric rigid special K\"ahler manifold $M$ is flat.
\end{lem}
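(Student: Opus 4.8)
The plan is to reduce the statement to a Weitzenb\"ock (Bochner-type) identity for the cubic form, using the local data of a rigid special K\"ahler manifold $M$. Recall that this geometry is encoded in the K\"ahler metric $G_{i\bar j}$ together with the holomorphic cubic (Yukawa) form $C_{ijk}=\partial_i\partial_j\partial_k\cf$, which is totally symmetric, covariantly holomorphic ($\nabla_{\bar l}C_{ijk}=0$), and satisfies the special-geometry constraint $\nabla_{[i}C_{j]kl}=0$, so that $\nabla_i C_{jkl}$ is again totally symmetric. The essential input is the \emph{rigid} curvature relation
$$R_{i\bar j k\bar l}=-\,C_{ikm}\,\bar C_{\bar j\bar l\bar n}\,G^{m\bar n},$$
which, crucially, \emph{lacks} the constant-curvature piece $-G_{i\bar j}G_{k\bar l}-G_{i\bar l}G_{k\bar j}$ present in the local (supergravity) case. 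This difference is precisely what will collapse the rigid case to flatness; it is the $M_p\to\infty$ shadow of eqn.\eqref{hodgeversusG}, where in the rigid case $K_{i\bar j}=R_{i\bar j}$.

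First I would record the traces of this identity. Contracting once gives the Ricci form $R_{i\bar j}=G^{k\bar l}R_{i\bar jk\bar l}=-G^{k\bar l}G^{m\bar n}C_{ikm}\bar C_{\bar j\bar l\bar n}$, a manifestly non-positive Hermitian form, and contracting again the scalar curvature
$$\mathcal S=2\,G^{i\bar j}R_{i\bar j}=-2\,\|C\|^2,\qquad \|C\|^2:=G^{i\bar a}G^{j\bar b}G^{k\bar c}C_{ijk}\bar C_{\bar a\bar b\bar c}\ge 0.$$
Since $M$ is symmetric we have $\nabla R=0$, so every full contraction of $R$ against the parallel metric is constant; in particular $\mathcal S$ is constant, whence $\|C\|^2$ is constant on $M$.

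The core step is a Weitzenb\"ock formula for $\|C\|^2$. Using $\nabla_{\bar l}C=0$, the vanishing of the $(0,2)$-part of the curvature on a K\"ahler manifold, and the K\"ahler--Ricci commutation identity $[\nabla_{\bar n},\nabla_m]C_{abc}$ (three curvature terms, one per holomorphic index), I would compute
$$\Box\|C\|^2=\|\nabla C\|^2+\kappa\,\|R\|^2,\qquad \Box:=G^{n\bar m}\nabla_n\nabla_{\bar m},$$
for a positive constant $\kappa$. The curvature term arises by feeding the commutator back through the rigid relation: one uses $G^{b\bar b}G^{c\bar c}C_{abc}\bar C_{\bar d\bar b\bar c}=-R_{a\bar d}$, and the quartic-in-$C$ remainder reassembles into $\|R\|^2$. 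Both terms on the right are pointwise non-negative; this is the rigid-special-K\"ahler incarnation of the subharmonicity of $\|C\|^2$ underlying the theorem that \emph{complete} rigid special K\"ahler manifolds are flat.

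Finally, combining the two observations: constancy of $\|C\|^2$ forces $\Box\|C\|^2=0$, so the non-negative right-hand side vanishes termwise; in particular $\|R\|^2=0$, and since $\|\cdot\|$ is the positive-definite norm induced by $G$ on tensors, $R\equiv 0$, i.e.\ $M$ is flat. The main obstacle is the Weitzenb\"ock step --- specifically pinning down the \emph{sign} of the curvature term so that $\|C\|^2$ is genuinely subharmonic; this is exactly where the absence of the constant-curvature term in the rigid curvature identity does the real work, since in the supergravity identity the extra $GG$ terms flip this sign and indeed permit the non-flat symmetric spaces of \textsc{Table}~\ref{tableiso}. A secondary point is that I am reading ``symmetric'' as $\nabla R=0$; if instead one assumes only that the full special K\"ahler structure is invariant under the geodesic symmetries, then constancy of $\|C\|^2$ is immediate by homogeneity and the same conclusion follows.
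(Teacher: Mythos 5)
Your proof is correct, but it takes a genuinely different route from the one in appendix C. The paper reduces to the irreducible case (so $M$ is Einstein, $R_{i\bar j}=\lambda G_{i\bar j}$), then plays the rigid case of eqn.\eqref{hodgeversusG}, $K_{i\bar j}=R_{i\bar j}$, against the Hodge-theoretic fact that the Ricci curvature of the Hodge metric is non-positive: for $\lambda\neq 0$ the Hodge metric is proportional to $G_{i\bar j}$, so its Ricci curvature equals $\lambda G_{i\bar j}>0$, a contradiction; hence $\lambda=0$ and a Ricci-flat symmetric space is flat. You instead run a purely local Bochner argument on the Yukawa tensor using the rigid curvature relation $R_{i\bar jk\bar l}=-C_{ikm}\bar C_{\bar j\bar l\bar n}G^{m\bar n}$, with the symmetry hypothesis entering only through constancy of the scalar curvature. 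This buys you more: you never need irreducibility, the Einstein reduction, or any global input from Hodge theory, and your identity is exactly the one underlying Lu's theorem that \emph{complete} rigid special K\"ahler manifolds are flat (of which the symmetric case is then a trivial instance). Two small points to tidy up. First, your sign convention for the Ricci tensor is opposite to the paper's: with the convention in which $K_{i\bar j}=R_{i\bar j}$ is a positive-definite metric, a rigid special K\"ahler manifold has non-\emph{negative} Ricci curvature; your computation is internally consistent, but state the convention. Second, the quartic remainder in your Weitzenb\"ock identity reassembles into $3\,\|\mathrm{Ric}\|^2$, not the full Riemann norm $\|R_{i\bar jk\bar l}\|^2$; the conclusion is unaffected, but you need the extra one-line step that $\mathrm{Ric}=0$ forces $C=0$ (since $\pm R_{i\bar j}$ is the positive-semidefinite form $G^{k\bar l}G^{m\bar n}C_{ikm}\bar C_{\bar j\bar l\bar n}$) and hence $R_{i\bar jk\bar l}=-C\bar C=0$.
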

\begin{proof} Let $M$ be a symmetric rigid special K\"ahler manifold. Without loss we may assume $M$ to be irreducible. Then $M$ is Einstein $R_{i\bar j}=\lambda\, G_{i\bar j}$. If $\lambda=0$ the symmetric manifold is flat, the period map $p$ is constant, and the QFT is free. Otherwise we see from eqn.\eqref{hodgeversusG} the Hodge metric $K_{i\bar j}$ coincides, up to a factor, with the special K\"ahler metric $G_{i\bar j}$. Then the Ricci curvatures of the two metrics are equal and non-singular. But the Ricci curvature of the Hodge metric is non-positive \cite{GII,periods} while the Ricci curvature of $G_{i\bar j}$ is the Hodge metric and hence non-negative. We got a contradiction.
\end{proof}

%\newpage

\end{document}